\numberwithin{equation}{section}
\newtheorem{proposition}{Proposition}[section]
\newtheorem{lemma}{Lemma}[section]
\newtheorem{corollary}{Corollary}[section]
\newtheorem{theorem}{Theorem}[section]
\theoremstyle{definition}
\newtheorem{definition}{Definition}[subsection]
\newtheorem{remark}{Remark}[section]
\newcommand{\tta}{\breve{\mathtt{A}}}
\newcommand{\ttb}{\breve{\mathtt{B}}}
\newcommand{\ttj}{\breve{\mathtt{J}}}
\newcommand{\ttk}{\breve{\mathtt{K}}}
\newcommand{\tte}{\breve{\mathtt{E}}}
\newcommand{\ttf}{\breve{\mathtt{F}}}
\newcommand{\he}{\hat{e}}
\newcommand{\ha}{\hat{a}}
\newcommand{\hb}{\hat{b}}
\newcommand{\hc}{\hat{c}}
\newcommand{\hd}{\hat{d}}
\newcommand{\vertiiii}[1]{{\left\vert\kern-0.25ex\left\vert\kern-0.25ex\left\vert\kern-0.25ex\left\vert #1 \right\vert\kern-0.25ex\right\vert\kern-0.25ex\right\vert\kern-0.25ex\right\vert}}
\newcommand{\vertiii}[1]{{\left\vert\kern-0.25ex\left\vert\kern-0.25ex\left\vert #1 \right\vert\kern-0.25ex\right\vert\kern-0.25ex\right\vert}}
\newcommand{\norm}[1]{\|#1\|}
\newcommand{\tE}{\mathcal{E}}
\newcommand{\mfu}{\mathfrak{u}}
\newcommand{\nnb}{\nonumber}
\newcommand{\ulh}{\underline{h}}
\newcommand{\ulg}{\underline{g}}
\newcommand{\Rbb}{\mathbb{R}}
\newcommand{\Zbb}{\mathbb{Z}}
\newcommand{\Pbb}{\mathbb{P}}
\newcommand{\udl}{\underline}
\newcommand{\tp}[2]{\tensor{p}{^{#1}_{#2}}}
\newcommand{\tss}[2]{\tensor{s}{^{#1}_{#2}}}
\newcommand{\udn}[1]{\underline{\nabla}_{#1}}
\newcommand{\tsh}[2]{\tensor{h}{^{#1}_{#2}}}
\newcommand{\del}[1]{{\partial_{#1}}}
\newcommand{\AND}{{\quad\text{and}\quad}}
\newcommand{\Li}{L^\infty}
\newcommand{\la}{\langle}
\newcommand{\ra}{\rangle}
\newcommand{\ula}{\underline{\alpha}}
\newcommand{\A}{\mathcal{A}}
\newcommand{\al}[2]{
	\begin{align}\label{E:#1}
		#2
	\end{align}
}
\newcommand{\ali}[1]{
	\begin{align}
		#1
	\end{align}
}
\newcommand{\gat}[1]{
	\begin{gather}
		#1
	\end{gather}
}
\newcommand{\als}[1]{
	\begin{align*}
		#1
	\end{align*}
}\newcommand{\gas}[1]{
	\begin{gather*}
		#1
	\end{gather*}
}
\newcommand{\p}[1]{
	\begin{pmatrix}
		#1
	\end{pmatrix}
}
\newcommand{\ts}{\tensor}
\newcommand{\nb}{\underline{\nabla}}
\newcommand{\tr}{\text{tr}}
\newcommand{\Ab}{\tilde A}
\newcommand{\B}{\mathcal{B}}
\newcommand{\Pb}{\mathbb{P}}
\newcommand{\Pbp}{\mathbb{P}^{\perp}}
\newcommand{\be}{\begin{equation}}
	\newcommand{\ee}{\end{equation}}
\DeclareMathOperator{\diag}{diag}
\DeclareMathOperator{\Ima}{Im}
\DeclareMathOperator{\gd}{gd}
\begin{document}

	\title[Global existence and stability of solutions to the EYM equations]{Future global existence and stability of de Sitter-like solutions to the Einstein--Yang--Mills equations in spacetime dimensions $n\geq 4$}

	\author{Chao Liu, Todd A. Oliynyk and Jinhua Wang}
	
	\address[Chao Liu]{School of Mathematics and Statistics and Center for Mathematical Sciences, Huazhong University of Science and Technology, Wuhan 430074, Hubei Province, China.}
	\email{chao.liu.math@foxmail.com}

	\address[Todd A. Oliynyk]{School of Mathematical Sciences, 9 Rainforest Walk, Monash University, Clayton, VIC 3800, Australia}
	\email{todd.oliynyk@monash.edu}

	\address[Jinhua Wang]{School of Mathematical Science, Xiamen University,
		Xiamen 361005, Fujian Province, China. }
	\email{wangjinhua@xmu.edu.cn}
	
	\begin{abstract}
		We establish the global existence and stability to the future of non-linear perturbation of de Sitter-like solutions to the  Einstein--Yang--Mills system in $n\geq 4$ spacetime dimension. This generalizes Friedrich's Einstein--Yang--Mills stability results in dimension $n=4$ [J Differ Geom $34$ ($1991$), $275$–$345$] to all higher dimensions. 
		
		 \vspace{2mm}
		
		{{\bf Keywords:} Einstein equations, Einstein--Yang--Mills system, global existence and stability, de Sitter spacetime}
		
		\vspace{2mm}
		
		{{\bf Mathematics Subject Classification:} Primary 35A01, 35Q76; Secondary  35L02, 83C05}
	\end{abstract}

	\maketitle
	
	\setcounter{tocdepth}{2}
	
	\pagenumbering{roman} \pagenumbering{arabic}

	\section{Introduction}\label{S:INTRO}

In general relativity, the $n$-dimensional de Sitter (dS$_n$) solution 
is an exact, maximally symmetric solution to the vacuum Einstein equations with a positive cosmological constant $\Lambda>0$. It is one of the simplest solutions that exhibits \textit{accelerated expansion}. 
While realistic cosmologies are not homogeneous and isotropic, de Sitter spacetime can still be physically relevant in situations where it is non-linearly stable under perturbations globally to the future. This is because stability will guarantee that the de Sitter spacetime can be used as a model to understand the future asymptotic behavior of cosmological spacetimes that are undergoing accelerated expansions. 
This is of direct physical relevance due to cosmological observations of accelerated expansion in our universe and the assertion that this expansion is due to a cosmological constant \cite{Riess_1998}.

The future stability of de Sitter spacetime  in $n=4$ spacetime dimensions was first established by Friedrich \cite{Friedrich1986} using a conformal representation of the vacuum Einstein field equations with a positive cosmological constant. Friedrich, in the article \cite{Friedrich1991}, subsequently extended this stability result to the Einstein--Yang--Mills equations in $n=4$ spacetime dimension. More precisely, c.f. \cite[Thm.~9.8]{Friedrich1991}, Friedrich established the global existence to the future of solutions to the Einstein--Yang--Mills equations that are generated from initial data sufficiently close to de Sitter initial data. Friedrich also established that these solutions are all asympotically simple. It is worth noting here that, in the article  \cite{Friedrich1991}, Friedrich also considered the case of vanishing cosmological constant $\Lambda =0$ and 
prescribed initial data on past timelike infinity for $\Lambda>0$. 

The main aim of this article is to extend Friedrich's future global existence and stability result for the Einstein--Yang--Mills equations with a positive cosmological constant $\Lambda>0$ to all higher spacetime dimensions $n>4$ and provide a new global existence and stability proof for $n=4$.

\subsection{Einstein--Yang--Mills equations}
Before discussing the main results of this article, we first briefly recall some key concepts from Yang--Mills theory that will be needed to fix our notations and formulate the Einstein--Yang--Mills equations; see also \cite{ChoquetBruhat1992}. 

Let $G$ denote a compact and connected Lie group with Lie algebra $\mathcal{G}$. Due to the compactness, we lose no generality in taking $G$ to be a matrix group.  
On $\mathcal{G}$, we fix an Ad-invariant, positive definite inner-product that we denote by a dot, i.e. $\phi \cdot \psi$ for $\phi,\psi \in \mathcal{G}$, and we use $|\cdot|$ to denote the associated norm. 

Given a $n$-dimensional, connected Lorentzian  manifold\footnote{In this article, we use \textit{abstract index notations}, see \S \ref{s:AIN}. } $(\widetilde{\mathcal{M}}^n, \, \tilde g_{a b})$, a  connection $\tilde{\omega}$ on a	 $G$-principal bundle over $\widetilde{\mathcal{M}}^n$ can be expressed in a gauge as a local   $\mathcal{G}$-valued $1$-form $\Ab_a$ on $\widetilde{\mathcal{M}}^n$, which is referred to as a \textit{gauge potential}. The \textit{curvature} of the connection $\tilde{\omega}$ is then determined locally by the $\mathcal{G}$-valued $2$-form $\tilde F_{ab}$ on $\widetilde{\mathcal{M}}^n$ defined by\footnote{The Yang--Mills curvature $\tilde F_{ab}$ is globally defined when viewed as taking values in the adjoint bundle.} 
\be\label{def-F}
\tilde F_{ab} = \tilde\nabla_{a} \Ab_b-\tilde\nabla_{b} \Ab_a + [\Ab_a, \Ab_b ],
\ee
where $[\cdot,\cdot]$ is the Lie bracket on $\mathcal{G}$, i.e. the matrix commutator bracket, and
$\tilde\nabla_a$ is the covariant derivative associated to $\tilde{g}_{a b}$. We recall also that the Yang--Mills curvature $\tilde F_{ab}$ automatically satisfies the  \textit{Bianchi identities}
\begin{equation*}
	\tilde D_{[a} \tilde F_{b c]} = 0,  
\end{equation*}
where $\tilde D_a = \tilde \nabla_a + [\tilde A_a, \, \cdot]$ denotes the gauge covariant derivative of a $\mathcal{G}$-valued tensor.

The \textit{Einstein--Yang--Mills equations} with a positive cosmological constant $\Lambda>0$ are then given by
\begin{align}
	\tilde G_{a b} + \Lambda \tilde g_{a b}& = \tilde T_{a b}, \label{eq-einstein} \\
	\tilde D^a \tilde F_{a b} & =0,  \label{eq-maxwell-div-o} \\
	\tilde D_{[a}\tilde F_{bc]}&= 0,  \label{eq-maxwell-bianchi-o}
\end{align}
where $\tilde{G}_{ab}=\tilde{R}_{ab}-\frac{1}{2}\tilde{R} \tilde{g}_{ab}$ is the Einstein tensor of the metric $\tilde{g}_{ab}$ and the stress energy tensor of a Yang--Mills field is defined by
\begin{align*}
	\tilde T_{a b} = \tilde F_{a}{}^{ c} \cdot \tilde F_{b c} - \frac{1}{4} \tilde g_{a b}  \tilde F^{c d}\cdot \tilde F_{c d}.
\end{align*}
In this article, we will restrict our attention to trivial principal bundles, i.e. $\widetilde{\mathcal{M}}^n \times G$. The purpose of this simplification is to allows us to work with gauge potentials $A_a$ that exists globally on $\widetilde{\mathcal{M}}^n$. This, in turns, allows us to view the Einstein--Yang--Mills equations as equations for the fields $(\tilde{g}_{ab},\tilde{A}_a)$ on $\widetilde{\mathcal{M}}^n$. 

\subsection{de Sitter spacetime}\label{s:ds}
For the remainder of the article, we fix the physical spacetime manifold by setting\footnote{In fact, we will only make use of the future half $[0,\infty)\times \Sigma$ of  $\widetilde{\mathcal{M}}^n$ given by $[0,\infty)\times \Sigma$.}
\begin{equation*}
	\widetilde{\mathcal{M}}^n =\Rbb\times \Sigma
\end{equation*}
with
\begin{equation*}
	\Sigma=\mathbb{S}^{n-1}.
\end{equation*}
Then de Sitter spacetime\footnote{See the references \cite{Hawking2010,Spradlin2001} for a more detailed introduction to de Sitter spacetime.}  $(\widetilde{\mathcal{M}}^n, \, \tilde{\ulg}_{ab} )$ is obtained by equipping $\widetilde{\mathcal{M}}^n$ with the de Sitter metric defined by
\al{DESMETR}{
	\tilde{\ulg}_{ab}=-(d\tau)_a(d\tau)_b+H^2\cosh^2(H^{-1}\tau)\ulh_{ab}
}
where the constant $H$ is determined by \al{H}{
	H=\sqrt{\frac{(n-2)(n-1)}{2\Lambda},}
}
$\ulh_{ab}$ is the standard metric on $\mathbb{S}^{n-1}$, and $\tau$ is a Cartesian coordinate function on $\Rbb$.
This spacetime plays two roles in our subsequent arguments. First, it provides an ambient spacetime manifold on which we can formulate our global existence and stability results, and second, it provides the background geometric quantities that we use to quantify the size of the metric perturbations away from the de Sitter metric.    

In the analysis carried out in this article, we find it advantageous to work with a conformally rescaled version of the de Sitter metric, which we refer to as the \textit{conformal de Sitter metric}, rather than the de Sitter metric itself. To define the conformal de Sitter metric, we introduce a new time function $t$ via
\al{COOR1}{
	t= \frac{1}{H}\left(\frac{\pi}{2}-\gd(H^{-1} \tau)\right)  
}
where $\gd(x)$, known as the  \textit{Gudermannian function}, is defined by 
\begin{equation*} 
	\gd(x)=\int^{x}_0\frac{1}{\cosh s} ds=\arctan\bigl(\sinh(x)\bigr), \quad x\in \Rbb.
\end{equation*}
Inverting \eqref{E:COOR1} gives
\al{GDINV}{
	\tau=H \gd^{-1}\left(\frac{\pi}{2}-Ht\right)
}
where 
\begin{equation*}
	\gd^{-1}(x)=\int^x_0\frac{1}{\cos t}dt=\text{arctanh}(\sin x), \quad x\in \Bigl(-\frac{\pi}{2},\frac{\pi}{2}\Bigr).
\end{equation*}
From \eqref{E:GDINV}, we observe that $\tau(t)$ is a monotonic, decreasing  and analytic for $t\in\bigl(0,\frac{\pi}{H}\bigr)$. We further observe that \eqref{E:COOR1} maps the infinite interval
$\tau \in (-\infty, +\infty)$ into the finite interval $t \in (0, \frac{\pi}{H})$ with a change of time orientation where the future lies in the direction of decreasing $t$ due to the monotonic, decreasing behavior of $t(\tau)$ on $\Rbb$. Noting that 
$t(0)=\frac{\pi}{2H}$ and $t(+\infty)=0$, we conclude that $t=0$ corresponds to future timelike infinity in de Sitter spacetime while $t=\frac{\pi}{2H}$ corresponds to $\tau=0$.

Differentiating \eqref{E:GDINV}, we find, with the help of  the identity $\frac{d}{d x}(\gd^{-1} x)=\sec x$, that
\begin{equation}\label{e:dtaut}
	(d\tau)_a=-H^2 \sec\left(\frac{\pi}{2}-Ht\right) (dt)_a.
\end{equation}
Then, noting the identity  $\cosh (\gd^{-1}(x))=\sec(x)$, we see from \eqref{e:dtaut} that the de Sitter metric \eqref{E:DESMETR} can be written as
\al{DESIT2}{
	\tilde{\ulg}_{ab}  
	=e^{2\Psi}\ulg_{ab}
}
where
\al{CONFFAC}{
	\Psi=
	-\ln\left(\frac{\sin(Ht)}{H}\right)
}
and
\begin{equation}\label{e:cfds}
	\udl{g}_{ab}=-H^2(dt)_a(dt)_b+\ulh_{ab}
\end{equation}
is the \textit{conformal de Sitter metric}. The pair $(\mathcal{M}^n=\bigl(0, \frac{\pi}{H}\bigr)\times \Sigma, \, \ulg_{ab} )$ defines a spacetime that is conformal to the de Sitter spacetime where, by construction, future timelike infinity of the de Sitter spacetime is mapped to the boundary component $\{0\}\times \Sigma$ of $\mathcal{M}^n$.

Next, we define a future directed unit normal to the spatial hypersurface
\begin{equation*}
	\Sigma_t = \{t\} \times \Sigma 
\end{equation*} 
with respect to the conformal metric $\ulg_{ab}$
by
\al{NORMT}{
	\nu_a 
	=H(dt)_a
	\AND
	\nu^a=\ulg^{ab}\nu_b. 
}
Using $\nu^a$, we set 
\al{PRO}{
	\tensor{\ulh}{^a_b}=\tensor{\delta}{^a_b}+\nu^a\nu_b,
}
which we note defines a projection operator that projects onto the $\ulg_{ab}$-orthogonal subspace to the vector field $\nu^a$. We define, in an analogous fashion, the future directed unit normals to the spatial hypersurface $\Sigma_{\tau}$ with respect to the de Sitter metric $\tilde{\ulg}_{ab}$ and the physical metric $\tilde{g}_{ab}$ by
\begin{equation} \label{E:NORMT-a}
	\tilde{\nu}_a=(d\tau)_a\AND \tilde{\nu}^a=\tilde{\ulg}^{ab}\tilde{\nu}_b,
\end{equation}  
and by
\be\label{def-T}
\tilde{T}_a=(-\tilde{\lambda})^{-\frac{1}{2}} \tilde{\nu}_a\AND
\tilde{T}^a = (-\tilde{\lambda})^{-\frac{1}{2}}\tilde{g}^{ab} \tilde{\nu}_b, \quad \text{where} \quad \tilde{\lambda} = \tilde{g}^{ab} \tilde{\nu}_a \tilde{\nu}_b,
\ee
respectively, where  we  note that $(-\tilde{\lambda})^{-\frac{1}{2}}$ is the lapse function of associated to the physical metric $\tilde{g}_{ab}$.
We denote the corresponding spatial projection operators by
\begin{align}\label{e:def-h2}
	\tensor{\tilde{\ulh}}{^c_d} = \tensor{\delta}{^{c}_{ d}} + \tilde{\nu}^c \tilde{\nu}_d,
\end{align}
and
\begin{align}\label{e:def-h2-phy}
	\tensor{\tilde{h}}{^c_d} = \tensor{\delta}{^{c}_{ d}} + \tilde{T}^c \tilde{T}_d,
\end{align}
and we use $\tensor{\tilde{h}}{^c_d}$ and $\tilde{T}^a$ to decompose the physical Yang--Mills curvature $\tilde{F}_{a p}$ into its electric and magnetic components according to 
\begin{align} \label{Et-Ht-def}
	\tilde{E}_b= \tensor{\tilde{h}}{^a_b}  \tilde{F}_{a p} \tilde{T}^p \AND \tilde{H}_{db}=\tensor{\tilde{h}}{^c_d} \tilde{F}_{c a} \tensor{\tilde{h}}{^a_b}.
\end{align}

\begin{remark}
	Our global existence proof relies on two different $3+1$ decompositions. The local existence results from
	\cite{LW2021b} are formulated in the physical picture with $3+1$ decomposition \eqref{def-T} and \eqref{e:def-h2-phy} and in order to apply these results we use the $3+1$ decomposition defined by \eqref{E:NORMT-a}  and \eqref{e:def-h2}. On the other hand, the global estimates established
	in this article are derived using
	the $3+1$ decomposition defined by \eqref{E:NORMT}--\eqref{E:PRO}.
\end{remark}

\subsection{Gauge conditions\label{sec:gauge-conditions}}
Gauge conditions for both the gravitational and Yang--Mills fields play an essential role in our stability proof. For the physical Yang--Mills, we employ an adapted temporal gauge defined by
\begin{equation}\label{phys-temp-gauge}
	\tilde{A}_a \tilde{T}^a = 0
\end{equation}
where the vector field $\tilde{T}^a$ is defined above by \eqref{def-T}, while we employ a wave gauge 
given by
\begin{equation} \label{e:prewg}
	\tilde{Z}^a =0
\end{equation}
for the physical metric where, recall $n$ is the dimension of spacetime, 
\begin{align*}
	\tilde{Z}^{a} &= \tilde{X}^a -  2 (\tilde{\ulg}^{ac}-\tilde{g}^{ac})(d \Psi)_c  + n   (\tilde{g}^{ac}-\tilde{\ulg}^{ac})(d \Psi)_c-(\tilde{g}^{fe}-\tilde{\ulg}^{fe})\tilde{\ulg}^{ac}\tilde{\ulg}_{fe}(d \Psi)_c 
	\intertext{and}
	\tilde{X}^a 
	&=-\tilde{\nb}_e \tilde{g}^{ae}+\frac{1}{2}\tilde{g}^{ae}\tilde{g}_{df}\tilde{\nb}_e\tilde{g}^{df}. 
\end{align*}
Here,  $\tilde{\nb}_e$ denotes the covariant derivative associated to the de Sitter metric $\tilde{\ulg}_{ab}$, see \eqref{E:DESMETR}, and 
$\Psi$ is the scalar function defined previously by  \eqref{E:CONFFAC}.

\subsection{Initial data and the constraint equations}\label{s:dt0}
Solutions to the Einstein--Yang--Mills equations will be generate from the initial data
\begin{equation} \label{idata}
	(\tilde{g}_{ab},\mathcal{L}_{\tilde{\nu}}\tilde{g}_{ab},\tilde{A}_a, \mathcal{L}_{\tilde{\nu}}\tilde{A}_a)|_{\Sigma_0} =  (\acute{g}_{ab},\grave{g}_{ab},\acute{A}_a, \grave{A}_a)
\end{equation}
that is specified on the initial hypersurface $\Sigma_0=\{0\}\times \Sigma$, that is, at time $\tau=0$. As is well known, e.g.~see \cite[Ch.~VI \& VII]{Choquet-Bruhat2009},  this initial data cannot be chosen freely, but must satisfy the following constraint equations:
\begin{align}
	\tilde \nu_a(\tilde{G}^{ab}+\Lambda \tilde{g}^{ab}-\tilde{T}^{ab})|_{\Sigma_0} & =0, \label{E:constraintA} \\
	\tilde{h}^{ab}(\tilde{\nabla}_a \tilde E_{b} + [\tilde A_{a}, \tilde E_{b}])|_{\Sigma_0} & =0,
	\label{E:constraintB}
\end{align}
where $\tilde{h}^{ab}=\tensor{\tilde{\ulh}}{^a_c}\tensor{\tilde{\ulh}}{^b_d}\tilde{g}^{cd}$.
We will always assume that our initial data satisfies these constraints. In addition,
to enforce the gauge conditions, we will further assume that initial data is chosen so that the following gauge constraints hold:
\begin{align}
	\tilde{Z}^a|_{\Sigma_0} & =0, \label{E:constraintC} \\
	\tilde{A}_a \tilde{T}^a|_{\Sigma_0} &= 0. \label{E:constraintD}
\end{align}

\subsection{Main theorem}
We are now in a position to state the main result of this article in the following theorem.

\begin{theorem}\label{t:mainthm}
	Suppose  $\Lambda>0$, $s\in\Zbb_{>\frac{n+1}{2}}$, and the initial data $\acute{g}_{ab}\in H^{s+1}(\Sigma)$, $\grave{g}_{ab}\in H^s(\Sigma)$, $\acute{A}_a \in H^{s}(\Sigma)$ with $\ulh^c{}_a\ulh^d{}_b(d\acute{A})_{cd} \in H^{s}(\Sigma)$, and $\grave{A}_a \in H^{s}(\Sigma)$ satisfy the constraint equations\footnote{See \S\ref{s:norm} for a definition of the $H^s$ norms.} \eqref{E:constraintA}--\eqref{E:constraintD}. Then there exists a constant $\sigma>0$ such that if the initial data 
	satisfy smallness condition	
	\begin{align*}
		\lVert (\tilde{g}^{ab}(0)-\tilde{\ulg}^{ab}(0),\,\tilde{\nb}_{d}\tilde{g}^{ab}(0),\, \tilde{A}_a(0),\, \tilde{E}_{a}(0), \, \tilde{H}_{ab}(0)) \rVert_{H^s} \leq \sigma,
	\end{align*}
	then there exists a unique solution $(\tilde{g}^{ab},\tilde{A}_a)$ to the  Einstein--Yang--Mills equations \eqref{eq-einstein}--\eqref{eq-maxwell-bianchi-o}
	on $[0,\infty)\times \Sigma$ with regularity
	\begin{equation*}
		(\tilde{g}^{ab}, \, \tilde{\nb}_{d}\tilde{g}^{ab}, \tilde{A}_a,\tilde{E}_{a}, \tilde{H}_{ab} ) \in C^0\bigl([0, \infty) ,H^{s}(\Sigma)\bigr) \cap C^1\bigl([0, \infty), H^{s-1}(\Sigma)\bigr)
	\end{equation*}
	that satisfies the initial conditions \eqref{idata}, and the temporal and wave gauge constraints \eqref{phys-temp-gauge}--\eqref{e:prewg} on $[0,\infty)\times \Sigma$. Moreover, there exists a constant $C>0$ such that the estimates
	\begin{align*}
		\lVert \tilde{A}_a(\tau) \rVert_{H^s} + \lVert \tilde{E}_a(\tau) \rVert_{H^s} + \lVert \tilde{H}_{a b}(\tau) \rVert_{H^s}& \leq C\sigma
		\intertext{and}
		\lVert \tilde{g}^{ab}(\tau)-\tilde{\ulg}^{ab}(\tau)\rVert_{H^s}+\lVert \tilde{\nb}_{d}\tilde{g}^{ab}(\tau)\rVert_{H^s }  &\leq C \left(\frac{\pi}{2}-\gd(H^{-1} \tau)\right)^2 \sigma
	\end{align*}
	hold for all $\tau \in [0,\infty)$.
\end{theorem}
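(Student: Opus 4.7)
The plan is to recast the future stability problem as a singular initial value problem on a finite conformal time interval by passing to the conformal de Sitter picture, formulate the reduced Einstein--Yang--Mills system in wave and temporal gauges as a Fuchsian symmetric hyperbolic system, close energy estimates that yield uniform $H^s$ bounds all the way to the conformal boundary $t=0$, and finally read off the physical estimates by undoing the conformal rescaling and reparametrizing time via \eqref{E:COOR1}.

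The first step is to work with the conformal metric $g_{ab}=e^{-2\Psi}\tilde g_{ab}$ on the compact cylinder $\mathcal{M}^n=(0,\pi/H)\times\Sigma$, in which future timelike infinity becomes the regular boundary component $\{t=0\}\times\Sigma$ and the interval $\tau\in[0,\infty)$ corresponds to $t\in(0,\pi/(2H)]$. Because the Einstein equations are not conformally invariant, and the Yang--Mills equations fail to be conformally invariant when $n>4$, this change of variables introduces source terms scaling with powers $e^{(n-4)\Psi}\sim (Ht)^{-(n-4)}$ that are singular as $t\to 0^+$. After attaching appropriate conformal weights to the variables
\begin{equation*}
u=(g^{ab}-\ulg^{ab},\; \nb_d g^{ab},\; A_a,\; E_a,\; H_{ab}),
\end{equation*}
and imposing the wave gauge \eqref{e:prewg} and the temporal gauge \eqref{phys-temp-gauge}, one should be able to symmetrize the resulting reduced system into the Fuchsian form
\begin{equation*}
B^0(t,u)\partial_t u+B^i(t,u)\partial_i u=\tfrac{1}{t}\mathcal{B}(t,u)\mathbb{P}u+F(t,u),
\end{equation*}
with $B^0$ symmetric and uniformly positive definite near $u=0$, $B^i$ symmetric, $F$ regular in $t$ and at least quadratic in $u$, and $\mathcal{B}$ satisfying $\mathcal{B}\mathbb{P}=\mathcal{B}=\mathbb{P}^{T}\mathcal{B}$ with $\mathcal{B}\geq 0$ on $\mathrm{Im}\,\mathbb{P}$.

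Given this Fuchsian formulation, a standard $H^s$ energy estimate using $B^0$ as symmetrizer, combined with the favorable sign of $\mathcal{B}\mathbb{P}$ under the reverse time orientation $t\downarrow 0$ and the finite length of the conformal interval, yields a uniform bound $\|u(t)\|_{H^s}\lesssim\sigma$ on $(0,\pi/(2H)]$ for sufficiently small initial data, together with improved decay $\|\mathbb{P}u(t)\|_{H^s}\lesssim t^{\alpha}\sigma$ for an exponent $\alpha>0$ determined by the eigenvalues of $\mathcal{B}$ on $\mathrm{Im}\,\mathbb{P}$. The gauge and Gauss constraints \eqref{E:constraintA}--\eqref{E:constraintD} propagate in the usual way: once the reduced equations hold, the violation fields $\tilde Z^a$, $\tilde A_a\tilde T^a$, and $\tilde h^{ab}(\tilde\nabla_a\tilde E_b+[\tilde A_a,\tilde E_b])$ satisfy closed homogeneous linear systems with vanishing initial data, so they vanish throughout. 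Undoing the conformal rescaling $\tilde g^{ab}=e^{-2\Psi}g^{ab}$ and using the identity $Ht=\tfrac{\pi}{2}-\gd(H^{-1}\tau)$ converts boundedness of $g^{ab}-\ulg^{ab}$ and $\nb_d g^{ab}$ into the advertised $(\tfrac{\pi}{2}-\gd(H^{-1}\tau))^2$-decay for the physical metric perturbation, while the physical Yang--Mills quantities $\tilde A_a$, $\tilde E_a$, $\tilde H_{ab}$ are simply bounded by $C\sigma$.

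The main obstacle is the Fuchsian formulation in Step 2. In $n=4$, conformal invariance of the Yang--Mills system makes the identification essentially automatic and the argument reduces in spirit to Friedrich's original treatment, but in $n>4$ the singular factors $e^{(n-4)\Psi}$ arising from the non-invariant Yang--Mills principal term and stress-energy tensor must be absorbed by a delicate choice of conformal weights on each component of $u$ and by exploiting the $t^{-1}$ damping generated by the wave gauge reduction in the Einstein sector. Verifying that all singular contributions assemble into a single Fuchsian term $t^{-1}\mathcal{B}\mathbb{P}u$ with a genuinely nonnegative $\mathcal{B}$ on $\mathrm{Im}\,\mathbb{P}$, rather than producing a worse-than-Fuchsian singularity that would obstruct extension to $t=0$, is where the bulk of the technical work will lie; once this structural lemma is in hand, the remaining steps follow the standard symmetric hyperbolic template.
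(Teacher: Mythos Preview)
Your overall strategy matches the paper's: conformal compactification, wave/temporal gauge reduction, Fuchsian formulation, uniform $H^s$ bounds up to $t=0$, then translation back to physical variables. However, there are two structural points the paper handles carefully that your outline does not address, and one technical claim that is too optimistic.

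First, the symmetric hyperbolic Fuchsian system the paper derives is \emph{not} equivalent to the reduced conformal Einstein--Yang--Mills equations. To symmetrize the Yang--Mills block one must enlarge the state vector by an auxiliary variable $\mathcal{E}^a=-h^{ab}E_b$ treated as independent of $E_b$; the relation $\mathcal{E}^a=-h^{ab}E_b$ is not propagated by the enlarged system, so a solution of the Fuchsian system need not come from an Einstein--Yang--Mills solution. Consequently one cannot simply solve the Fuchsian problem and declare victory. The paper instead invokes a separate local existence theorem (from the companion paper \cite{LW2021b}) in the physical gauges to produce a genuine EYM solution on $[0,\tau_*)$, feeds that into the Fuchsian system to obtain uniform $H^s$ bounds, and then uses the continuation principle from \cite{LW2021b} to push $\tau_*$ to $\infty$. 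Your outline conflates the Fuchsian energy estimate with existence of the EYM solution itself.

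Second, the temporal gauge used for local existence is $\tilde A_a\tilde T^a=0$ (normal with respect to the \emph{physical} metric $\tilde g$), whereas the Fuchsian formulation requires the conformal temporal gauge $A_a\nu^a=0$ (normal with respect to the \emph{background}). These are genuinely different, and passing between them requires a time-dependent gauge transformation $\tilde{\mathfrak u}$ solving $\partial_\tau\tilde{\mathfrak u}=-\tilde A_0^\star\tilde{\mathfrak u}$; the paper devotes a full section to controlling $\tilde{\mathfrak u}$ and its spatial derivatives in $H^s$ uniformly in $t$, which is needed to transfer the Fuchsian bounds back to the physical Yang--Mills variables $\tilde A_a^\star$. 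Your proposal treats the gauge as a single choice made once.

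Finally, your claim that the source $F$ is ``regular in $t$ and at least quadratic in $u$'' is too strong: after the conformal rescalings $F_{ab}=e^{-\Psi}\tilde F_{ab}$, $A_a=e^{-\Psi/2}\tilde A_a$ the Yang--Mills commutator terms produce $t^{-1/2}$-singular contributions in the source, and the coefficient assumptions of the Fuchsian global existence theorem (the paper's version of \cite{Beyer2020}) explicitly allow $|t|^{-1/2}G_1+|t|^{-1}G_2$ pieces with structural constraints. The positivity condition you need is not just $\mathcal B\ge 0$ on $\mathrm{Im}\,\mathbb P$ but a quantitative gap $\kappa>\frac12\gamma_1(\sum\beta_{2k+1}+2\lambda_3)$, which the paper secures by dimension-dependent choices of free parameters in the change of variables (treated separately for $n=4$, $n=5$, $n\ge 6$).
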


\subsection{Prior and related works}
The vacuum de Sitter stability result of Friedrich in $n=4$ spacetime dimensions was extended to all \textit{even} spacetime dimension $n\geq 4$ in \cite{Anderson2005}; see also \cite{kaminski2021} where a gap in this stability proof is addressed. It is worth noting that the de Sitter stability proofs from \cite{Anderson2005,Friedrich1986} and the related Einstein--Yang--Mills stability proof from \cite{Friedrich1991} rely on specific conformal representations of the Einstein field equations. It is because these conformal representations only exist in certain spacetime dimensions that leads to the dimension dependent restrictions in these stability proofs. 

The first de Sitter stability result to extend to all spacetime dimension $n\geq 4$ was established in \cite{Ringstroem2008}. In that article, the future global existence of solutions to the Einstein--scalar field system that are generated from initial data that is sufficiently close to de Sitter intial data was established for a class of potentials that includes the vacuum Einstein equations with a positive cosmological constant. This stability result was later generalized to the Einstein--Maxwell--scalar field equations in \cite{Svedberg2011}.

Related stability results for perfect fluid Friedmann-Lema\^{\i}tre-Robertson-Walker (FLRW) spacetimes  with a postive cosmological constant that are asymptotic to a flat spatial slicing of de Sitter were obtained in the articles \cite{Hadzic2015,Liu2018,Liu2018b,Liu2018a, LeFloch2021,Luebbe2013,Oliynyk2016a, RodnianskiSpeck:2013,Speck2012}. We also note the related stability results from \cite{FOW:2021,FW-CPDE20,Wang2019} for FLRW spacetimes with non-accelerated expansion. 

The temporal gauge for the Yang-Mills equations, which is essential for arguments presented in this and the companion article \cite{LW2021b}, is well-known and has been employed in prior studies of the Yang-Mills equations; for example, 
Eardley-Moncrief \cite{Eardley1982, Eardley1982-global} and Ginibre-Velo \cite{Ginibre1981} used the temporal gauge to deduce first order equations for the Yang--Mills equations in Minkowski spacetime. However, all of the previous first order formulations of the Yang-Mills equations in
the temporal gauge were not symmetric hyperbolic
and the hyperbolicity of these formulations was not evident. A first order symmetric hyperoblic formulation of the Yang--Mills equations in the temporal gauge was first derived in the companion article \cite{LW2021b}. In the present article, we rely on a similar formulation to cast the Einstein--Yang--Mills equations as a symmetric first order system.

\subsection{Overview and proof strategy}
\subsubsection{Main idea}
The main idea behind the proof of Theorem \ref{t:mainthm}, which was first employed in \cite{Oliynyk2016a} for a different matter model, is to formulate the initial value problem for a gauge reduced version of the Einstein--Yang--Mills equations as a Fuchsian initial value problem of the form
\begin{align}
	\mathbf{A}^a(t,\mathbf{U})\nb_a \mathbf{U}&=\frac{1}{t}\mathfrak{A}(t,\mathbf{U})\mathbb{P}\mathbf{U}+ G(t,\mathbf{U}) \quad &&\text{in }\Bigl(0,\frac{\pi}{2H}\Bigr]\times \Sigma, \label{e:modeq0}\\
	\mathbf{U}&=\mathbf{U}_0   \quad &&\text{in }\Bigl\{\frac{\pi}{2H}\Bigr\} \times \Sigma,   \label{e:moddt0}
\end{align}
on the conformal de Sitter spacetime manifold $(\mathcal{M}^n,\ulg_{ab})$, where we recall that $t=0$ corresponds to future timelike infinity and $t=\frac{\pi}{2H}$ corresponds to $\tau=0$. The advantage of this reformulation is once it is shown that the coefficients of the Fuchsian equation \eqref{e:modeq0} satisfy certain structure conditions, see \S\ref{s:verif} for details, then, under a suitable smallness assumption on the initial data, the existence of solutions
to \eqref{e:modeq0}-\eqref{e:moddt0} on the spacetime domain $\bigl(0,\frac{\pi}{2H}\bigr]\times \Sigma \subset \mathcal{M}^n$ follows from an application of Theorem \ref{t:glex} in \S\ref{s:glbex}, which is an adapted version Theorem 3.8 from \cite{Beyer2020}.

For technical reasons, we do not formulate the gauge reduced Einstein--Yang--Mills equations as a Fuchsian system. Instead, we show in \S\ref{s:confflsm}, see, in particular, Theorems \ref{thm-FOSHS-m} and \ref{thm-FOSHS-Maxwell}, that solutions of Einstein--Yang--Mills equations that satisfy a temporal and wave gauge condition yield solutions to a Fuchsian equation of the form \eqref{e:modeq0}; see \eqref{e:FchEYM} for the actual equation.
We then appeal to the local-in-time existence theory for the Einstein--Yang--Mills equations from the companion article \cite{LW2021b} to obtain local-in-time solutions to the Fuchsian \eqref{e:modeq0} 
on a spacetime domain of the form $\bigl(t_*,\frac{\pi}{2H}\bigr]\times \Sigma \subset \mathcal{M}^n$  for some $t_*\in \bigl[0,\frac{\pi}{2H}\bigr)$ 
that satisfies a given initial condition $\mathbf{U}=\mathbf{U}_0$ at $t=\frac{\pi}{2H}$. It is important to note that de Sitter initial data at $\tau=0$ corresponds to the trivial initial data $\mathbf{U}_0=0$, because we are then free to choose $\mathbf{U}_0$ as small as we like  since we are only interested in perturbations of de Sitter initial data.

Now, the time of existence $t_*$ from the local-in-time existence theory may be strictly greater than zero. In order to show that $t_*=0$, which would correspond to solutions to the Einstein--Yang--Mills equations that exist globally to the future, we alternately view the initial value problem \eqref{e:modeq0}--\eqref{e:moddt0} as a stand alone system that, a priori, admits solutions that are not derived from solutions of the reduced Einstein--Yang--Mills equations. The point of doing so is that, under a suitable smallness assumption on the initial data $\mathbf{U}_0$,  we can apply Theorem \ref{t:glex} to conclude the existence of solutions to \eqref{e:modeq0} on $\bigl(0,\frac{\pi}{2H}\bigr]\times \Sigma$ that are uniformly bounded. By uniqueness, this global solution coincides on $\bigl(t_*,\frac{\pi}{2H}\bigr]\times \Sigma$  with the local-in-time solution to \eqref{e:modeq0} obtained from a local-in-time solution to the Einstein--Yang--Mills equations as discussed above. In this way, we obtain uniform bounds on the local solution. Because of these bounds, we can then appeal to the continuation principle from \cite{LW2021b} to extend the local-in-time solution past $t_*$, and in particular, all the way to $t_*=0$, which yields the global existence of solutions to the Einstein--Yang--Mills equations. 

The above arguments constitute the main steps in the proof of Theorem \ref{t:mainthm}, the main result of this article. The complete proof of the theorem is given in \S\ref{mainthm-proof}.

\subsubsection{Additional remarks}
In the companion article \cite{LW2021b}, the first and third authors formulated the Einstein--Yang--Mills equations in a temporal gauge as a first order symmetric hyperbolic system. This was achieved through the introduction of auxiliary variables and the use of symmetrising tensors. The temporal gauge employed in \cite{LW2021b} was adapted to the physical spacetime metric and this choice of gauge was particularly useful for establishing the equivalence between the Einstein--Yang--Mills equations and the symmetric first order system. In this article \cite{LW2021b}, the adapted temporal gauge is defined by \eqref{phys-temp-gauge}.

While the adapted temporal gauge \eqref{phys-temp-gauge} is useful for establishing the local-in-time existence of solutions, it is not suitable for analysing the long time stability of solutions. 
Instead, for the long time stability analysis, we employ the temporal gauge defined by \eqref{temporal}. Fortunately, as we show in this article, the first order symmetric hyperbolic formulation of the Einstein--Yang--Mills equations from  \cite{LW2021b} is robust and can be adapted to the new choice of temporal gauge. However, it should be noted that symmetrizing the Einstein--Yang--Mills equations in the temporal gauge \eqref{temporal} is more difficult due to the fact that in this gauge, unlike the adapted gauge \eqref{phys-temp-gauge}, the propagation and constraint equations of the Yang--Mills equations do not decouple; see Section \ref{sec-Max} for details. Importantly, we can also establish suitable estimates on the gauge transformation that connects the two gauges, see Section \ref{s:maprf} for details, which allows us employ both gauges in our stability proof.

In our view, the key innovation of this article is the use of the temporal gauge \eqref{temporal} and a conformal compactification of spacetime to formulate the Einstein--Yang--Mills equations as a Fuchsian system whose coefficients satisfy structure conditions that make it suitable for establishing the long time stability of solutions. It is worth noting that the use of the temporal gauge seems to be crucial to being able to formulate the Einstein--Yang--Mills equations as a Fuchsian system. Other gauge choices such as the Lorentzian gauges do not seem to lead to a Fuchsian formulation that is suitable for the long time analysis of solutions.

\subsubsection{Fuchsian fields}\label{sec-Fuchsian-field}

The Fuchisan formulation \eqref{e:FchEYM} of the gauge reduced Einstein--Yang--Mills equations is based on a particular choice of fields. To define these fields, we first replace the physical fields with conformally rescaled fields defined by
\begin{align}
	g_{ab}={}&e^{-2\Psi}\tilde{g}_{ab},  \label{CONFG-g} \\
	F_{ab}={}&e^{-\Psi} \tilde{F}_{ab}, \label{CONFG-F} \\
	A_a ={}&e^{- \frac{\Psi}{2} } \Ab_a, \label{CONFG-A}
\end{align}	
where $\Psi$ as previously defined above by \eqref{E:CONFFAC}. These conformal fields should be viewed as living on the conformal de Sitter spacetime $\mathcal{M}^{n}=\bigl(0,\frac{\pi}{H}\bigr)\times \Sigma$. For use below, we note that
\eqref{CONFG-g}--\eqref{CONFG-F} imply the relations 
\al{CONFG2}{
	g^{ab}=e^{2\Psi}\tilde{g}^{ab},\quad F^{ab}=e^{3\Psi} \tilde{F}^{ab} \AND \tensor{F}{^a_b} =e^{\Psi}\tensor{\tilde{F}}{^a_b} .
}

Next, we decompose the conformal metric $g_{ab}$ as 
\begin{align}\label{decom-g}
	\lambda=g^{ab}\nu_a\nu_b, \quad \xi^c=g^{ab}\nu_a \tensor{\ulh}{^c_b} \AND h^{ab}=\tensor{\ulh}{^a_c} \tensor{\ulh}{^b_d} g^{cd},
\end{align}
where we recall that $\nu_a$ is defined above by \eqref{E:NORMT}. Inspired by the Fuchsian formulation of the Einstein--Euler equations from \cite{Oliynyk2016a, Liu2018b},
we introduce similar variables to parameterize the gravitational
fields. As was first observed in \cite{Oliynyk2016a}, this choice of variables in conjunction with the wave gauge \eqref{e:prewg} allows the Einstein equations to be formulated as a Fuchsian system of equations whose coefficients satisfy
specific structural conditions that  make it possible to derive global estimates on solutions. 
To define these new variables, we first set
\begin{gather}
	\ulh^{ab} = \ulh^{a}{}_{c}\ulh^{b}{}_{d}\ulg^{cd}, \quad	S =\frac{\alpha}{\ula}, \quad \alpha=(\det{(h^{ab}+\nu^a\nu^b)})^{\frac{1}{n-1}}, \quad \ula =(\det{(\ulh^{ab}+\nu^a\nu^b)})^{\frac{1}{n-1}}, \label{e:S}\\
	q= {} \lambda+1+(3-n) \ln S \AND
	\mathfrak{h}^{ab} ={}  \frac{1}{S} h^{ab}. \label{E:q}
\end{gather}  
We then define the Fuchsian gravitational field variables via 
\ali{
	m={}&\frac{1}{\tta t}(\lambda+1), \label{E:W} \\
	p^a={}&\frac{1}{\ttb t} \xi^a, \label{E:V} \\
	m_d= {}&\udn{d}\lambda - \frac{1}{\ttj H t}(\lambda+1) \nu_d,   \label{E:WD}\\
	\tp{a}{d}={}&  \udn{d} \xi^a - \frac{1}{\ttk H t}\xi^a \nu_d,  \label{E:VD}\\
	s^{ab}={}&\mathfrak{h}^{ab}-\ulh^{ab},  \label{E:U}\\
	\tss{ab}{d}={}&\udn{d}(\mathfrak{h}^{ab}-\ulh^{ab}), \label{E:UD}\\
	s={}&q, \label{E:Q}\\
	s_d={}&\udn{d} q, \label{E:QD}
} 
where $\tta, \ttb, \ttj, \ttk$ are constants.  The freedom to choose these constants will be used below to eliminate problematic $1/t$ singular terms in the evolution equations. 
We also find it useful in subsequent calculations to let $h_{ab}$ denote the unique symmetric tensor field satisfying 
\begin{equation*}
	h_{ab}\nu^b=0 \AND h^{cb}h_{ab}=\tensor{\ulh}{^c_a}.
\end{equation*}

For the conformal Yang--Mills fields, we employ a \textit{temporal gauge} defined by
\begin{equation}\label{e:temgg}
	A_a \nu^a = 0,
\end{equation}
and we set
\be\label{decom-F}
\bar A_b = A_a \tensor{\ulh}{^a_b} \AND E_b=- \nu^p F_{p a} \tensor{\ulh}{^a_b}.
\ee	
Due to the temporal gauge, the one form 	$\bar A_b$ completely determines the gauge potential while $E_b$ defines the conformal electric field associated to the splitting determined by the vector field $\nu^a$. In order to express the Yang--Mills equations as a first order symmetric hyperbolic system, we find it necessary to introduce the additional fields
\begin{align}
	\tE^a= {}&
	-h^{a b} E_{b}, \label{def-tE-1} \\
	H_{db}= {}&\tensor{\ulh}{^c_d} F_{c a} \tensor{\ulh}{^a_b}, \label{def-MYM}
\end{align}
where we note that $H_{db}$ is the magnetic field associated to the splitting determined by the vector field $\nu^a$.  The field $H_{ab}$ is used to express the Yang--Mills equations as a first order system \eqref{e:maineq1} in the variables $\bar A_a, \, E_b, \, H_{a b}$, which is non-symmetric. The field $\tE^a$ is then extracted from the non-symmetric part of \eqref{e:maineq1}, which allows us to rewrite the Yang--Mills equation as a symmetric first order system \eqref{Maxwell-FOSHS-1} in the variables $\bar A_a, \, E_b, \, \tE^c, \, H_{a b}$. 
For use in calculations below, we record the identity
\begin{equation}\label{e:Hpq}
	H_{p q} = \sqrt{\frac{\sin(Ht)}{H}} \tensor{\ulh}{^a_p} \nb_a \bar A_q - \sqrt{\frac{\sin(Ht)}{H}} \tensor{\ulh}{^a_q} \nb_a \bar A_p + [\bar A_p, \bar A_q],
\end{equation}
which is easily derived from \eqref{def-F},  \eqref{E:CONFFAC}, \eqref{E:PRO}, \eqref{CONFG-F}--\eqref{CONFG-A}, \eqref{decom-F} and \eqref{E:CAL1}.

We then collect the above fields into the single vector
\be\label{def-U}
\mathbf{U}=(m,\, p^a,\, m_d, \, \tp{a}{d}, \, s^{ab}, \, \tss{ab}{d}, \, s, \, s_d, \,\tE^a, \, E_b, \, H_{db}, \, \bar A_d).
\ee
It is this collection of fields that we will use to transform the gauge reduced Einstein--Yang--Mills equations into Fuchsian form.

\subsubsection{Derivation of the Fuchsian equation}
The majority of this article is devoted to the derivation of the Fuchsian formulation of the gauge reduced conformal Einstein--Yang--Mills equations given by
\eqref{e:FchEYM}. The derivation is lengthy but computationally straightforward and it is split into two parts. The first part concerns the derivation of a Fuchsian formulation of the Einstein equations in the wave gauge \eqref{e:prewg} and it is carried out in \S\ref{sec-Einstein} as well as in the appendix \S\ref{lem-FOSHS-p}. The resulting Fuchsian equations are displayed in Theorem \ref{thm-FOSHS-m}. The second part of the derivation is given in \S\ref{sec-Max}. There a Fuchsian formulation of the Yang--Mills equations in the temporal gauge \eqref{e:temgg} is obtained and the equations are displayed in Theorem \ref{thm-FOSHS-Maxwell}. We then combine the equations from Theorems \ref{thm-FOSHS-m} and \ref{thm-FOSHS-Maxwell} to obtain the Fuchsian equation \eqref{e:FchEYM}.

\subsection{Outlines}
The paper is organized as follows. In \S\ref{sec:prelim}, we fix our notation and set out our conventions.
In \S\ref{s:confflsm}, we present a  Fuchsian formulation for the conformal  Einstein--Yang--Mills system for any dimension $n\geq4 $.
In \S\ref{sec-Model} and \S\ref{s:verif}, we analyse the Fuchsian formulation of conformal Einstein--Yang--Mills system, deriving all the necessary estimates and in particular improved estimates for the Yang--Mills field. We show in \S\ref{s:maprf} the uniform bound of the gauge transformation between the different gauges for local existence and for global stability. In the end, we conclude global stability in \S\ref{mainthm-proof}. More supporting materials and calculations are given in the appendices \S\ref{s:App1}-\S\ref{sec-matrix}. An index of notation is presented in Appendix \S\ref{s:index}.

\section{Preliminaries\label{sec:prelim}}
In this section, we set out some conventions and notation that we will employ throughout this article.

\subsection{Abstract indices and tensor conventions}\label{s:AIN}
For tensors, we employ abstract index notation, e.g. see \cite[\S 2.4]{Wald2010}. Physical fields will be distinguished with a tilde, e.g. $\tilde{g}_{ab}$, while their conformal counterpart will be denoted with the same letter but without the tilde, e.g. $g_{ab}$. 
Underlined fields with a tilde, e.g. $\tilde{\underline{g}}_{ab}$, will refer to background fields that are associated with de Sitter spacetime. In line with the above notation, underlined  fields without a tilde, e.g.  $\underline{g}_{ab}$, will refer to background fields associated with the conformal de Sitter spacetime. In particular, we use $\tilde{\nb}$ 
and $\nb$ to denote the covariant derivatives associated to the background metrics $\tilde{\underline{g}}_{ab}$ and $\underline{g}_{ab}$, respectively, and likewise we use $\tensor{\tilde{\udl{R}}}{_{cde}^a}$ and  $\tensor{\udl{R}}{_{cde}^a}$ to denote the associated curvature tensors. The other curvature tensor will be denoted using similar notation, e.g. $\tilde{\udl{R}}_{ab}$ and $\udl{R}_{ab}$ for the Ricci curvature tensors.

As indicated above, we use  $\tilde{\nabla}$ to denote the covariant derivative associated to the physical metric $\tilde{g}_{ab}$, and in line with our conventions,  we use $\nabla$ to denote the covariant derivative associated to the conformal metric $g_{ab}$. We also use $\tensor{\tilde{R}}{_{cde}^a}$ and  $\tensor{R}{_{cde}^a}$ to denote curvature tensors of  $\tilde{g}_{ab}$ and $g_{ab}$, respectively, and a similar notation for the other curvature tensors. We also use $\Box = g^{a b} \nabla_a \nabla_b$ to denote the wave operator associated to the conformal metric $g_{a b}$.

Unless indicated otherwise, we use the physical metric $\tilde{g}_{ab}$ to raise and lower indices on physical tensors that are \textit{not} background tensors,
and the conformal  metric $g_{ab}$ to raise and lower tensor indices on tensors built out of the conformal fields, but are  not identically background tensors. For background tensors associated with the de Sitter spacetime, we use the de Sitter metric $\tilde{\ulg}_{ab}$ to raise and lower indices, and correspondingly, we use the conformal background metric $\ulg_{ab}$ to raise and lower indices on background tensor fields associated with the conformal de Sitter spacetime.

It is worth noting at this point that the one forms $\tilde{\nu}_a$ and $\nu_a$, which are derived from the background slicing of the de Sitter spacetime and its conformal counterpart, see \eqref{E:NORMT} and \eqref{E:NORMT-a}, are not underlined and constitute an exception to our conventions. Any other exceptions to the above conventions will be clearly indicated when they occur.

\subsubsection{Index of notation} A list of frequently used definitions and notation is provided in Appendix \ref{s:index}.

\subsubsection{Index brackets}
Round and square brackets on tensor indices are used to identify the symmetric and anti-symmetric, respectively, components of a tensor. For example,
\begin{align*}
	A_{[a b c]} =   \frac{1}{6}(A_{a b c} + A_{b c a} + A_{c a b} -A_{a c b}-A_{b a c}-A_{c b a}) \AND
	B_{(a b)} =  \frac{1}{2}(B_{a b} + B_{b a}).
\end{align*}

\subsubsection{Connection coefficients}\label{s:conx}
The covariant derivatives $ \nabla_a\tensor{T}{^{b_1\cdots b_k}_{c_1\cdots c_l}}$ and $\nb_a\tensor{T}{^{b_1\cdots b_k}_{c_1\cdots c_l}}$  are related via
\begin{equation*}
	\nabla_a  \tensor{T}{^{b_1\cdots b_k}_{c_1\cdots c_l}}=\udn{a}  \tensor{T}{^{b_1\cdots b_k}_{c_1\cdots c_l}}+\sum_i \tensor{X}{^{b_i}_{ad}}\tensor{T}{^{b_1\cdots d \cdots b_k}_{c_1\cdots c_l}}-\sum_j \tensor{X}{^d_{ac_j}}\tensor{T}{^{b_1\cdots b_k}_{c_1\cdots d \cdots c_l}}
\end{equation*}
where $\tensor{X}{^a_{bc}}$ is defined by
\begin{equation*}
	\tensor{X}{^a_{bc}}
	=  -\frac{1}{2}\bigl(g_{ec}\udn{b}g^{ae}+g_{be}\udn{c}g^{ae}-g^{ae}g_{bd}g_{cf}\udn{e}g^{df}\bigr). 
\end{equation*}
Contracting $\tensor{X}{^a_{bc}}$ on the covariant indices yields the vector field 
\be\label{def-X}
X^a :=  g^{bc} \tensor{X}{^a_{bc}}=  -\udn{e}g^{ae}+\frac{1}{2} g^{ae}g_{df}\udn{e}g^{df},
\ee
which plays an important role in defining the wave gauge.

\subsubsection{Sobolev norms for spacetime tensors}\label{s:norm}
To define the Sobolev norms for spacetime tensors that are employed in this article, we
first consider the special case of a rank 2 covariant tensor field $S_{ab}$ on a spacetime manifold of the
form $I\times \Sigma$, where $I\subset \mathbb{R}
$ is an interval and, as above, $\Sigma = \mathbb{S}^{n-1}$. Assuming that $t$ is a Cartesian coordinate on $I$, we interpret the associated coordinate vector field $\partial_t$ on $I$ as defining a vector field $(\partial_t)^a$ on $I\times \Sigma$ that satisfies $(\partial_t)^a (dt)_a =1$. We then decompose $S_{ab}$ into a scalar $(\partial_t)^a S_{ab}(\partial_t)^b$, a spatial one form $\ts{\ulh}{^b_a} S_{bc}(\partial_t)^c$, and a spatial tensor field 
$\ulh^c{}_a S_{cd}\ulh^d{}_b$, where, as above, $\ulh_{ab}$ is the standard metric on $\mathbb{S}^{n-1}$, which we interpret as a tensor field on $I\times \Sigma$. Now since $(\partial_t)^a S_{ab}(\partial_t)^b$ is a scalar field, and the tensor fields $\ulh^c{}_a S_{cd}\ulh^d{}_b$ and  $\ulh^b{}_a S_{bc}(\partial_t)^c$ are purely spatial, we can, by restricting them to $\Sigma_t = \{t\}\times \Sigma$, naturally interpret these as tensor fields on the Riemannian manifold $(\Sigma_t,\ulh_{ab})$, which are all trivially isometric for different values of $t$. For $1\leq p\leq \infty$ and $s\in \Zbb_{\geq 0}$, the $W^{s,p}$ Sobolev norm of  $S_{ab}$  is then  defined by 
\begin{align*}
	\norm{S_{ab}(t)}_{W^{k,p}} = & \norm{(\partial_t)^a S_{ab}(\partial_t)^b|_{\Sigma_t}}_{W^{k,p}(\Sigma_t)}+\norm{\ulh^b{}_a S_{bc}(\partial_t)^c|_{\Sigma_t}}_{W^{k,p}(\Sigma_t)} \notag \\
	&+ \norm{\ulh^c{}_aS_{cd}\ulh^d{}_b|_{\Sigma_t}}_{W^{k,p}(\Sigma_t)}
\end{align*}
for $t\in I$ where the Sobolev norms $\norm{\cdot}_{W^{k,p}(\Sigma_t)}$ on the Riemannin manifolds $(\Sigma_t,\ulh_{ab})$ are defined in the usual way, e.g. see \cite[Ch.~2]{Aubin1998}. We also define
\begin{equation*}
	\norm{S_{ab}}_{L^\infty(I,W^{s,p})} = \sup_{t\in I}\norm{S_{ab}(t)}_{W^{s,p}}.
\end{equation*}

By identifying $\Sigma_t$ with $\Sigma$ via the isometry $\Sigma \ni x \longmapsto (t,x)\in \Sigma_t$, we can view $(\partial_t)^a S_{ab}(\partial_t)^b$, $\ulh^c{}_a S_{cd}\ulh^d{}_b$ and  $\ulh^c{}_a S_{cd}\ulh^d{}_b$ as time-dependent tensor fields on the fixed Riemmanian manifold $(\Sigma,\ulh_{ab})$. We then have 
\begin{equation*}
	S_{ab}\in C^\ell(I,W^{s,p}(\Sigma))
\end{equation*}
provided each of the maps
$I\ni t \longmapsto (\partial_t)^a S_{ab}(\partial_t)^b|_{\Sigma_t}  \in W^{s,p}(\Sigma)$, 
$I\ni t \longmapsto \ulh^c{}_a S_{cd}\ulh^d{}_b|_{\Sigma_t}  \in W^{s,p}(\Sigma)$
and
$I\ni t \longmapsto \ulh^c{}_a S_{cd}\ulh^d{}_b|_{\Sigma_t} \in W^{s,p}(\Sigma)$
are $\ell$-times continuously differentiable with respect to $t$.

The $W^{k,p}$ norms for general spacetime tensor fields can be defined in a similar fashion.

\section{Conformal Einstein--Yang--Mills equations}\label{s:confflsm}

The aim of this section is to transform the conformal  Einstein--Yang--Mills equations into Fuchsian form in a step by step fashion. The Fuchsian formulation of the Yang--Mills equations derived below is new while the Einstein equations are transformed into Fuchsian form following the method used in the articles\footnote{In this article, the conformal factor, e.g. $\Psi$ (defined by \eqref{E:CONFFAC}, i.e., $\Psi=
	-\ln(\frac{\sin(Ht)}{H} )$), is different but has the same asymptotic properties as $\Psi=
	-\ln t$, i.e. $\sin(Ht)\sim \tan(Ht)\sim Ht$ near $t=0$. It is because of this that the same methods continue to work.}  \cite{Oliynyk2016a,Liu2018,Liu2018b,Liu2018a, LeFloch2021,Oliynyk2021b}.

\subsection{Fuchsian  formalism of the reduced conformal Einstein equations}\label{sec-Einstein}	
We begin the transformation of the Einstein equations \eqref{eq-einstein} into Fuchsian form by contracting them with $\tilde{g}^{ab}$ to get
\begin{equation}\label{e:R}
	\tilde{R}=\frac{2}{n-2}(n\Lambda-\tilde{T}),
\end{equation}
where we are using $\tilde{T}=\tilde{g}^{cd}\tilde{T}_{cd}$ to denote the trace of the stress-energy tensor.
Inserting \eqref{e:R} into \eqref{eq-einstein} yields
\begin{align}\label{e:tRab}
	\tilde{R}_{ab}-\frac{2}{n-2}\Lambda\tilde{g}_{ab}=\tilde{T}_{ab}-\frac{1}{n-2}\tilde{T}\tilde{g}_{ab}.
\end{align}
With the help of \eqref{E:H} and \eqref{e:tRab}, a straightforward calculation
using well-known conformal transformation rules (e.g. \cite[Appendix VI]{Choquet-Bruhat2009}) then shows that the Einstein equations \eqref{eq-einstein} transform under the conformal  change of variable 
\eqref{CONFG-g}--\eqref{CONFG-F}
into
\al{CONFEIN2}{
	& R^{ab} = (n-2)(\nabla^a \nabla^b \Psi-\nabla^a \Psi \nabla^b \Psi)+\left(\Box \Psi+(n-2)\nabla^c\Psi\nabla_c\Psi+\frac{n-1}{H^2} e^{2\Psi}\right)g^{ab}  \notag   \\
	& \hspace{5cm} + T^{ab}-\frac{1}{n-2} T g^{ab},
}
where $T_{a b} = F_{a}{}^{ c} F_{b c} - \frac{1}{4} g_{a b}  F^{c d} F_{c d}$, $T^{a b} =g^{ac} g^{bd} T_{cd}$, $T=g_{cd}T^{cd}$, and we note that $T_{a b} = \tilde T_{a b}$ and $T=e^{2\Psi}\tilde{T}$ by \eqref{CONFG-g}--\eqref{CONFG-F} and \eqref{E:CONFG2}.

\subsubsection{The reduced Einstein equations}\label{s:REE}
The next step in transforming the Einstein equations into Fuchsian form is to select a \textit{wave gauge} that will allow us to formulate the Einstein equations as a symmetric hyperbolic system and eliminate problematic singular terms.
To this end, we set
\al{WAVEGA}{
	Z^a =X^a+Y^a
}
where $X^a$ is as defined above by \eqref{def-X} and\footnote{These choices for $Y^a$, $\eta^a$, and the tensor $\ts{A}{^{ab}_c}$ defined below are made in order to eliminate problematic $1/t$ singular terms in the conformal Einstein equations. This type of gauge fixing was first introduced in \cite{Oliynyk2016a}. }
\al{XYZ}{
	Y^a =   -(n-2)\nabla^a\Psi+\eta^a  \quad\text{with}\quad
	\eta^a = (n-2)  \udn{}^a\Psi
	=-\frac{n-2}{\tan(Ht)}\nu^a.
}
The \textit{wave gauge} that we employ for the conformal Einstein equations is then defined by the vanishing of the vector field \eqref{E:WAVEGA}, that is,
\al{CONSTR1}{Z^a=0.}
This gauge choice is preserved by the evolution and so we only need to choose initial data so that $Z^a$ vanishes on the initial hypersurface to ensure that it vanishes throughout the evolution. We will refer to the conformal Einstein equations in the wave gauge \eqref{E:CONSTR1} as the \textit{reduced conformal Einstein equations}.

\begin{lemma}\label{t:rdein}
	The reduced conformal Einstein equations are given by
	\al{CONFEIN5}{
		&\frac{1}{2}g^{cd}\udn{c}\udn{d} g^{ab} +\udl{R}^{ab} +P^{ab}+ Q^{ab} +\frac{1}{n-2}X^a X^b -(n-2)\nu_c g^{c(a}\nu^{b)}   \nnb\\
		& +\frac{n-2}{2\tan^2(Ht)}\nu^a  (g^{bc}-\ulg^{bc}) \nu_c  +\frac{n-2}{2\tan^2(Ht)}\nu^b  (g^{ac}-\ulg^{ac}) \nu_c    \nnb\\ = {} &\frac{n-2}{2\tan(Ht)}\nu^c\udn{c}g^{ab} + \left( \frac{\lambda+1}{\sin^2(Ht)} + (n-2)\right)g^{ab} +g^{bd}F^{ac}F_{dc}-\frac{1}{2(n-2)}g^{ab}F^{cd}F_{cd}.
	}
	where $P^{ab}$ (linear and quadratic terms in $g^{bc}-\ulg^{bc}$) are $Q^{ab}$ (quadratic terms in $\udn{e} g^{bd}$) are defined below by \eqref{E:RICCI}. 
\end{lemma}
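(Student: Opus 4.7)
The plan is to derive \eqref{E:CONFEIN5} from the conformal Einstein equation \eqref{E:CONFEIN2} by decomposing $R^{ab}$ into wave-equation form and then imposing the gauge condition $Z^a=0$. The central ingredient is the identity, proved in the appendix \S\ref{lem-FOSHS-p}, that expresses the Ricci tensor of $g_{ab}$ in terms of the background covariant derivative as
\begin{equation*}
R^{ab} = -\tfrac{1}{2}g^{cd}\nb_c\nb_d g^{ab} + \nb^{(a}X^{b)} + \ulR^{ab} + P^{ab} + Q^{ab},
\end{equation*}
where $X^a$ is as in \eqref{def-X} and $P^{ab},Q^{ab}$ collect, respectively, the background-connection and quadratic $\nb g$-terms. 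Substituting this identity into \eqref{E:CONFEIN2} and moving the wave-operator piece $\tfrac{1}{2}g^{cd}\nb_c\nb_d g^{ab}$ to the left-hand side already produces the skeleton of \eqref{E:CONFEIN5}; what remains is to handle the $\Psi$-gradient terms and the Yang--Mills stress-energy contribution.

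To treat the $\Psi$-terms, I would invoke the wave gauge $Z^a=0$, which by \eqref{E:WAVEGA}--\eqref{E:XYZ} gives $X^a = (n-2)\nabla^a\Psi - \eta^a$ with $\eta^a = -\tfrac{n-2}{\tan(Ht)}\nu^a$. Inserting this into $\nb^{(a}X^{b)}$ and passing from $\nb$ to $\nabla$ (up to connection corrections that are absorbed into $P^{ab}$ and $Q^{ab}$), the piece $(n-2)\nabla^a\nabla^b\Psi$ appearing on the right of \eqref{E:CONFEIN2} is cancelled, and a residual $-\nb^{(a}\eta^{b)}$ survives. Using $\nb_a\nu_b=0$ (since the conformal de Sitter metric is a product), $\nb^{(a}\eta^{b)}$ reduces to the transport term $-\tfrac{n-2}{2\tan(Ht)}\nu^c\nb_c g^{ab}$ (moved to the right of \eqref{E:CONFEIN5} with opposite sign). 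The remaining quadratic piece $-(n-2)\nabla^a\Psi\nabla^b\Psi$ from \eqref{E:CONFEIN2}, combined with the expansion $(n-2)\nabla^a\Psi\nabla^b\Psi = \tfrac{1}{n-2}(X^a+\eta^a)(X^b+\eta^b)$ and the identity $\nabla_a\Psi=-\cot(Ht)\nu_a$, collapses to $\tfrac{1}{n-2}X^aX^b$ plus the background term $-(n-2)\nu_c g^{c(a}\nu^{b)}$ and the two singular correction terms $\tfrac{n-2}{2\tan^2(Ht)}\nu^{a}(g^{bc}-\ulg^{bc})\nu_c$ and $\tfrac{n-2}{2\tan^2(Ht)}\nu^{b}(g^{ac}-\ulg^{ac})\nu_c$ on the left of \eqref{E:CONFEIN5}. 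The scalar coefficient $\Box\Psi+(n-2)\nabla^c\Psi\nabla_c\Psi+\tfrac{n-1}{H^2}e^{2\Psi}$ of $g^{ab}$ is reduced to $\tfrac{\lambda+1}{\sin^2(Ht)}+(n-2)$ by a direct trigonometric computation using $e^{2\Psi}=H^2/\sin^2(Ht)$, $\nabla_a\Psi=-\cot(Ht)\nu_a$, and the wave-gauge identity $X^c\nu_c=-(n-2)\cot(Ht)(\lambda+1)$ (the last of which requires the same key cancellation $\cot^2(Ht)=\csc^2(Ht)-1$). Finally, the Yang--Mills stress-energy reduces via the trace $T = \tfrac{4-n}{4}F^{cd}F_{cd}$ to $T^{ab}-\tfrac{1}{n-2}Tg^{ab} = g^{bd}F^{ac}F_{dc} - \tfrac{1}{2(n-2)}g^{ab}F^{cd}F_{cd}$, producing the matter term on the right of \eqref{E:CONFEIN5}.

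The main technical obstacle is the careful bookkeeping of the $1/\tan(Ht)$ and $1/\tan^2(Ht)$ singularities, and verifying that the specific choice of $\eta^a$ in \eqref{E:XYZ} is precisely the one that makes all singular contributions cancel apart from terms proportional to the perturbation $g^{ab}-\ulg^{ab}$ (encoded through $\lambda+1$ and $\xi^a$). These residual perturbation terms are exactly what is required to match the Fuchsian structure introduced in \S\ref{sec-Fuchsian-field}. Beyond this cancellation, every step is a direct substitution: the Ricci identity from \S\ref{lem-FOSHS-p} together with the explicit formulas $\Psi=-\ln(\sin(Ht)/H)$, $\nabla_a\Psi=-\cot(Ht)\nu_a$, and $\nb_a\nu_b=0$ close the computation.
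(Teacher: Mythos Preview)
Your overall plan is close to the paper's, but you miss the structural step that actually produces the term $\tfrac{1}{n-2}X^aX^b$ in \eqref{E:CONFEIN5}. The paper does \emph{not} impose $Z^a=0$ inside the Ricci decomposition. It \emph{defines} the reduced equations by adding to \eqref{E:CONFEIN2} the quantity $-\nabla^{(a}Z^{b)}-\tfrac{1}{n-2}\ts{A}{^{ab}_c}Z^c$ with $\ts{A}{^{ab}_c}=-X^{(a}\ts{\delta}{^{b)}_c}+Y^{(a}\ts{\delta}{^{b)}_c}$, which vanishes when $Z^a=0$. After substituting the Ricci identity (this is Lemma~\ref{t:Rdop}, not the appendix, and the sign there is $+\tfrac{1}{2}g^{cd}\nb_c\nb_d g^{ab}$), the added piece contributes $\tfrac{1}{n-2}X^aX^b-\tfrac{1}{n-2}Y^aY^b$ and converts $\nabla^{(a}X^{b)}$ to $-\nabla^{(a}Y^{b)}$. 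One then expands $-\tfrac{1}{n-2}Y^aY^b$ using only the \emph{definition} $Y^a=-(n-2)\nabla^a\Psi+\eta^a$; this is what cancels $-(n-2)\nabla^a\Psi\nabla^b\Psi$ on the right and yields the $\cot^2(Ht)$ terms, without circularly invoking the gauge. The gauge is applied only once, at the very end, to replace $X^c\nu_c$ by $-Y^c\nu_c=-(n-2)\cot(Ht)(\lambda+1)$ in the simplification of $\Box\Psi$.

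Your route---substituting $X^a=(n-2)\nabla^a\Psi-\eta^a$ and writing $(n-2)\nabla^a\Psi\nabla^b\Psi=\tfrac{1}{n-2}(X^a+\eta^a)(X^b+\eta^b)$---does not ``collapse'' as you claim: it leaves the cross term $\tfrac{2}{n-2}X^{(a}\eta^{b)}=-2\cot(Ht)\,X^{(a}\nu^{b)}$, which depends on $\nb g$ through $X^a$ and matches nothing in \eqref{E:CONFEIN5}. You can only remove it by reinserting the gauge substitution for $X^a$ a second time, at which point you are effectively redoing the paper's expansion of $Y^aY^b$ in a roundabout way. Your treatment of $-\nabla^{(a}\eta^{b)}$ (which contributes both the transport term \emph{and} a $\tfrac{n-2}{\sin^2(Ht)}\nu_cg^{c(a}\nu^{b)}$ piece, not just the former), the scalar coefficient of $g^{ab}$, and the matter term are otherwise correct.
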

Before proving this lemma, we first observe that corollary below follows from setting $g^{ab}=\ulg^{ab}$, $\lambda=-1$, and $F_{cd}=0$ in \eqref{E:CONFEIN5}.

\begin{corollary}\label{t:bkgdR}
	The Ricci tensor of the conformal de Sitter metric $\udl{g}_{ab}$ is given by
	\begin{equation*}
		\udl{R}^{ab}=(n-2)\ulh^{ab}
	\end{equation*}
	and satisfies the relations
	\begin{equation*}\label{E:RICCIDES2}
		\udl{R}^{ab}\nu_a\nu_b=0, \quad \udl{R}^{ab}\nu_a\ts{\ulh}{^e_b}=0 \AND \udl{R}^{ab}\ts{\ulh}{^e_a}\ts{\ulh}{^f_b}=(n-2)\ulh^{ef}.
	\end{equation*}
\end{corollary}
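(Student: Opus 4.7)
The plan is to follow the corollary's own hint verbatim: substitute $g^{ab}=\ulg^{ab}$, $\lambda=-1$, and $F_{cd}=0$ into the reduced conformal Einstein equation \eqref{E:CONFEIN5} and read off the result. First I would check the internal consistency of these substitutions: from the definition of $\ulg_{ab}$ in \S\ref{s:ds} and $\nu_a=H(dt)_a$ from \eqref{E:NORMT}, one directly computes $\lambda=\ulg^{ab}\nu_a\nu_b=-1$, and of course $g^{ab}=\ulg^{ab}$ forces the Yang--Mills curvature and all deviations from the background geometry to be absent, so taking $F_{cd}=0$ is consistent.

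Next, I would observe that under $g^{ab}=\ulg^{ab}$ every term in \eqref{E:CONFEIN5} involving $\udn{c}g^{ab}$ drops out, because $\udn{}$ is the Levi-Civita connection of $\ulg_{ab}$. In particular, the principal term $\tfrac{1}{2}g^{cd}\udn{c}\udn{d}g^{ab}$, the transport term $\tfrac{n-2}{2\tan(Ht)}\nu^c\udn{c}g^{ab}$, and the vector field $X^a$ (hence $\tfrac{1}{n-2}X^aX^b$) all vanish. The quantities $P^{ab}$ and $Q^{ab}$, which in the derivation of Lemma \ref{t:rdein} arise from differences $g^{ab}-\ulg^{ab}$ and their derivatives, vanish for the same reason; this is the one point that must be verified once those quantities are defined in the proof of Lemma \ref{t:rdein}. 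The two explicit terms proportional to $(g^{bc}-\ulg^{bc})\nu_c$ are zero on the nose, and the combination $\frac{\lambda+1}{\sin^2(Ht)}g^{ab}$ vanishes because $\lambda+1=0$.

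What remains of \eqref{E:CONFEIN5} is therefore
\begin{equation*}
\udl{R}^{ab}-(n-2)\nu_c\,\ulg^{c(a}\nu^{b)}=(n-2)\ulg^{ab}.
\end{equation*}
Using $\nu_c\ulg^{ca}=\nu^a$ and rearranging gives $\udl{R}^{ab}=(n-2)(\ulg^{ab}+\nu^a\nu^b)$. A short direct computation from \eqref{E:PRO} and $\nu_a\nu^a=-1$ establishes the algebraic identity
\begin{equation*}
\ulh^{ab}=\ts{\ulh}{^a_c}\ts{\ulh}{^b_d}\ulg^{cd}=(\delta^a{}_c+\nu^a\nu_c)(\delta^b{}_d+\nu^b\nu_d)\ulg^{cd}=\ulg^{ab}+\nu^a\nu^b,
\end{equation*}
which yields the first claim $\udl{R}^{ab}=(n-2)\ulh^{ab}$. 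The three contraction relations then follow at once from the projection identities $\ulh^{ab}\nu_a=0$ (immediate from $\ulg^{ab}\nu_a=\nu^b$ and $\nu_a\nu^a=-1$) and $\ts{\ulh}{^e_a}\ts{\ulh}{^f_b}\ulh^{ab}=\ulh^{ef}$.

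There is essentially no obstacle: the computation is purely algebraic once Lemma \ref{t:rdein} is in hand. The only item that requires care is confirming that $P^{ab}$ and $Q^{ab}$ indeed evaluate to zero on the de Sitter background, which will be transparent from the explicit forms produced in the proof of Lemma \ref{t:rdein}.
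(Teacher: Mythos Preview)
Your proposal is correct and follows exactly the approach indicated in the paper: substitute $g^{ab}=\ulg^{ab}$, $\lambda=-1$, $F_{cd}=0$ into \eqref{E:CONFEIN5} and simplify. One small remark: the explicit forms of $P^{ab}$ and $Q^{ab}$ are recorded in Lemma~\ref{t:Rdop} (in the appendix), not in the proof of Lemma~\ref{t:rdein}; from those formulas it is indeed immediate that both vanish when $g^{ab}=\ulg^{ab}$.
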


\begin{proof}[Proof of Lemma \ref{t:rdein}]
	The reduced Einstein equations are obtained from the Einstein equations \eqref{E:CONFEIN2} by adding the term $-\nabla^{(a} Z^{b)}-\frac{1}{n-2}\ts{A}{^{ab}_c} Z^c$ that vanishes when the wave gauge $Z^a=0$ holds. The resulting equations are given by
	\al{CONFEIN2-reduced}{
		R^{ab}-\nabla^{(a} Z^{b)}-\frac{1}{n-2}\ts{A}{^{ab}_c} Z^c= {} & T^{a b} -  \frac{1}{n-2} g^{a b} g^{c d} T_{c d}+ (n-2)(\nabla^a\nabla^b\Psi-\nabla^a\Psi\nabla^b\Psi) \notag \\
		& +\left(\Box \Psi+(n-2)\nabla^c\Psi\nabla_c\Psi+\frac{n-1}{H^2} e^{2\Psi}\right)g^{ab},
	}
	where $\ts{A}{^{ab}_c}$ is defined by
	\als{
		\ts{A}{^{ab}_c}=-X^{(a}\tensor{\delta}{^{b)}_c}+Y^{(a}\tensor{\delta}{^{b)}_c},
	}
	and we note that
	\gas{
		-\frac{1}{n-2}\ts{A}{^{ab}_c} Z^c=\frac{1}{n-2}X^a X^b-\frac{1}{n-2}Y^a Y^b  \AND
		-\nabla^{(a}Z^{b)}=-\nabla^{(a}X^{b)}-\nabla^{(a}Y^{b)}.
	}
	Recalling the decomposition \eqref{E:RICCI} for the conformal Ricci tensor from Lemma \ref{t:Rdop} and noting the identity 
	\begin{equation*}
		T^{a b} -  \frac{1}{n-2} g^{a b}  g^{c d} T_{c d}
		= {}  g^{bd}F^{ac}F_{dc}-\frac{1}{2(n-2)}g^{ab}F^{cd}F_{cd},
	\end{equation*}
	it is then straightforward to verify that 
	reduced conformal Einstein equations \eqref{E:CONFEIN2-reduced} can be expressed as
	\al{CONFEIN1}{
		&\frac{1}{2}g^{cd}\udn{c}\udn{d}g^{ab} + \udl{R}^{ab} + P^{ab}+ Q^{ab} +\frac{1}{n-2}X^a X^b -\nabla^{(a} Y^{b)}-\frac{1}{n-2}Y^a Y^b   \nnb\\ ={} & (n-2)(\nabla^a\nabla^b\Psi-\nabla^a\Psi\nabla^b\Psi)+\left(\Box \Psi+(n-2)\nabla^c\Psi\nabla_c\Psi+\frac{n-1}{H^2}\frac{H^2}{\sin^2(Ht)}\right)g^{ab} \notag  \\
		&+g^{bd}F^{ac}F_{dc}-\frac{1}{2(n-2)}g^{ab}F^{cd}F_{cd},
	}
	where $P^{ab}$ and $Q^{ab}$ are defined below by \eqref{E:RICCI}. 
	But by \eqref{E:XYZ}, we observe with the help of the identity \eqref{E:RELBABLA} from Lemma \ref{t:nbnuh} that
	\als{
		-\nabla^{(a}Y^{b)}=(n-2)\nabla^a \nabla^b \Psi-\frac{n-2}{\sin^2(Ht)}\nu_c g^{c(a}\nu^{b)}-\frac{n-2}{2\tan(Ht)}\nu^c\udn{c}g^{ab}.
	}
	Using this expression and \eqref{E:XYZ} allows us to write \eqref{E:CONFEIN1} as
	\als{
		&\frac{1}{2}g^{cd}\udn{c}\udn{d} g^{ab} +\udl{R}^{ab} +P^{ab}+ Q^{ab} +\frac{1}{n-2}X^a X^b +(n-2)\nabla^a \nabla^b \Psi-\frac{n-2}{\sin^2(Ht)}\nu_c g^{c(a}\nu^{b)} \nnb\\
		&-\frac{n-2}{2\tan(Ht)}\nu^c\udn{c}g^{ab} -\frac{1}{n-2}(-(n-2)\nabla^a \Psi+\eta^a)(-(n-2)\nabla^b\Psi+\eta^b)   \nnb\\ = {}& (n-2)(\nabla^a\nabla^b \Psi-\nabla^a \Psi\nabla^b\Psi)+\left(\Box \Psi+(n-2)\nabla^c \Psi\nabla_c\Psi+\frac{n-1}{H^2}\frac{H^2}{\sin^2(Ht)}\right)g^{ab} \notag  \\
		&+g^{bd}F^{ac}F_{dc}-\frac{1}{2(n-2)}g^{ab}F^{cd}F_{cd}.
	}
	Finally, using the identity \eqref{E:CAL3} from Lemma \ref{t:conf2} and the relation 
	\als{
		X^a\nu_a=-Y^a\nu_a=-\frac{n-2}{\tan(Ht)}(\lambda + 1), 
	}
	the above formulation of the reduced Einstein equations is easily seen to be equivalent to \eqref{E:CONFEIN5}, which completes the proof.
\end{proof}

We proceed by deriving a $(n-1)+1$ decomposition of the reduced Einstein equations that is formulated in terms of the variables defined by \eqref{decom-g} and  \eqref{decom-F}. Here, one should view the gravitational fields 
$\lambda+1$, $\xi^a$ and $h^{a b} - \ulh^{a b}$ as being small, and hence, representing perturbations of the background de Sitter solution.  As we establish in the following corollary, each of these perturbed variables satisfies a wave equation.

\begin{corollary}
	Let $\lambda$, $\xi^c$, $h^{ab}$, $E_b$ and $H_{db}$ be as defined above by \eqref{decom-g}, \eqref{decom-F} and \eqref{def-MYM}. Then $\lambda+1$, $\xi^a$ and $h^{a b} - \ulh^{a b}$ satisfy the wave equations 
	\al{TTEQ}{
		&\frac{1}{2}g^{cd}\udn{c}\udn{d} (\lambda+1)  +P^{ab}\nu_a\nu_b+ Q^{ab}\nu_a\nu_b +\frac{1}{n-2}X^a X^b\nu_a\nu_b     \nnb\\
		={}&\frac{n-2}{2\tan(Ht)}\nu^c\udn{c} (\lambda+1) +   \frac{(\lambda+1)^2}{\sin^2(Ht)}+ \frac{n-3}{\sin^2(Ht)} (\lambda+1)  -(n-2)(\lambda+1) \notag  \\
		&+\Bigl(\nu_a\nu_bg^{bd}g^{a\ha}-\frac{1}{2(n-2)} \lambda  g^{d\ha}\Bigr) g^{c\hc}(H_{\ha\hc}-E_{\ha}\nu_{\hc}+\nu_{\ha}E_{\hc})(H_{dc}-E_{d}\nu_{c}+\nu_{d}E_{c}),
	}
	\al{TSEQ}{
		&\frac{1}{2}g^{cd}\udn{c}\udn{d} \xi^e   +P^{ab}\nu_a\ts{\ulh}{^e_b}+ Q^{ab}\nu_a\ts{\ulh}{^e_b} +\frac{1}{n-2}X^a X^b\nu_a\ts{\ulh}{^e_b}  \nnb\\
		={}&\frac{n-2}{2\tan(Ht)}\nu^c\udn{c}\xi^e  + \frac{n-2}{2\tan^2(Ht)}\xi^e  + (\lambda+1) \xi^e \frac{1}{\sin^2(Ht)} + \frac{1}{2} (n-2)\xi^e  \notag  \\
		&+\Bigl(\nu_a\ts{\ulh}{^e_b}g^{bd}g^{a\ha}-\frac{1}{2(n-2)}\xi^e g^{d\ha}\Bigr) g^{c\hc}(H_{\ha\hc}-E_{\ha}\nu_{\hc}+\nu_{\ha}E_{\hc})(H_{dc}-E_{d}\nu_{c}+\nu_{d}E_{c}),
	}
	and
	\al{SSEQ}{
		&\frac{1}{2}g^{cd}\udn{c}\udn{d} \left( h^{ef} - \ulh^{e f} \right) + \ts{\ulh}{^e_a}\ts{\ulh}{^f_b}P^{ab}+ \ts{\ulh}{^e_a}\ts{\ulh}{^f_b}Q^{ab} +\frac{1}{n-2}\ts{\ulh}{^e_a}\ts{\ulh}{^f_b}X^a X^b \nnb\\
		={}&\frac{n-2}{2\tan(Ht)}\nu^c\udn{c} \left( h^{ef} - \ulh^{e f} \right) + h^{ef} \frac{\lambda+1}{\sin^2(Ht)}+ (n-2) (\ulh^{e f}- h^{e f}) \notag  \\
		&+\Bigl(\ts{\ulh}{^e_a}\ts{\ulh}{^f_b}g^{bd}g^{a\ha}-\frac{1}{2(n-2)} h^{ef} g^{d\ha}\Bigr) g^{c\hc}(H_{\ha\hc}-E_{\ha}\nu_{\hc}+\nu_{\ha}E_{\hc})(H_{dc}-E_{d}\nu_{c}+\nu_{d}E_{c}),
	}
	respectively.
\end{corollary}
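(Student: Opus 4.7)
The plan is to contract the reduced Einstein equations \eqref{E:CONFEIN5} with the three natural projectors $\nu_a\nu_b$, $\nu_a\ts{\ulh}{^e_b}$, and $\ts{\ulh}{^e_a}\ts{\ulh}{^f_b}$, and to read off the resulting scalar, mixed, and purely spatial equations as \eqref{E:TTEQ}, \eqref{E:TSEQ}, and \eqref{E:SSEQ}, respectively.

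The key structural observation that makes this clean is that the background one-form $\nu_a=H(dt)_a$ is $\udn$-parallel: in the coordinates $(t,x^i)$ adapted to the static product metric $\ulg_{ab}=-H^2(dt)_a(dt)_b+\ulh_{ab}$, all $\ulg$-Christoffel symbols of the form $\Gamma^t{}_{ab}$ vanish, so $\udn{c}\nu_a=0$, and consequently $\nu^a$, the projector $\ts{\ulh}{^a_b}$, and the spatial cometric $\ulh^{ab}$ are all covariantly constant with respect to $\udn$ (this parallelism is already encoded in Lemma \ref{t:nbnuh}, invoked above). Hence each projector passes through the wave operator $g^{cd}\udn{c}\udn{d}$ and the transport operator $\nu^c\udn{c}$, yielding for example $g^{cd}\udn{c}\udn{d}g^{ab}\cdot\nu_a\nu_b = g^{cd}\udn{c}\udn{d}\lambda = g^{cd}\udn{c}\udn{d}(\lambda+1)$; analogously, the other two projections extract $\xi^e$ and $h^{ef}-\ulh^{ef}$ from the wave term (using that $\ulh^{ef}$ is itself $\udn$-parallel).

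Next I would apply each projection to the remaining algebraic terms in \eqref{E:CONFEIN5}. The background Ricci contribution is read off from Corollary \ref{t:bkgdR}: it vanishes under the two $\nu$-involving projections and contributes $(n-2)\ulh^{ef}$ under the spatial projection, which combines with the mass-like $(n-2)g^{ab}$ prefactor on the right to produce the $(n-2)(\ulh^{ef}-h^{ef})$ term appearing in \eqref{E:SSEQ}. The cross term $-(n-2)\nu_c g^{c(a}\nu^{b)}$ contributes $(n-2)\lambda$ under $\nu_a\nu_b$ (using $\nu^a\nu_a=-1$), $\tfrac{n-2}{2}\xi^e$ under $\nu_a\ts{\ulh}{^e_b}$, and $0$ under the spatial projection; the two $\frac{n-2}{2\tan^2(Ht)}$ correction terms produce $-\frac{n-2}{\tan^2(Ht)}(\lambda+1)$, $-\frac{n-2}{2\tan^2(Ht)}\xi^e$, and $0$, respectively; and the prefactor $\bigl(\tfrac{\lambda+1}{\sin^2(Ht)}+(n-2)\bigr)g^{ab}$ on the right projects trivially into the same prefactor times $\lambda$, $\xi^e$, and $h^{ef}$. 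A short trigonometric rearrangement using $\cot^2(Ht)=\csc^2(Ht)-1$, together with the factorization $(\lambda+1)(\lambda+n-2)=(\lambda+1)^2+(n-3)(\lambda+1)$, collapses the scalar right-hand side into the form displayed in \eqref{E:TTEQ}; completely analogous algebra delivers \eqref{E:TSEQ} and \eqref{E:SSEQ}.

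Finally, the Yang--Mills source on the right of \eqref{E:CONFEIN5} is handled by decomposing the conformal curvature along $\nu^a$ using the temporal gauge \eqref{e:temgg}: writing $\delta^a{}_b=\ts{\ulh}{^a_b}-\nu^a\nu_b$ and invoking the definitions \eqref{decom-F} and \eqref{def-MYM} gives $F_{ab}=H_{ab}-E_a\nu_b+\nu_a E_b$. Substituting this into $g^{bd}F^{ac}F_{dc}-\tfrac{1}{2(n-2)}g^{ab}F^{cd}F_{cd}$ and projecting the free pair $(a,b)$ with the relevant combination of $\nu$'s and $\ulh$'s reproduces the source expressions in \eqref{E:TTEQ}--\eqref{E:SSEQ} verbatim; for instance, the prefactor $\bigl(\nu_a\nu_b g^{bd}g^{a\ha}-\tfrac{1}{2(n-2)}\lambda g^{d\ha}\bigr)$ in \eqref{E:TTEQ} is precisely the $\nu_a\nu_b$-projection of $\bigl(g^{bd}g^{a\ha}-\tfrac{1}{2(n-2)}g^{ab}g^{d\ha}\bigr)$. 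The main obstacle is purely bookkeeping: keeping signs straight across three parallel projections and ensuring the trigonometric identities collapse cleanly. Once the $\udn$-parallelism of $\nu^a$ and $\ts{\ulh}{^a_b}$ and the decomposition $F_{ab}=H_{ab}-E_a\nu_b+\nu_a E_b$ are in hand, the proof is a straightforward symbolic reorganization of \eqref{E:CONFEIN5}, with no new geometric input beyond Corollary \ref{t:bkgdR}.
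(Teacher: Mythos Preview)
Your proposal is correct and follows essentially the same approach as the paper: contract \eqref{E:CONFEIN5} with $\nu_a\nu_b$, $\nu_a\ts{\ulh}{^e_b}$, $\ts{\ulh}{^e_a}\ts{\ulh}{^f_b}$, use the $\udn$-parallelism of $\nu$ and $\ulh$ from Lemma~\ref{t:nbnuh}, Corollary~\ref{t:bkgdR} for the background Ricci term, and the decomposition $F_{ab}=H_{ab}-E_a\nu_b+\nu_a E_b$ for the Yang--Mills source. One small remark: that last decomposition follows already from the definitions \eqref{decom-F}, \eqref{def-MYM} and the antisymmetry of $F_{ab}$; the temporal gauge \eqref{e:temgg} is not needed at this step.
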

\begin{proof}
	Expressing the non-linear source term in \eqref{E:CONFEIN5} involving $F_{ab}$  in terms of $E_b$ and $H_{a b}$, see \eqref{decom-F} and \eqref{def-MYM}, as 
	\begin{align*}
		& g^{bd}F^{ac}F_{dc}-\frac{1}{2(n-2)}g^{ab}F^{cd}F_{cd} \notag  \\
		={} & \Bigl(g^{bd}g^{a\ha}-\frac{1}{2(n-2)}g^{ab} g^{d\ha}\Bigr) g^{c\hc}(H_{\ha\hc}-E_{\ha}\nu_{\hc}+\nu_{\ha}E_{\hc})(H_{dc}-E_{d}\nu_{c}+\nu_{d}E_{c}),
	\end{align*}
	the proof then follows from applying $\nu_a\nu_b$, $\nu_a\ts{\ulh}{^e_b}$, and $\ts{\ulh}{^e_a}\ts{\ulh}{^f_b}$ to \eqref{E:CONFEIN5} and employing Corollary \ref{t:bkgdR} and the relation \eqref{E:NBNU} from Lemma \ref{t:nbnuh}.
\end{proof}

Next, we further decompose the gravitational field $h^{a b}$ in terms of the variables $q$  and  $\mathfrak{h}^{ab}$ defined by \eqref{E:q}. To carry out the decomposition, we introduce the projection operator
\al{LG}{
	\ts{\mathcal{L}}{^{ab}_{cd}} =\ts{\delta}{^a_c} \ts{\delta}{^b_d}-\frac{1}{n-1}h^{ab} h_{cd}.
}
The motivation for the additional decomposition of $h^{a b}$ is that, due to the identity
\al{LGID}{
	\ts{\mathcal{L}}{^{ab}_{cd}} h^{cd}=0,
}
an application of the projection \eqref{E:LG} to the wave equation \eqref{E:SSEQ} for $h^{ab}$ removes the problematic singular term  
$h^{ef}\frac{\lambda+1}{\sin^2(Ht)}$ and results in a wave equation for $\mathfrak{h}^{ab}$. As we shall see in the following corollary, the particular form of the variable $q$ is chosen to remove problematic singular terms its evolution equation, which also happens to be a wave equation. 

\begin{corollary}\label{t:hh}
	Let $q$ and $\mathfrak{h}^{ab} - \ulh^{a b}$ be defined by \eqref{E:q}. Then $q$ and $\mathfrak{h}^{ab} - \ulh^{a b}$ satisfy the wave equations
	\al{SSEQ2}{
		&\frac{1}{2}g^{cd}\udn{c}\udn{d}q-\frac{3-n}{2(n-1)} g^{cd}\udn{c}h_{ef} \udn{d} h^{ef} +\frac{3-n}{n-1} h_{ab} P^{ab}+ \frac{3-n}{n-1} h_{ab} Q^{ab} \nnb\\ &  +\frac{3-n}{(n-1)(n-2)}h_{ab} X^a X^b
		+P^{ab}\nu_a\nu_b+ Q^{ab}\nu_a\nu_b +\frac{1}{n-2}X^a X^b\nu_a\nu_b \nnb \\
		={}&\frac{(n-2)}{2\tan(Ht)} \nu^c\udn{c} q +  (\lambda+1)^2 \frac{1}{\sin^2(Ht)}  -(n-2)(\lambda+1) +\Bigl(\frac{3-n}{n-1} h_{ab}  g^{bd}g^{a\ha}\notag  \\
		& -\frac{3-n+\lambda}{2(n-2)}   g^{d\ha} +\nu_a\nu_bg^{bd}g^{a\ha}\Bigr)  \cdot g^{c\hc}(H_{\ha\hc}-E_{\ha}\nu_{\hc}+\nu_{\ha}E_{\hc})(H_{dc}-E_{d}\nu_{c}+\nu_{d}E_{c}),
	}
	and
	\al{SSEQTRFR3}{
		&\frac{1}{2}g^{cd}\udn{c}\udn{d}(\mathfrak{h}^{ab}- \ulh^{ab})-\frac{1}{2}g^{cd}\udn{c}(S^{-1}\ts{\mathcal{L}}{^{ab}_{ef}})\udn{d} h^{ef} +S^{-1}\ts{\mathcal{L}}{^{ab}_{ef}}\ts{\ulh}{^e_a}\ts{\ulh}{^f_b}P^{ab}   \nnb\\
		&+ S^{-1}\ts{\mathcal{L}}{^{ab}_{ef}}\ts{\ulh}{^e_a}\ts{\ulh}{^f_b}Q^{ab} +\frac{1}{n-2} S^{-1}\ts{\mathcal{L}}{^{ab}_{ef}}\ts{\ulh}{^e_a}\ts{\ulh}{^f_b}X^a X^b   \nnb\\
		={} &\frac{n-2}{2\tan(Ht)}\nu^c \udn{c}(\mathfrak{h}^{ab}-\ulh^{ab}) - (n-2) S^{-1} \ts{\mathcal{L}}{^{ab}_{cd}} (\mathfrak{h}^{c d} - \ulh^{c d} )  \notag  \\
		&+S^{-1}\ts{\mathcal{L}}{^{ab}_{ef}} \ts{\ulh}{^e_a}\ts{\ulh}{^f_b}g^{bd}g^{a\ha}  g^{c\hc}(H_{\ha\hc}-E_{\ha}\nu_{\hc}+\nu_{\ha}E_{\hc})(H_{dc}-E_{d}\nu_{c}+\nu_{d}E_{c}),
	}
	respectively.
\end{corollary}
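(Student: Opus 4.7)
The strategy is to derive each equation as a suitable linear combination or projection of the previously-established wave equations \eqref{E:TTEQ} and \eqref{E:SSEQ}, exploiting the algebraic design of $q$ and $\mathfrak{h}^{ab}$. The definition $q=\lambda+1+(3-n)\ln S$ is engineered so that the singular $\frac{n-3}{\sin^2(Ht)}(\lambda+1)$ term on the RHS of \eqref{E:TTEQ} cancels exactly against the analogous contribution produced when the wave operator acts on $(3-n)\ln S$ (which itself is traced against \eqref{E:SSEQ}). Likewise, the projector $S^{-1}\ts{\mathcal{L}}{^{ab}_{ef}}$ is built so as to annihilate the singular source $h^{ef}(\lambda+1)/\sin^2(Ht)$ in \eqref{E:SSEQ} by virtue of \eqref{E:LGID}.

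To obtain \eqref{E:SSEQ2}, I first compute $\udn{c}\ln S$ using the determinant identity $\udn{c}\ln\det M=(M^{-1})_{ab}\udn{c}M^{ab}$ applied to $M^{ab}=h^{ab}+\nu^a\nu^b$. The orthogonality relations $h^{ab}\nu_b=0$ and $\nu^a\nu_a=-1$ identify the inverse as $M^{-1}_{ab}=h_{ab}+\nu_a\nu_b$, from which $(n-1)\udn{c}\ln S = h_{ab}\udn{c}h^{ab}$ modulo background contributions coming from $\udn{c}\ln\ula$ and $\udn{c}(\nu^a\nu^b)$. Applying $\tfrac{1}{2}g^{cd}\udn{c}\udn{d}$ to $q$ then reduces the problem to assembling the RHS from \eqref{E:TTEQ} together with $\tfrac{3-n}{n-1}h_{ab}$ contracted against \eqref{E:SSEQ}. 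The Leibniz cross term $h_{ef}\udn{c}\udn{d}\ln S$ produces the $-\tfrac{3-n}{2(n-1)}g^{cd}\udn{c}h_{ef}\udn{d}h^{ef}$ contribution on the LHS, and the cancellation $\tfrac{n-3}{\sin^2(Ht)}(\lambda+1)+\tfrac{3-n}{n-1}\cdot(n-1)\tfrac{\lambda+1}{\sin^2(Ht)}=0$ collapses the singular pieces to the single surviving term $(\lambda+1)^2/\sin^2(Ht)$.

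To obtain \eqref{E:SSEQTRFR3}, I apply $S^{-1}\ts{\mathcal{L}}{^{ab}_{ef}}$ to \eqref{E:SSEQ}. On the RHS, \eqref{E:LGID} kills the dangerous source $h^{ef}(\lambda+1)/\sin^2(Ht)$, while the constant-coefficient source $-(n-2)(\ulh^{ef}-h^{ef})$ is rewritten as $-(n-2)S^{-1}\ts{\mathcal{L}}{^{ab}_{cd}}(\mathfrak{h}^{cd}-\ulh^{cd})$ by using $h^{cd}=S\mathfrak{h}^{cd}$ together with $\ts{\mathcal{L}}{^{ab}_{cd}}h^{cd}=0$. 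The LHS is handled using the first-derivative identity
\begin{equation*}
\udn{d}\mathfrak{h}^{ab}=S^{-1}\ts{\mathcal{L}}{^{ab}_{ef}}\udn{d}h^{ef}+(\text{bkgd}),
\end{equation*}
which follows by direct differentiation of $\mathfrak{h}^{ab}=S^{-1}h^{ab}$ combined with the formula for $\udn{d}\ln S$ obtained above. A further application of $\udn{c}$ then yields
\begin{equation*}
S^{-1}\ts{\mathcal{L}}{^{ab}_{ef}}\cdot\tfrac{1}{2}g^{cd}\udn{c}\udn{d}h^{ef}=\tfrac{1}{2}g^{cd}\udn{c}\udn{d}(\mathfrak{h}^{ab}-\ulh^{ab})-\tfrac{1}{2}g^{cd}\udn{c}(S^{-1}\ts{\mathcal{L}}{^{ab}_{ef}})\udn{d}h^{ef}+(\text{bkgd}),
\end{equation*}
reproducing exactly the first two terms on the LHS of \eqref{E:SSEQTRFR3}.

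The main obstacle is bookkeeping: tracking the Leibniz-rule correction terms generated by the factor $S^{-1}$ and the $h$-dependence of $\ts{\mathcal{L}}{^{ab}_{ef}}$, while repeatedly applying the orthogonality relations $h^{ab}\nu_b=0$, $\nu^a\nu_a=-1$, and the identities for $\udn{c}\nu^a$ furnished by Lemma \ref{t:nbnuh}. Each individual manipulation is routine, but verifying that every $1/\sin^2(Ht)$ singularity other than $(\lambda+1)^2/\sin^2(Ht)$ cancels — which is the whole point of introducing $q$ and $\mathfrak{h}^{ab}$ — requires careful tracking of numerical coefficients.
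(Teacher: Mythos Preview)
Your approach is essentially identical to the paper's: derive a wave equation for $\ln S$ by contracting \eqref{E:SSEQ} with $h_{ef}/(n-1)$, form $(3-n)\times(\text{that})+\eqref{E:TTEQ}$ to get \eqref{E:SSEQ2}, and apply $S^{-1}\ts{\mathcal{L}}{^{ab}_{ef}}$ to \eqref{E:SSEQ} together with the identity $S^{-1}\ts{\mathcal{L}}{^{ab}_{ef}}\udn{d}h^{ef}=\udn{d}\mathfrak{h}^{ab}$ to get \eqref{E:SSEQTRFR3}. One small sharpening: the ``(bkgd)'' corrections you hedge with are in fact exactly zero, since $\udn{c}\nu^a=0$ by Lemma~\ref{t:nbnuh} and the $\ula$-contributions cancel in $\ln S=\ln\alpha-\ln\ula$, so both $h_{ab}\udn{c}h^{ab}=(n-1)\udn{c}\ln S$ and $\udn{d}\mathfrak{h}^{ab}=S^{-1}\ts{\mathcal{L}}{^{ab}_{ef}}\udn{d}h^{ef}$ hold on the nose.
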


\begin{proof}
	Rather than deriving the wave \eqref{E:SSEQ2} for $q$ directly, we first derive a wave equation for $\ln S$, where we recall that $S$ is defined by \eqref{e:S}. We begin the derivation by noting from Jacobi's identity and the definition of $\alpha$, see \eqref{e:S}, that 
	\begin{equation*}
		(n-1)\udn{d}(\ln\alpha)=h_{ab}\del{d} h^{ab} =h_{ab}\udn{d} h^{ab}+\ulh_{ab}\del{d}\ulh^{ab} \AND (n-1)\udn{d}(\ln\ula)=\ulh_{ab}\del{d} \ulh^{ab}.
	\end{equation*} 
	Using these expressions, we obtain
	\als{
		h_{ab}\udn{d} h^{ab}=(n-1)\udn{d}(\ln S),
	}
	which, after differentiating, yields
	\al{alpha}{
		h_{ab}g^{cd} \udn{c}\udn{d} h^{ab} ={}&(n-1) g^{cd}\udn{c}\udn{d}(\ln S)-g^{cd}\udn{c}h_{ab} \udn{d} h^{ab}. 
	}
	Contracting \eqref{E:SSEQ} with $h_{ef}/(n-1)$ while making use of \eqref{E:alpha} then shows that $\ln S$ satisfies
	\al{SSEQ1}{
		&\frac{1}{2}g^{cd}\udn{c}\udn{d}(\ln S)-\frac{1}{2(n-1)} g^{cd}\udn{c}h_{ef} \udn{d} h^{ef} +\frac{1}{n-1} h_{ab} P^{ab}+ \frac{1}{n-1} h_{ab} Q^{ab}  \nnb\\  &\hspace{1cm} +\frac{1}{(n-1)(n-2)}h_{ab} X^a X^b  =\frac{n-2}{2\tan(Ht)} \nu^c\udn{c}\ln S + (\lambda+1)   \frac{1}{\sin^2(Ht)}  \notag  \\
		&  +\Bigl(\frac{1}{n-1} h_{ab}  g^{bd}g^{a\ha}-\frac{1}{2(n-2)}   g^{d\ha}\Bigr) g^{c\hc}(H_{\ha\hc}-E_{\ha}\nu_{\hc}+\nu_{\ha}E_{\hc})(H_{dc}-E_{d}\nu_{c}+\nu_{d}E_{c}).
	}
	Noting the singular terms $(\lambda+1) \frac{1}{\sin^2(Ht)}$ in \eqref{E:SSEQ1} and $\frac{n-3}{\sin^2(Ht)} (\lambda+1)$ in \eqref{E:TTEQ}, the combination $(3-n)  \times \eqref{E:SSEQ1} + \eqref{E:TTEQ} $ removes these terms and, with the help of the definition \eqref{E:q} of $q$, leads to the wave equation \eqref{E:SSEQ2} for $q$.

	In order to obtain the wave equation \eqref{E:SSEQTRFR3}, we first note from \eqref{E:q} and  \eqref{E:LG} that
	\al{LGID-1}{
		\frac{1}{S} \ts{\mathcal{L}}{^{ab}_{cd}} \udn{e} h^{cd}=\udn{e}\mathfrak{h}^{ab}.
	}
	Making use of the identities \eqref{E:LGID}--\eqref{E:LGID-1}, it follows from applying  $S^{-1}\ts{\mathcal{L}}{^{ab}_{ef}}$ to \eqref{E:SSEQ} that 
	\begin{align*}
		&\frac{1}{2}g^{cd}\udn{c}\udn{d}\mathfrak{h}^{ab}-\frac{1}{2}g^{cd}\udn{c}(S^{-1}\ts{\mathcal{L}}{^{ab}_{ef}})\udn{d} h^{ef} +S^{-1}\ts{\mathcal{L}}{^{ab}_{ef}}\ts{\ulh}{^e_a}\ts{\ulh}{^f_b}P^{ab}   \nnb\\
		&+ S^{-1}\ts{\mathcal{L}}{^{ab}_{ef}}\ts{\ulh}{^e_a}\ts{\ulh}{^f_b}Q^{ab} +\frac{1}{n-2} S^{-1}\ts{\mathcal{L}}{^{ab}_{ef}}\ts{\ulh}{^e_a}\ts{\ulh}{^f_b}X^a X^b   \nnb\\
		={} &\frac{n-2}{2\tan(Ht)}\nu^c \udn{c}\mathfrak{h}^{ab} + (n-2) S^{-1} \ts{\mathcal{L}}{^{ab}_{cd}} (\ulh^{c d} - h^{c d}) \notag  \\
		&+S^{-1}\ts{\mathcal{L}}{^{ab}_{ef}} \ts{\ulh}{^e_a}\ts{\ulh}{^f_b}g^{bd}g^{a\ha}  g^{c\hc}(H_{\ha\hc}-E_{\ha}\nu_{\hc}+\nu_{\ha}E_{\hc})(H_{dc}-E_{d}\nu_{c}+\nu_{d}E_{c}).
	\end{align*} 
	However, due to \eqref{E:LGID}, we have \[(n-2) S^{-1} \ts{\mathcal{L}}{^{ab}_{cd}} (\ulh^{c d} - h^{c d}) =  (n-2) S^{-1} \ts{\mathcal{L}}{^{ab}_{cd}} (\ulh^{c d} - S^{-1} h^{c d}),\] and so,  $\mathfrak{h}^{ab}$ satisfies the wave equation \eqref{E:SSEQTRFR3}.
\end{proof}

\subsubsection{A Fuchsian formulation of the reduced conformal Einstein equations}\label{s:fucein}
In this section, we transform the second order wave equations \eqref{E:TTEQ}--\eqref{E:TSEQ} and \eqref{E:SSEQ2}--\eqref{E:SSEQTRFR3} for the gravitational fields  $\lambda$, $\xi^c$, $q$ and $\mathfrak{h}^{ab}$ into a first order, symmetric hyperbolic Fuchsian equation that is expressed in terms of the renormalized fields \eqref{E:W}--\eqref{E:QD}. From \eqref{E:W}--\eqref{E:QD}, we observe that
\al{ID-Lam-Xi}{
	\udn{d}\lambda= m_d+\frac{\tta}{\ttj H} m \nu_d \AND
	\udn{d}\xi^a= \tp{a}{d}+\frac{\ttb}{\ttk H} p^a \nu_d.
}
Using these relations and \eqref{E:NORMT}, we obtain the first order equations
\gat{
	\udn{d} m =\left(\frac{1}{\ttj}-1\right)\frac{1}{Ht} m \nu_d +\frac{1}{\tta t} m_d,  \label{E:ID2-0} \\
	\udn{d} p^a =\left(\frac{1}{\ttk}-1\right)\frac{1}{Ht} p^a \nu_d +\frac{1}{\ttb t} \tp{a}{d},  \label{E:ID3-0}   \\
	\intertext{and}
	\udn{d}s^{ab}=\tss{ab}{d}, \qquad  \udn{d}s=s_d,   \label{E:ID4-0}
}
from differentiating \eqref{E:W}, \eqref{E:V}, \eqref{E:U} and \eqref{E:Q}.  In the derivation of our Fuchsian formulation, we will need to consider these first order equations as well as the second order wave equations \eqref{E:TTEQ}--\eqref{E:TSEQ} and \eqref{E:SSEQ2}--\eqref{E:SSEQTRFR3}.

In the following, we use repeatedly the symmetrizing tensor $Q^{edc}$ defined by
\al{BTENSOR}{
	Q^{edc}= \nu^e g^{dc} + \nu^d g^{ec} -\nu^c g^{ed}
}
to cast the reduced conformal Einstein equations in first order form. To illustrate how this tensor is employed, we consider first the wave equation \eqref{E:SSEQTRFR3}. Using the symmetrizing tensor $Q^{edc}$, we observe that the principle part of \eqref{E:SSEQTRFR3}, which by \eqref{E:U} is given by $g^{cd}\udn{c}\tss{ab}{d}$, can be expressed in the symmetric form $Q^{edc}\udn{c}\tss{ab}{d}$ using the relation
\begin{align*}
	\nu^e (g^{cd}\udn{c}\tss{ab}{d})=& Q^{edc}\udn{c}\tss{ab}{d} -( \nu^d g^{ec}\udn{c}\tss{ab}{d} - \nu^c g^{ed} \udn{c} \tss{ab}{d}) \notag  \\
	=  & Q^{edc}\udn{c}\tss{ab}{d} -\nu^d g^{ec}(\udn{c}\udn{d}s^{ab}- \udn{d} \udn{c}s^{ab}) \notag \\
	=  & Q^{edc}\udn{c}\tss{ab}{d} +\nu^d g^{ec} (\ts{\udl{R}}{_{cdf}^a} s^{fb} +\ts{\udl{R}}{_{cdf}^b} s^{fa} ).
\end{align*}
In this way, the symmetrizing tensor $Q^{edc}$ can be used to obtain a symmetric hyperbolic formulation from the wave equation \eqref{E:SSEQTRFR3}.

With the preliminary calculations out of the way, we are now ready to complete the transformation of the reduced conformal Einstein equations into Fuchsian form. This will be done in a sequence of steps starting with the Fuchsian formulation of the wave equation  \eqref{E:TTEQ} for $\lambda+1$. The resulting Fuchsian equation is displayed in the following lemma. 

\begin{lemma}\label{lem-FOSHS-m}
	Let $\mathbf{U}$ be defined by \eqref{def-U}. Then the wave equation \eqref{E:TTEQ} for the gravitational field $\lambda+1$ can be expressed in the first order Fuchsian form
	\als{
		- \bar{\mathbf{A}}_1^0 \nu^c \nb_{c} \p{-\nu^e m_e \\ \ts{\ulh}{^e_{\he}} m_e \\ m } +	\bar{\mathbf{A}}_1^c \ts{\ulh}{^{b}_{c}} \udn{b}\p{-\nu^e m_e \\ \ts{\ulh}{^e_{\he}} m_e \\ m } =\frac{1}{Ht} \bar{\mathcal{B}}_1 \p{-\nu^e m_e \\ \ts{\ulh}{^e_{\he}} m_e \\ m } +\bar{G}_1,
	}
	where
	\gas{
		\bar{\mathbf{A}}_1^0= \p{-\lambda & 0 & 0\\0 & h^{f \he} & 0 \\0 & 0 & \tte (-\lambda)}, \quad
		\bar{\mathbf{A}}_1^c \ts{\ulh}{^{b}_{c}} =\p{-2\xi^b & -\ts{h}{^{\he b}}  & 0 \\
			-\ts{h}{^{f b}}  & 0 & 0 \\ 0 & 0 & 0 },\quad \bar{G}_1=\p{ \nu_e \triangle^e_1(t, \mathbf{U}) \\ \ts{\ulh}{^f_e} \triangle^e_1(t, \mathbf{U}) \\ 0 }, \nnb \\
		\bar{\mathcal{B}}_1 =  \p{\left(n-2-\frac{1}{\ttj} \right)(-\lambda) & 0 & \left(2-\frac{1}{\ttj}\right)\left(\frac{1}{\ttj}-n+3\right)\frac{\tta}{H}(-\lambda) \\
			0 &  \frac{1}{\ttj} h^{f \he} & 0 \\
			\tte \frac{H}{\tta}(-\lambda) & 0 & \tte \left(\frac{1}{\ttj}-1\right)  (-\lambda)},
	}
	$\tte$ is an arbitrary constant constant, and 
	\als{
		&\triangle^e_1(t, \mathbf{U})= \notag \\ 
		&  \left(2-\frac{1}{\ttj}\right) \left(n-3-\frac{1}{\ttj}\right) \frac{\tta^2}{H^2} m^2 \nu^e - \frac{\ttb}{\ttj H }p^d m_d \nu^e - \frac{\ttb}{\ttj H }  p^a m_a \nu^e  \nnb  \\
		&+ \frac{1}{\ttj H }\tta m m_d \nu^e \nu^d - \frac{\tta^2}{ \ttj H^2}  \left(\frac{1}{\ttj}-1\right)  m^2 \nu^e - \tta \left( \frac{1}{ \ttj }-n+2\right)\frac{1}{H }m\nu^e\nu^d  m_d   \nnb  \\
		&- \left(\frac{1}{Ht}
		-\frac{1}{ \tan(Ht)}\right)(n-2)\nu^e \nu^c m_c + \left(\frac{1}{Ht} -\frac{1}{ \tan(Ht)}\right)(n-2)\frac{\tta}{\ttj H} m \nu^e  \nnb \\
		&-  2(n-3 ) \left(\frac{1}{ (Ht)^2}  + \frac{1}{\sin^2(Ht)} \right) \tta tm \nu^e - \frac{2(\tta tm)^2}{\sin^2(Ht)}  \nu^e - 2 (n-2)\tta tm \nu^e \nnb\\
		&-2P^{ab}\nu_a\nu_b \nu^e- 2Q^{ab}\nu_a\nu_b \nu^e - \frac{2}{n-2}X^a X^b\nu_a\nu_b \nu^e \nnb \\
		&+2\nu^e\Bigl(\nu_a\nu_bg^{bd}g^{a\ha}-\frac{1}{2(n-2)} \lambda  g^{d\ha}\Bigr) g^{c\hc}(H_{\ha\hc}-E_{\ha}\nu_{\hc}+\nu_{\ha}E_{\hc})(H_{dc}-E_{d}\nu_{c}+\nu_{d}E_{c}).
	}
	
\end{lemma}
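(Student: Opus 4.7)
The strategy is to construct a first-order symmetric hyperbolic Fuchsian system for the triple $(-\nu^e m_e,\, \ts{\ulh}{^e_{\he}} m_e,\, m)$ starting from the single second-order wave equation \eqref{E:TTEQ} for $\lambda+1$, by combining it with the auxiliary first-order equation \eqref{E:ID2-0} for $m$.

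First, I would differentiate the definition \eqref{E:WD} of $m_d$ to obtain an expression for $\udn{c} m_d$ in terms of $\udn{c}\udn{d}\lambda$ plus explicit corrections involving $m$, $m_d$ and the derivatives of $1/t$ and $\nu_d$ (these last are handled by \eqref{E:RELBABLA} from Lemma \ref{t:nbnuh} used earlier). Contracting with $g^{cd}$ and inverting the defining relation gives $g^{cd}\udn{c}\udn{d}\lambda$ as $g^{cd}\udn{c} m_d$ up to a lower-order tail; substituting this into \eqref{E:TTEQ} yields a genuine first-order equation for $m_d$ whose singular part is a $\tfrac{1}{t}$-multiple of $m_d$ and $m$, and whose bounded part contains the Yang--Mills quadratic source and the $P^{ab}$, $Q^{ab}$, $X^aX^b$ terms.

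Next, I would symmetrize the principal term $g^{cd}\udn{c} m_d$ using the tensor $Q^{edc}$ of \eqref{E:BTENSOR}. By the identity
\als{
\nu^e g^{cd}\udn{c} m_d = Q^{edc}\udn{c} m_d + \nu^d g^{ec}\bigl(\udn{d}\udn{c}\lambda - \udn{c}\udn{d}\lambda\bigr),
}
the antisymmetric remainder reduces to a curvature commutator in $\udl{R}$ acting on $m_d$, which is bounded and is absorbed into the source. After symmetrization, decomposing $m_d = -\nu_d(-\nu^e m_e) + \ts{\ulh}{^e_d}(\ts{\ulh}{^f_e} m_f)$ and contracting the resulting symmetric equation with $-\nu^e$ and with $\ts{\ulh}{^f_e}$ produces the first two rows of the advertised system, with matrix entries $-\lambda$ and $h^{f\he}$ on the diagonal of $\bar{\mathbf{A}}_1^0$ and the off-diagonal transport terms $-\ts{h}{^{\he b}}$ and $-\ts{h}{^{fb}}$ in $\bar{\mathbf{A}}_1^c\ts{\ulh}{^b_c}$. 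For the third row I would take \eqref{E:ID2-0}, multiply by $\tte(-\lambda)$, and rearrange so that the $m_d$ contribution on the right-hand side splits into a term contributing to $\bar{\mathbf{A}}_1^0\nu^c\nb_c$ and a residual $\tte\frac{H}{\tta}(-\lambda)(-\nu^e m_e)$ piece feeding into $\bar{\mathcal{B}}_1$ at the bottom-left entry.

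The main bookkeeping task is to isolate the genuinely singular $\tfrac{1}{Ht}$ contributions and check they assemble into the stated matrix $\bar{\mathcal{B}}_1$. The $\tfrac{1}{Ht}$ terms come from three sources: (i) the $\tfrac{1}{\ttj H t}$ correction in \eqref{E:WD}, which when differentiated contributes $\tfrac{1}{\ttj Ht}$ multiples of $m$ and $\nu^e m_e$; (ii) the $\tfrac{1}{\ttj H}$ in $\udn{d}\lambda$ via \eqref{E:ID-Lam-Xi}; and (iii) the $\tfrac{n-2}{2\tan(Ht)}\nu^c\udn{c}$ transport term in \eqref{E:TTEQ}, which I would split as $\tfrac{n-2}{2Ht}\nu^c\udn{c} + \bigl(\tfrac{1}{\tan(Ht)} - \tfrac{1}{Ht}\bigr)\tfrac{n-2}{2}\nu^c\udn{c}$, sending the bounded remainder into $\triangle^e_1$. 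The remaining $\sin^{-2}(Ht)$ terms involving $(\lambda+1)^2$ and $(\lambda+1)$ are rewritten via $\lambda+1 = \tta t m$ into $\tta^2 t^2 m^2/\sin^2(Ht)$ and $\tta t m/\sin^2(Ht)$, both bounded as $t\to 0$, and placed in $\triangle^e_1$ as well. The main obstacle is verifying that the algebraic coefficients in the bottom-left and top-right entries of $\bar{\mathcal{B}}_1$ line up with the precise combinations $\tte H/\tta$ and $(2 - 1/\ttj)(1/\ttj - n + 3)\tta/H$ respectively; this is a direct but careful calculation depending on the normalizations $\tta$ and $\ttj$ chosen in \eqref{E:W} and \eqref{E:WD}.
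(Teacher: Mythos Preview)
Your overall strategy matches the paper's: rewrite \eqref{E:TTEQ} as a first-order equation for $m_d$, symmetrize with $Q^{edc}$, adjoin the equation \eqref{E:ID2-0} for $m$, and then project with $\nu_e$ and $\ts{\ulh}{^f_e}$. However, your handling of the antisymmetric remainder in the symmetrization step contains a genuine error that would lead to the wrong matrix $\bar{\mathcal{B}}_1$.

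You write the remainder as $\nu^d g^{ec}(\udn{d}\udn{c}\lambda-\udn{c}\udn{d}\lambda)$ and claim it is a bounded curvature term. But $\lambda$ is a scalar, so $\udn{d}\udn{c}\lambda-\udn{c}\udn{d}\lambda=0$ identically; the issue is that $m_d\neq\udn{d}\lambda$. From \eqref{E:WD} one has $m_d=\udn{d}\lambda-\tfrac{\tta}{\ttj H}m\,\nu_d$, so
\[
\udn{c}m_d-\udn{d}m_c=-\frac{\tta}{\ttj H}\bigl((\udn{c}m)\nu_d-(\udn{d}m)\nu_c\bigr),
\]
and after contracting with $\nu^d g^{ec}$ and inserting \eqref{E:ID2-0} for $\udn{c}m$, this produces
\[
\nu^d g^{ec}\udn{c}m_d-\nu^c g^{ed}\udn{c}m_d=-\frac{\ttb}{\ttj H}p^d m_d\,\nu^e+\frac{1}{\ttj H t}Q^{fgc}\nu_c\,\ts{\ulh}{^e_f}\ts{\ulh}{^d_g}m_d,
\]
which is \emph{singular} as $t\to 0$. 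This $\tfrac{1}{\ttj Ht}$ contribution is precisely what supplies the $(2,2)$ entry $\tfrac{1}{\ttj}h^{f\he}$ of $\bar{\mathcal{B}}_1$; the other singular sources you list account only for the $(1,1)$, $(1,3)$, $(3,1)$ and $(3,3)$ entries. If you absorb the remainder into the bounded source as proposed, the spatial block of $\bar{\mathcal{B}}_1$ would vanish and the Fuchsian structure constants in \S\ref{sec-decom-Einstein} would no longer match.
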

\begin{proof}
	Using the definition \eqref{E:W} and \eqref{E:WD} for $m_d$ and $m$, respectively, we can express the wave equation \eqref{E:TTEQ} for $\lambda+1$  as
	\al{TTEQ2a}{
		&	g^{cd}\udn{c}m_d   = \notag \\
		& \frac{n-2}{ \tan(Ht)}\nu^c m_c - \frac{n-2}{ \tan(Ht)}\frac{\tta}{\ttj H} m + \frac{2\tta t}{\sin^2(Ht)} m(n-2+\lambda)  - 2P^{ab}\nu_a\nu_b - 2Q^{ab}\nu_a\nu_b  \nnb\\
		&-2 (n-2)\tta tm - \frac{\tta}{ \ttj H^2 t}  \left(\frac{1}{\ttj}-1\right)  \lambda m - \frac{\ttb}{\ttj H }  p^a m_a + \frac{1}{ \ttj H t}\lambda \nu^a m_a- \frac{2}{n-2}X^a X^b\nu_a\nu_b \nnb\\
		&+2\Bigl(\nu_a\nu_bg^{bd}g^{a\ha}-\frac{1}{2(n-2)} \lambda  g^{d\ha}\Bigr) g^{c\hc}(H_{\ha\hc}-E_{\ha}\nu_{\hc}+\nu_{\ha}E_{\hc})(H_{dc}-E_{d}\nu_{c}+\nu_{d}E_{c}),
	}
	where in deriving this we have employed the identity \[-\frac{1}{2\ttj H t} g^{cd}\nu_d  m_c = -\frac{\ttb}{2\ttj H}  p^a m_a+\frac{1}{2\ttj H t}\lambda \nu^a m_a. \]	Multiplying \eqref{E:TTEQ2a} by $\nu^e$ yields
	\al{TTEQ2-2}{
		\nu^e g^{cd}\udn{c}m_d   =   & \left(n-2- \frac{1}{ \ttj }\right)\frac{1}{Ht} \nu^e \nu^d m_d - \left(\frac{1}{\ttj}\left(n-1-\frac{1}{\ttj}\right)-2n+6 \right)   \frac{\tta}{H^2t}m \nu^e  + \triangle_{11}^e
	}
	where $\triangle_{11}^e$ is defined by
	\begin{align}
		\label{Delta-m2}
		&\triangle_{11}^e  = \notag \\
		& \frac{1}{\ttj H }\tta m \nu^e \nu^d m_d  - \frac{\tta}{ \ttj H^2}  \left(\frac{1}{\ttj}-1\right) \tta m^2 \nu^e- \left(\frac{1}{Ht}
		-\frac{1}{ \tan(Ht)}\right)(n-2)\nu^e \nu^c m_c   \nnb \\
		& + \left(\frac{1}{Ht} -\frac{1}{ \tan(Ht)}\right)(n-2)\frac{\tta}{\ttj H} m \nu^e -  2(n-3 ) \left(\frac{1}{ (Ht)^2}  - \frac{1}{\sin^2(Ht)} \right) \tta tm \nu^e + \frac{2(\tta tm)^2}{\sin^2(Ht)}  \nu^e \nnb\\
		&- 2 (n-2)\tta tm \nu^e - \frac{\ttb}{\ttj H }  p^a m_a \nu^e  - 2P^{ab}\nu_a\nu_b \nu^e - 2Q^{ab}\nu_a\nu_b \nu^e - \frac{2}{n-2}X^a X^b\nu_a\nu_b \nu^e  \notag  \\
		&+2\nu^e\Bigl(\nu_a\nu_bg^{bd}g^{a\ha}-\frac{1}{2(n-2)} \lambda  g^{d\ha}\Bigr) g^{c\hc}(H_{\ha\hc}-E_{\ha}\nu_{\hc}+\nu_{\ha}E_{\hc})(H_{dc}-E_{d}\nu_{c}+\nu_{d}E_{c}).
	\end{align}
	
	To symmetrize the system, we use the symmetrizing tensor $Q^{edc}$ defined by \eqref{E:BTENSOR} as
	well as the identities
	\begin{equation}
		\frac{1}{t}\nu^d\nu^e =  \frac{1}{t}Q^{fgc}\nu_c(\ts{\delta}{^e_f}\ts{\delta}{^d_g}-\ts{\ulh}{^e_f}\ts{\ulh}{^d_g})+\tta m\nu^e\nu^d  \AND
		\frac{1}{t}\nu^e =
		-\frac{1}{t} Q^{edc}\nu_c\nu_d+ \tta m\nu^e, \label{QCOMMUN2}
	\end{equation}
	which are a consequence of the definitions \eqref{decom-g}, \eqref{E:W} and  \eqref{E:BTENSOR}.
	By  \eqref{E:BTENSOR} and \eqref{E:TTEQ2-2}, we have
	\al{TTEQ2}{
		& Q^{edc}\udn{c}m_d - \left(\nu^d g^{ec} \udn{c}m_d -\nu^c g^{ed} \udn{c}m_d \right) \nnb \notag \\
		=  &\left( n-2-\frac{1}{ \ttj }\right)\frac{1}{Ht}   Q^{fgc}\nu_c(\ts{\delta}{^e_f}\ts{\delta}{^d_g}-\ts{\ulh}{^e_f}\ts{\ulh}{^d_g})m_d  \notag \\  
		&+ \left(2-\frac{1}{\ttj}\right) \left(\frac{1}{\ttj}-n+3\right)   \frac{\tta}{H^2 t}  Q^{edc}\nu_c\nu_d m   +  \triangle_{12}^e,
	}
	where we note that $\frac{1}{\ttj}\left(n-1-\frac{1}{\ttj}\right)-2n+6 = \left(2-\frac{1}{\ttj}\right) \left(\frac{1}{\ttj}-n+3\right)$ and $\triangle_{12}^e$  is defined by
	\als{
		\triangle_{12}^e = {}& \left(n-2 - \frac{1}{ \ttj }\right) \frac{\tta}{H }m\nu^e\nu^d  m_d - \left(2-\frac{1}{\ttj}\right) \left(\frac{1}{\ttj}-n+3\right)  \frac{\tta^2}{H^2} m^2 \nu^e + \triangle_{11}^e,
	}
	with $\triangle_{11}^e$ given by \eqref{Delta-m2}.
	Observing from
	\eqref{E:ID-Lam-Xi} and \eqref{E:ID2-0} that
	\begin{align*}
		& \nu^d g^{ec} \udn{c}m_d -\nu^c g^{ed} \udn{c}m_d = \nu^d g^{ec} (\udn{c}m_d - \udn{d}m_c)  \nnb  \\
		= {}&\nu^d g^{ec} \left[\udn{c}\udn{d} \lambda -  \udn{c} \frac{\tta m}{\ttj H}  \nu_d  - \udn{d} \udn{c} \lambda + \udn{d} \frac{\tta m}{\ttj H}  \nu_c  \right]
		= \frac{\tta}{\ttj H } \nu^d g^{ec} \left(  - \udn{c}m \nu_d + \udn{d} m \nu_c  \right)  \nnb  \\
		= {} & \frac{\tta}{\ttj H } g^{ec} \ts{\ulh}{^d_c}\udn{d} m
		=  \frac{1}{\ttj H t} g^{ec} \ts{\ulh}{^d_c} m_d   = \frac{1}{\ttj H t} (h^{ed}-\xi^d\nu^e) m_d
		=   -\frac{\ttb}{\ttj H }p^d m_d \nu^e +\frac{1}{\ttj H t}m_d  h^{ed}   \nnb \\
		= {} & -\frac{\ttb}{\ttj H }p^d m_d \nu^e +\frac{1}{\ttj H t}Q^{abc}\nu_c \ts{\ulh}{^e_a} \ts{\ulh}{^d_b} m_d,
	\end{align*}
	we find after substituting this expression into \eqref{E:TTEQ2} that
	\al{TTEQ3}{
		Q^{edc}\udn{c}m_d
		={} & \left( n-2-\frac{1}{ \ttj }\right)\frac{1}{Ht}   Q^{fgc}\nu_c(\ts{\delta}{^e_f}\ts{\delta}{^d_g}-\ts{\ulh}{^e_f}\ts{\ulh}{^d_g})m_d \nnb \\
		& + \left(2-\frac{1}{\ttj}\right) \left(\frac{1}{\ttj}-n+3\right)  \frac{\tta}{H^2 t}  Q^{edc}\nu_c\nu_d m + \frac{1}{\ttj H t}Q^{fgc}\nu_c \ts{\ulh}{^e_f} \ts{\ulh}{^d_g} m_d  + \triangle^e_1
	}
	where $\triangle^e_1$ is defined by
	\[\triangle^e_1 = - \frac{\ttb}{\ttj H }p^d m_d \nu^e + \triangle_{12}^e.  \]

	To complete the derivation,  the evolution equation \eqref{E:TTEQ3} of $m_d$ needs to be supplemented with a first order evolution equation for $m$.
	Using the identity $Q^{ebc} \nu_b\nu_c = \lambda \nu^e$ from Lemma \ref{lem-identity}, it follows immediately from \eqref{E:ID2-0} that $m$ evolves according to
	\als{
		\tte \lambda \nu^e \udn{e} m =-\tte \lambda   \left(\frac{1}{\ttj}-1\right)\frac{1}{Ht} m + \tte  \frac{\lambda }{\tta t} m_e \nu^e,  
	}
	where $\tte$ is a constant will be fixed later. Combining this equation with  \eqref{E:TTEQ3} in a matrix form yields the singular symmetric hyperbolic system 
	\als{
		\mathbf{A}_1^c\udn{c}\p{m_d \\ m }=\frac{1}{Ht} \mathcal{B}_1 \p{m_d\\ m}+\p{ \triangle^e_1 \\ 0 }
	}
	where
	\gas{
		\mathbf{A}_1^c=\p{ Q^{edc} & 0   \\
			0 &  \lambda \tte \nu^c } \label{E:SYS1NONL1}
	}
	and
	\als{
		& \small \mathcal{B}_1 =  Q^{abc}\nu_c\p{
			\left(n-2-\frac{1}{\ttj}\right)(\ts{\delta}{^e_a}\ts{\delta}{^d_b}-\ts{\ulh}{^e_a}\ts{\ulh}{^d_b})+\frac{1}{\ttj} \ts{\ulh}{^e_a}\ts{\ulh}{^d_b} & \left(2-\frac{1}{\ttj}\right)\left(\frac{1}{\ttj}-n+3\right) \frac{\tta}{H} \ts{\delta}{^e_a}\nu_b  \\
			\tte \frac{H}{\tta}\ts{\delta}{^d_a}\nu_b  &   \tte \left(\frac{1}{\ttj}-1\right) \nu_b\nu_a
		}  \nnb  \\
		& \small=\p{ Q^{abc} \nu_c & 0 \\
			0 &  -\lambda\tte}  \p{
			\left(n-2-\frac{1}{\ttj}\right)(\ts{\delta}{^e_a}\ts{\delta}{^d_b}-\ts{\ulh}{^e_a}\ts{\ulh}{^d_b})  +\frac{1}{\ttj} \ts{\ulh}{^e_a}\ts{\ulh}{^d_b}   & \left(2-\frac{1}{\ttj}\right)\left(\frac{1}{\ttj}-n+3\right) \frac{\tta}{H} \ts{\delta}{^e_a}\nu_b   \\
			-\frac{H}{\tta}\nu^d  &  \left(\frac{1}{\ttj}-1\right)}.
	}
	\als{
		\bar{\mathbf{A}}_1^c\udn{c}\p{-\nu^e m_e \\ \ts{\ulh}{^e_{\he}} m_e \\ m }=\frac{1}{Ht} \bar{\mathcal{B}}_1 \p{-\nu^e m_e \\ \ts{\ulh}{^e_{\he}} m_e \\ m } +\bar{G}_1
	}
	where
	\gat{
		\bar{\mathbf{A}}_1^c=\p{ \nu_e Q^{edc}\nu_d &  \nu_e Q^{edc}\ts{\ulh}{^{\he}_d} & 0 \\
			\ts{\ulh}{^f_e}Q^{edc}\nu_d  &  \ts{\ulh}{^f_e}Q^{edc}\ts{\ulh}{^{\he}_d} & 0 \\ 0 & 0 & \lambda \tte \nu^c}, \qquad
		\bar{G}_1=\p{\nu_e & 0 \\ \ts{\ulh}{^f_e} & 0 \\ 0 & 1 }  \p{\triangle^e_1 \\ 0 }, \nnb
	}
	and
	\als{
		\bar{\mathcal{B}}_1={}& \p{\nu_e & 0 \\ \ts{\ulh}{^f_e} & 0 \\ 0 & 1 } \mathcal{B}_1 \p{\nu_d & \ts{\ulh}{^{\he}_d} & 0 \\0 & 0 & 1} \notag \\
		= {}& \p{\left(n-2-\frac{1}{\ttj} \right)(-\lambda) & 0 & \left(2-\frac{1}{\ttj}\right)\left(\frac{1}{\ttj}-n+3\right)\frac{\tta}{H}(-\lambda) \\
			0 &  \frac{1}{\ttj} h^{f \he} & 0 \\
			\tte \frac{H}{\tta}(-\lambda) & 0 & \tte \left(\frac{1}{\ttj}-1\right)  (-\lambda)
		}.
	}
	To complete the proof, we observe from  Lemma \ref{lem-identity} that $\bar{\mathbf{A}}_1^c \ts{\ulh}{^{b}_{c}}$ and $  \bar{\mathbf{A}}^0_1:=\bar{\mathbf{A}}_1^c\nu_c$ agree with the formulas provided in the statement of the lemma.
\end{proof}

Corresponding Fuchsian formulations of the wave equations \eqref{E:TSEQ} and \eqref{E:SSEQ2}--\eqref{E:SSEQTRFR3} for the gravitation fields $\xi^e$, $q$ and $\mathfrak{h}^{ab}- \ulh^{ab}$ can be derived using similar arguments.  Detailed derivations are provided in Appendix \ref{s:App1}, see Lemmas \ref{lem-FOSHS-p}--\ref{lem-FOSHS-4}. The Fuchsian formulations for all the gravitational wave equations   \eqref{E:TTEQ}--\eqref{E:TSEQ} and \eqref{E:SSEQ2}--\eqref{E:SSEQTRFR3} are displayed together in the theorem below. The Fuchsian formulation of the Yang--Mills equaitons will be derived separately in \S \ref{sec-Max}. 

\begin{theorem}\label{thm-FOSHS-m}
	The reduced conformal Einstein equations can be expressed in the following first order, symmetric hyperbolic Fuchsian form:
	\als{
		&- \bar{\mathbf{A}}_1^0 \nu^c \nb_{c} \p{-\nu^e m_e \\ \ts{\ulh}{^e_{\he}} m_e \\ m } +	\bar{\mathbf{A}}_1^c \ts{\ulh}{^{b}_{c}} \udn{b}\p{-\nu^e m_e \\ \ts{\ulh}{^e_{\he}} m_e \\ m } =\frac{1}{Ht} \bar{\mathcal{B}}_1 \p{-\nu^e m_e \\ \ts{\ulh}{^e_{\he}} m_e \\ m } +\bar{G}_1(t,\mathbf{U}),  \\
		&-\bar{\mathbf{A}}_2^0 \nu^c \nb_{c} \p{-\nu^e \tp{a}{e} \\ \ts{\ulh}{^e_{\he}} \tp{a}{e} \\ p^a } + \bar{\mathbf{A}}_2^c \ts{\ulh}{^{b}_{c}} \udn{b} \p{-\nu^e \tp{a}{e} \\ \ts{\ulh}{^e_{\he}} \tp{a}{e} \\ p^a } =\frac{1}{Ht}\bar{\mathcal{B}}_2  \p{-\nu^e \tp{a}{e} \\ \ts{\ulh}{^e_{\he}} \tp{a}{e} \\ p^a } +\bar{G}_2(t,\mathbf{U}),
	}
	\als{
		&-\bar{\mathbf{A}}_3^0\nu^c \nb_{c} \p{-\nu^e \tss{\ha\hb}{e} \\ \ts{\ulh}{^e_{\he}} \tss{\ha\hb}{e} \\ s^{\ha \hb} } + \bar{\mathbf{A}}_3^c \ts{\ulh}{^{b}_{c}} \udn{b} \p{-\nu^e \tss{\ha\hb}{e} \\ \ts{\ulh}{^e_{\he}} \tss{\ha\hb}{e} \\ s^{\ha \hb} } = \frac{1}{Ht} \bar{\mathcal{B}}_3\p{-\nu^e \tss{\ha\hb}{e} \\ \ts{\ulh}{^e_{\he}} \tss{\ha\hb}{e} \\ s^{\ha \hb} }+\bar{G}_3(t,\mathbf{U}),\\
		&-\bar{\mathbf{A}}_4^0\nu^c \nb_{c} \p{-\nu^e s_{e} \\ \ts{\ulh}{^e_{\he}} s_{e} \\ s } + \bar{\mathbf{A}}_4^c \ts{\ulh}{^{b}_{c}} \udn{b} \p{-\nu^e s_{e} \\ \ts{\ulh}{^e_{\he}}s_{e} \\ s } = \frac{1}{Ht} \bar{\mathcal{B}}_4 \p{-\nu^e s_{e} \\ \ts{\ulh}{^e_{\he}}s_{e} \\ s } +\bar{G}_4(t,\mathbf{U}),
	}		
	where
	\als{
		&\bar{\mathbf{A}}_1^0= \p{-\lambda & 0 & 0\\0 & \ulh_{f \hc} h^{f \he} & 0 \\0 & 0 & \tte (-\lambda)}, \quad
		\bar{\mathbf{A}}_1^c \ts{\ulh}{^{b}_{c}} =\p{-2\xi^b & -\ts{h}{^{\he b}}  & 0 \\
			-\ulh_{f \hc} \ts{h}{^{f b}}  & 0 & 0 \\ 0 & 0 & 0 },  \\
		&\bar{\mathbf{A}}_2^0 = \p{-\lambda & 0 & 0\\ 0 & \ulh_{f \hd} h^{f \he} & 0 \\ 0 & 0 & \ttf (-\lambda)}, \quad
		\bar{\mathbf{A}}_2^c \ts{\ulh}{^{b}_{c}} =\p{ -2\xi^b & -\ts{h}{^{\he b}} & 0 \\
			-\ulh_{f \hd} \ts{h}{^{f b}} & 0 & 0 \\ 0 & 0 & 0},  \\
		& \bar{\mathcal{B}}_1 = \p{\left(n-2-\frac{1}{\ttj} \right)(-\lambda) & 0 & \left(2-\frac{1}{\ttj}\right)\left(\frac{1}{\ttj}-n+3\right)\frac{\tta}{H}(-\lambda) \\
			0 &  \frac{1}{\ttj} \ulh_{f \hc} h^{f \he} & 0 \\
			\tte \frac{H}{\tta}(-\lambda) & 0 & \tte \left(\frac{1}{\ttj}-1\right)  (-\lambda)},  \\	
		&\bar{\mathcal{B} }_2 = \p{\left(n-2-\frac{1}{\ttk} \right)(-\lambda) & 0 & \left(1-\frac{1}{\ttk}\right)\left(\frac{1}{\ttk}-n+2\right)\frac{\ttb}{H}(-\lambda) \\
			0 &  \frac{1}{\ttk} \ulh_{f \hd} \ts{h}{^{f \he}} & 0 \\
			\ttf \frac{H}{\ttb}(-\lambda) & 0 & \ttf \left(\frac{1}{\ttk}-1\right) (-\lambda)},  \\
		&\bar{G}_1(t,\mathbf{U})=\p{ \nu_e\triangle^e_1 (t, \mathbf{U})  \\ \ts{\ulh}{_{\hc e}} \triangle^e_1(t, \mathbf{U})  \\ 0 }, \qquad   \bar{G}_2(t,\mathbf{U})=\p{\nu_e \triangle^{ea}_2(t, \mathbf{U})  \\  \ts{\ulh}{_{\hd e}}\triangle^{ea}_2(t, \mathbf{U}) \\ 0  },
	}
	\als{
		&\bar{\mathbf{A}}_3^0 = \p{-\lambda & 0 & 0\\ 0 & \ulh_{f a} h^{f \he} & 0 \\ 0 & 0 & -\lambda}, &
		\bar{\mathbf{A}}_3^c \ts{\ulh}{^{b}_{c}} = \p{ -2\xi^b & -\ts{h}{^{\he b}} & 0 \\
			-\ulh_{f a} \ts{h}{^{f b}} & 0 & 0 \\ 0 & 0 & 0},  \\
		& \bar{\mathbf{A}}_4^0 = \p{-\lambda & 0 & 0\\ 0 & \ulh_{f \ha} h^{f \he} & 0 \\ 0 & 0 & -\lambda}, &
		\bar{\mathbf{A}}_4^c \ts{\ulh}{^{b}_{c}} =\p{ -2\xi^b & -\ts{h}{^{\he b}} & 0 \\
			- \ulh_{f \ha} \ts{h}{^{f b}} & 0 & 0 \\ 0 & 0 & 0},  \\
		&\bar{\mathcal{B} }_3 = \p{-\lambda\left(n-2\right) & 0 & 0 \\
			0 & 0 & 0 \\
			0 & 0 & 0}, &
		\bar{\mathcal{B} }_4 =\p{-\lambda\left(n-2\right) & 0 & 0 \\
			0 & 0 & 0 \\
			0 & 0 & 0}, \\
		&\bar{G}_3(t,\mathbf{U}) = \p{ \nu_e \triangle^{e \ha \hb}_3(t, \mathbf{U})  \\ \ts{\ulh}{_{a e}} \triangle^{e \ha \hb}_3(t, \mathbf{U}) \\ \lambda \nu^e \tss{\ha\hb}{e} }, &
		\bar{G}_4(t,\mathbf{U}) =\p{\nu_e \triangle^{e}_4(t, \mathbf{U}) \\ \ts{\ulh}{_{\ha e}}\triangle^{e}_4(t, \mathbf{U}) \\ \lambda \nu^e s_e }.
	}
	Here, the maps $\triangle^{e}_1(t, \mathbf{U})$, $ \triangle^{ea}_2(t, \mathbf{U})$, $\triangle^{e \ha \hb}_3(t, \mathbf{U})$ and $\triangle^{e}_4(t, \mathbf{U})$ are as defined in Lemmas \ref{lem-FOSHS-m} and \ref{lem-FOSHS-p}--\ref{lem-FOSHS-4}, and there exists constants  $\iota>0$ and $R>0$ such that these maps are analytic for 
	$(t,\mathbf{U})\in (-\iota,\frac{\pi}{H})\times B_R(0)$.
\end{theorem}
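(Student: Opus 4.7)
The first block of the stated system, namely the $m$-block involving $(-\nu^e m_e,\ts{\ulh}{^e_{\he}}m_e,m)$, is exactly the content of Lemma \ref{lem-FOSHS-m}. The remaining three blocks (for $\xi^a$, $\mathfrak{h}^{ab}-\ulh^{ab}$, and $q$) are obtained by applying the same three-step procedure to the wave equations \eqref{E:TSEQ}, \eqref{E:SSEQTRFR3}, and \eqref{E:SSEQ2}, respectively. Accordingly, the plan is to assemble the four blocks in parallel and verify that the resulting enlarged system has the claimed form and regularity.

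The common strategy, as illustrated in the proof of Lemma \ref{lem-FOSHS-m}, will proceed in the following steps for each of the three remaining wave equations. First, rewrite the principal part $g^{cd}\udn{c}\udn{d}$ acting on the "primary" variable (e.g.\ $\xi^e$, $\mathfrak{h}^{ab}$, $q$) as $g^{cd}\udn{c}$ acting on the corresponding "gradient" variable from \eqref{E:W}–\eqref{E:QD}, using \eqref{E:ID-Lam-Xi} and its analogues together with \eqref{E:NORMT}; this produces singular $\frac{1}{Ht}$ contributions that must be absorbed into the Fuchsian matrix $\bar{\mathcal{B}}_k$. Second, symmetrize the principal part by replacing $g^{cd}$ with the tensor $Q^{edc}$ from \eqref{E:BTENSOR}, paying for the replacement with a commutator $\udn{c}\udn{d}-\udn{d}\udn{c}$ whose action on the field variables produces curvature terms of order $\underline{R}$ on the primary variable, which are regular and so go into the source $\bar{G}_k$. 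Third, adjoin the trivial first-order propagation equation from \eqref{E:ID3-0}–\eqref{E:ID4-0} for the primary variable itself, multiplied by the appropriate constant $\ttf$ (respectively $1$, $1$) times $\lambda\nu^c$ to maintain symmetry; this produces the third row of each $\bar{\mathbf{A}}_k^c$ and $\bar{\mathcal{B}}_k$. Finally, left-multiply by the projector $\mathrm{diag}(\nu_e,\ts{\ulh}{^f_e},1)$ and right-pair the unknowns as $(-\nu^e(\cdot)_e,\ts{\ulh}{^e_{\he}}(\cdot)_e,(\cdot))^\top$ using the identities of Lemma \ref{lem-identity} to read off the block form; the resulting $\bar{\mathbf{A}}_k^0:=\bar{\mathbf{A}}_k^c\nu_c$ and $\bar{\mathbf{A}}_k^c\ts{\ulh}{^b_c}$ match the matrices displayed in the statement.

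The hard part is ensuring that every singular term actually fits into the advertised form $\frac{1}{Ht}\bar{\mathcal{B}}_k$ with bounded coefficients. For the $\xi^e$ equation this is straightforward, matching the structure already seen for $m$; the constants $\ttb,\ttk,\ttf$ must be tuned so that the coefficients in $\bar{\mathcal{B}}_2$ have the stated entries. The more delicate cases are $q$ and $\mathfrak{h}^{ab}-\ulh^{ab}$, where the wave equations \eqref{E:SSEQ2}–\eqref{E:SSEQTRFR3} still contain terms of formal order $\frac{1}{\sin^2(Ht)}$. These are precisely the contributions that Corollary \ref{t:hh} was designed to eliminate via the definition of $q$ and the projector $\ts{\mathcal{L}}{^{ab}_{cd}}$: the identities \eqref{E:LGID} and \eqref{E:LGID-1} ensure that in \eqref{E:SSEQTRFR3} only regular terms remain, while the combination $(3-n)\cdot\eqref{E:SSEQ1}+\eqref{E:TTEQ}$ cancels the $\frac{\lambda+1}{\sin^2(Ht)}$ piece in the $q$-equation. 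The residual $\frac{1}{Ht}-\frac{1}{\tan(Ht)}$ differences become analytic functions of $t$ that vanish at $t=0$, so they contribute only to $\bar{G}_k$.

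Verifying analyticity is the last step: every entry of $\bar{\mathbf{A}}_k^0$, $\bar{\mathbf{A}}_k^c\ts{\ulh}{^b_c}$, $\bar{\mathcal{B}}_k$, and $\bar{G}_k$ is a rational combination of the components of $\mathbf{U}$ and of the analytic functions $\cot(Ht)$, $\csc^2(Ht)$, $\sin(Ht)$, $t\csc(Ht)$ etc., evaluated so that the only denominators are $\lambda$ (which is bounded away from zero on a neighborhood of the de Sitter background $\lambda=-1$) and quantities of the form $S$, $\det h^{ab}$ (likewise bounded away from zero near the background). Choosing $R>0$ small enough that the background values of $\mathbf{U}$ together with $B_R(0)$ avoid these zero loci, and $\iota>0$ small enough that $\cot(Ht)$ etc.\ remain analytic on $(-\iota,\pi/H)$, yields the required analytic regularity. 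This completes the assembly, and Lemma \ref{lem-FOSHS-m} together with Lemmas \ref{lem-FOSHS-p}–\ref{lem-FOSHS-4} of Appendix \ref{s:App1} deliver the block-diagonal system of the theorem.
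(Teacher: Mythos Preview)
Your proposal is correct and follows essentially the same approach as the paper: the theorem is assembled by invoking Lemma \ref{lem-FOSHS-m} together with Lemmas \ref{lem-FOSHS-p}--\ref{lem-FOSHS-4}, and indeed the paper's own proof is the single sentence ``The proof is a direct consequence of Lemmas \ref{lem-FOSHS-m} and \ref{lem-FOSHS-p}--\ref{lem-FOSHS-4}.'' Your added exposition of the common three-step strategy (rewrite the principal part via the gradient variables, symmetrize with $Q^{edc}$, adjoin the transport equation for the primary variable, then project) accurately summarizes what those lemmas do; one small slip is listing $\cot(Ht)$ and $\csc^2(Ht)$ among the ``analytic functions'' in the regularity discussion, since these are singular at $t=0$---it is only the specific combinations $\frac{1}{Ht}-\frac{1}{\tan(Ht)}$, $\frac{t^2}{\sin^2(Ht)}$, etc.\ appearing in the $\triangle_k$ maps that extend analytically across $t=0$, as you yourself note earlier.
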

\begin{proof}
	The proof is a direct consequence of Lemmas \ref{lem-FOSHS-m} and \ref{lem-FOSHS-p}--\ref{lem-FOSHS-4}. 
\end{proof}

\subsection{Fuchsian  formulation of the Yang--Mills equations}\label{sec-Max}
In this section, we turn to deriving a Fuchsian formulation of the Yang--Mills equations. 	
Under the conformal change of variables \eqref{CONFG-F}--\eqref{CONFG-A}, the Yang--Mills system \eqref{eq-maxwell-div-o}--\eqref{eq-maxwell-bianchi-o} transforms into
\begin{align}
	\nabla^a F_{a b} &= - (n-3)  \nabla^{a} \Psi  F_{a b} - e^\frac{\Psi}{2} g^{a c} [A_c, F_{a b} ], \label{eq-conformal-Maxwell-div} \\
	\nabla_{[a} F_{b c]} &= - \nabla_{[a} \Psi \cdot F_{b c]} - e^\frac{\Psi}{2} [A_{[a}, F_{b c]} ], \label{eq-conformal-Maxwell-Bianchi}
\end{align}
which we will refer to as the \textit{conformal Yang--Mills equations}.
We remark that the Yang--Mills Bianchi equation \eqref{eq-conformal-Maxwell-Bianchi} is independent of the choice of connection, and hence, we can replace it by
\begin{equation*}
	\nb_{[a} F_{b c]} = - \nb_{[a} \Psi \cdot F_{b c]} - e^\frac{\Psi}{2} [A_{[a}, F_{b c]} ].
\end{equation*}
Noting that
\begin{align*}
	\nabla_a F_{b c}
	& = \nb_a F_{b c} -  X^d_{a b} F_{d c} -  X^d_{a c} F_{b d},
\end{align*}
the conformal Yang--Mills equation \eqref{eq-conformal-Maxwell-div}--\eqref{eq-conformal-Maxwell-Bianchi} can, with the help of \eqref{E:CONFFAC} and Lemma \ref{t:conf2},  be rewritten as
\begin{align}
	g^{b a} \nb_b F_{a c} & = \frac{ (n-3) }{\tan (Ht)} g^{b a} \nu_a F_{b c} + X^d F_{d c} + g^{b a} X^d_{b c} F_{a d} - \frac{\sqrt{H}}{\sqrt{\sin (Ht)}} g^{a b} [A_b, F_{a c} ], \label{Maxwell-div-F} \\
	\nb_{[b} F_{a c ]} &= \frac{ 1 }{\tan (Ht)} \nu_{[b} F_{a c]}  - \frac{\sqrt{H}}{\sqrt{\sin (Ht)}} [A_{[b}, F_{a c]} ]. \label{Maxwell-bianchi-F}
\end{align}
It is worth noting that among the above equations, the non-vanishing, dynamical equations arise from applying  $\ts{\ulh}{^{c}_{\hc}}$ and $\nu^c$ to \eqref{Maxwell-div-F} and $\nu^b \ts{\ulh}{^{a}_{p}} \ts{\ulh}{^{c}_{q}}$ to \eqref{Maxwell-bianchi-F},
while the non-dynamical constraint equations are obtained from applying $\ts{\ulh}{^{b}_{d}} \ts{\ulh}{^{a}_{p}} \ts{\ulh}{^{c}_{q}}$ and $g^{c d} \nu_d$ to  
\eqref{Maxwell-bianchi-F} and \eqref{Maxwell-div-F}, respectively.

In the following, we assume a \textit{temporal gauge} for the conformal Yang--Mills field defined by 
\be\label{temporal}
A_d \nu^d = 0.
\ee
With the help of this gauge choice, we combine, in the following lemma, the conformal Yang--Mills equations \eqref{Maxwell-div-F}--\eqref{Maxwell-bianchi-F} together with an equation for the spatial component of the gauge potential into a single system that is expressed in terms of the Yang--Mills fields
\[E_b=- \nu^p F_{p a} \tensor{\ulh}{^a_b}, \quad H_{db}=\tensor{\ulh}{^c_d} F_{c a} \tensor{\ulh}{^a_b}, \quad \text{and} \quad \bar A_b = A_a \ts{\ulh}{^{a}_{b}}. \]

\begin{remark} 
	It is worth recalling that $\nu^a$, defined by \eqref{E:NORMT}, is the unit timelike vector with respect to the conformal de Sitter metric $\udl{g}_{ab}$ while $\tilde T^a$, defined by \eqref{def-T}, is the unit timelike vector with respect to the physical metric $\tilde g_{ab}$. This implies, in general, that the gauge choices \eqref{phys-temp-gauge} and \eqref{temporal} are distinct.
\end{remark}

\begin{lemma}\label{lem-maxwell-hyperbolic-0}
	If $(E_{a},A_b)$ solves the conformal Yang--Mills equations \eqref{Maxwell-div-F}--\eqref{Maxwell-bianchi-F} in the temporal gauge \eqref{temporal}, then the triple $(E_{d}, \, H_{p q}, \, \bar A_s)$ defined via \eqref{decom-F} and \eqref{def-MYM} solves 
	\begin{align}\label{e:maineq1}
		- \A^{0}  \nu^c \nb_{c}
		\begin{pmatrix}
			E_{\ha} \\  H_{l \ha} \\ \bar A_{\ha}
		\end{pmatrix}
		+  \A^{c}_{\ha} \ts{\ulh}{^{\hc}_{c}}  \nb_{\hc}
		\begin{pmatrix}
			E_f \\ H_{l f} \\ \bar A_{f}
		\end{pmatrix} = {}&  \frac{ 1 }{Ht} \B
		\begin{pmatrix}
			E_{\ha} \\ H_{l \ha} \\ \bar A_{\ha}
		\end{pmatrix}
		+ \frac{1}{\sqrt{t}} \p{ \Xi_{1\ha} \\ \Xi^h_{2\ha} \\ \Xi_{3\ha}
		}
		+
		\begin{pmatrix}  \widehat{\Delta}_{1\ha}  \\ \widehat{\Delta}_{2\ha}^{h} \\ \widehat{\Delta}_{3\ha} \end{pmatrix},
	\end{align}
	where
	\begin{equation*}
		\A^0   =
		\begin{pmatrix}
			- \lambda & 0 & 0 \\
			0 &  h^{h  l} & 0 \\
			0 & 0 & 1
		\end{pmatrix}, \quad
		\B =
		\begin{pmatrix}
			-(n-3)   \lambda   &
			- (n-4) \xi^l & 0  \\
			0  & h^{h l} & 0 \\
			0 & 0 & \frac{1}{2}
		\end{pmatrix},
	\end{equation*}
	\begin{align*}
		\A_{\ha}^c \ts{\ulh}{^{\hc}_{c}} & =
		\begin{pmatrix}
			- 2 \tsh{f}{\ha} g^{d c} \nu_d \ts{\ulh}{^{\hc}_{c}} + \ts{\ulh}{^{\hc}_{\ha}} g^{d c} \nu_d \ts{\ulh}{^{f}_{c}} &
			- \ts{\ulh}{^{f}_{\ha}} h^{l \hc} & 0 \\
			\ts{\ulh}{^{\hc}_{\ha}} h^{h f} - \ts{\ulh}{^{f}_{\ha}} h^{h \hc} & 0 & 0 \\
			0 & 0 & 0
		\end{pmatrix},
	\end{align*}
	and
	\begin{align*}
		\Xi_{1\ha} ={}& h^{d c} [\bar A_c, H_{d \ha}], \quad \Xi^e_{2 \ha} = h^{e c}  ( [\bar A_{c}, E_{\ha} ] - [\bar A_{\ha}, E_{c} ]   ), \quad \Xi_{3 \ha} = E_{\ha}, \\
		\widehat{\Delta}_{1\ha} = {}& \sqrt{t} \ttb p^c (- [\bar A_{\ha}, E_c]  + 2 [\bar A_{c}, E_{\ha} ] )   - X^d (\nu_d E_{\ha} + H_{d \ha}) \notag \\
		&- g^{b c} \tensor{X}{^d_{b a}} \ts{\ulh}{^{a}_{\ha}} (\nu_c E_d - \nu_d E_c + H_{c d}) \\
		& +  \left( \frac{ 1 }{\tan (Ht)} - \frac{1}{Ht} \right) \left( -(n-3) \lambda E_{\ha}  -(n-4) \ttb t p^{d} H_{d \ha}  \right) \\
		& + \left(\frac{\sqrt{H}}{\sqrt{\sin (Ht)}} - \frac{1}{\sqrt{t}} \right) \left( h^{d c} [\bar A_c, H_{d \ha}] + t \ttb p^c (2 [\bar A_{c}, E_{\ha} ] - [\bar A_{\ha}, E_c]  )  \right), \\
		\widehat{\Delta}_{2\ha}^{h} = {} &   \left( \frac{ 1 }{\tan (Ht)} - \frac{1}{Ht} \right) h^{h d} H_{d \ha} + \left(\frac{\sqrt{H}}{\sqrt{\sin (Ht)}} - \frac{1}{\sqrt{t}} \right) h^{h c}  ( [\bar A_{c}, E_{\ha} ] - [\bar A_{\ha}, E_{c} ]   ), \\
		\widehat{\Delta}_{3 \ha}  ={}&  \left( \frac{1}{2\tan (Ht)} - \frac{1}{2Ht}  \right) \bar A_{\ha} + \left(\frac{\sqrt{H}}{\sqrt{\sin (Ht)}} - \frac{1}{\sqrt{t}} \right) E_{\ha}.
	\end{align*}	
	Moreover, there exists constants $\iota>0$ and $R>0$ such that the maps $\Xi_i$, $\widehat{\Delta}_i$, $i=1,2,3$, are analytic for $(t,\mathbf{U})\in (-\iota,\frac{\pi}{H})\times B_R(0)$.	
\end{lemma}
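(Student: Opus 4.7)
\medskip

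\noindent\textbf{Proof proposal.} The plan is to extract three evolution equations --- one for $E_{\ha}$ from the projection of the divergence equation \eqref{Maxwell-div-F}, one for $H_{l\ha}$ from the projection of the Bianchi equation \eqref{Maxwell-bianchi-F}, and one for $\bar A_{\ha}$ from the relation between $E$ and the gauge potential --- and then to package them into the stated symmetric hyperbolic Fuchsian form. Throughout, the decompositions $F_{ab} = -\nu_a E_b + \nu_b E_a + H_{ab} + (\text{pieces killed by the relevant projections})$, the metric decomposition \eqref{decom-g}, and the identity $\ts{\delta}{^a_b} = \ts{\ulh}{^a_b} - \nu^a\nu_b$ will be used to convert raw contractions into expressions built solely from $(E, H, \bar A, m, p^a, s^{ab}, q, \ldots)$. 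Derivatives of $\nu^a$ will be handled via Lemma \ref{t:nbnuh}, which produces only $\frac{1}{\tan(Ht)}$-type coefficients.

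\smallskip

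First, I would apply $\ts{\ulh}{^c_{\ha}}$ to \eqref{Maxwell-div-F} and split the principal symbol $g^{ba}\nb_b F_{ac}$ via $\ts{\delta}{^b_d} = \ts{\ulh}{^b_d}-\nu^b\nu_d$; the temporal piece gives $-\lambda\nu^b\nb_b E_{\ha}$, while the mixed and spatial pieces contribute a spatial derivative of $H_{l\ha}$ (producing the off-diagonal entry $-\ts{\ulh}{^f_{\ha}}h^{l\hc}$ of $\A^c_{\ha}\ts{\ulh}{^{\hc}_c}$) together with a $\xi^b$-drift term. The right-hand side of \eqref{Maxwell-div-F}, after projection, yields the singular coefficient $-(n-3)\lambda/\tan(Ht)$ in front of $E_{\ha}$ and a $\xi^l$-coupling to $H_{l\ha}$, plus a commutator $e^{-\Psi/2} h^{dc}[\bar A_c, H_{d\ha}]$. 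Parallel work on \eqref{Maxwell-bianchi-F} contracted with $\nu^b\ts{\ulh}{^a_p}\ts{\ulh}{^c_q}$ and symmetrized yields the $H$-evolution: the cyclic expansion $3\nb_{[b}F_{ac]}$ isolates $\nu^b\nb_b H_{pq}$ as the temporal derivative and produces the complementary off-diagonal entry $\ts{\ulh}{^{\hc}_{\ha}}h^{hf}-\ts{\ulh}{^f_{\ha}}h^{h\hc}$, guaranteeing symmetry of $\A^c_{\ha}\ts{\ulh}{^{\hc}_c}$.

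\smallskip

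For the $\bar A_{\ha}$ equation I would use the conformal form $F_{ab} = e^{-\Psi/2}(\nb_a A_b - \nb_b A_a) + \tfrac{1}{2}e^{-\Psi/2}((\nb_a\Psi)A_b - (\nb_b\Psi)A_a) + [A_a, A_b]$, which follows from $\tilde A_a = e^{\Psi/2}A_a$, $\tilde F_{ab} = e^{\Psi}F_{ab}$, and the identity $\nabla_a A_b - \nabla_b A_a = \nb_a A_b - \nb_b A_a$. Substituting into $E_{\ha} = -\nu^p F_{pa}\ts{\ulh}{^a_{\ha}}$, the temporal gauge $\nu^a A_a = 0$ kills the commutator part, the $\nb$-term becomes $-e^{-\Psi/2}\nu^p\nb_p\bar A_{\ha}$ modulo $\nb\nu$ corrections, and the conformal-factor term uses $(d\Psi)_a = -\cot(Ht)\nu_a$ and $\nu^a\nu_a = -1$ to contribute $-\tfrac12 e^{-\Psi/2}\cot(Ht)\bar A_{\ha}$; rearranging and multiplying by $e^{\Psi/2} = \sqrt{H/\sin(Ht)}$ produces the evolution $-\nu^c\nb_c\bar A_{\ha} = \frac{1}{\sqrt t}E_{\ha} + \frac{1}{2Ht}\bar A_{\ha} + \widehat\Delta_{3\ha}$, matching the third row of the stated system. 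Finally, every factor $\frac{1}{\tan(Ht)}$ is split as $\frac{1}{Ht} + (\frac{1}{\tan(Ht)} - \frac{1}{Ht})$ and every $\frac{\sqrt H}{\sqrt{\sin(Ht)}}$ as $\frac{1}{\sqrt t} + (\frac{\sqrt H}{\sqrt{\sin(Ht)}} - \frac{1}{\sqrt t})$; the leading singular pieces populate $\B$ and $\Xi$, and the regular remainders define the three components of $\widehat\Delta$. Analyticity on $(-\iota, \pi/H)\times B_R(0)$ follows from analyticity of $\sin, \cos, \tan$ away from their zeros, provided $R$ is chosen so that $-\lambda$ remains positive (so $\A^0$ is positive definite) and $h^{ab}$ remains invertible, and $\iota$ is chosen small.

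\smallskip

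The main obstacle is the bookkeeping in extracting the $\widehat\Delta_{1\ha}$ remainder for the $E$ equation: it aggregates contributions from the $X^d$ and $X^d_{bc}$ connection-difference terms on the right of \eqref{Maxwell-div-F}, from all three of the $\nu$-derivative corrections produced by Lemma \ref{t:nbnuh}, and from the two singularity splittings described above, and one must verify that all $\frac{1}{Ht}$ and $\frac{1}{\sqrt t}$ contributions that are not already accounted for in the $\B$ matrix and in $\Xi_{1\ha}$ cancel exactly, so that $\widehat\Delta_{1\ha}$ is indeed analytic at $t=0$. Once this verification is done for $E$, the analogous --- but simpler --- checks for $H_{l\ha}$ and $\bar A_{\ha}$ are routine.
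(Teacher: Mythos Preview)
Your treatment of the $\bar A_{\ha}$ equation is correct and essentially coincides with the paper's derivation of \eqref{eq-YM-potential}. The gap is in your derivation of the $E$-row.

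When you project \eqref{Maxwell-div-F} by $\ts{\ulh}{^c_{\ha}}$ and decompose $F_{ac}\ts{\ulh}{^c_{\ha}}=\nu_a E_{\ha}+H_{a\ha}$, the term $g^{ba}\nb_b H_{a\ha}$ does \emph{not} reduce to a purely spatial derivative of $H$ plus a drift: writing $g^{ba}=h^{ba}+\lambda\nu^b\nu^a-\xi^a\nu^b-\xi^b\nu^a$ and using $\nu^a H_{a\ha}=0$, one is left with
\[
g^{ba}\nb_b H_{a\ha}=h^{ba}\nb_b H_{a\ha}-\xi^a\,\nu^b\nb_b H_{a\ha},
\]
so a genuine \emph{time} derivative of $H$ survives in the $E$-equation with coefficient $\xi^a$. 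This is incompatible with the diagonal $\A^0$ asserted in the lemma, and it also explains why your spatial-$E$ coefficient ($\xi^{\hc}\ts{\ulh}{^f_{\ha}}$ alone) does not match the stated $(1,1)$ entry $-2\ts{\ulh}{^f_{\ha}}\xi^{\hc}+\ts{\ulh}{^{\hc}_{\ha}}\xi^f$.

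The paper avoids this by a different route: it multiplies the projected divergence equation by $\nu^e$, rewrites $\nu^e g^{dc}=Q^{edc}-(\nu^d g^{ec}-\nu^c g^{ed})$, and then uses the Bianchi identity \eqref{Maxwell-bianchi-F} on the remainder $(\nu^d g^{ec}-\nu^c g^{ed})\nb_c F_{da}$ to convert it into a \emph{spatial} derivative term $-\ts{\ulh}{^c_b}\nu^d g^{eq}\ts{\ulh}{^f_q}\nb_c F_{da}\ts{\ulh}{^a_f}$ plus lower order pieces. This produces a single master equation with free index $e$; projecting it with $\nu_e$ and $\ts{\ulh}{^h_e}$ then yields \emph{both} the $E$- and $H$-rows simultaneously, with the diagonal $\A^0$ and the correct $(1,1)$ spatial block coming out automatically. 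The Bianchi contribution is also what shifts the singular $H$-coupling in the $E$-row from $(n-3)\xi^l$ to $(n-4)\xi^l$. Your scheme of treating ``divergence $\Rightarrow E$'' and ``Bianchi $\Rightarrow H$'' as two independent derivations misses this coupling; to repair it you would have to substitute the Bianchi-derived expression for $\nu^b\nb_b H_{a\ha}$ back into the $E$-equation, which is effectively what the $Q^{edc}$ trick does in one step.
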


\begin{remark} 
	An important point regarding the equation \eqref{e:maineq1} is that it is \textit{not} equivalent to the conformal Yang--Mills equations expressed in the temporal gauge. Indeed, the system \eqref{e:maineq1} originating from multiplying \eqref{Maxwell-div-F} on both sides by the spatial projection tensor $\tensor{\ulh}{^{a}_{b}}$ contains only part of the propagation Yang--Mills equations. The missing propagation equation (see \eqref{E:Maxwell-div-nu} and Remark \ref{rk-adding-propogation}) can be obtained from applying $\nu^c$ to \eqref{Maxwell-div-F}. 
	At the same time, it is true that if $(E_{a}, \, \bar A_b)$ solves the conformal Yang--Mills equations
	in the temporal gauge, then the triple $(E_{a}, \, H_{p q}, \, \bar A_b)$, where $H_{p q}$ is given in terms of $\bar A_a$ by \eqref{e:Hpq}, determines a solution of the \eqref{e:maineq1}. 
\end{remark}		

\begin{proof}[Proof of Lemma \ref{lem-maxwell-hyperbolic-0}]
	Multiplying \eqref{Maxwell-div-F} on both sides by the spatial projection tensor $\tensor{\ulh}{^{a}_{b}}$ gives
	\begin{align*}
		&  Q^{edc} \nb_c F_{d a} \tensor{\ulh}{^{a}_{b}}+ (- \nu^d g^{e c} + \nu^c g^{e d} ) \nb_c F_{d a} \tensor{\ulh}{^{a}_{b}}\\
		=&  \frac{ (n-3) }{\tan (Ht)} \nu^e g^{d c} \nu_c F_{d a} \tensor{\ulh}{^{a}_{b}}+  \nu^e X^d F_{d a} \tensor{\ulh}{^{a}_{b}}+ \nu^e g^{\ha \hb} \tensor{X}{^d_{\ha a}} F_{\hb d} \tensor{\ulh}{^{a}_{b}} \notag \\
		& - \frac{\sqrt{H}}{\sqrt{\sin (Ht)}} \nu^e g^{c d} [A_d, F_{c a} ] \ts{\ulh}{^{a}_{b}}.
	\end{align*}
	Noting that $ - \nu^d g^{e a} \nb_c F_{d a} \ts{\ulh}{^{c}_{b}} =  - \nu^d g^{e f} \ts{\ulh}{^{a}_{f}} \nb_c F_{d a} \ts{\ulh}{^{c}_{b}} =  - \nu^d g^{e q} \ts{\ulh}{^{f}_{q}} \ts{\ulh}{^{a}_{f}} \nb_c F_{d a} \ts{\ulh}{^{c}_{b}}$ due to the anti-symmetry of the Yang--Mills field, it follows from the Yang--Mills Bianchi equation \eqref{Maxwell-bianchi-F} that 
	\begin{align*}
		& (- \nu^d g^{e c} + \nu^c g^{e d} ) \nb_c F_{d a}  \tensor{\ulh}{^{a}_{b}}
		=  \nu^d g^{e c} (\nb_c F_{a d} + \nb_d F_{c a} )  \tensor{\ulh}{^{a}_{b}} \\
		= & - \nu^d g^{e c} \nb_a F_{d c}  \tensor{\ulh}{^{a}_{b}} + \frac{ 3 }{\tan (Ht)} \nu^d g^{e c}  \tensor{\ulh}{^{a}_{b}} \nu_{[d} F_{c a]} - \frac{\sqrt{H}}{\sqrt{\sin (Ht)}} \nu^d g^{e c} \ts{\ulh}{^{a}_{b}} [A_{[d}, F_{c a]} ]  \\
		= & - \nu^d g^{e a} \nb_c F_{d a} \ts{\ulh}{^{c}_{b}} - \frac{ 1 }{\tan (Ht)} ( g^{e d} + \nu^d  g^{e c} \nu_c ) F_{d a}   \tensor{\ulh}{^{a}_{b}}  - \frac{\sqrt{H}}{\sqrt{\sin (Ht)}} \nu^d g^{e c} \ts{\ulh}{^{a}_{b}} [A_{[d}, F_{c a]} ]\\
		= & - \nu^d g^{e q} \ts{\ulh}{^{f}_{q}} \ts{\ulh}{^{a}_{f}} \nb_c F_{d a} \ts{\ulh}{^{c}_{b}} - \frac{ 1 }{\tan (Ht)} g^{e c} \ts{\ulh}{^{d}_{c}} F_{d a}   \tensor{\ulh}{^{a}_{b}}  - \frac{\sqrt{H}}{\sqrt{\sin (Ht)}} \nu^d g^{e c} \ts{\ulh}{^{a}_{b}} [A_{[d}, F_{c a]} ].
	\end{align*}
	Combining the above two expressions, we arrive at
	\begin{align}
		&\ts{\ulh}{^{f}_{b}} Q^{edc} \nb_c F_{d a} \ts{\ulh}{^{a}_{f}}  - \ts{\ulh}{^{c}_{b}} \nu^d g^{e q} \ts{\ulh}{^{f}_{q}} \nb_c F_{d a} \ts{\ulh}{^{a}_{f}} \notag \\
		= & \frac{ 1 }{H t} \ts{\ulh}{^{f}_{b}} ((n-3) \nu^e g^{d c} \nu_c + g^{e c} \ts{\ulh}{^{d}_{c}}) F_{d a} \ts{\ulh}{^{a}_{f}} \nnb \\
		&+ \frac{1}{\sqrt{t}} \left( - \ts{\ulh}{^{a}_{b}} \nu^e g^{c d} [A_d, F_{c a} ]  +  \ts{\ulh}{^{a}_{b}} \nu^d g^{e c} [A_{[d}, F_{c a]} ] \right) + \tensor{\widehat{\Delta}}{^{\prime e}_b}, \label{Maxwell-0}
	\end{align}
	where
	\begin{align*}
		\tensor{\widehat{\Delta}}{^{\prime e}_b} ={} &  \nu^e X^d F_{d a}  \tensor{\ulh}{^{a}_{b}}+ \nu^e g^{\ha \hb} \tensor{X}{^d_{\ha a}} F_{\hb d} \ts{\ulh}{^{a}_{b}}   \\
		& +  \left( \frac{ 1 }{\tan (Ht)} - \frac{1}{Ht} \right) \left( (n-3) \nu^e g^{d c} \nu_c +  g^{e c} \ts{\ulh}{^{d}_{c}} \right) F_{d a} \ts{\ulh}{^{a}_{b}} \\
		& + \left(\frac{\sqrt{H}}{\sqrt{\sin (Ht)}} - \frac{1}{\sqrt{t}} \right) \ts{\ulh}{^{a}_{b}} \left( \nu^d g^{e c}  [A_{[d}, F_{c a]} ] -  \nu^e g^{c d}  [A_d, F_{c a} ] \right).
	\end{align*}

	Then noting the decomposition
	\begin{align*}
		F_{d a} \ts{\ulh}{^{a}_{f}} 
		={}&
		\begin{pmatrix}
			\nu_d, \ts{\ulh}{^{l}_{d}}
		\end{pmatrix}
		\begin{pmatrix}
			-\nu^p F_{p a} \ts{\ulh}{^{a}_{f}} \\ \ts{\ulh}{^q_l} F_{q a} \ts{\ulh}{^{a}_{f}}
		\end{pmatrix}
		=
		\begin{pmatrix}
			\nu_d, \ts{\ulh}{^{l}_{d}}
		\end{pmatrix}
		\begin{pmatrix}
			E_f \\ H_{l f}
		\end{pmatrix},
	\end{align*}
	we can act on \eqref{Maxwell-0} with
	$\begin{pmatrix} \nu_e \\ \ts{\ulh}{^{h}_{e}} \end{pmatrix}$ to get
	\begin{align}\label{eq-Max-1}
		&	\begin{pmatrix}
			\ts{\ulh}{^{f}_{b}} \nu_e Q^{edc} \nu_d +  \ts{\ulh}{^{c}_{b}} \nu_e g^{e q} \ts{\ulh}{^{f}_{q}} & \ts{\ulh}{^{f}_{b}} \nu_e Q^{edc} \ts{\ulh}{^{l}_{d}}  \\
			\ts{\ulh}{^{f}_{b}} \ts{\ulh}{^{h}_{e}} Q^{edc} \nu_d +  \ts{\ulh}{^{c}_{b}} \ts{\ulh}{^{h}_{e}} g^{e q} \ts{\ulh}{^{f}_{q}} & \ts{\ulh}{^{f}_{b}} \ts{\ulh}{^{h}_{e}} Q^{edc} \ts{\ulh}{^{l}_{d}}
		\end{pmatrix}
		\nb_c
		\begin{pmatrix}
			E_f \\ H_{l f}
		\end{pmatrix}  \notag \\
		&\hspace{0.5cm} =  \frac{1 }{Ht}
		\begin{pmatrix}
			-(n-3)  \ts{\ulh}{^{f}_{b}} \lambda   &
			- (n-4)  \ts{\ulh}{^{f}_{b}} \xi^l   \\
			0  & \ts{\ulh}{^{f}_{b}} h^{h l}
		\end{pmatrix}
		\begin{pmatrix}
			E_f \\ H_{l f}
		\end{pmatrix} \nnb \\
		&\hspace{1cm}	 + \frac{1}{\sqrt{t}} \p{\nu_e \nu^d g^{e c} [A_{[d}, F_{c a]} ] \ts{\ulh}{^{a}_{b}}  + g^{d c} [A_c, F_{d a} ] \ts{\ulh}{^{a}_{b}} \\  \ts{\ulh}{^{h}_{e}}  \nu^d g^{e c} \ts{\ulh}{^{a}_{b}} [A_{[d}, F_{c a]} ]
		}
		+
		\begin{pmatrix} \nu_e \tensor{\widehat{\Delta}}{^{\prime e}_b} \\  \ts{\ulh}{^{h}_{e}} \tensor{\widehat{\Delta}}{^{\prime e}_b}
		\end{pmatrix}.
	\end{align}
	But, since
	\begin{align*}
		\frac{1}{\sqrt{t}} \nu_e \nu^d g^{e c}  [A_{[d}, F_{c a]} ] \ts{\ulh}{^{a}_{b}}
		=  & \frac{1}{\sqrt{t}} \xi^c \nu^d ( [\bar A_{a}, F_{d c} ]  +  [\bar A_{c}, F_{a d} ] ) \ts{\ulh}{^{a}_{b}} \notag \\
		=  & \sqrt{t} \ttb p^c (- [\bar A_{b}, E_c]  +  [\bar A_{c}, E_b ] ) ,
		\\
		\frac{1}{\sqrt{t}} \ts{\ulh}{^{h}_{e}} \nu^d g^{e c}  [A_{[d}, F_{c a]} ] \ts{\ulh}{^{a}_{b}}
		= & \frac{1}{\sqrt{t}} h^{h c}  ( - [\bar A_{b}, E_{c} ]  +  [\bar A_{c}, E_{b} ] )
	\end{align*}
	and
	\begin{align*}
		\frac{1}{\sqrt{t}}  g^{d c} [A_c, F_{d a} ] \ts{\ulh}{^{a}_{b}}
		= &-\sqrt{t} \ttb p^c  \nu^d  [\bar A_{c}, F_{d a} ] \ts{\ulh}{^{a}_{b}} + \frac{1}{\sqrt{t}} h^{d c} [\bar A_c, F_{d a}] \ts{\ulh}{^{a}_{b}} \notag\\
		= & \sqrt{t} \ttb p^c [\bar A_{c}, E_b ] + \frac{1}{\sqrt{t}} h^{d c} [\bar A_c, H_{d b}]
	\end{align*}
	in the temporal gauge, equation
	\eqref{eq-Max-1} becomes
	\begin{align}\label{eq-YM-1}
		& 	\begin{pmatrix}
			\ts{\ulh}{^{f}_{b}} \nu_e Q^{edc} \nu_d +  \ts{\ulh}{^{c}_{b}} \nu_e g^{e q} \ts{\ulh}{^{f}_{q}} & \ts{\ulh}{^{f}_{b}} \nu_e Q^{edc} \ts{\ulh}{^{l}_{d}}  \\
			\ts{\ulh}{^{f}_{b}} \ts{\ulh}{^{h}_{e}} Q^{edc} \nu_d +  \ts{\ulh}{^{c}_{b}} \ts{\ulh}{^{h}_{e}} g^{e q} \ts{\ulh}{^{f}_{q}} & \ts{\ulh}{^{f}_{b}} \ts{\ulh}{^{h}_{e}} Q^{edc} \ts{\ulh}{^{l}_{d}}
		\end{pmatrix}
		\nb_c
		\begin{pmatrix}
			E_f \\ H_{l f}
		\end{pmatrix}\notag \\
		= 	& \small  \frac{1 }{Ht}
		\begin{pmatrix}
			-(n-3)  \ts{\ulh}{^{f}_{b}} \lambda   &
			- (n-4)  \ts{\ulh}{^{f}_{b}} \xi^l   \\
			0  & \ts{\ulh}{^{f}_{b}} h^{h l}
		\end{pmatrix}
		\begin{pmatrix}
			E_f \\ H_{l f}
		\end{pmatrix} 	+ \frac{1}{\sqrt{t}} \p{ h^{d c} [\bar A_c, H_{d b}] \\  h^{h c}  ( [\bar A_{c}, E_{b} ] - [\bar A_{b}, E_{c} ]   )
		}
		+
		\begin{pmatrix} \widehat{\Delta}_{1 b} \\  \tensor{\widehat{\Delta}}{^{h}_{2 b}}
		\end{pmatrix},
	\end{align}
	where
	\als{
		\widehat{\Delta}_{1 b} =  \nu_e \tensor{\widehat{\Delta}}{^{\prime e}_b} + \sqrt{t} \ttb p^c (- [\bar A_{b}, E_c]  + 2 [\bar A_{c}, E_b ] ) \AND
		\widehat{\Delta}^h_{2 b} = \ts{\ulh}{^{h}_{e}} \tensor{\widehat{\Delta}}{^{\prime e}_b}.	
	}
	
	On the other hand,  
	\begin{equation} \label{F-nu}
		\tilde F_{d a} \nu^{d} = \nu^{d} \nb_d \Ab_a - \nb_a \Ab_{d} \nu^{d} + \nu^{d} [\Ab_d, \Ab_a]
	\end{equation} 
	by the definition \eqref{def-F} of the Yang--Mills curvature, and so, it follows from \eqref{CONFG-F}, \eqref{CONFG-A}, \eqref{E:CAL1} and a  direct calculation that, in the temporal gauge \eqref{temporal}, \eqref{F-nu} reduces to 
	\begin{equation}\label{eq-YM-potential}
		- \ts{\ulh}{^{f}_{b}} \nu^{d} \nb_d (\bar A_f)  = \frac{1}{2Ht} \bar A_b + \frac{1}{\sqrt{t}} E_b + \widehat{\Delta}_{3 b}
	\end{equation}
	where $\widehat{\Delta}_{3b}$ is given by
	\begin{equation*}\label{def-Delta-A}
		\widehat{\Delta}_{3b}  = \left( \frac{1}{2\tan (Ht)} - \frac{1}{2Ht}  \right) \bar A_b + \left(\frac{\sqrt{H}}{\sqrt{\sin (Ht)}} - \frac{1}{\sqrt{t}} \right) E_b.
	\end{equation*}			
	Putting \eqref{eq-YM-1} and \eqref{eq-YM-potential} together completes the proof.
\end{proof}

\begin{remark}\label{rk-adding-propogation}
	Although $\A^{0}$ in  \eqref{e:maineq1} is positive and symmetric, the system  \eqref{e:maineq1} is not symmetric hyperbolic due to the non-symmetry of the operators $\A_{\ha}^c \ts{\ulh}{^{\hc}_c}$. To remedy this defect, we supplement \eqref{e:maineq1} with an additional equation and introduce a new variable in order to symmetrize it. We begin by appending to \eqref{e:maineq1} the
	dynamical equation for $E_a$ given by
	\al{Maxwell-div-nu}{
		g^{b a} \nb_b E_{a} &=  \frac{\ttb (n-3)t }{\tan (Ht)} p^b E_b +  (X^d E_{d} + \nu^c g^{b a} \tensor{X}{^d_{b c}} F_{a d}) - \frac{\sqrt{H}}{\sqrt{\sin (Ht)}} h^{a b} [\bar A_b, E_{a} ],
	}
	which is obtained by contracting \eqref{Maxwell-div-F} with $\nu^c$ to get 
	\begin{align*}
		g^{b a} \nb_b F_{a c} \nu^c  =  &\frac{ (n-3) }{\tan (Ht)} g^{b a} \nu_a F_{b c} \nu^c + \nu^c (X^d F_{d c} + g^{b a} \tensor{X}{^d_{b c}} F_{a d})  \notag \\
		& - \frac{\sqrt{H}}{\sqrt{\sin (Ht)}} g^{a b} [A_b, F_{a c} ] \nu^c,
	\end{align*}
	and noting
	\begin{align*}
		g^{b a} \nu_a F_{b c} \nu^c 
		= g^{\ha a} \tensor{\ulh}{^b_{\ha}}  \nu_a \tensor{\ulh}{^d_b}  F_{d c} \nu^c = \xi^b E_b \overset{\eqref{E:V}}{=} \ttb t p^b E_b.
	\end{align*}

	At this point, \eqref{e:maineq1} with \eqref{E:Maxwell-div-nu} appended to it is still not symmetric hyperbolic. To complete the symmetrization, we define
	\begin{equation*}
		\tE^a=-g^{e\hb}\tensor{\ulh}{^a_e}E_{\hb},
	\end{equation*}
	and note that the $E_b$ and $\tE^a$ are related by
	\be\label{eq-E-tE}
	\tE^a=-h^{a \hb} E_{\hb}  \AND
	E_b=-\tE^a g_{ab}-g^{e\hb}\nu_e E_{\hb}\nu^a g_{ab}.
	\ee
	It can then be verified by a straightforward calculation, which we carry out in the proof of the following lemma, that the system consisting of
	\eqref{e:maineq1} and \eqref{E:Maxwell-div-nu} can be cast in a symmetric hyperbolic form in terms of the variables $\tE^e$, $E_{d}$, $H_{a b}$, $ \bar A_s$. 
\end{remark}

\begin{lemma}\label{lem-FOSHS-Maxwell}
	If $(E_{a},\,  A_b)$ solves the conformal Yang--Mills equations \eqref{Maxwell-div-F}--\eqref{Maxwell-bianchi-F} in the temporal gauge \eqref{temporal}, then the quadruple $(\tE^e, E_{d}, \, H_{p q}, \, \bar A_s)$ defined via \eqref{decom-F}, \eqref{def-tE-1} and \eqref{def-MYM} solves the symmetric hyperbolic equation
	\begin{align}\label{Maxwell-FOSHS-1}
		- \check{\mathbf{A}}^{0} \nu^c \nb_{c} \p{\tE^e  \\E_{\hd}  \\ H_{\ha\hb} \\ \bar A_s} - \check{\mathbf{A}}^{ f} \ts{\ulh}{^{c}_{f}}  \nb_{c} \p{\tE^e  \\E_{\hd}  \\ H_{\ha\hb} \\ \bar A_s} = & \frac{1}{Ht}\check{\mathcal{B}} \p{\tE^e  \\ E_{\hd}  \\ H_{\ha\hb} \\ \bar A_s} + \frac{1}{\sqrt{t}} \p{ - \Xi_{1 \he} \\  \tensor{h}{^{d \ha} } \Xi_{1\ha} \\ - \tensor{h}{^{a \ha}} \Xi^{b}_{2 \ha}  \\ h^{r a} \Xi_{3 a}} +
		\p{\mathfrak{D}^\sharp_{1 \he}  \\  \mathfrak{D}_{2}^{\sharp  d} \\ \mathfrak{D}_{3}^{\sharp a b}  \\ \mathfrak{D}_{4}^{\sharp  r} },
	\end{align}
	where
	\begin{align*}
		\check{\mathbf{A}}^{0} ={}& \p{ -\lambda \ts{\ulh}{^{a}_{\he}} g_{b a}\tensor{\ulh}{^b_e} & - \lambda \nu^r g_{r s} \tensor{\ulh}{^{s}_{\he}} \xi^{\hd}  & 0& 0 \\
			- \lambda \nu^r g_{r s} \tensor{\ulh}{^{s}_{e}} \xi^d & \bigl[ -\lambda  h^{\hd d}  - \lambda \nu^r \nu^s g_{r s}  \xi^{d} \xi^{\hd} + 2 \xi^{d} \xi^{\hd} \bigr] & 0 & 0 \\
			0 & 0 &  h^{\ha a} h^{\hb b} & 0 \\
			0& 0 & 0&  h^{r s}}, \\
		\check{\mathbf{A}}^{f} \ts{\ulh}{^{c}_{f}}  ={}& \footnotesize \p{2\xi^c \tensor{\ulh}{^a_\he} g_{b a}\tensor{\ulh}{^b_e} & \bigl[ 2\xi^c \nu^r g_{r s}\tensor{\ulh}{^{s}_{\he}} +\tensor{\ulh}{^{c}_{ \he}} \bigr]\xi^{\hd}  & - h^{\ha c} \ts{\ulh}{^{\hb}_{\he}} & 0\\
			\bigl[ 2\xi^c \nu^r g_{r s} \tensor{\ulh}{^{s}_e}+  \tensor{\ulh}{^c_e}\bigr] \xi^{d} & \bigl[ 2 \nu^r \nu^s g_{r s} \xi^c  \xi^{\hd} \xi^{d} - 2 \xi^{(d} h^{\hd ) c} + 2\xi^c h^{\hd d} \bigr]  & - h^{\ha d} h^{\hb c}  & 0\\
			-  h^{a c} \ts{\ulh}{^{b}_{e}} & - h^{a \hd} h^{b c} & 0 & 0 \\
			0 & 0 & 0 & 0},\\
		\check{\mathcal{B}} ={}&  \p{-(n-3) \lambda \tensor{\ulh}{^{a}_{\he}} g_{b a} \tensor{\ulh}{^{b}_e}   & 0  & 0 & 0\\
			0 & - (n-3)\lambda h^{\hd d} & 0 & 0 \\
			0 & 0 &  h^{\ha a}  h^{\hb b} &0 \\
			0&0&0& \frac{1}{2} h^{r s} },		
	\end{align*}
	\begin{align*}
		\mathfrak{D}^\sharp_{1 \he}(t,\mathbf{U}) = {} & - \widehat{\Delta}_{1\he} + (2\ttb t p^c -\lambda\nu^c) \ts{\ulh}{^{\ha}_{\he}} g_{\ha a}  E_{b} \nb_c h^{b a}+\frac{n-4}{H} \ttb p^a H_{a \he}-\frac{n-3}{H} \ttb   \lambda p^{\hd} \nu^a g_{a \he} E_{\hd}, \\
		\mathfrak{D}_{2}^{\sharp  d}(t,\mathbf{U}) = {}& \tensor{h}{^{d \ha}} \widehat{\Delta}_{1\ha} + (2 \ttb t p^c -\lambda \nu^c) \ttb t \nu^b g_{b\ha} p^{d} E_{\hb} \nb_c h^{\ha \hb} + \ttb t p^{d} E_{\ha} \nb_c h^{c \ha}   \\
		& + 2 \ttb t p^d X^b E_b + 2 \ttb t p^{d} \nu^c (  g^{b s} \tensor{X}{^{\hc}_{s c}} H_{b \hc} + g^{r s} \tensor{X}{^b_{s c}} ( \nu_r  E_b - \nu_b E_r) ) \\
		& + 2 \ttb t \frac{\ttb (n-3)t }{\tan (Ht)} p^{d}  p^b E_b - 2 \ttb p^d \frac{\sqrt{H} t}{\sqrt{\sin (Ht)}} h^{a b} [\bar A_b, E_{a} ]-\frac{n-4}{H}\ttb h^{\hb d}  p^{\ha} H_{\ha\hb}, \\
		\mathfrak{D}_{3}^{\sharp  a b}(t,\mathbf{U})={}&-\tensor{h}{^{a \ha}} \widehat{\Delta}_{2 \ha}^{b} - h^{a c} E_{\ha} \nb_c h^{b \ha}, \quad \mathfrak{D}_{4}^{\sharp  r}(t,\mathbf{U}) = h^{r \ha} \widehat{\Delta}_{3 \ha},
	\end{align*}	
	and the maps $\Xi_i, \, \widehat{\Delta}_{i}, \, i=1,2,3$, are as defined in Lemma \ref{lem-maxwell-hyperbolic-0}.
	Moreover, there exist constants $\iota>0$ and $R>0$ such that the maps $\mathfrak{D}^\sharp_{1 \he}(t,\mathbf{U})$,  $\mathfrak{D}_2^{\sharp d}(t,\mathbf{U})$, $\mathfrak{D}_{3}^{\sharp a b}(t,\mathbf{U})$ and $\mathfrak{D}_{4}^{\sharp  r}(t,\mathbf{U})$ are analytic  for $(t,\mathbf{U})\in (-\iota,\frac{\pi}{H})\times B_R(0)$ and vanish for $\mathbf{U}=0$.   
\end{lemma}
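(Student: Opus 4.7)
The plan is to build on Lemma \ref{lem-maxwell-hyperbolic-0} and the contracted equation \eqref{E:Maxwell-div-nu} by enlarging the system with the auxiliary unknown $\tE^e$. The block of \eqref{e:maineq1} that couples $E$ and $H$ is not symmetric as stated: the $E$-row contains $-\ts{\ulh}{^{f}_{\ha}} h^{l \hc}$ while the $H$-row contains $\ts{\ulh}{^{\hc}_{\ha}} h^{hf} - \ts{\ulh}{^{f}_{\ha}} h^{h \hc}$, and no scalar left-multiplier can reconcile these. I will therefore treat $\tE^e = -h^{e\hb}E_{\hb}$ as an independent unknown, using the algebraic identity \eqref{eq-E-tE} to re-express any occurrence of $E_b$ inside principal terms through a combination of $\tE^e$ and spatial projections, so that the combined $(\tE, E)$ block acquires a symmetric principal part with off-diagonals $-h^{\ha c}\ts{\ulh}{^{\hb}_{\he}}$ and $-h^{a c}\ts{\ulh}{^{b}_{e}}$ that are manifestly the transpose of those in the $H$-row.

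Concretely, I would first derive a transport equation for $\tE^e$ by applying $-\tensor{\ulh}{^e_{\he}} h^{\he\hb}$ to the $E$-equation obtained by contracting \eqref{Maxwell-div-F} with $\nu^c$, namely \eqref{E:Maxwell-div-nu}, and using the product rule to bring $\nb h^{ab}$ into the source. Second, I would left-multiply the $(\tE,E,H,\bar A)$ system by the block-diagonal matrix $\diag\bigl(g_{b a}\tensor{\ulh}{^a_\he},\;\text{a mix of } g_{ab},\xi,h,\;h^{\ha a}h^{\hb b},\;h^{rs}\bigr)$ dictated by \eqref{eq-E-tE}, which produces precisely the $\check{\mathbf{A}}^0$ and $\check{\mathbf{A}}^f\ts{\ulh}{^c_f}$ announced in the statement. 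The $\ttb t p^d$ cross-terms in the $(2,2)$-entry of $\check{\mathbf{A}}^0$ and the $\xi^c$ cross-terms in $\check{\mathbf{A}}^f\ts{\ulh}{^c_f}$ arise from exactly this substitution via \eqref{eq-E-tE}, while the $H$-block and $\bar A$-block are inherited, after contraction with $h^{\ha a}h^{\hb b}$ and $h^{rs}$ respectively, from the corresponding rows of \eqref{e:maineq1}.

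The singular right-hand side $\frac{1}{Ht}\check{\mathcal{B}}$ comes directly from the matrix $\B$ in Lemma \ref{lem-maxwell-hyperbolic-0} and the $\frac{\ttb(n-3)t}{\tan(Ht)}p^b E_b$ term in \eqref{E:Maxwell-div-nu} after isolating its $\frac{1}{Ht}$ contribution; the residual $(\frac{1}{\tan(Ht)}-\frac{1}{Ht})$ piece is absorbed into the $\mathfrak{D}^\sharp_i$. Similarly, the $\frac{1}{\sqrt t}$ source is built from the $\Xi_i$ of Lemma \ref{lem-maxwell-hyperbolic-0} after the same left-multiplication, together with the matching contribution from the commutator $[\bar A_b, E_a]$ in \eqref{E:Maxwell-div-nu}, where the discrepancy with the genuine factor $\frac{\sqrt H}{\sqrt{\sin(Ht)}}$ is again placed into $\mathfrak{D}^\sharp_2$.

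The main obstacle is bookkeeping rather than structure. I will need to collect, in $\mathfrak{D}^\sharp_{1\he}$ through $\mathfrak{D}^\sharp_{4r}$, four distinct kinds of contributions: (i) the $\widehat{\Delta}_{i\ha}$ from Lemma \ref{lem-maxwell-hyperbolic-0}; (ii) terms of the form $(\ttb tp^c-\lambda\nu^c)\cdot E\cdot \nb_c h^{ab}$ created when the left-multiplier is commuted past the derivative in the $\tE$ and $E$ rows; (iii) the $X^d$, $\tensor{X}{^d}{_{ba}}$ and commutator terms inherited from the $E$-equation once multiplied by $2\ttb t p^d$; and (iv) the non-principal pieces generated by writing $\xi^d = \ttb t p^d$ and using the $\B$-action on $E$. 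Analyticity on $(-\iota,\frac{\pi}{H})\times B_R(0)$ then reduces to three observations: the combinations $\frac{1}{\tan(Ht)}-\frac{1}{Ht}$ and $\frac{\sqrt H}{\sqrt{\sin(Ht)}}-\frac{1}{\sqrt t}$ extend analytically across $t=0$ by the Taylor expansion of the trigonometric functions; the tensors $h^{ab}$ and $S$ entering $\tensor{X}{^a}{_{bc}}$ remain invertible on $B_R(0)$ by smallness of $\mathbf{U}$; and every summand of each $\mathfrak{D}^\sharp_i$ carries at least one factor of a perturbation variable, hence vanishes at $\mathbf{U}=0$.
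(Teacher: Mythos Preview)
Your overall strategy---adjoin $\tE^e$ as an extra unknown, use the algebraic relation \eqref{eq-E-tE}, and absorb regular remainders into $\mathfrak{D}^\sharp_i$---is the same as the paper's. However, two of your concrete steps are misplaced, and one of them is exactly the step on which symmetry hinges.

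First, you cannot obtain a transport equation for $\tE^e$ from \eqref{E:Maxwell-div-nu}: that equation is the $\nu^c$-contraction of \eqref{Maxwell-div-F} and is a \emph{scalar} identity $g^{ba}\nb_b E_a=\ldots$, so applying $-\tensor{\ulh}{^e_{\he}}h^{\he\hb}$ to it does not produce a vector evolution law. In the paper the $\tE^e$-row is derived from the \emph{spatially projected} divergence equation (the first line of \eqref{eq-Max-1}, i.e.\ \eqref{e:eq2}) by substituting $E_{\ha}=-g_{a\ha}\tE^a-g^{\hc\hb}\nu_{\hc}E_{\hb}\nu^a g_{a\ha}$ into the principal term $\nu_e Q^{edc}\nu_d\nb_c E_{\ha}$ and projecting with $\tensor{\ulh}{^\ha_e}$; this is what yields \eqref{Main-maxwell-tE-1}.

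Second---and this is the real gap---the $(2,2)$ cross-terms $2\xi^d\xi^{\hd}$ in $\check{\mathbf{A}}^0$ do \emph{not} come from the substitution \eqref{eq-E-tE}. After one forms the $E_{\hd}$-row from $-g^{b\ha}\times\eqref{e:eq2}$ and projects spatially, the coefficient of $\nb_c E_{\hd}$ contains a leftover piece $2\xi^{d}h^{c\hd}$ that is \emph{not} symmetric in $(d,\hd)$. The paper removes it by writing $2\xi^d h^{c\hd}\nb_c E_{\hd}=2\xi^d g^{c\hd}\nb_c E_{\hd}+2\xi^d\xi^{\hd}\nu^c\nb_c E_{\hd}$ via \eqref{E:g-h} and then invoking the scalar equation \eqref{E:Maxwell-div-nu} to convert $2\xi^d g^{c\hd}\nb_c E_{\hd}$ into lower-order terms; only after this substitution does the principal $(2,2)$ block become symmetric, and the residual source terms produced here are precisely the $2\ttb t p^d(\ldots)$ lines of $\mathfrak{D}^{\sharp d}_2$. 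Your proposal uses \eqref{E:Maxwell-div-nu} only as a source of $\frac{1}{Ht}$ contributions and never to kill a principal term, so as written it would not close to a symmetric system.
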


\begin{remark}
	The symmetric hyperbolic nature of \eqref{Maxwell-FOSHS-1} is a consequence of the symmetry of the matrices $\check{\mathbf{A}}^{0}$ and $\check{\mathbf{A}}^{f} \ts{\ulh}{^{c}_{f}} $  in the pairs of indices $(e, \, \he)$, $(d, \, \hd)$, $(a, \, \ha)$, $(b, \, \hb)$ and $(r, \, s)$,
	and the fact that $\check{\mathbf{A}}^{0}$ is positive as can be easily verified.
\end{remark}

\begin{remark}\label{rk-equivalence-EYM-Fuchsian}
	As was the case for \eqref{e:maineq1}, equation \eqref{Maxwell-FOSHS-1} is not equivalent to the conformal Yang--Mills equations expressed in the temporal gauge \eqref{temporal}. This is because the relation  $\tE^e = -h^{ea} E_a$ cannot be recovered from a solution  $(\tE^e, E_{d},H_{a b},\bar A_s)$ of \eqref{Maxwell-FOSHS-1} even in $\tE^e = -h^{ea} E_a$ holds initially. Consequently, we cannot guarantee that the Yang--Mills equation \eqref{E:Maxwell-div-nu} will hold for a
	solution of  \eqref{Maxwell-FOSHS-1}. However, it is true that if  $(E_{a}, \, \bar A_b)$ solves the conformal Yang--Mills system in the temporal gauge, then the quadruple $(\tE^e, \, E_{d}, \, H_{p q}, \, \bar A_s)$, where $\tE^e = -h^{ea} E_a$ and $H_{p q}$ is given in terms of $\bar A_a$ via \eqref{e:Hpq},  will solve \eqref{Maxwell-FOSHS-1}. It is worth noting, as shown in \cite{LW2021b}, that  this situation does not arise when 
	the adapted temporal gauge and the associated $1+3$ splitting of Yang--Mills field is employed. In this case, the equivalence between the gauged reduced Einstein--Yang--Mills equations and a first order symmetric hyperbolic formulation can be established directly. Although in the present paper, we do not have such equivalence, solutions of our first order symmetric hyperbolic formulation on the Einstein--Yang--Mills equations can still be shown to determine solutions to the Einstein--Yang--Mills via a uniqueness argument; see \S\ref{mainthm-proof} for details.
\end{remark}

\begin{proof}[Proof of Lemma \ref{lem-FOSHS-Maxwell}]
	We start the derivation of \eqref{Maxwell-FOSHS-1} by considering the first equation from \eqref{eq-Max-1}, which reads
	\begin{align}\label{e:eq2}
		&\nu_e Q^{edc}\nu_d \nb_c E_{\ha}+\tensor{\ulh}{^c_\ha}\nu_e g^{e\hd} \nb_c E_{\hd}+ \nu_e Q^{edc} \tensor{\ulh}{^\hb_d} \nb_c H_{\hb \ha}\notag  \\
		&\hspace{2cm}=\frac{1}{Ht}\bigl[-(n-3) \lambda \tensor{\ulh}{^{\hd}_{\ha}} E_{\hd}-(n-4) \tensor{\ulh}{^{\hb}_{\ha}}\xi^{a}H_{a\hb}\bigr]+\frac{1}{\sqrt{t}} \Xi_{1\ha} + \widehat{\Delta}_{1 \ha}.
	\end{align}
	With the help of \eqref{eq-E-tE} and the identity
	\begin{align*}
		& - \nu_e Q^{edc}\nu_d \nb_c g_{a\ha} \tE^a - \nu_e Q^{edc}\nu_d \nb_c g_{a\ha} g^{\hc \hb} \nu_{\hc} E_{\hb} \nu^a
		= \nu_e Q^{edc} \nu_d g^{a\hb} E_{\hb} \nb_c g_{a\ha},
	\end{align*}
	we can write \eqref{e:eq2} as
	\begin{align*}
		&\nu_e Q^{edc}\nu_d \nb_c E_{\ha} =  \nu_e Q^{edc} \nu_d\nb_c(-\tE^a g_{a\ha}-g^{\hc\hb}\nu_{\hc}E_{\hb}\nu^a g_{a\ha})   \\
		= &  -\nu_e Q^{edc} \nu_d g_{a\ha} \nb_c \tE^a-\nu_e Q^{edc}\nu_d\nu_{\hc}\nu^a g^{\hc\hb}g_{a\ha}\nb_c E_{\hb} \notag \\
		& -\nu_e Q^{edc}\nu_d\nu_{\hc} \nu^a E_{\hb}g_{a\ha}\nb_c g^{\hc\hb}+ \nu_e Q^{edc}\nu_d g^{a\hb}E_{\hb} \nb_c g_{a\ha}.
	\end{align*}
	Noting that $\nu_e Q^{edc} \nu_d = -2\xi^c+\lambda\nu^c$, it then follows that
	\begin{align}\label{e:tE-1}
		&(2\xi^c-\lambda \nu^c) g_{a \ha}\nb_c  \tE^{a}+\bigl[(2\xi^c-\lambda \nu^c)\nu^a g_{a\ha} +\tensor{\ulh}{^c_\ha}\bigr]\nu_e g^{e\hd} \nb_c E_{\hd} -  h^{\hb c} \nb_c H_{\hb \ha}\notag  \\
		= {} & \frac{1}{Ht}\bigl[-(n-3)\tensor{\ulh}{^{\hd}_{\ha}}\lambda E_{\hd}-(n-4) \tensor{\ulh}{^{\hb}_{\ha}}\ttb t p^a H_{a\hb}\bigr]+\frac{1}{\sqrt{t}} \Xi_{1\ha} + \widehat{\Delta}_{1 \ha} \notag  \\
		& + (-2\xi^c+\lambda\nu^c) \left(\nu_{\hc} \nu^a g_{a\ha} \nb_c g^{\hc\hb} E_{\hb}  - g^{a\hb} \nb_c g_{a\ha} E_{\hb} \right).
	\end{align}
	Applying the projection operator  $\tensor{\ulh}{^\ha_e}$ to  \eqref{e:tE-1}, we find, after using \eqref{eq-E-tE} to replace the $E_{\hd}$ in the first singular term of the right hand side of \eqref{e:tE-1} and noting that $\nb_c g_{ab}=-g_{a\ha} g_{b\hb}\nb_c g^{\ha\hb}$,  $\xi^a=\ttb t p^a$ (recall \eqref{E:V}) and $(n-3) \lambda g^{\he\hd}\nu_{\he} \nu^a g_{a e} E_{\hd}=(n-3) \lambda \xi^{\hd} \nu^a g_{a e} E_{\hd}=(n-3) \ttb t \lambda p^{\hd} \nu^a g_{a e} E_{\hd}$, that 
	\begin{align}\label{e:eq3a}
		&(2\xi^c-\lambda \nu^c) \tensor{\ulh}{^\ha_e} g_{b \ha}\tensor{\ulh}{^b_a} \nb_c  \tE^{a}+\bigl[(2\xi^c-\lambda \nu^c)\nu^a g_{a\ha}\tensor{\ulh}{^\ha_e} +\tensor{\ulh}{^c_e} \bigr]\xi^{\hd}\nb_c  E_{\hd}- h^{\ha c} \nb_c H_{\ha e} \notag  \\
		= {} & \frac{ n-3 }{Ht} \lambda g_{a e} \tE^a  +\frac{1}{\sqrt{t}} \Xi_{1 e} - \mathfrak{D}^\sharp_{1 e}(t,\mathbf{U})
	\end{align}
	where
	\begin{align*}
		\mathfrak{D}^\sharp_{1 e}(t,\mathbf{U})  =& - \widehat{\Delta}_{1 e}
		- \tensor{\ulh}{^\ha_e} (-2\xi^c+\lambda\nu^c) \left( \nu_{\hc} \nu^a g_{a\ha} \nb_c g^{\hc\hd} E_{\hd} + E_{\hd} g_{\ha b}\nb_c g^{\hd b} \right)\notag  \\
		&+\frac{n-4}{H} \ttb p^a H_{a e}  -\frac{(n-3)}{H} \ttb   \lambda p^{\hd} \nu^a g_{a e} E_{\hd}.
	\end{align*}
	Applying the projection $\tensor{\ulh}{^e_\hb}$ to \eqref{e:eq3a}, we arrive at
	\begin{align}\label{Main-maxwell-tE-1}
		& (2\xi^c-\lambda \nu^c) \tensor{\ulh}{^\ha_\hb} g_{b \ha}\tensor{\ulh}{^b_a} \nb_c  \tE^{a}+\bigl[(2\xi^c-\lambda \nu^c)\nu^a g_{a\ha}\tensor{\ulh}{^\ha_\hb}  +\tensor{\ulh}{^c_\hb}  \bigr]\xi^{\hd}\nb_c  E_{\hd}- h^{\ha c} \nb_c H_{\ha \hb}  \notag  \\
		= {} & \frac{ n-3 }{Ht}  \lambda \tensor{\ulh}{^{\hd}_{\hb}} g_{\ha \hd} \tensor{\ulh}{^{\ha}_a} \tE^a   +\frac{1}{\sqrt{t}} \Xi_{1 \hb} -\mathfrak{D}^\sharp_{1 \hb}(t,\mathbf{U}),
	\end{align}
	which determines the first component of \eqref{Maxwell-FOSHS-1} and can be viewed as an evolution equation for the Yang--Mills field $\tE^{a}$.

	Next, we derive an evolution equation for $E_{b}$ that will comprise the second component of \eqref{Maxwell-FOSHS-1}. Although $E_b$ and $\tE^a$ are not independent since they are related by \eqref{eq-E-tE}, we treat them as independent fields in order to obtain a symmetric hyperbolic equation.  The evolution equation for $E_{b}$ is derived from 
	$\tensor{\ulh}{^c_a}\nu_e g^{e b} \nb_c \tE^a-\tensor{\ulh}{^c_a} \nu_e g^{e b} \nb_c \tE^a - g^{b\ha}\times \eqref{e:eq2}$, which, after using \eqref{eq-E-tE}, reads
	\begin{align*}
		& \tensor{\ulh}{^c_a}\nu_e g^{e b} \nb_c\tE^a-\tensor{\ulh}{^c_a}\nu_e g^{e b} \nb_c \left(-g^{\he \hb}\tensor{\ulh}{^a_{\he}}E_{\hb} \right) - g^{b\ha} \nu_e Q^{edc}\nu_d \nb_c E_{\ha} - g^{b\ha} \tensor{\ulh}{^c_\ha}\nu_e g^{e\hd} \nb_c E_{\hd} \\
		&\hspace{0.5cm}- g^{b\ha}  \nu_e Q^{edc} \tensor{\ulh}{^\hb_d} \nb_c H_{\hb \ha}
		=   \frac{1}{Ht}\bigl[ (n-3)\lambda g^{b\ha}\tensor{\ulh}{^{\hd}_{\ha}} E_{\hd}+(n-4)g^{b\ha} \tensor{\ulh}{^{\hb}_{\ha}}\xi^{a}H_{a\hb}\bigr]  \notag\\
		&\hspace{5cm}- \frac{1}{\sqrt{t}} g^{b \ha} \Xi_{1 \ha} - g^{b\ha} \widehat{\Delta}_{1 \ha}.
	\end{align*}
	A straightforward calculation then shows that the above equation is equivalent to 
	\begin{align}\label{e:eq4}
		& \tensor{\ulh}{^c_a}\nu_e g^{e b} \nb_c\tE^a+(\tensor{\ulh}{^c_d}\nu_e g^{e b}g^{d\ha} - g^{b\ha} \nu_e Q^{edc}\nu_d - g^{b d} \tensor{\ulh}{^c_d}\nu_e g^{e\ha}) \nb_c E_{\ha}   + g^{b\ha} h^{\hb c} \nb_c H_{\hb \ha}\notag  \\
		={}&\frac{1}{Ht}\bigl[ (n-3)\lambda g^{b\ha}\tensor{\ulh}{^{\hd}_{\ha}} E_{\hd}+(n-4)g^{b\ha} \tensor{\ulh}{^{\hb}_{\ha}}\xi^{a}H_{a\hb}\bigr] - \frac{1}{\sqrt{t}} g^{b \ha} \Xi_{1 \ha} - g^{b\ha} \widehat{\Delta}_{1 \ha}  \notag \\ 
		& -\tensor{\ulh}{^c_d}\nu_e g^{eb}E_{a}\nb_c g^{d a}.
	\end{align}
	
	From Lemma \ref{lem-identity} (i.e.,  \eqref{E:g-h}) 	and the identity
	\begin{align*}
		\tensor{\ulh}{^c_d}\nu_e g^{e b}g^{d\ha} = & \nu_e g^{e \hb}(\tensor{\ulh}{^b_{\hb}}-\nu^b\nu_{\hb})g^{d a}\tensor{\ulh}{^c_d}(\tensor{\ulh}{^{\ha}_a}-\nu_a\nu^{\ha})= (\xi^b-\lambda \nu^b ) (h^{c\ha}-\xi^c\nu^{\ha}),
	\end{align*}
	we observe that
	\begin{align*}
		& \tensor{\ulh}{^c_d}\nu_e g^{e b}g^{d\ha}   - g^{b\ha} \nu_e Q^{edc}\nu_d - g^{b d} \tensor{h}{^c_d}\nu_e g^{e\ha}
		=  \xi^b h^{c\ha}-\xi^{\ha} h^{cb}+2\xi^c h^{b\ha}-3\nu^{\ha}\xi^b\xi^c   \\
		& -\lambda \nu^b h^{c\ha} +\lambda \nu^{\ha }h^{cb}-\nu^b\xi^{\ha}\xi^c- \lambda \nu^c h^{b\ha}+\lambda \nu^c\nu^{\ha}\xi^b+\lambda \nu^c\nu^b\xi^{\ha} -\lambda^2\nu^c\nu^b\nu^{\ha}+2\lambda\xi^c\nu^{\ha}\nu^b.
	\end{align*}
	Substituting this into \eqref{e:eq4} yields
	\begin{align*}
		&	\tensor{\ulh}{^c_a}\nu_e g^{e b} \nb_c\tE^a+(\xi^b h^{c\hd}-\xi^{\hd} h^{cb}+2\xi^c h^{b\hd} -\lambda \nu^b h^{c\hd}  \\
		& -\nu^b\xi^{\hd}\xi^c-\lambda \nu^c h^{b\hd} +\lambda \nu^c\nu^b\xi^{\hd}  ) \nb_c E_{\hd} + g^{b\ha} h^{\hb c} \nb_c H_{\hb \ha} 	\notag \\
		= &  \frac{1}{Ht}\bigl[ (n-3)\lambda g^{b\ha}\tensor{\ulh}{^{\hd}_{\ha}} E_{\hd}+(n-4)g^{b\ha} \tensor{\ulh}{^{\hb}_{\ha}}\xi^{a}H_{a\hb}\bigr]  \notag \\
		& - \frac{1}{\sqrt{t}} g^{b \ha} \Xi_{1 \ha} - g^{b\ha} \widehat{\Delta}_{1 \ha} - \tensor{\ulh}{^c_d}\nu_e g^{eb}E_{\ha}\nb_c g^{d \ha},
	\end{align*}	
	where in deriving this we also used  $\nu^{\ha}E_{\ha}=0$, which holds by \eqref{decom-F}.
	Applying the projection $\tensor{\ulh}{^{\hc}_b} $ to this equation gives
	\begin{align*} 
		& \xi^{\hc}  \tensor{\ulh}{^c_a} \nb_c\tE^a + (\xi^{\hc} h^{c\hd}-\xi^{\hd} h^{c\hc}+2\xi^c h^{\hc\hd}  -\lambda \nu^c h^{\hc\hd}) \nb_c E_{\hd} +  h^{\hc\ha} h^{\hb c} \nb_c H_{\hb \ha}\notag  \\
		={}&\frac{1}{Ht}\bigl[ (n-3)\lambda h^{\hc\hd} E_{\hd}+(n-4)h^{\hc\hb} \xi^{a}H_{a\hb}\bigr] - \frac{1}{\sqrt{t}} \tensor{h}{^{\hc \ha}} \Xi_{1\ha}  - \tensor{h}{^{\hc \ha}} \widehat{\Delta}_{1\ha} -\tensor{\ulh}{^c_d} \xi^{\hc} E_{\ha}\nb_c g^{d \ha}.
	\end{align*}	 	
	Reformulating this equation as 
	\begin{align*}
		& \bigl[(2\xi^c-\lambda \nu^c)\nu^b g_{b\ha} \tensor{\ulh}{^{\ha}_a} +  \tensor{\ulh}{^c_a}\bigr] \xi^{\hc}\nb_c\tE^a - (2\xi^c-\lambda \nu^c)\nu^b g_{b\ha} \tensor{\ulh}{^{\ha}_a}  \xi^{\hc}\nb_c\tE^a \notag   \\
		&\qquad +(\xi^{\hc} h^{c\hd}-\xi^{\hd} h^{c\hc}+2\xi^c h^{\hc\hd}  -\lambda \nu^c h^{\hc\hd}) \nb_c E_{\hd} -  h^{\hc\ha} h^{\hb c} \nb_c H_{ \ha \hb} \notag  \\
		= {} & \frac{1}{Ht}\bigl[ (n-3)\lambda h^{\hc\hd} E_{\hd}+(n-4)h^{\hc\hb} \xi^{a}H_{a\hb}\bigr] - \frac{1}{\sqrt{t}} \tensor{h}{^{\hc \ha}} \Xi_{1\ha} - \tensor{h}{^{\hc \ha}} \widehat{\Delta}_{1\ha} -\tensor{\ulh}{^c_d} \xi^{\hc} E_{\ha}\nb_c g^{d \ha},
	\end{align*}	
	where we note that $h^{\hc\ha} h^{\hb c} \nb_c H_{\hb \ha}$ has been rewritten as $- h^{\hc\ha} h^{\hb c} \nb_c H_{ \ha\hb}$,  we then use \eqref{eq-E-tE} to replace $\tE^a$ with $E_b$ in the term $-(2\xi^c-\lambda \nu^c)\nu^b g_{b\ha} \tensor{\ulh}{^{\ha}_a}  \xi^{\hc}\nb_c\tE^a$ from the above equation to get  
	\begin{align*}
		& \bigl[(2\xi^c-\lambda \nu^c)\nu^b g_{b\ha} \tensor{\ulh}{^{\ha}_e}+  \tensor{\ulh}{^c_e}\bigr] \xi^{d}\nb_c\tE^e -  h^{d\ha} h^{\hb c} \nb_c H_{ \ha\hb} \notag   \\
		& + \bigl[(2\xi^c-\lambda \nu^c) \xi^{d} \nu^b g_{b s} h^{s\hd}   +(\xi^{d} h^{c\hd}-\xi^{\hd} h^{c d}+2\xi^c h^{d\hd}  -\lambda \nu^c h^{d\hd})\bigr] \nb_c E_{\hd} \notag  \\
		={} &\frac{1}{Ht}\bigl[ (n-3)\lambda h^{d\hd} E_{\hd}+(n-4)h^{d\hb} \xi^{\ha}H_{\ha\hb}\bigr] - \frac{1}{\sqrt{t}} \tensor{h}{^{d \ha}}  \Xi_{1 \ha}  - \mathfrak{D}_2^{\dagger d}(t,\mathbf{U})
	\end{align*}
	where
	\begin{align*}
		\mathfrak{D}_2^{\dagger d}(t,\mathbf{U}) = \tensor{h}{^{d \ha}} \widehat{\Delta}_{1\ha} + (2\xi^c-\lambda \nu^c)\nu^b g_{b\ha} \xi^{d}E_{\hb} \tensor{\ulh}{^{\ha}_e} \nb_c g^{e\hb} + \xi^{d} E_{\ha} \tensor{\ulh}{^c_{\hd}} \nb_c g^{\hd \ha}.
	\end{align*}
	Using
	\als{
		\nu^b g_{b s} h^{s \hd} =  &\nu^b g_{b s} (g^{s \hd} - \lambda \nu^s \nu^{\hd} + \xi^{s} \nu^{\hd} + \xi^{\hd} \nu^s) \notag \\
		= & \nu^{\hd} (1-\lambda \nu^b \nu^s g_{b s} + \nu^b \xi^s g_{b s} ) + \nu^b \nu^s g_{b s} \xi^{\hd}
	}
	and $\nu^{\hd} \nb_c E_{\hd} = 0$ then allows us to express the above equation as
	\begin{align}
		& \bigl[(2\xi^c-\lambda \nu^c)\nu^b g_{b\ha} \tensor{\ulh}{^{\ha}_e}+  \tensor{\ulh}{^c_e}\bigr] \xi^{d}\nb_c\tE^e -  h^{d\ha} h^{\hb c} \nb_c H_{ \ha\hb} \notag   \\
		& + \bigl[(2\xi^c-\lambda \nu^c) \xi^{d} \xi^{\hd} \nu^b \nu^s g_{b s}  +\xi^{d} h^{c\hd}-\xi^{\hd} h^{c d}+2\xi^c h^{d\hd}  -\lambda \nu^c h^{d\hd} \bigr] \nb_c E_{\hd} \notag  \\
		={} &\frac{1}{Ht}\bigl[ (n-3)\lambda h^{d\hd} E_{\hd}+(n-4)h^{d\hb} \xi^{\ha}H_{\ha\hb}\bigr] - \frac{1}{\sqrt{t}} \tensor{h}{^{d \ha}} \Xi_{1\ha} - \mathfrak{D}_2^{\dagger d}(t,\mathbf{U}). \label{e:eq4a-A}
	\end{align}

	Next, we observe that the coefficients of $\nb_c E_{\hd}$ in \eqref{e:eq4a-A} can be expressed as
	\begin{align*}
		& (2\xi^c-\lambda \nu^c) \xi^{d} \xi^{\hd} \nu^b \nu^s g_{b s} + \xi^{d} h^{c\hd}-\xi^{\hd} h^{c d}+2\xi^c h^{d\hd}  -\lambda \nu^c h^{d\hd}  \\
		={}&\bigl[ (2\xi^c-\lambda \nu^c)\xi^{\hd} \xi^{d} \nu^b \nu^s g_{b s} - 2 \xi^{(d} h^{\hd ) c} + 2\xi^c h^{\hd d} - \lambda \nu^c h^{\hd d} \bigr] + 2\xi^{d} h^{c \hd},
	\end{align*}
	where the first term in the bracket is symmetric in the indices $d$ and $\hd$ while the remaining term $2\xi^{d} h^{c \hd}$ is non-symmetric in $d$ and $\hd$.
	The non-symmetric term need to be addressed in order to obtain a symmetric hyperbolic equation. To handle this term,   we use \eqref{E:g-h} to write $\xi^{d} h^{c \hd} \nb_c E_{\hd}$ as
	\begin{align*}
		2 \xi^{d} h^{c \hd} \nb_c E_{\hd}   = & 2 \xi^{d} (g^{c \hd} - \lambda \nu^c \nu^{\hd} + \xi^c \nu^{\hd} + \xi^{\hd} \nu^c)  \nb_c E_{\hd}   \notag \\
		=&  2 \xi^{d} g^{c \hd} \nb_c E_{\hd} +  2 \xi^{d} \xi^{\hd} \nu^c  \nb_c E_{\hd},
	\end{align*}
	where we note that the coefficient $2 \xi^{d} \xi^{\hd} \nu^c$ appearing on the  right hand side is symmetric in $d$ and $\hd$. 
	
	This leaves us to consider the term $2 \xi^{d} g^{c \hd} \nb_c E_{\hd}$. Making use of \eqref{E:Maxwell-div-nu} to express $g^{c \hd} \nb_c E_{\hd}$ as
	\begin{equation*}
		g^{c \hd} \nb_c E_{\hd} =  \frac{\ttb (n-3)t }{\tan (Ht)} p^b E_b - \frac{\sqrt{H}}{\sqrt{\sin (Ht)}} h^{a b} [\bar A_b, E_{a} ] + (X^d E_{d} +  \nu^c g^{b a} \tensor{X}{^d_{b c}} F_{a d}),
	\end{equation*}
	we see that $2 \xi^{d} g^{c \hd} \nb_c E_{\hd}$ is given by
	\als{
		2 \xi^{d} h^{c \hd} \nb_c E_{\hd}
		={}&  2 \xi^{d} \frac{\ttb (n-3)t }{\tan (Ht)} p^b E_b - 2 \xi^d \frac{\sqrt{H}}{\sqrt{\sin (Ht)}} h^{a b} [\bar A_b, E_{a} ] \nnb \\
		&+ 2 \xi^{d} (X^b E_{b} + \nu^b g^{r a} \tensor{X}{^s_{r b}} F_{a s}) +  2 \xi^{d} \xi^{\hd} \nu^c \nb_c E_{\hd},
	}
	where we note now that the principle term $2\xi^{d} \xi^{\hd} \nu^c \nb_c E_{\hd}$ is symmetric in $d$ and $\hd$.  With the help of this expression and the above arguments, it is clear that we can write \eqref{e:eq4a-A} as  
	\begin{align}\label{Main-maxwell-E-1}
		& \bigl[(2\xi^c-\lambda \nu^c)\nu^b g_{b \ha} \tensor{\ulh}{^{\ha}_e}+  \tensor{\ulh}{^c_e}\bigr] \xi^{d}\nb_c\tE^e  \notag   \\
		& + \bigl[(2\xi^c-\lambda \nu^c)\xi^{\hd} \xi^{d} \nu^b \nu^s g_{b s} +(2 \xi^{d} \xi^{\hd} \nu^c - \xi^{d} h^{c \hd} - \xi^{\hd} h^{c d} + 2\xi^c h^{\hd d} - \lambda \nu^c h^{\hd d} ) \bigr] \nb_c E_{\hd} \notag  \\
		& - h^{\ha d} h^{\hb c} \nb_c H_{ \ha\hb} =  \frac{ n-3 }{Ht} \lambda h^{\hd d} E_{\hd}   - \frac{1}{\sqrt{t}} \tensor{h}{^{d \ha}}  \Xi_{1\ha} - \mathfrak{D}_{2}^{\sharp  d}(t,\mathbf{U}),
	\end{align}
	where
	\begin{align*}
		\mathfrak{D}^{\sharp d}_{2}(t,\mathbf{U}) = {}& \mathfrak{D}^{\dagger  d}_{2}(t,\mathbf{U})  + 2 \ttb t p^{d} (X^b E_{b} + \nu^b g^{r a} \tensor{X}{^s_{r b}} F_{a s})+ 2 \ttb t p^{d} \frac{\ttb (n-3)t }{\tan (Ht)} p^b E_b  \\
		& - 2 \ttb p^d \frac{\sqrt{H} t}{\sqrt{\sin (Ht)}} h^{a b} [\bar A_b, E_{a} ]-\frac{n-4}{H}h^{\hb d} \ttb p^{\ha} H_{\ha\hb},
	\end{align*}
	which is the second component from \eqref{Maxwell-FOSHS-1}.

	Turning now to the derivation of the third equation from \eqref{Maxwell-FOSHS-1}, we 
	have from the second line of \eqref{eq-Max-1} that
	\begin{align}\label{eq-Max-3-1}
		& \tensor{\ulh}{^h_e} Q^{edc} \nu_d\nb_c E_{\ha} +  \tensor{\ulh}{^{c}_{\ha}} \tensor{h}{^{h a}}  \nb_c E_{a} + \tensor{\ulh}{^h_e} Q^{edc} \ts{\ulh}{^{\hb}_d} \nb_c H_{\hb\ha} \notag \\
		=&\frac{1}{Ht}\tensor{\ulh}{^{\hb}_{\ha}} h^{hb} H_{b\hb} + \frac{1}{\sqrt{t}} \Xi^h_{2 \ha}  + \widehat{\Delta}^h_{2 \ha}.
	\end{align} 
	Multiplying  \eqref{eq-Max-3-1} by $\tensor{h}{^{\hc \ha}}$, we find, with the help of the identities
	$\tensor{\ulh}{^{\hd}_e} Q^{edc} \nu_d= - h^{\hd c}$
	$\tensor{\ulh}{^\hd_e}Q^{edc} \tensor{\ulh}{^\hb_d}=  - \nu^c h^{\hb\hd}$
	from Lemma \ref{lem-identity},
	that
	\begin{align*}
		& -\tensor{h}{^{\hc \ha} } h^{\hd c} \nb_c E_{\ha} + h^{\hc c} h^{\hd \ha}  \nb_c E_{\ha} -\tensor{h}{^{\hc \ha}} h^{\hb\hd} \nu^c \nb_c H_{\hb\ha}  \notag \\
		= &  \frac{1}{Ht} h^{\hc\hb} h^{\hd d} H_{d\hb} + \frac{1}{\sqrt{t}} \tensor{h}{^{\hc \ha}} \Xi^{\hd}_{2 \ha} + \tensor{h}{^{\hc \ha}} \widehat{\Delta}^{\hd}_{2 \ha}.
	\end{align*}
	But by \eqref{eq-E-tE}, we have \[ h^{\hc c} h^{\hd \ha}  \nb_c E_{\ha}=-h^{\hc c}\nb_c \tE^{\hd}-h^{\hc c} E_{\ha} \nb_c h^{\hd \ha},\]
	and so, we conclude that
	\begin{align*}
		- h^{\hc\ha} h^{\hd c}\nb_c E_{\ha} - h^{\hc c}  \nb_c \tE^{\hd} - & \tensor{h}{^{\hc \ha}} h^{\hb\hd} \nu^c \nb_c H_{\hb\ha}   =\frac{1}{Ht} h^{\hc\hb} h^{\hd \ha} H_{\ha \hb} + \frac{1}{\sqrt{t}} \tensor{h}{^{\hc \ha}} \Xi^{\hd}_{2 \ha}   - \mathfrak{D}_3^{\sharp \hc \hd}(t,\mathbf{U}),
	\end{align*}
	where
	\[ \mathfrak{D}_3^{\sharp  \hc \hd}(t,\mathbf{U}) =- \tensor{h}{^{\hc \ha}} \widehat{\Delta}_{2\ha}^{\hd} - h^{\hc c} E_{\ha} \nb_c h^{\hd \ha}.\]
	Noting that $h^{a \hb} h^{b \ha} H_{\ha\hb} =-h^{a \ha} h^{b \hb} H_{\ha \hb}$, we see that the above equation is equivalent to the third component from  \eqref{Maxwell-FOSHS-1}, which is given by 
	\begin{align}\label{Main-maxwell-H-1}
		&- h^{a c}  \nb_c \tE^{b} - h^{a \hd} h^{b c}\nb_c E_{\hd} + h^{a \ha}h^{\hb b}\nu^c \nb_c H_{\ha \hb} \notag \\
		=&  -\frac{1}{Ht} h^{a \ha} h^{\hb b} H_{\ha \hb} + \frac{1}{\sqrt{t}}\tensor{h}{^{a \ha}} \Xi^{b}_{2 \ha}   - \mathfrak{D}_3^{\sharp a b}(t,\mathbf{U}).
	\end{align} 
	The proof then follows from collecting \eqref{Main-maxwell-tE-1},  \eqref{Main-maxwell-E-1} and \eqref{Main-maxwell-H-1} together with the equation \eqref{eq-YM-potential}  for the gauge potential. 
\end{proof}

The last step needed to bring the Yang--Mills equations into a form that will be favourable for our analysis involves multiplying each line of \eqref{Maxwell-FOSHS-1} by $\ulh^{\hb \he}$,  $\ulh_{d f}$, $\ulh_{a \bar a} \ulh_{b \bar b}$ and $\ulh_{o r}$, respectively. The resulting first order, symmetric hyperbolic Fuchsian equation is displayed in the theorem below. 

\begin{theorem}\label{thm-FOSHS-Maxwell}
	If $(E_{a},A_b)$ solves the conformal Yang--Mills equations \eqref{Maxwell-div-F}--\eqref{Maxwell-bianchi-F} in the temporal gauge \eqref{temporal}, then the quadruple $(\tE^e, E_{d}, \, H_{p q}, \, \bar A_s)$ defined via \eqref{decom-F}, \eqref{def-tE-1} and \eqref{def-MYM}
	solves the first order, symmetric hyperbolic Fuchsian equation
	\begin{align}\label{Maxwell-FOSHS}
		- \acute{\mathbf{A}}^0 \nu^c \nb_{c} \p{\tE^e  \\E_{\hd}  \\ H_{\ha\hb} \\  \bar A_s} + \acute{\mathbf{A}}^f \ts{\ulh}{^{c}_{f}}  \nb_{c} \p{\tE^e  \\E_{\hd}  \\ H_{\ha\hb} \\ \bar A_s} =\frac{1}{Ht}\acute{\mathcal{B}} \p{\tE^e  \\E_{\hd}  \\ H_{\ha\hb} \\ \bar A_s}+\acute{G}(t,\mathbf{U}),
	\end{align}
	where
	\begin{align*}
		\acute{\mathbf{A}}^0 ={}& \scriptsize \p{ -\lambda \ulh^{\hb \he} \tensor{\ulh}{^a_{\hb}} g_{a b}\tensor{\ulh}{^b_e} & - \lambda \nu^r g_{r s} \tensor{\ulh}{^{s \he}} \xi^{\hd}  & 0 & 0 \\
			- \lambda \nu^r g_{r s} \tensor{\ulh}{^{s}_{e}} \ulh_{df}\xi^d & \bigl[ -\lambda  h^{\hd d}  - \lambda \nu^r \nu^s g_{r s}  \xi^{d} \xi^{\hd} + 2 \xi^{d} \xi^{\hd} \bigr] \ulh_{d f} & 0 & 0 \\
			0 & 0 & \ulh_{a \bar a} \ulh_{b \bar b} h^{\ha a} h^{\hb b} & 0 \\
			0&0&0& \ulh_{o r} h^{rs} }, \\
		\acute{\mathbf{A}}^f \ts{\ulh}{^{c}_{f}} ={}& \notag  \\
		&\hspace{-1cm} \scriptsize \p{ -2\xi^c \tensor{\ulh}{^a_{s}} g_{a b}\tensor{\ulh}{^b_e} \ulh^{s \he} & -\bigl[ 2\xi^c \nu^r g_{r s}\tensor{\ulh}{^{s \he}} +\tensor{\ulh}{^{c \he}} \bigr]\xi^{\hd}  &  \ulh^{\he \hb}h^{\ha c} &0 \\
			- \bigl[ 2\xi^c \nu^r g_{r s} \tensor{\ulh}{^{s}_e}+  \tensor{\ulh}{^c_e}\bigr] \ulh_{d f} \xi^{d} & - \bigl[ 2\nu^r \nu^s g_{r s} \xi^c \xi^{\hd} \xi^{d} - 2 \xi^{(d} h^{\hd) c} + 2\xi^c h^{\hd d} \bigr] \ulh_{d f} &  \ulh_{d f} h^{\ha d} h^{\hb c} &0 \\
			\ulh_{e \bar b} \ulh_{a \bar a} h^{a c} & \ulh_{a \bar a}  \ulh_{b \bar b} h^{a \hd} h^{b c} & 0 &0 \\
			0&0&0&0}, \\
		\acute{\mathcal{B}}={}& \p{- (n-3) \lambda \ulh^{\hb \he} \tensor{\ulh}{^a_{\hb}} g_{a b}\tensor{\ulh}{^b_e}   & 0  & 0 & 0\\
			0 &  -(n-3)\lambda h^{\hd d} \ulh_{d f} & 0 & 0\\
			0 & 0 & \ulh_{a \bar a}  \ulh_{b \bar b} h^{\ha a}  h^{\hb b}&0 \\
			0&0&0&\frac{1}{2} \ulh_{or}h^{rs} },		
	\end{align*}
	\begin{align*}
		\acute{G}(t,\mathbf{U}) &= \acute{G}_0(t,\mathbf{U}) + \frac{1}{\sqrt{t}} \acute{G}_1(t,\mathbf{U}),\\ \acute{G}_0(t,\mathbf{U}) &= \left(\mathfrak{D}^{  \he}_1(t,\mathbf{U}),  \mathfrak{D}_{2 f}(t,\mathbf{U}), \mathfrak{D}_{3 \bar a\bar b}(t,\mathbf{U}),   \mathfrak{D}_{4 o} (t,\mathbf{U}), \right)^{\tr}, \\
		\acute{G}_1(t,\mathbf{U}) &= \left(- \ulh^{\he e} \Xi_{1 e}, \ulh_{df} h^{d \ha} \Xi_{1\ha}, - \ulh_{a \bar a} \ulh_{b \bar a} h^{a \ha} \Xi_{2\ha}^b, \ulh_{o r} h^{r a} \Xi_{3a} \right)^{\tr},
	\end{align*}
	the maps $\Xi_{1 e}$, $\Xi^b_{2\hat{a}}$, $\Xi_{3a}$ are as defined in Lemma \ref{lem-maxwell-hyperbolic-0}, and the
	maps $\mathfrak{D}_1^{ \he}$, $\mathfrak{D}_{2 f}$, $\mathfrak{D}_{3 \bar a\bar b}$ and $\mathfrak{D}_{4o}$ are defined by
	\begin{align*}
		\mathfrak{D}_1^{ \he}(t,\mathbf{U}) = {}& \ulh^{\he \hb} \mathfrak{D}^{\sharp}_{1 \hb} (t,\mathbf{U}), \quad
		\mathfrak{D}_{2 f}(t,\mathbf{U}) =  \ulh_{d f} \mathfrak{D}_{2}^{\sharp  d}(t,\mathbf{U}), \\
		\mathfrak{D}_{3 \bar a\bar b}(t,\mathbf{U})={}&\ulh_{a \bar a} \ulh_{b \bar b} \mathfrak{D}_{3}^{\sharp a b}(t,\mathbf{U}), \AND \mathfrak{D}_{4o} (t,\mathbf{U}) = \ulh_{o r} \mathfrak{D}_{4}^{\sharp  r}(t,\mathbf{U}),
	\end{align*}	
	respectively. Moreover, there 
	exist constants $\iota>0$ and $R>0$ such that the
	maps $\mathfrak{D}_1^{ \he}$, $\mathfrak{D}_{2 f}$, $\mathfrak{D}_{3 \bar a\bar b}$ and  $\mathfrak{D}_{4o}$ are analytic for $(t,\mathbf{U})\in (-\iota,\frac{\pi}{H})\times B_R(0)$ and vanish for $\mathbf{U}=0$.
\end{theorem}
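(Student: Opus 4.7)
The plan is to obtain \eqref{Maxwell-FOSHS} directly from equation \eqref{Maxwell-FOSHS-1} in Lemma \ref{lem-FOSHS-Maxwell} by multiplying each of its four block rows by a suitable background tensor built from $\ulh_{ab}$ and $\ulh^{ab}$. Specifically, I will act on the first row (the $\tE^e$-equation) by $\ulh^{\hb\he}$, on the second row (the $E_{\hd}$-equation) by $\ulh_{d f}$, on the third row (the $H_{\ha\hb}$-equation) by $\ulh_{a\bar a}\ulh_{b\bar b}$, and on the fourth row (the $\bar A_s$-equation) by $\ulh_{o r}$. Since each of these background tensors is symmetric and time-independent along the leaves $\Sigma_t$ and commutes past the derivatives $\nu^c\nb_c$ and $\ts{\ulh}{^c_f}\nb_c$ in the sense that $\nb\ulh_{ab}$ is controlled by Lemma \ref{t:nbnuh}, the multiplication preserves the Fuchsian structure and only modifies the source term by analytic correction terms that vanish at $\mathbf{U}=0$.

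I will carry out the identification of each coefficient in turn. First, I compare the matrix blocks: applying the respective background tensors to the four rows of $\check{\mathbf{A}}^0$, $\check{\mathbf{A}}^f \ts{\ulh}{^c_f}$ and $\check{\mathcal{B}}$ from Lemma \ref{lem-FOSHS-Maxwell} yields exactly $\acute{\mathbf{A}}^0$, $\acute{\mathbf{A}}^f\ts{\ulh}{^c_f}$ and $\acute{\mathcal{B}}$ as displayed in the theorem; this is a direct substitution. The symmetry of $\acute{\mathbf{A}}^0$ and $\acute{\mathbf{A}}^f\ts{\ulh}{^c_f}$ in the pairs $(e,\he)$, $(d,\hd)$, $(a,\ha)$, $(b,\hb)$, $(r,s)$ is then transparent from the appearance of the symmetric background factors $\ulh^{\hb\he}$, $\ulh_{df}$, $\ulh_{a\bar a}\ulh_{b\bar b}$, $\ulh_{or}$ coupled to the already symmetric combinations in $\check{\mathbf{A}}^{0,f}$.

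Next, I will identify the source. Each row of \eqref{Maxwell-FOSHS-1} already splits into a $\frac{1}{\sqrt t}$-term with coefficients $\Xi_{1\ha}$, $\Xi^{h}_{2\ha}$, $\Xi_{3\ha}$, plus a regular term $\mathfrak{D}^\sharp$. Multiplying the $\frac{1}{\sqrt t}$ pieces by the four background tensors produces exactly the displayed $\acute{G}_1$, while multiplying the regular pieces yields $\acute{G}_0$ with components $\mathfrak{D}_1^{\he}$, $\mathfrak{D}_{2f}$, $\mathfrak{D}_{3\bar a\bar b}$, $\mathfrak{D}_{4o}$ as stated. The analyticity of $\mathfrak{D}^\sharp_{1\he}$, $\mathfrak{D}_2^{\sharp d}$, $\mathfrak{D}_3^{\sharp ab}$, $\mathfrak{D}_4^{\sharp r}$ on $(-\iota,\tfrac{\pi}{H})\times B_R(0)$ and their vanishing at $\mathbf{U}=0$ are established in Lemma \ref{lem-FOSHS-Maxwell}; multiplication by the smooth, bounded background tensors $\ulh^{\hb\he}$, $\ulh_{df}$, $\ulh_{a\bar a}\ulh_{b\bar b}$, $\ulh_{or}$ preserves both properties, so the corresponding conclusions for $\mathfrak{D}_1^{\he}$, $\mathfrak{D}_{2f}$, $\mathfrak{D}_{3\bar a\bar b}$, $\mathfrak{D}_{4o}$ follow immediately.

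I do not expect any serious obstacle: the content is really an algebraic repackaging of Lemma \ref{lem-FOSHS-Maxwell} designed so that the principal part $\acute{\mathbf{A}}^0$ at $\mathbf{U}=0$ reduces to $\mathrm{diag}(\ulh^{\hb\he}\ulh_{\hb e},\ulh^{\hd d}\ulh_{df},\ulh_{a\bar a}\ulh_{b\bar b}\ulh^{\ha a}\ulh^{\hb b},\ulh_{or}\ulh^{rs})$, a manifestly positive symmetric matrix, so that the Fuchsian structure conditions of \S\ref{s:verif} can later be verified. The only step requiring any care is confirming that all index raisings and lowerings use the \emph{background} metric $\ulg_{ab}$ consistently (per the conventions of \S\ref{s:AIN}); this is why the explicit $\ulh$-contractions appear in $\acute{\mathbf{A}}^{0,f}$ rather than implicit index raisings. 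Given this bookkeeping, collecting all four reweighted rows from \eqref{Maxwell-FOSHS-1} yields \eqref{Maxwell-FOSHS} and completes the proof.
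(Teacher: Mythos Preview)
Your proposal is correct and matches the paper's approach exactly: the paper states just before the theorem that the result is obtained by multiplying each line of \eqref{Maxwell-FOSHS-1} by $\ulh^{\hb\he}$, $\ulh_{df}$, $\ulh_{a\bar a}\ulh_{b\bar b}$ and $\ulh_{or}$, respectively, and offers no further proof. Your additional remarks about $\nb\ulh_{ab}=0$ (so the multipliers genuinely commute past the derivatives) and about analyticity being preserved are accurate elaborations of this one-line argument.
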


\section{Symmetric hyperbolic Fuchsian  equations}\label{sec-Model}

In the previous section, we derived, in Theorems \ref{thm-FOSHS-m} and \ref{thm-FOSHS-Maxwell}, a Fuchsian formulation of the gauge reduced conformal Einstein--Yang--Mills equations. In this section, we quickly review the global existence theory for symmetric hyperbolic Fuchsian equations developed in \cite{Beyer2020}, which is an extension of the Fuchsian existence theory from \cite{Oliynyk2016a}; see also \cite{BeyerOliynyk:2020} for related results. This theory relies on the Fuchsian system satisfying a number of structural conditions, which we recall for the convenience of the reader. A tailored version of the main Fuchsian existence theorem from \cite{Beyer2020} is stated below in Theorem \ref{t:glex}. In the next section, we apply this theorem to our Fuchsian formulation of the gauge reduced conformal Einstein--Yang--Mills equations. The purpose of doing so is to obtain uniform bounds and decay estimates as $t\searrow 0$ for solutions to the conformal Einstein--Yang--Mills equations, where we recall that, in the conformal picture, $t=0$ corresponds to future timelike infinity.

Rather than considering the general class of Fuchsian systems analyzed in \cite{Beyer2020}, we instead consider a restricted class that encompasses our Fuchsian formulation of the Einstein--Yang--Mills system. This will allow us to simplify the presentation and application of the existence theory from \cite{Beyer2020} in our setting. 
The class of Fuchsian equations and the corresponding initial value problem that we consider in this section are of the form
\begin{align}
	-\mathbf{A}^0(u)\nu^c\nb_c u+\mathbf{A}^c(u)\tensor{\ulh}{^b_c} \nb_b u={}&\frac{1}{t}\mathfrak{A}(u)\mathbb{P} u+ G(u), \quad && \text{in }[T_0,0)\times \Sigma, \label{e:modeq}\\
	u={}&u_0,   \quad &&\text{in }\{T_0\} \times \Sigma,   \label{e:moddt}
\end{align}
where $T_0< 0$ and $\mathbf{A}^0=\mathbf{A}^c\nu_c$. Here\footnote{In this section only, we do not assume that $\Sigma$ is $\mathbb{S}^{n-1}$, although, in light of our application to the Einstein--Yang--Mills equations, it is fine to assume this.}, $\Sigma$ is a $(n-1)$-dimensional closed Riemannian manifold with time-independent metric $\ulh_{ab}$, $t$ is a Cartesian coordinate on the interval $[T_0,0)\subset \Rbb$, $\nb$ is the Levi-Civita connection of the Lorentzian metric\footnote{Note that we are now interpreting $\ulh_{ab}$ as spacetime tensor field on $\mathcal{M}$ in the obvious manner.}
\begin{equation*}
	\ulg_{ab}=-(dt)_a(dt)_b+\ulh_{ab}
\end{equation*}
on the spacetime manifold
\begin{equation*}
	\mathcal{M}=[T_0,0)\times \Sigma,
\end{equation*} $\nu_c=(dt)_c$ is the unit co-normal to $\Sigma$ and $\nu^c=\ulg^{cd}\nu_d=-(\partial/\partial t)^c$, $\ts{\ulh}{^{b}_{c}} =\ts{\delta}{^b_c}+\nu^b\nu_c= \ts{\delta}{^b_c}-(\partial/\partial t)^b (dt)_c$ is the projection onto $\ulg$-orthogonal subspace to $\nu^c$, and
\begin{equation*}
	u=(u_{(1)},\ldots,u_{(\ell)}) 
\end{equation*}
is a section of the vector bundle
\[\textbf{V}=\bigoplus^{\ell}_{k=1}h(T^{m_k}_{n_k}\mathcal{M})
\]
over $\mathcal{M}$, 
where we are using $h(T^{r}_{s}\mathcal{M})$ to denote the projection of the tensor bundle $T^{r}_{s}\mathcal{M}$ by $\ts{\ulh}{^{b}_{c}}$, i.e.
$\tensor{S}{^{a_1\cdots a_{r}}_{b_1\ldots b_s}} \in h(T^{r}_{s}\mathcal{M})$ if and only if 
$\ts{\ulh}{^{a_1}_{c_1}}\cdots \ts{\ulh}{^{a_r}_{c_r}} \ts{\ulh}{^{d_1}_{b_1}} \cdots \ts{\ulh}{^{d_s}_{b_s}} \tensor{S}{^{c_1\cdots c_{r}}_{d_1\ldots d_s}}=\tensor{S}{^{a_1\ldots a_{r}}_{b_1\ldots b_s}}$.
The coefficients $\mathbf{A}^c(u)$, $\mathfrak{A}(u)$, $\mathbb{P}$ and $G(u)$ will be assumed to satisfy conditions $(1)$--$(5)$ from \S\ref{s:mdlasp1} below.

\begin{remark} \label{V-bundle-rem}
	\begin{enumerate}[(i)]
		$\;$
		
		\item The coefficients $\mathbf{A}^c$ and $G$  in \eqref{e:modeq} implicitly depend on the spacetime points $(t,x)\in \mathcal{M}$ via the section $u$ of $\mathbf{V}$. The dependence can be made explicit by locally representing $u$ in a vector bundle chart as $(t,x,\tilde{u}(t,x))$.
		\item We can naturally view $u$ as a time-dependent section of the vector bundle 
		\begin{equation*}
			V= \bigoplus^{\ell}_{k=1} T^{m_k}_{n_k}\Sigma
		\end{equation*}
		over $\Sigma$. Since, in this viewpoint, $u$ no longer includes $t$ in its base point, i.e. $u$ is locally of the form $(x,\tilde{u}(t,x))$ with $(t,x)\in \mathcal{M}$, we will interpret the coefficients $\mathbf{A}^c$ and $G$ as depending on $t$ as well as $u$, and we will write
		$\mathbf{A}^c(t,u)$ and $G(t,u)$. This interpretation also allows us to write the Fuchsian equation \eqref{e:modeq} as
		\begin{equation*}
			\mathbf{A}^0(t,u)\partial_t u+\mathbf{A}^c(t,u) \ts{\ulh}{^{b}_{c}} \underline{D}_b u =\frac{1}{t}\mathfrak{A}(t,u)\mathbb{P} u+ G(t,u)
		\end{equation*}
		where $\mathbf{A}^0 = -\mathbf{A}^c\nu_c$ and  $\underline{D}$ denotes the Levi-Civita connection of the Riemannian metric $\ulh_{ab}$ on $\Sigma$. 
	\end{enumerate}
\end{remark}

For the remainder of this section, we will favour the interpretation of $u$ taking values in the vector bundle $V$ in line with Remark \ref{V-bundle-rem}.(ii).

\subsection{Symmetric linear operators and inner products}\label{s:SLO}
For fixed $t$, $u$ and $\eta_b$ , $\mathbf{A}^c(t,u)\ulh^b{}_c\eta_b$ and $\mathbf{A}^0(t,u)$ define linear operators on $V$. 
One of the assumptions needed for the 
Fuchsian existence theory from \cite{Beyer2020} is that these operators are symmetric with respect to a given inner product on $V$. In the following, we will employ the inner product defined on elements
\begin{equation*}
	v=(v_{(1)},\ldots,v_{(\ell)}),\;  u=(u_{(1)},\ldots,u_{(\ell)}) \in V
\end{equation*}
by
\begin{align}\label{e:inprod}
	\la v,u \ra_{\ulh}= & \sum^{\ell}_{k=1}\Bigl(\prod_{i=1}^{m_k}  \ulh_{c_{i}b_{i}}\Bigr) \Bigl(\prod_{j=1}^{n_k}  \ulh^{d_{j}a_{j}}\Bigr) \tensor{(v_{(k)})}{^{c_1\cdots c_{m_k}}_{d_1\ldots d_{n_k}}} \tensor{(u_{(k)})}{^{b_1\cdots b_{m_k}}_{a_1\ldots a_{n_k}}}.
\end{align}
For use below, we define projections $\Pbb_{(j)}:V \rightarrow V$ and maps $\phi_{(j)}: \Ima \Pbb_{(j)} \rightarrow T^{m_j}_{n_j}\Sigma$ by
\begin{align*}
	&\Pbb_{(j)}u
	= (0,\cdots,u_{(j)}, \cdots 0) \quad \text{and} \quad  
	\phi_{(j)} (0,\cdots,u_{(j)}, \cdots 0)= u_{(j)}, 
\end{align*}
respectively, 
and we set $\widetilde{\Pbb}_{(j)}=\phi_{(j)}\circ \Pbb_{(j)}$. We also note using these maps that $u
=\sum^\ell_{k=1}\phi^{-1}_{(k)}\widetilde{\Pbb}_{(k)}u$.

\begin{definition}\label{t:trps}
	The \textit{transpose} of\footnote{Here, $L(V)$ denotes the set of linear operators on $V$, which is isomorphic to $V^*\otimes V$.} $\mathbf{A}\in L(V)$, denoted $\mathbf{A}^{\text{tr}}$, is the unique element of $L(V)$ satisfying 
	\begin{align*}
		\la v, \mathbf{A} u\ra_{\ulh}=\la \mathbf{A}^{\text{tr}} v, u \ra_{\ulh}
	\end{align*}
	for all $u,v\in V$. Moreover, we say that $\mathbf{A}\in L(V)$ is \textit{symmetric} if $\mathbf{A}^{\text{tr}} = \mathbf{A}$.
\end{definition}

Given $\mathbf{A}\in L(V)$, we can represent it in block form as $\mathbf{A}=(\mathbf{A}_{(kl)})$
where the blocks $\mathbf{A}_{(kl)}$
are defined by
\begin{equation*}
	\mathbf{A}_{(kl)}=  \widetilde{\Pbb}_{(k)} \mathbf{A} \phi_{(l)}^{-1}.
\end{equation*} 
Using this notation, it can then be verified by a straight forward calculation that the transpose of $\mathbf{A}=(\mathbf{A}_{(kl)})$ is given by
\begin{align}\label{e:Asym}
	& \tensor{ \bigl( (\mathbf{A}^{\text{tr}})_{(lk)}\bigr)}{^{\he_1\cdots\he_{m_l} d_1\cdots d_{n_k}  }_{e_1\cdots e_{n_l} c_1\cdots c_{m_k}} }= \notag  \\
	& \hspace{1cm} \tensor{ \bigl(\mathbf{A}_{(kl)} \bigr)}{^{b_1\cdots b_{m_k}  \ha_1\cdots \ha_{n_l} }_{a_1\cdots a_{n_k}  \hb_1\cdots \hb_{m_l}} } \Bigl(\prod_{i=1}^{m_k}  \ulh_{c_{i}b_{i}}\Bigr)  \Bigl(\prod_{i=1}^{n_l}  \ulh_{\ha_{i}e_{i}}\Bigr) \Bigl(\prod_{j=1}^{n_k}  \ulh^{d_{j}a_{j}}\Bigr)   \Bigl(\prod_{j=1}^{m_l}  \ulh^{\hb_{j}\he_{j}}\Bigr).
\end{align}

\subsection{Coefficient assumptions}\label{s:mdlasp1}
Here, we state a tailored version of the coefficient assumptions from  \cite[\S $3.1$]{Beyer2020}. These assumptions need to be satisfied in order to apply the Fuchsian existence results \cite{Beyer2020}. In the following, we employ  the order notation, i.e. $\mathrm{O}$ and $\mathcal{O}$, from  \cite[\S 2.4]{Beyer2020}.

\medskip

\begin{enumerate}[(1)]
	
	\item The map $\Pb$ is a time-independent section of the vector bundle $L(V)$ over $\Sigma$ that is covariantly constant and defines a symmetric projection operator on $V$, i.e.
	$$ \Pb^2 = \Pb, \quad \Pb^{\text{tr}} = \Pb, \quad \partial_t \Pb=0 \quad \text{and} \quad \underline{D} \Pb = 0.$$
	
	\medskip
	
	\item There exist constants $R,\kappa, \gamma_1, \gamma_2>0$ such that the maps
	\begin{equation*}
		\mathbf{A}^0 \in C^1([T_0, 0], C^\infty(B_R(V), L(V) )) \AND \mathfrak{A} \in C^0([T_0, 0], C^\infty(B_R(V), L(V) ))
	\end{equation*}
	satisfy\footnote{Note here in the following $\pi$ is used to denote a vector bundle projection map. The particular vector bundle will be clear from context and so no confusion should arise from the use of the same symbol for all the vector bundle projections.} $\pi(\mathbf{A}^0(t, v)) = \pi(\mathfrak{A}(t, v)) = \pi(v)$ and 
	\begin{align}\label{e:coefcp}
		\frac{1}{\gamma_1} \la u,u \ra_{\ulh} \leq \la u, \mathbf{A}^0(t, v)u \ra_{\ulh} \leq \frac{1}{\kappa}  \la u,\mathfrak{A} (t ,v)u\ra_{\ulh} \leq \gamma_2 \la u,u \ra_{\ulh}
	\end{align}
	for all
	$(t,u,v) \in [T_0, 0) \times V \times B_R(V)$.
	Moreover, $\mathbf{A}^0$ satisfies the relations
	\begin{gather*}
		\mathbf{A}^{0}(t, v)^{\text{tr}}= \mathbf{A}^0(t, v),\\
		[\Pbb(\pi(v)), \mathfrak{A}(t, v)] = 0,   \\
		\Pbb(\pi(v))\mathbf{A}^0(t,v)\Pbb^\perp(\pi(v))=\mathrm{O}\bigl( \Pbb(\pi(v))v\bigr)
		\intertext{and}
		\Pbb^\perp(\pi(v))\mathbf{A}^0(t,v)\Pbb(\pi(v))=\mathrm{O}\bigl( \Pbb(\pi(v))v\bigr), \label{mA-2}
	\end{gather*}
	for all $(t,v)\in[T_0,0)\times B_R(V)$, and there exist maps\footnote{Here, $\Gamma(L(V))$ denotes the sections of the vector bundle $L(V)$ over $\Sigma$.} $\mathring{\mathbf{A}}^0, \, \mathring{\mathfrak{A}}\in C^0([T_0,0],\Gamma(L(V)))$ satisfying
	$[\Pbb,\mathring{\mathfrak{A}}]=0$,
	and
	\[\mathbf{A}^0(t,v)-\mathring{\mathbf{A}}^0(t,\pi(v))=\mathrm{O}(v) \AND
	\mathfrak{A}(t,v)-\mathring{\mathfrak{A}}(t,\pi(v))=\mathrm{O}(v)\]
	for all $(t,v)\in [T_0,0)\times B_R(V)$.
	
	\medskip

	\item
	The map $G(t, v) \in C^0([T_0, 0), C^\infty(B_R(V), V))$ admits an expansion of the form
	\begin{equation*}
		G(t, v ) = \mathring{G}(t, \pi(v)) + G_0(t, v) + |t|^{-\frac{1}{2}} G_1(t, v) + |t|^{-1} G_2(t, v)
	\end{equation*}
	where $\mathring{G}\in C^0([T_0, 0], \Gamma(V))$ and the maps $G_\ell(t, v) \in C^0([T_0, 0], C^\infty(B_R(V), V))$, $\ell=0,1,2$, satisfy $\pi(G_\ell(t, v)) = \pi(v)$ and \[ \Pb(\pi(v)) G_2 (t, v) =0 \] for all $(t,v)\in[T_0,0]\times B_R(V)$. Moreover, there exist constants $\lambda_\ell \geq 0$, $\ell=1,2,3$, such that 
	\begin{gather*}
		G_0 (t, v)  = \mathrm{O}(v), \quad
		\Pb(\pi(v)) G_1 (t, v)  = \mathcal{O}(\lambda_1 v),\quad
		\Pbp(\pi(v)) G_1 (t, v)  = \mathcal{O}(\lambda_2 \Pb(\pi(v)) v), \label{F1-restrict-2}
		\intertext{and}
		\Pbp(\pi(v)) G_2 (t, v)  = \mathcal{O}\Bigl(\lambda_3 R^{-1} \Pb(\pi(v)) v \otimes \Pb v\Bigr) \label{F2-restrict-2}
	\end{gather*}
	for all $(t,v)\in[T_0,0)\times B_R(V)$.
	
	\medskip
	
	\item The map $\mathbf{A}^c \ulh^{b}{}_{c} \in C^0([T_0, 0], C^\infty(B_R(V), L(V) \otimes T \Sigma ))$ satisfies $\pi(\mathbf{A}^c(t,v)\ulh^{b}{}_{c} ) = \pi(v)$
	and 
	\begin{equation*}
		[\sigma_c(\pi(v))\mathbf{A}^c(t,v)]^{\text{tr}}= \sigma_c(\pi(v)) \mathbf{A}^c(t,v)
	\end{equation*}  for all $(t, v) \in [T_0, 0) \times B_R(V)$ and spatial one forms $\sigma_a\in \mathfrak{X}^*(\Sigma)$.
	
	\medskip
	
	\item For each $(t,v)\in [T_0,0)\times B_R(V)$, there exists a $s_0>0$ such that
	\begin{equation*}
		\Theta(s)=\mathbf{A}^0\Bigl(t,v + s[\mathbf{A}^0(t,v)]^{-1}\Bigl(\frac{1}{t}\mathfrak{A}(t,v)\Pbb v+G(t,v)\Bigr)\Bigr), \quad |s|<s_0,
	\end{equation*}
	defines smooth curve in $L(V)$. There exist   
	constants $\theta$ and $\beta_\ell\geq 0$, $\ell=0,1,\ldots, 7$, such that the derivative $\Theta^\prime(0)$ satisfies\footnote{This condition is a reformulation of \cite[\S 3.1.\textrm{v}]{Beyer2020}. It is straightforward to check that it implies the condition \cite[\S 3.1.\textrm{v}]{Beyer2020} for the Fuchsian equation \eqref{e:modeq} that we are considering here}  
	\begin{align*}
		&\la v, \Pbb(\pi(v))\Theta^\prime(0)\Pbb(\pi(v)) v\ra_{\ulh} \notag \\
		=&\mathcal{O}\Bigl(\theta v\otimes v+|t|^{-\frac{1}{2}}\beta_0v\otimes \Pbb(\pi(v))v+|t|^{-1}\beta_1\Pbb(\pi(v))v\otimes \Pbb(\pi(v))v\Bigr),  \\
		&\la v, \Pbb(\pi(v))\Theta^\prime(0)\Pbp(\pi(v))v\ra_{\ulh}\notag \\
		=&\mathcal{O}\Biggl(\theta v\otimes v+|t|^{-\frac{1}{2}}\beta_2 v\otimes \Pbb(\pi(v))v+\frac{|t|^{-1}\beta_3}{R}\Pbb(\pi(v))v\otimes \Pbb(\pi(v))v\Biggr),  \\
		&	\la v, \Pbp(\pi(v))\Theta^\prime(0)\Pbb(\pi(v))v\ra_{\ulh}\notag \\
		=&\mathcal{O}\Biggl(\theta v\otimes v+|t|^{-\frac{1}{2}}\beta_4 v\otimes \Pbb(\pi(v))v +\frac{|t|^{-1}\beta_5}{R}\Pbb(\pi(v))v\otimes \Pbb(\pi(v))v\Biggr)
		\intertext{and}
		&		\la v, \Pbp(\pi(v))\Theta^\prime(0)\Pbp(\pi(v)) v\ra_{\ulh}\notag \\
		=&\mathcal{O}\Biggl(\theta v\otimes v +\frac{|t|^{-\frac{1}{2}}\beta_6}{R} v \otimes \Pbb(\pi(v))v +\frac{|t|^{-1}\beta_7}{R^2}\Pbb(\pi(v))v\otimes \Pbb(\pi(v))v\Biggr)
	\end{align*}	
	for all $(t,v)\in [T_0,0)\times B_R(V)$.
\end{enumerate}

\subsection{Global existence theorem}\label{s:glbex} Theorem 3.8 from \cite{Beyer2020} implies the global existence result for the Fuchsian initial value problem \eqref{e:modeq}--\eqref{e:moddt} that is stated in the following theorem. It is worth noting that the regularity for the initial data in the theorem below is less than what is required for \cite[Theorem~3.8]{Beyer2020}. This is because the spatial derivative term $\mathbf{A}^c \ts{\ulh}{^{b}_{c}} \nb_c$ appearing in \eqref{e:modeq} does not contain any $1/t$ singular terms by assumption. As a consequence, the use of \cite[Lemma~3.5]{Beyer2020} can be avoided in the proof of \cite[Theorem~3.8]{Beyer2020} under this assumption, which leads to the reduction in the required regularity of the initial data.

\begin{theorem}[Global existence theorem]\label{t:glex}
	Suppose $s\in\Zbb_{> \frac{n+1}{2}}$, $T_0<0$, $u_0\in H^s(\Sigma; V)$, the coefficient assumptions $(1)$--$(5)$ from  \S\ref{s:mdlasp1} are satisfied, and the constants $\kappa,\gamma_1,\lambda_3,\beta_0, \beta_{2k+1}$, $k=0,\ldots,3$ satisfy
	\begin{equation*}
		\kappa>\frac{1}{2}\gamma_1\Bigl(\sum_{k=0}^3\beta_{2k+1}+2\lambda_3\Bigr).
	\end{equation*}
	Then, there exists a constant $\delta>0$, such that if
	\begin{equation*}\label{e:data3}
		\max\Bigl\{\lVert u_0\rVert_{H^s},\sup_{T_0\leq \tau<0}\lVert  \mathring{G}(\tau)\rVert_{H^s}\Bigr\} \leq \delta,
	\end{equation*}
	then there is a unique solution
	\begin{equation*}\label{e:urgn}
		u\in C^0([T_0,0),H^s(\Sigma;V)) \cap C^1([T_0,0),H^{s-1}(\Sigma;V))\cap \Li ([T_0,0),H^s(\Sigma;V))
	\end{equation*}
	of the initial value problem \eqref{e:modeq}--\eqref{e:moddt} such that $\Pbp u(0):=\lim_{t\nearrow 0}\Pbp u(t)$ exists in $H^{s-1}(\Sigma;V)$.
	
	\medskip
	
	\noindent Moreover, the solution $u$ satisfies the energy estimate
	\begin{equation}\label{e:ineq1v}
		\lVert u(t)\rVert_{H^s}^2+\sup_{T_0\leq \tau<0}\lVert \mathring{G}(\tau)\rVert^2_{H^s} - \int^t_{T_0}\frac{1}{\tau}\lVert \Pbb u(\tau)\rVert^2_{H^s}d\tau \leq C(\delta,\delta^{-1})\bigl(\lVert u_0\rVert^2_{H^s}+\sup_{T_0\leq \tau<0}\lVert \mathring{G}(\tau)\rVert^2_{H^s} \bigr)
	\end{equation}
	and the decay estimates
	\begin{align}\label{e:Puest}
		\lVert \Pbb u(t)\rVert_{H^{s-1}}\lesssim \begin{cases}
			|t|+ \lambda_1 |t|^{\frac{1}{2}}, \quad &\text{if }\zeta>1 \\
			|t|^{\zeta-\sigma}+ \lambda_1 |t|^{\frac{1}{2}}, \quad &\text{if }\frac{1}{2}<\zeta\leq1 \\
			|t|^{\zeta-\sigma}, \quad &\text{if }0<\zeta\leq \frac{1}{2}
		\end{cases}
	\end{align}	
	and
	\begin{align}\label{e:Ppuest}
		\lVert \Pbp u(t)-\Pbp u(0)\rVert_{H^s}\lesssim \begin{cases}
			|t|^{\frac{1}{2}}+|t|^{\zeta-\sigma}, \quad &\text{if }\zeta>\frac{1}{2}\\
			|t|^{\zeta-\sigma}, \quad &\text{if }\zeta\leq \frac{1}{2}
		\end{cases}
	\end{align}
	for all $t\in [T_0,0)$ where $\zeta=\kappa-\frac{1}{2}\gamma_1\beta_1$.
\end{theorem}

\section{Global existence for the Fuchsian formulation of the EYM equations}\label{s:verif}

In this section, we carry out one of the main steps in the proof of the Theorem \ref{t:mainthm} by establishing the global existence of solutions to the Fuchsian equation obtained from combining the Fuchsian equations from Theorems \ref{thm-FOSHS-m} and \ref{thm-FOSHS-Maxwell}, which is given by
\begin{align}\label{e:FchEYM}
	- \widehat{\mathbf{A}}^0 \nu^c \nb_{c} \widehat{\mathbf{U}} + \widehat{\mathbf{A}}^b \ts{\ulh}{^{c}_{b}}   \nb_{c} \widehat{\mathbf{U}} =\frac{1}{Ht}\widehat{\mathcal{B}} \widehat{\mathbf{U}}  +\widehat{G}(t,\widehat{\mathbf{U}}),
\end{align}
where
\begin{gather}
	\widehat{\mathbf{A}}^0=\p{\bar{\mathbf{A}}_1^0 & 0 & 0 & 0 & 0\\ 0 & \bar{\mathbf{A}}_2^0 & 0 & 0 & 0\\0 & 0 & \bar{\mathbf{A}}_3^0 & 0 & 0\\0 & 0 & 0 & \bar{\mathbf{A}}_4^0 & 0\\
		0 & 0 & 0 & 0 & \acute{\mathbf{A}}^0 }, \quad \widehat{\mathbf{A}}^b \ts{\ulh}{^{c}_{b}} = \p{\bar{\mathbf{A}}_1^b \ts{\ulh}{^{c}_{b}}  &0&0&0&0 \\
		0 &\bar{\mathbf{A}}_2^b \ts{\ulh}{^{c}_{b}} &0&0&0 \\
		0 &0&\bar{\mathbf{A}}_3^b \ts{\ulh}{^{c}_{b}} &0&0 \\
		0 &0&0&\bar{\mathbf{A}}_4^b \ts{\ulh}{^{c}_{b}} &0 \\
		0 &0&0&0&\acute{\mathbf{A}}^b \ts{\ulh}{^{c}_{b}} }, \label{e:coef1}\\
	\widehat{\mathcal{B}}=\p{\bar{\mathcal{B} }_1 &0&0&0&0\\0&\bar{\mathcal{B} }_2 &0&0&0\\0&0&\bar{\mathcal{B} }_3 &0&0\\0&0&0&\bar{\mathcal{B} }_4 &0\\0&0&0&0&\acute{\mathcal{B}}}, \quad \widehat{G}(t,\widehat{\mathbf{U}})=\p{\bar{G}_1(t,\widehat{\mathbf{U}}) \\ \bar{G }_2(t,\widehat{\mathbf{U}}) \\ \bar{G}_3(t,\widehat{\mathbf{U}}) \\ \bar{G}_4(t,\widehat{\mathbf{U}}) \\
		\acute{G}(t,\widehat{\mathbf{U}})}, \label{e:coef2}
\end{gather}
and
\begin{align}\label{e:hatu}
	\widehat{\mathbf{U}}=&(-\nu^e m_e,\ts{\ulh}{^e_{\he}}m_e, m , -\nu^e \tp{a}{e}, \ts{\ulh}{^e_{\he}} \tp{a}{e}, p^a ,  -\nu^e \tss{\ha\hb}{e},  \notag\\
	&\hspace{2cm}\ts{\ulh}{^e_{\he}} \tss{\ha\hb}{e}, s^{\ha \hb}, -\nu^e s_{e}, \ts{\ulh}{^e_{\he}} s_{e}, s , \tE^e, E_{d}, H_{\ha\hb}, \bar A_s)^{\tr}.
\end{align}

While, by construction, solutions $(g_{ab},A_b)$  of the reduced conformal Einstein--Yang--Mills equations in the temporal gauge determine solutions of the Fuchsian system \eqref{e:FchEYM} via \eqref{decom-g}, \eqref{E:W}--\eqref{E:QD} and \eqref{def-tE-1}--\eqref{def-MYM}, we will, in this section, analyze general solutions to \eqref{e:FchEYM} and not, a priori, assume that they are derived from solutions of the conformal Einstein--Yang--Mills equations. The main purpose of establishing the existence of solutions to the Fuchsian system \eqref{e:FchEYM} will be to obtain global bounds on solutions of the reduced conformal Einstein--Yang--Mills equations. 
The details of this argument can be found in \S\ref{mainthm-proof}.

The main global existence theorem for the Fuchsian equation \eqref{e:FchEYM} is stated below in Theorem \ref{t:Uglobal}. In order to be able to apply this theorem, we need to establish that \eqref{e:FchEYM} verifies the coefficient assumptions $(1)$--$(5)$ from \S\ref{s:mdlasp1}. As we show below, this requires us to make certain dimension dependent choices for the free parameters $\tta, \ttb, \ttj, \ttk, \tte, \ttf$ that appear in the coefficients of \eqref{e:FchEYM}.

\subsection{Parameter selection }\label{sec-decom-Einstein}
The parameters  $\tta, \ttb, \ttj, \ttk, \tte, \ttf$ are determined by the spacetime dimension $n$ and the following decomposition for the matrix  $\widehat{\mathcal{B}}$ defined above by \eqref{e:coef2}:
\begin{align*}
	\widehat{\mathcal{B}} = \mathfrak{\widehat{A}}  \widehat{\Pbb}=\p{\mathfrak{\bar{A}}_1 & 0 & 0 &0&0\\
		0 & \mathfrak{\bar{A}}_2 &  0 &0&0\\
		0&0 & \mathfrak{\bar{A}}_3 &  0 &0\\
		0 &0&0 & \mathfrak{\bar{A}}_4 &  0\\
		0 &  0 &0 & 0 & \mathfrak{\acute{A}}   }\p{\bar{\Pbb}_1 & 0 & 0 &0&0\\
		0 & \bar{\Pbb}_2 &  0 &0&0\\
		0&0 & \bar{\Pbb}_3 &  0 &0\\
		0 &0&0 & \bar{\Pbb}_4 &  0\\
		0 &  0 &0 & 0 & \acute{\Pbb}  },
\end{align*}
which implies, in particular that 
$\bar{\mathcal{B}}_\ell = \mathfrak{\bar{A}}_\ell \bar{\Pbb}_\ell$, $\ell=1,\ldots, 4$, and $\acute{\mathcal{B}}=\acute{\mathfrak{A}}\acute{\Pbb}$. This decomposition is not unique and we have some freedom in choosing the particular form of the operators   $\mathfrak{\bar{A}}_\ell, \acute{\mathfrak{A}}$ and the projection operators $\bar{\Pbb}_\ell,\acute{\Pbb}$. We fix $\mathfrak{\bar{A}}_3$, $\mathfrak{\bar{A}}_4$, $\acute{\mathfrak{A}}$,  $\bar{\Pbb}_3$, $\bar{\Pbb}_4$ and $\acute{\Pbb}$ by setting  
\als{
	\mathfrak{\bar{A}}_3=\mathfrak{\bar{A}}_4=\p{-\lambda (n-2) & 0 & 0 \\
		0 & 1 & 0 \\
		0 & 0 & 1}, \quad \bar{\Pbb}_3=\bar{\Pbb}_4 =\p{1 & 0 & 0 \\
		0 & 0 & 0 \\
		0& 0 & 0}, \quad \acute{\mathfrak{A}} = \acute{\mathcal{B}} \AND  \acute\Pbb = \mathds{1},
}
respectively. The choice of the remaining operators $\mathfrak{\bar{A}}_1$, $\mathfrak{\bar{A}}_2$,  $\bar{\Pbb}_1$ and $\bar{\Pbb}_2$ will depend on the spacetime dimensions $n$, which we separate into the following three cases: $n=4$, $n\geq 6$ and $n=5$.

\noindent\underline{$(1)$ $n=4$:} For $n=4$, we fix the parameters
$\tta, \ttb, \ttj, \ttk, \tte, \ttf$
by setting
\als{
	\ttk=\ttj=\frac{2}{3}, \qquad \tta=\ttb=2H \AND \tte=\ttf=1.
}
Inserting these parameters into $\bar{\mathbf{A}}_\ell^0$ and $\bar{\mathcal{B}}_\ell$ (see Theorem \ref{thm-FOSHS-m}), we can ensure that $\bar{\mathcal{B}}_\ell=\mathfrak{\bar{A}}_\ell \Pbb_\ell$ for $\ell=1,2$ by setting
\al{Decom-n=4}{
	\mathfrak{\bar{A}}_1=\mathfrak{\bar{A}}_2=\p{-\lambda & 0 & 0 \\
		0 & \frac{3}{2} \ulh_{f d} h^{f \he} & 0 \\
		0 & 0 & -\lambda}
	\AND
	\bar{\Pbb}_1 =\bar{\Pbb}_2 =  \p{\frac{1}{2} & 0 & \frac{1}{2} \\
		0 & \ts{\delta}{^{d}_{\hc}} & 0 \\
		\frac{1}{2} & 0 & \frac{1}{2}}.
}

\begin{lemma}\label{e:ACOND1}
	There exist constants $\iota>0$, $R>0$, $\gamma_1>1$, and $\gamma_2>\frac{3}{2}$ such that
	\[\frac{1}{\gamma_1}\mathds{1} \leq \bar{\mathbf{A}}^0_\ell \leq   \mathfrak{\bar{A}}_\ell \leq \gamma_2 \mathds{1}\AND [\mathfrak{\bar{A}}_\ell, \, \bar{\Pbb}_\ell] =0   \]
\end{lemma}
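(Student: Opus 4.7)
The plan is to verify the four claims---the commutator identity and the three operator inequalities---by direct algebraic inspection of the $3\times 3$ block structure combined with a continuity argument around the background state $\widehat{\mathbf{U}}=0$.

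For the commutator relation, I would simply carry out the block multiplication. The matrix $\mathfrak{\bar{A}}_\ell$ (for $\ell=1,2$) is block-diagonal with the \emph{same} entry $-\lambda$ in positions $(1,1)$ and $(3,3)$, and the only off-diagonal entries of $\bar{\Pbb}_\ell$ couple exactly those two slots. Multiplying out $\mathfrak{\bar{A}}_\ell \bar{\Pbb}_\ell$ and $\bar{\Pbb}_\ell \mathfrak{\bar{A}}_\ell$ from \eqref{E:Decom-n=4} produces the same $3\times 3$ block matrix, so $[\mathfrak{\bar{A}}_\ell, \bar{\Pbb}_\ell]=0$. This step is immediate and involves no analysis.

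Next, I would evaluate the two matrices at $\widehat{\mathbf{U}}=0$. From the definitions \eqref{E:W}, \eqref{E:U}, \eqref{E:Q}, \eqref{E:q} together with the formula \eqref{e:S} for $S$, the vanishing of $m$, $s^{ab}$, and $s$ forces $\lambda=-1$, $S=1$, and hence $h^{ab}=\ulh^{ab}$, so $\ulh_{f\hc}h^{f\he}=\ts{\delta}{^\he_\hc}$ on spatial tensors. Plugging into \eqref{E:Decom-n=4} and the formulas for $\bar{\mathbf{A}}_\ell^0$ from Theorem \ref{thm-FOSHS-m} with $\tte=\ttf=1$ gives
\begin{equation*}
\bar{\mathbf{A}}^0_\ell\big|_{\widehat{\mathbf{U}}=0} = \mathds{1}, \qquad \mathfrak{\bar{A}}_\ell\big|_{\widehat{\mathbf{U}}=0} = \diag\!\left(1,\ \tfrac{3}{2}\mathds{1},\ 1\right).
\end{equation*}
Thus at the background, the lower bound holds with equality for $\gamma_1=1$, the sandwich $\bar{\mathbf{A}}^0_\ell \leq \mathfrak{\bar{A}}_\ell$ reduces to $\mathds{1}\leq\frac{3}{2}\mathds{1}$ on the middle block, and the upper bound is sharp at $\gamma_2=\frac{3}{2}$.

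Since the Fuchsian coefficients in Theorems \ref{thm-FOSHS-m} and \ref{thm-FOSHS-Maxwell} are analytic in $(t,\widehat{\mathbf{U}})$ on a neighbourhood of $[0,\pi/H)\times\{0\}$, each block of $\bar{\mathbf{A}}^0_\ell$ and $\mathfrak{\bar{A}}_\ell$ is an analytic perturbation of its value at $\widehat{\mathbf{U}}=0$. The three inequalities decouple block-by-block: I need $-\lambda$ and the spatial operator $\ulh_{f\hc}h^{f\he}$ to remain close to their Euclidean values, which by continuity holds on some neighbourhood. Fixing, e.g., $\gamma_1=2$ and $\gamma_2=2$, I choose $\iota>0$ and $R>0$ small enough that
\begin{equation*}
\tfrac{1}{2}\leq -\lambda \leq 2, \qquad \tfrac{1}{2}\ts{\delta}{^\he_\hc} \leq \ulh_{f\hc}h^{f\he} \leq \tfrac{4}{3}\ts{\delta}{^\he_\hc}
\end{equation*}
holds for $(t,\widehat{\mathbf{U}})\in(-\iota,\pi/H)\times B_R(0)$; this immediately yields the three desired bounds, using the positive semidefiniteness of the middle block for the sandwich inequality. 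There is no genuine obstacle: the lemma is essentially a bookkeeping verification, and the only real content is that the parameter choices $\tta=\ttb=2H$, $\ttj=\ttk=\frac{2}{3}$, $\tte=\ttf=1$ were designed precisely so that the diagonal entries of $\bar{\mathbf{A}}^0_\ell$ align as required for the commutator and the comparison with $\mathfrak{\bar{A}}_\ell$ to hold.
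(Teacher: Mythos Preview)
Your proposal is correct and follows essentially the same approach as the paper: both argue the commutator identity by direct block multiplication (the $(1,1)$ and $(3,3)$ entries of $\mathfrak{\bar{A}}_\ell$ coincide, so it commutes with the off-diagonal coupling in $\bar{\Pbb}_\ell$), observe that the sandwich inequality $\bar{\mathbf{A}}^0_\ell\leq\mathfrak{\bar{A}}_\ell$ reduces to the positive semidefiniteness of the middle block $\frac{1}{2}\ulh_{f\hc}h^{f\he}$, and obtain the outer bounds by taking $R>0$ small enough that $-\lambda$ and $h^{ab}$ stay close to their background values. The paper states the sandwich inequality as an immediate algebraic fact rather than routing through the background evaluation, but this is a cosmetic difference only.
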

for all $(t,\widehat{\mathbf{U}}) \in (-\iota,\frac{\pi}{H})\times B_R(0)$ and  $\ell=1,2$.
\begin{proof} 
	It is straightforward to verify from the \eqref{E:Decom-n=4} and the definitions of the matrices $\bar{\mathbf{A}}^0_\ell$ from Theorem \ref{thm-FOSHS-m} that  $\bar{\mathbf{A}}^0_\ell \leq   \mathfrak{\bar{A}}_\ell$ and $[\mathfrak{\bar{A}}_\ell, \, \bar{\Pbb}_\ell]=0$. Moreover, by taking $R>0$ sufficiently small, it is also not difficulty to verify that there exists constants $\gamma_1>1$ and $\gamma_2>\frac{3}{2}$ such that $\mathds{1}\leq \gamma_1  \bar{\mathbf{A}}^0_\ell $ and $\mathfrak{\bar{A}}_\ell \leq \gamma_2\mathds{1}$, which completes the proof.
\end{proof}	

\bigskip

\noindent \underline{$(2)$ $n \geq 6$:} For $n\geq 6$, we fix the parameters
$\tta, \ttb, \ttj, \ttk, \tte, \ttf$
by setting
\al{PARASET6}{
	\frac{1}{\ttj} =n-\frac{7}{2}, \qquad \tte=1,\quad \frac{H}{\tta}=\sqrt{\frac{1}{2}\left(n-\frac{11}{2}\right)},
}
\al{PARASETk6}{
	\frac{1}{\ttk}=\frac{n-1}{2}, \qquad \ttf=1 \AND \frac{H}{\ttb}=\frac{n-3}{2}.
}
Substituting these choices into $\bar{\mathbf{A}}_\ell^0$ and $\bar{\mathcal{B}}_\ell$, we find that  $\bar{\mathcal{B}}_\ell=\mathfrak{\bar{A}}_\ell \Pbb_\ell$,
$\ell=1,2$, where we have set
\al{AANDP06}{
	\mathfrak{\bar{A}}_1 =\p{
		\frac{3}{2} (-\lambda) & 0 & \sqrt{\frac{1}{2}\left(n-\frac{11}{2}\right)}(-\lambda)  \\
		0 & \left(n-\frac{7}{2}\right) \ulh_{f d} h^{f \he} & 0\\		
		\sqrt{\frac{1}{2}\left(n-\frac{11}{2}\right)} (-\lambda) & 0 & (n-\frac{9}{2})(-\lambda)
	}, \quad  \bar{\Pbb}_1=\mathds{1},
}
\al{AANDP06b}{
	\mathfrak{\bar{A}}_2=\p{(n-3)(-\lambda) & 0 & 0\\
		0 & \frac{1}{2}(n-1) \ulh_{f d}h^{f \he} & 0 \\
		0 & 0 & (n-3)(-\lambda)
	} \AND \bar{\Pbb}_2=\p{\frac{1}{2} & 0 & \frac{1}{2}\\
		0 & \ts{\delta}{^d_{\hd}} & 0\\
		\frac{1}{2} & 0 & \frac{1}{2} }.
}

\begin{lemma}\label{e:ACOND0}
	There exist constants $\iota>0$, $R>0$, $\gamma_1>1$,  $\gamma_2>2n-10$, $\tilde{\gamma}_1>1$ and  $\tilde{\gamma}_2>\frac{2(n-3)}{n-1}$ such that
	\begin{gather*}
		\frac{1}{\gamma_1} \mathds{1} \leq \bar{\mathbf{A}}^0_1\leq   \mathfrak{\bar{A}}_1 \leq \gamma_2 \mathds{1},\quad \frac{1}{\tilde{\gamma}_1}\mathds{1} \leq \bar{\mathbf{A}}^0_2 \leq  \frac{2}{n-1} \mathfrak{\bar{A}}_2 \leq \tilde{\gamma}_2 \mathds{1}
		\intertext{and}
		[\mathfrak{\bar{A}}_\ell, \bar{\Pbb}_\ell] =0
	\end{gather*}
	for  all  $(t,\widehat{\mathbf{U}}) \in (-\iota,\frac{\pi}{H})\times B_R(0)$ and $\ell=1,2$.
\end{lemma}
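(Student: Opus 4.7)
The plan is to mirror the strategy used for Lemma~\ref{e:ACOND1}. First I would verify the algebraic identities $[\mathfrak{\bar{A}}_\ell, \bar{\Pbb}_\ell]=0$ and the matrix inequalities pointwise at the background configuration $\widehat{\mathbf{U}}=0$ (where $\lambda=-1$ and $h^{ab}=\ulh^{ab}$), and then extend them to a neighborhood $(-\iota, \tfrac{\pi}{H})\times B_R(0)$ by continuity of the matrix entries in $(t,\widehat{\mathbf{U}})$. The commutator $[\mathfrak{\bar{A}}_1,\bar{\Pbb}_1]=0$ is immediate because $\bar{\Pbb}_1=\mathds{1}$. For $\ell=2$, the key structural point is that in \eqref{E:AANDP06b} the diagonal entries of $\mathfrak{\bar{A}}_2$ at positions $(1,1)$ and $(3,3)$ coincide --- both equal $(n-3)(-\lambda)$ --- so a direct block multiplication shows that $\mathfrak{\bar{A}}_2$ commutes with the projection $\bar{\Pbb}_2$ acting on the $(1,3)$-subspace.

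For the inequality $\bar{\mathbf{A}}_1^0 \leq \mathfrak{\bar{A}}_1$, I would analyze the difference block by block. The middle block contributes $(n-\tfrac{9}{2})\,\ulh_{fd}h^{f\he}$, which is positive definite for $n\geq 6$ whenever $\widehat{\mathbf{U}}$ is small enough that $h^{fe}$ stays close to $\ulh^{fe}$. The $(1,3)$-corner block factors as
\[
(-\lambda)\begin{pmatrix} \tfrac{1}{2} & \sqrt{(n-\tfrac{11}{2})/2} \\ \sqrt{(n-\tfrac{11}{2})/2} & n-\tfrac{11}{2} \end{pmatrix},
\]
and the parameter choice $H/\tta=\sqrt{(n-11/2)/2}$ in \eqref{E:PARASET6} is designed precisely so that the determinant of the constant $2\times 2$ part vanishes; combined with its trace $n-5$, this yields eigenvalues $0$ and $(n-5)(-\lambda)$, both nonnegative for $n\geq 6$ and $-\lambda>0$. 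The full difference is therefore positive semi-definite on $B_R(0)$ for small $R$. For the inequality $\bar{\mathbf{A}}_2^0 \leq \tfrac{2}{n-1}\mathfrak{\bar{A}}_2$, the prefactor $\tfrac{2}{n-1}$ is chosen to make the middle blocks cancel exactly, so the difference reduces to the diagonal operator $\diag\bigl(\tfrac{n-5}{n-1}(-\lambda),\, 0,\, \tfrac{n-5}{n-1}(-\lambda)\bigr)$, which is positive semi-definite for $n\geq 6$ and $-\lambda>0$.

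For the outer bounds $\tfrac{1}{\gamma_1}\mathds{1}\leq \bar{\mathbf{A}}_\ell^0$ and $\mathfrak{\bar{A}}_\ell \leq \gamma_2\mathds{1}$ (and the analogous pair for $\ell=2$), I would use the fact that at $\widehat{\mathbf{U}}=0$ we have $\bar{\mathbf{A}}_\ell^0=\mathds{1}$, while the eigenvalues of $\mathfrak{\bar{A}}_1(0,0)$ can be computed explicitly as $\{1,\, n-4,\, n-\tfrac{7}{2}\}$ and those of $\tfrac{2}{n-1}\mathfrak{\bar{A}}_2(0,0)$ as $\{1,\, \tfrac{2(n-3)}{n-1}\}$ with appropriate multiplicities; any $\gamma_2 > \max(n-\tfrac{7}{2},\, 2n-10)$ and $\tilde{\gamma}_2 > \tfrac{2(n-3)}{n-1}$ then suffice, with the required smallness of $R$ and $\iota$ absorbed into these constants by continuity. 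The one point I expect to require slight care is that the first matrix inequality is only positive semi-definite (rather than strictly positive definite) at the background, so in principle a perturbation could destroy the PSD property; however, since the zero eigenvector of the $(1,3)$-corner block is independent of $\widehat{\mathbf{U}}$ and only the overall scalar prefactor $-\lambda$ varies with $\widehat{\mathbf{U}}$, the PSD structure is preserved exactly under perturbation, so no quantitative perturbation estimate is needed.
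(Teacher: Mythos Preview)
Your approach is correct and essentially equivalent to the paper's. The only cosmetic difference is that where you compute the eigenvalues of the $2\times 2$ corner block of $\mathfrak{\bar{A}}_1-\bar{\mathbf{A}}_1^0$ directly via trace--determinant (finding $0$ and $(n-5)(-\lambda)$), the paper instead applies Young's inequality with $\epsilon=1$ to the cross term $2\sqrt{\tfrac{1}{2}(n-\tfrac{11}{2})}\,\omega_1\omega_3$, which is precisely the statement that this same $2\times 2$ block is PSD; your observation that the corner block is a constant matrix times $(-\lambda)$ makes the perturbation argument slightly cleaner than the paper's phrasing, but both arguments are the same computation.
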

\begin{proof}	
	By \eqref{E:AANDP06}, we note, for any  $\omega=(\omega_1, \omega_2, \omega_3)^{\tr}$, that
	\gas{
		\la\mathfrak{\bar{A}}_1 \omega, \omega\ra_{\ulh}= -\lambda \frac{3}{2} \omega_1^2+ \left(n-\frac{7}{2}\right) h^{a b} \omega_{2 a} \omega_{2 b} +(-\lambda) \left(n-\frac{9}{2}\right) \omega_3^2-2\lambda \sqrt{\frac{1}{2}\left( n-\frac{11}{2}\right)}\omega_1\omega_3,
	}
	where in deriving this we have employed the relation $\omega_{2a} = \ulh_{a \ha} \omega_2^{\ha}$. But, by Young's inequality (see Lemma \ref{t:young}) 
	\begin{equation*}
		2\sqrt{\frac{1}{2}\left( n-\frac{11}{2}\right)} |\omega_1\omega_3| \leq  \frac{1}{2}\omega_1^2+\left( n-\frac{11}{2}\right) \omega_3^2,
	\end{equation*}
	and so we deduce that
	\als{
		- \lambda \omega_1^2 + \left(n-\frac{7}{2}\right) h^{a b} \omega_{2a} \omega_{2 b} + & (-\lambda) \omega_3^2
		\leq \la\bar{\mathfrak{A}}_1 \omega, \omega\ra_{\ulh} \\ \leq & - 2\lambda \omega_1^2 + \left(n-\frac{7}{2}\right) h^{a b} \omega_{2a} \omega_{2 b} + (-\lambda) (2n-10) \omega_3^2,
	}
	where in deriving this we have used the fact that $\lambda < 0$, see \eqref{decom-g}.
	
	On the other hand, from the definition of $\bar{\mathbf{A}}^0_1$ in Theorem \ref{thm-FOSHS-m}, we observe that $\la\bar{\mathbf{A}}^0_1 \omega, \omega\ra_{\ulh}= -\lambda \omega_1^2+ h^{a b}\omega_{2a} \omega_{2 b} +(-\lambda)\omega_3^2$. With the help of this identity, we conclude,  by taking small enough $R>0$, that there exist constants $\gamma_1>1$ and $\gamma_2>2n-10$ such that
	\begin{equation*}
		\frac{1}{\gamma_1} \mathds{1} \leq \bar{\mathbf{A}}^0_1\leq   \mathfrak{\bar{A}}_1 \leq \gamma_2 \mathds{1}. 
	\end{equation*}
	To complete the proof, we note that the other stated inequality follows from similar arguments while the relation $[\mathfrak{\bar{A}}_\ell, \bar{\Pbb}_\ell] =0$, $\ell=1,2$, is a direct consequence of the definitions \eqref{E:AANDP06}--\eqref{E:AANDP06b}.
\end{proof}

\bigskip

\noindent \underline{$(3)$ $n=5$:} For $n=5$, we fix the parameters $\ttk, \, \ttf, \, \ttb$ in the same as \eqref{E:PARASETk6}, and we correspondingly define the operators $\mathfrak{\bar{A}}_2$ and $\bar{\Pbb}_2$ by \eqref{E:AANDP06b}.
In addition, we set
\als{
	\tte=3, \qquad \frac{1}{\ttj}=2 \AND \frac{H}{\tta}=1,
}
and in insert these parameters into $\bar{\mathbf{A}}_1^0$ and $\bar{\mathcal{B}}_1$. We then observe that $\bar{\mathcal{B}}_1=\mathfrak{\bar{A}}_1 \bar{\Pbb}_1$ for
$\bar{\mathfrak{A}}_1$ and $\bar{\Pbb}_1$ defined by
\al{A1}{
	\bar{\mathfrak{A}}_1 =\p{ -\lambda  & 0 & 0 \\
		0 & 2 \ulh_{f d} h^{f \he}  & 0 \\
		-3\lambda & 0 & -3\lambda
	} \AND \bar{\Pbb}_1=\mathds{1},
}
respectively.

\begin{lemma}\label{e:ACOND} 	
	There exist constants $\iota>0$, $R>0$,  $\gamma_2>32$, $\gamma_1>1$, $\tilde{\gamma}_1>1$,  and $\tilde{\gamma}_2>\frac{2(n-3)}{n-1}$ such that
	\begin{gather*}
		\frac{1}{\gamma_1} \mathds{1} \leq \bar{\mathbf{A}}_1^0 \leq 8 \bar{\mathfrak{A}}_1  \leq \gamma_2 \mathds{1}, \quad \frac{1}{\tilde{\gamma}_1}\mathds{1} \leq \bar{\mathbf{A}}^0_2 \leq  \frac{2}{n-1} \mathfrak{\bar{A}}_2 \leq \tilde{\gamma}_2 \mathds{1}, 
		\intertext{and} [\mathfrak{\bar{A}}_\ell, \bar{\Pbb}_\ell] =0
	\end{gather*}
	for  $(t,\widehat{\mathbf{U}}) \in (-\iota,\frac{\pi}{H})\times B_R(0)$ and $\ell=1,2$.
\end{lemma}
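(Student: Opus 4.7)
The proof will follow the same overall pattern as Lemmas \ref{e:ACOND1} and \ref{e:ACOND0}: reduce each operator inequality to a pointwise quadratic form inequality in terms of the decomposition $\omega=(\omega_1,\omega_2,\omega_3)$, use that $\lambda\to -1$ and $h^{ab}\to\ulh^{ab}$ as $\widehat{\mathbf{U}}\to 0$ to make the coefficients close to their de Sitter values, and then invoke Young's inequality (or a discriminant computation) together with a smallness choice of $R>0$.

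The $\ell=2$ case is essentially free: the parameters $\ttk,\ttf,\ttb$ and hence the matrices $\mathfrak{\bar{A}}_2$, $\bar{\Pbb}_2$ and $\bar{\mathbf{A}}_2^0$ are defined by exactly the same formulas \eqref{E:PARASETk6}--\eqref{E:AANDP06b} as in the $n\geq 6$ case, so the second chain of inequalities and the commutator relation $[\mathfrak{\bar{A}}_2,\bar{\Pbb}_2]=0$ follow verbatim from the proof of Lemma \ref{e:ACOND0} specialized to $n=5$.

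The $\ell=1$ case is the substantive one, and the factor of $8$ in the statement is what makes it work. The matrix $\bar{\mathfrak{A}}_1$ defined in \eqref{E:A1} is not symmetric: it has an entry $-3\lambda$ in position $(3,1)$ but $0$ in position $(1,3)$. Consequently, using the definition \eqref{e:inprod} of the inner product and recalling $\bar{\Pbb}_1=\mathds{1}$, a direct computation gives
\begin{equation*}
\la 8\bar{\mathfrak{A}}_1\omega,\omega\ra_{\ulh}-\la \bar{\mathbf{A}}_1^0\omega,\omega\ra_{\ulh}
= (-\lambda)\bigl(7\omega_1^2+24\omega_1\omega_3+21\omega_3^2\bigr)
+ 15\, h^{ab}\omega_{2a}\omega_{2b},
\end{equation*}
where $\tte=3$ has been used in the last entry of $\bar{\mathbf{A}}_1^0$. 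The $2\times 2$ quadratic form $7u^2+24uv+21v^2$ has discriminant $24^2-4\cdot 7\cdot 21=-12<0$ and positive leading coefficient, hence is positive definite. Combining this with $-\lambda>0$ and the positivity of $h^{ab}$ (which holds on a ball $B_R(0)$ of sufficiently small radius, since $h^{ab}\to\ulh^{ab}$ as $\widehat{\mathbf{U}}\to 0$) yields $\bar{\mathbf{A}}_1^0\leq 8\bar{\mathfrak{A}}_1$. The lower bound $\bar{\mathbf{A}}_1^0\geq \gamma_1^{-1}\mathds{1}$ is immediate from $\lambda\approx -1$ and $h^{ab}\approx\ulh^{ab}$ by choosing $R$ small. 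For the upper bound $8\bar{\mathfrak{A}}_1\leq\gamma_2\mathds{1}$, I would bound the off-diagonal $-24\lambda\omega_1\omega_3$ by Young's inequality (absorbing $12|\lambda|\omega_1^2$ and $12|\lambda|\omega_3^2$), pick any $\gamma_2>32$ large enough that the resulting diagonal bound holds for $\widehat{\mathbf{U}}\in B_R(0)$, and finally note that $[\bar{\mathfrak{A}}_1,\bar{\Pbb}_1]=0$ trivially because $\bar{\Pbb}_1=\mathds{1}$.

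The only delicate point is the sign/definiteness of the $2\times 2$ quadratic form arising from the asymmetry of $\bar{\mathfrak{A}}_1$ in $n=5$; once this positivity is verified via the discriminant computation above, the rest of the argument is a routine continuity/smallness step in $R$ and $\iota$ identical to those in Lemmas \ref{e:ACOND1}--\ref{e:ACOND0}.
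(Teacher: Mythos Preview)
Your proposal is correct and follows essentially the same approach as the paper. The only cosmetic difference is in the key step $\bar{\mathbf{A}}_1^0 \leq 8\bar{\mathfrak{A}}_1$: you subtract the two quadratic forms and check that the resulting $2\times 2$ form $7u^2+24uv+21v^2$ has negative discriminant, whereas the paper bounds the cross term $3(-\lambda)\omega_1\omega_3$ via Young's inequality with the specific choice $\epsilon=\sqrt{3}/3$ to obtain $\bigl(1-\tfrac{\sqrt{3}}{2}\bigr)\bar{\mathbf{A}}_1^0\leq\bar{\mathfrak{A}}_1$ and then notes $1-\tfrac{\sqrt{3}}{2}>\tfrac{1}{8}$; these are equivalent verifications of the same positivity.
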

\begin{proof}
	From the definition of $\bar{\mathbf{A}}^0_1$, see Theorem \ref{thm-FOSHS-m}, and \eqref{E:A1}, we have
	\als{
		\la \bar{\mathfrak{A}}_1 \omega, \omega \ra_{\ulh} =& -\lambda \omega_1^2+2h^{a b}\omega_{2a} \omega_{2 b}+3(-\lambda)\omega_3^2+3(-\lambda)\omega_1\omega_3
		\intertext{and}
		\la \bar{\mathbf{A}}^0_1 \omega, \omega \ra_{\ulh} =& -\lambda \omega_1^2+ h^{a b}\omega_{2 a} \omega_{2 b}+3(-\lambda)\omega_3^2
	}
	for any $\omega=(\omega_1, \omega_2, \omega_3)^{\tr}$. In order to bound $\la \bar{\mathfrak{A}}_1 \omega, \omega \ra_{\ulh} $, we first bound $3(-\lambda)\omega_1\omega_3$ above and below by using Young's inequality (see Lemma \ref{t:young}) twice with $\epsilon = \frac{\sqrt{3}}{3}$ and $\epsilon=\frac{2+\sqrt{13}}{3}$ to obtain
	\als{
		-\frac{3}{2} \frac{\sqrt{3}}{3} (-\lambda) \omega_1^2-\frac{3}{2}  \sqrt{3} (-\lambda)\omega^2_3 \leq 3(-\lambda) \omega_1 \omega_3 \leq \frac{3}{2}\frac{2+\sqrt{13}}{3} (-\lambda) \omega_1^2+\frac{3}{2} \frac{3}{2+\sqrt{13}}(-\lambda)\omega^2_3.
	}
	Using this inequality, it follows that 
	\als{
		\Bigl(1-\frac{ \sqrt{3} }{2} \Bigr) (-\lambda) \omega_1^2+2h^{a b}\omega_{2a} \omega_{2b}&+3\Bigl(1-\frac{\sqrt{3}}{2 } \Bigr)(-\lambda)\omega_3^2
		\leq \la \bar{\mathfrak{A}}_1 \omega, \omega \ra_{\ulh} \\ \leq & \Bigl(2+\frac{\sqrt{13}}{2 }\Bigr) (-\lambda) \omega_1^2+2h^{a b}\omega_{2a} \omega_{2 b}+\Bigl(2+\frac{\sqrt{13}}{2 } \Bigr)(-\lambda)\omega_3^2,
	}
	from which we deduce that $\frac{1}{8}\bar{\mathbf{A}}_1^0<\bigl(1-\frac{ \sqrt{3} }{2} \bigr) \bar{\mathbf{A}}_1^0 \leq \bar{\mathfrak{A}}_1$. By taking $R>0$ small enough, it is also clear that there exists constants $\gamma_1>1$ and $\gamma_2>32$ such that $8 \bar{\mathfrak{A}}_1 \leq \gamma_2 \mathds{1}$ and $\mathds{1}\leq \gamma_1 \bar{\mathbf{A}}_1^0$.
	
	To complete the proof, we note that the second inequality follows from similar arguments while the relations $[\mathfrak{\bar{A}}_\ell, \bar{\Pbb}_\ell] =0$, $\ell=1,2$, are a direct consequence of the definitions of $\mathfrak{\bar{A}}_\ell$ and $\bar{\Pbb}_\ell$.
\end{proof}

\subsection{Verification of the coefficient assumptions}
With the results of the previous section in hand, we now turn to verifying that the Fuchsian equation \eqref{e:FchEYM} satisfies all of the coefficient assumptions from \S\ref{s:mdlasp1}.
Due to the block structure of the coefficients \eqref{e:coef1}--\eqref{e:coef2}, the verification of the coefficient assumptions for the conformal Einstein (i.e.~the firsts 4 blocks) and the conformal Yang--Mills (i.e.~the last block) components of the equation can be carried out separately.  

\begin{lemma}\label{sym-A-B}
	Suppose
	$\bar{\mathbf{A}}_i^0, \, \bar{\mathbf{A}}_i^b \ts{\ulh}{^{c}_{b}}$, $\bar{\mathcal{B}}_i$, $\bar{G}_i$, $i=1, \cdots, 4$,  and $\acute{\mathbf{A}}^0, \,\acute{\mathbf{A}}^b \ts{\ulh}{^{c}_{b}}$,  $\acute{\mathcal{B}}$, $\acute{G}$ are defined as in Theorems \ref{thm-FOSHS-m} and \ref{thm-FOSHS-Maxwell}, and $\mathfrak{\bar{A}}_i$, $\bar{\Pbb}_i$,  $\acute{\mathfrak{A}}$ and $\acute\Pbb$ are as defined\footnote{Recall that $\mathfrak{\bar{A}}_\mathcal{j}$ and $\bar{\Pbb}_\mathcal{j}$,  for $\mathcal{j}=1, 2$,  vary according to the dimension $n$. } in \S\ref{sec-decom-Einstein}. Then there exist positive constants $R,\kappa,\gamma_1,\gamma_2>0$ and non-negative constants $\lambda_\mathcal{l},\theta,\beta_\mathcal{k}\geq 0$, $\mathcal{l}=1,2,3$ and $\mathcal{k}=0,\ldots,7$, such that coefficients of the Fuchsian system \eqref{e:FchEYM} satisfy the assumptions $(1)$--$(5)$ from \S\ref{s:mdlasp1} and the inequality 
	\begin{equation}\label{e:kineq}
		\kappa>\frac{1}{2}\gamma_1 \biggl(\sum_{k=0}^3\beta_{2k+1}+2\lambda_3\biggr)
	\end{equation}	  
	holds. 
\end{lemma}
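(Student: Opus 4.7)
The plan is to verify the five coefficient assumptions of \S\ref{s:mdlasp1} by exploiting the block-diagonal structure of $\widehat{\mathbf A}^0$, $\widehat{\mathbf A}^b\ts{\ulh}{^c_b}$, $\widehat{\mathcal B}$ and of the decomposition $\widehat{\mathcal B}=\widehat{\mathfrak A}\widehat{\Pbb}$ fixed in \S\ref{sec-decom-Einstein}, so that the four gravitational blocks and the Yang--Mills block can be handled separately and then collected. Assumption $(1)$ is immediate, since every entry of $\bar\Pbb_\ell$ and $\acute\Pbb=\mathds 1$ appearing in \eqref{E:Decom-n=4}, \eqref{E:AANDP06b} and \eqref{E:A1} is a numerical constant, so that $\widehat\Pbb^{\tr}=\widehat\Pbb=\widehat\Pbb^2$, $\partial_t\widehat\Pbb=0$ and $\underline D\widehat\Pbb=0$ can be read off directly. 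Likewise, the symmetry of $\widehat{\mathbf A}^0$ and of $\sigma_c\widehat{\mathbf A}^c$ required in $(4)$ is already built into the formulas in Theorems \ref{thm-FOSHS-m} and \ref{thm-FOSHS-Maxwell}: the symmetrizing tensor $Q^{edc}$ was introduced for this purpose in the gravitational blocks, and symmetry of the Yang--Mills block was the point of the remark following Lemma \ref{lem-FOSHS-Maxwell}. For $(2)$, the bounds \eqref{e:coefcp} are obtained by combining Lemmas \ref{e:ACOND1}, \ref{e:ACOND0} and \ref{e:ACOND} with an analogous and easier estimate $(1/\gamma_1)\mathds 1\le\acute{\mathbf A}^0\le(1/\kappa_{\mathrm{YM}})\acute{\mathfrak A}\le\gamma_2\mathds 1$ for the Yang--Mills block, valid on $(-\iota,\tfrac{\pi}{H})\times B_R(0)$ after shrinking $R$; this yields $(2)$ with an explicit $\kappa>0$ depending only on $n$. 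The commutator $[\widehat{\mathfrak A},\widehat\Pbb]=0$ is block-wise, while the off-diagonal relations $\Pbb\widehat{\mathbf A}^0\Pbp=\mathrm O(\Pbb\widehat{\mathbf U})$ and its symmetric counterpart, as well as the existence of $\mathring{\mathbf A}^0$ and $\mathring{\mathfrak A}$, follow from Taylor expansion of $\widehat{\mathbf A}^0$ and $\widehat{\mathfrak A}$ around $\widehat{\mathbf U}=0$.

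For $(3)$, I would split $\widehat G(t,\widehat{\mathbf U})=\widehat G_0(t,\widehat{\mathbf U})+|t|^{-1/2}\widehat G_1(t,\widehat{\mathbf U})$, where $\widehat G_0$ collects every analytic remainder (the $\triangle^{\bullet}_i$ from Theorem \ref{thm-FOSHS-m}, the $\mathfrak D^{\bullet}_i$ and $\acute G_0$ from Theorem \ref{thm-FOSHS-Maxwell}) and $\widehat G_1$ is the single piece $\acute G_1$ from \eqref{Maxwell-FOSHS} produced by the Yang--Mills commutator. Because no $|t|^{-1}$ contribution is present, $\widehat G_2\equiv 0$ and we may take $\lambda_3=0$, a point that will be crucial for the inequality \eqref{e:kineq}. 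The analyticity and vanishing at $\widehat{\mathbf U}=0$ already noted in the two theorems gives $\widehat G_0=\mathrm O(\widehat{\mathbf U})$, while the explicit quadratic-in-$\widehat{\mathbf U}$ form of $\acute G_1$ visible in the formulas for $\Xi_1,\Xi_2,\Xi_3$ yields the required bounds $\Pbb\widehat G_1=\mathcal O(\lambda_1\widehat{\mathbf U})$ and $\Pbp\widehat G_1=\mathcal O(\lambda_2\Pbb\widehat{\mathbf U})$ for appropriate nonnegative $\lambda_1,\lambda_2$.

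The main obstacle is $(5)$ together with the sharp inequality \eqref{e:kineq}. Following \S\ref{s:mdlasp1}, one computes $\Theta'(0)=D_v\widehat{\mathbf A}^0(t,v)\cdot w$ with $w=(\widehat{\mathbf A}^0)^{-1}\bigl[\tfrac{1}{Ht}\widehat{\mathfrak A}\widehat\Pbb\widehat{\mathbf U}+\widehat G\bigr]$, splits $w$ according to its $|t|^{0}$, $|t|^{-1/2}$ and $|t|^{-1}$ components, and bounds each of the four bilinear forms $\la v,\Pbb^{(\perp)}\Theta'(0)\Pbb^{(\perp)}v\ra_{\ulh}$ with the analyticity of $\widehat{\mathbf A}^0$ on $(-\iota,\tfrac{\pi}{H})\times B_R(0)$, the bound $\|(\widehat{\mathbf A}^0)^{-1}\|\le\gamma_1$ from $(2)$, and the off-diagonal vanishing $\Pbb\widehat{\mathbf A}^0\Pbp=\mathrm O(\Pbb\widehat{\mathbf U})$ also from $(2)$, the last point being what produces the extra factor of $\Pbb\widehat{\mathbf U}$ needed in the $\beta_3$ and $\beta_5$ estimates. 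The quantitative heart of the argument is that, by the mean-value inequality applied to $D\widehat{\mathbf A}^0$, each of $\beta_1,\beta_3,\beta_5,\beta_7$ is bounded by a constant times $R\,\|\widehat{\mathfrak A}\|$ and can therefore be made arbitrarily small by further shrinking $R$; since $\lambda_3=0$, inequality \eqref{e:kineq} reduces to $\kappa>\tfrac{1}{2}\gamma_1\sum_{k=0}^3\beta_{2k+1}$, which is then achieved because the lower bound for $\kappa$ coming from $(2)$ is dimension-dependent but uniform in $R$ as $R\searrow 0$.
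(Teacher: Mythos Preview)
Your overall plan mirrors the paper's proof closely: exploit block-diagonality, verify assumptions block by block for the Einstein and Yang--Mills pieces, take $\lambda_3=0$ because no $|t|^{-1}$ term appears in $\widehat G$, and then shrink $R$ to force \eqref{e:kineq}. Most of the ingredients are in place, but there is a genuine gap in your treatment of $\beta_7$.

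You claim that ``each of $\beta_1,\beta_3,\beta_5,\beta_7$ is bounded by a constant times $R\,\|\widehat{\mathfrak A}\|$'' and can therefore be made small by shrinking $R$. This works for $\beta_1,\beta_3,\beta_5$ but \emph{not} for $\beta_7$. The $|t|^{-1}$ contribution to $\la v,\Pbp\Theta'(0)\Pbp v\ra_{\ulh}$ is, by the generic mean-value bound you invoke, of order $\tfrac{1}{t}|\Pbp v|^2|\Pbb v|$. Assumption~(5) requires this to be dominated by $\tfrac{\beta_7}{R^2 t}|\Pbb v|^2$, i.e.\ $|\Pbp v|^2\le \tfrac{\beta_7}{R^2}|\Pbb v|$. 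Since $|\Pbp v|$ can be of order $R$ while $|\Pbb v|$ is arbitrarily small, no finite $\beta_7$ suffices---the bound $\beta_7\ge R^4/|\Pbb v|$ blows up as $|\Pbb v|\to 0$. The off-diagonal relation $\Pbb\widehat{\mathbf A}^0\Pbp=\mathrm{O}(\Pbb\widehat{\mathbf U})$ that you invoke for $\beta_3,\beta_5$ buys only one extra factor of $\Pbb v$, not the two factors needed here.

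The paper resolves this structurally rather than by shrinking $R$: since $\acute\Pbb^\perp=0$, the term $\Pbp\Theta'(0)\Pbp$ involves only the Einstein blocks, and there one must observe (as in \cite[\S7.1]{Liu2018}) that $\widehat{\mathbf A}^0$ depends on the $\widehat\Pbb$-variables $m$ and $p^a$ only through the combinations $\tta t\, m$ and $\ttb t\, p^a$ (via $\lambda,\xi^a$), while the remaining dependencies (on $s$ and $s^{ab}$) are on $\Pbp$-variables to which the $|t|^{-1}$-part of $w$ contributes nothing in blocks~3 and~4. Hence $D_v\widehat{\mathbf A}^0\cdot w_{1/t}$ gains an explicit factor of $t$, the $|t|^{-1}$ singularity cancels, and one may take $\beta_7=0$. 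You need to supply this structural argument; the mean-value bound alone is insufficient.

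Two minor remarks. First, $\Xi_{3a}=E_a$ is linear, not quadratic, so $\acute G_1$ is not purely quadratic; this is harmless because $\acute\Pbb^\perp=0$ makes $\Pbp\widehat G_1=0$ automatically. Second, the Yang--Mills coercivity estimate $\acute{\mathbf A}^0\le(1/\kappa_{\mathrm{YM}})\acute{\mathfrak A}$ is not quite ``easier'' than the Einstein case: the off-diagonal $\xi$-entries in $\acute{\mathbf A}^0$ require a Young-type tensor inequality (Lemma~\ref{t:young2}) to absorb, which the paper carries out explicitly.
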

\begin{proof}
	$\;$
	
	\noindent \underline{$1.$ The conformal Einstein component.}  Noting that the gravitational variables used in the Fuchsian formulation of the reduced conformal Einstein equation in the articles
	\cite{Oliynyk2016a,Liu2018,Liu2018b,Liu2018a} are essentially the same ones used in this article, the same arguments from \cite{Oliynyk2016a,Liu2018,Liu2018b,Liu2018a} that are employed to verify the  coefficient assumptions can be adapted, with the help of Lemmas \ref{e:ACOND1}--\ref{e:ACOND}, in a straightforward manner to show that the Fuchsian formulation of the reduced Einstein equations in this article satisfies the coefficient assumptions. We omit the details.
	
	\bigskip
	
	\noindent \underline{$2.$ The conformal Yang--Mills component.}
	
	\bigskip
	
	\noindent \underline{Assumptions $(2)$ \& $(4)$:} We begin the verification of assumption $(2)$ and $(4)$ by establishing the symmetry of the operators  $\acute{\mathbf{A}}^0$ and $\acute{\mathbf{A}}^b \ts{\ulh}{^{c}_{b}} \eta_c$. To this end, we observe that the blocks of $\acute{\mathbf{A}}^0$ satisfy
	\begin{align*}
		\tensor{\bigl( \acute{\mathbf{A}}^0_{(11)} \bigr) }{^{\he}_{e}} \ulh^{e f}  \ulh_{\he d} & = -\lambda \hat{h}_{\hb e} \ulh^{\hb \he} \ulh^{e f}  \ulh_{\he d}  = -\lambda \ulh^{f \he} \hat{h}_{\he d} = \tensor{\bigl( \acute{\mathbf{A}}^0_{(1 1)}\bigr) }{^{f}_d } \quad (\hat{h}_{\hb e} = \tensor{\ulh}{^a_{\hb}} g_{a b}\tensor{\ulh}{^b_e}), \end{align*}		
	\begin{align*}
		\tensor{\bigl( \acute{\mathbf{A}}^0_{(22)}\bigr) }{^{\hd}_f }  \ulh_{\hd \hc} \ulh^{f \he} & = \left(  -\lambda  h^{\hd d} - \lambda \nu^r \nu^s g_{r s}  \xi^{d} \xi^{\hd} + 2 \xi^{d} \xi^{\hd} \right)  \ulh_{d f} \ulh_{\hd \hc} \ulh^{f \he} \\
		&= \left(  -\lambda  h^{\hd \he} - \lambda \nu^r \nu^s g_{r s} \xi^{\he} \xi^{\hd} + 2  \xi^{\he} \xi^{\hd}  \right) \ulh_{\hd \hc}  = \tensor{\bigl( \acute{\mathbf{A}}^0_{(2 2)}\bigr) }{^{ \he}_{\hc} }, \end{align*}		
	\begin{align*}\tensor{\bigl( \acute{\mathbf{A}}^0_{(33)}\bigr) }{^{\hc \hd}_{c d} }  \ulh_{\bar a \hc} \ulh_{\bar b \hd} \ulh^{c \ha} \ulh^{d \hb}  =& \ulh_{c a^\prime} \ulh_{d b^\prime} h^{a^\prime \hc} h^{b^\prime \hd} \ulh_{\bar a \hc} \ulh_{\bar b \hd} \ulh^{c \ha} \ulh^{d \hb}  \notag \\
		=& h^{\ha \hc} h^{\hb \hd} \ulh_{\hc \bar a} \ulh_{\hd \bar b} = \tensor{\bigl( \acute{\mathbf{A}}^0_{(33)}\bigr) }{^{\ha \hb}_{\bar a \bar b} },  
	\end{align*}		
	\begin{align*}		\tensor{\bigl( \acute{\mathbf{A}}^0_{(21)}\bigr) }{_{f e} }  \ulh^{f \hd} \ulh^{e \he} & = - \lambda \nu^r g_{r s} \tensor{\ulh}{^{s}_{e}} \ulh_{d f} \xi^d  \ulh^{f \hd} \ulh^{e \he} = - \lambda \nu^r g_{r s} \tensor{\ulh}{^{s \he}} \xi^{\hd} = \tensor{\bigl( \acute{\mathbf{A}}^0_{(1 2)}\bigr) }{^{\he \hd} }.
	\end{align*}
	We further observe that the blocks 
	$\acute{\mathbf{A}}^0_{(1 3)}$, $\acute{\mathbf{A}}^0_{(31)}$, $\acute{\mathbf{A}}^0_{(23)}$, and $\acute{\mathbf{A}}^0_{(32)}$ all vanish and that the block $\acute{\mathbf{A}}^0_{(44)}$ is symmetric. We therefore conclude from \eqref{e:Asym} that  $\acute{\mathbf{A}}^0$ is symmetric.
	The symmetry of the operator $\acute{\mathbf{A}}^b \ts{\ulh}{^{c}_{b}} \eta_c$ can be established in a similar fashion from the following identities 
	\begin{align*}
		\tensor{\bigl( (\acute{\mathbf{A}}^b \ulh^c{}_{ b} )_{(21)} \bigr) }{_{f e} }  \ulh^{f \hd} \ulh^{e \he} & = -\bigl[ 2\xi^c \nu^r g_{r s} \tensor{\ulh}{^{s}_e}+  \tensor{\ulh}{^c_e}\bigr] \ulh_{d f} \xi^{d}  \ulh^{f \hd} \ulh^{e \he} \\
		&=  -\bigl[ 2\xi^c \nu^r g_{r s}\tensor{\ulh}{^{s \he}} +\tensor{\ulh}{^{c \he}} \bigr]\xi^{\hd}  = \tensor{\bigl( (\acute{\mathbf{A}}^b \ulh^c{}_{ b})_{(1 2)}\bigr) }{^{\he \hd} },  \\
		\tensor{\bigl( (\acute{\mathbf{A}}^b \ulh^c{}_{ b} )_{(31)}\bigr) }{_{\bar a \bar b e} }  \ulh^{\bar a \ha} \ulh^{\bar b \hb} \ulh^{\he e} & =  \ulh_{e \bar b} \ulh_{a \bar a} h^{a c} \ulh^{\bar a \ha} \ulh^{\bar b \hb} \ulh^{\he e}  = h^{\ha c} \ulh^{\hb \he} = \tensor{\bigl( (\acute{\mathbf{A}}^b \ulh^c{}_{b} )_{(1 3)}\bigr) }{^{\he \ha \hb} }, \\
		\tensor{\bigl( (\acute{\mathbf{A}}^b \ulh^c{}_{ b} )_{(32)}\bigr) }{^{\hd}_{\bar a \bar b}}  \ulh^{\bar a \ha} \ulh^{\bar b \hb} \ulh_{\hd f} & = \ulh_{b \bar b} \ulh_{a \bar a} h^{\hd a} h^{b c} \ulh^{\bar a \ha} \ulh^{\bar b \hb} \ulh_{\hd f} \\
		&= \ulh^{\hd \ha} \ulh^{\hb c} \ulh_{\hd f} =  \ulh^{\ha d} \ulh^{\hb c} \ulh_{d f}  = \tensor{\bigl( (\acute{\mathbf{A}}^b \ulh^c{}_{ b})_{(2 3)}\bigr) }{^{\ha \hb}_f }, \\
		\tensor{\bigl( (\acute{\mathbf{A}}^b \ulh^c{}_{ b})_{(11)}\bigr) }{^{\he}_{e}}  \ulh^{e f}  \ulh_{\he d} & = -2\xi^c \hat{h}_{e s} \ulh^{s \he}  \ulh^{e f}  \ulh_{\he d}   =  -2\xi^c \hat{h}_{\he d}  \ulh^{f \he} = \tensor{\bigl( (\acute{\mathbf{A}}^b \ulh^c{}_{ b})_{(1 1)}\bigr) }{^{f}_d }, \\
		\tensor{\bigl( (\acute{\mathbf{A}}^b \ulh^c{}_{ b})_{(22)}\bigr) }{^{\hd}_f }  \ulh_{\hd \hc} \ulh^{f \he} & = -\bigl( 2\nu^r \nu^s g_{r s} \xi^c  \xi^{\hd} \xi^{d} - 2 \xi^{(d} h^{\hd) c}  + 2\xi^c h^{\hd d} \bigr)  \ulh_{d f} \ulh_{\hd \hc} \ulh^{f \he} \\
		&= -\bigl( 2\nu^r \nu^s g_{r s} \xi^c \xi^{d} \xi^{\he} - 2 \xi^{(\he} h^{d) c} + 2\xi^c h^{\he d} \bigr)  \ulh_{d \hc} \notag\\
		&= \tensor{\bigl( ( \acute{\mathbf{A}}^b \ulh^c{}_{ b} )_{(2 2)} \bigr) }{^{ \he}_{\hc} },
	\end{align*}
	and the vanishing of the remaining blocks of $\acute{\mathbf{A}}^b \ts{\ulh}{^{c}_{b}} \eta_c$.		
	
	Next, we claim that $\acute{\mathbf{A}}^0$ and $\acute{\mathfrak{A}}$ satisfy an inequality of the form \eqref{e:coefcp}. To see why this is the case, we consider an element
	\begin{equation*}
		\vartheta = (Y^e,Z_{\hd},X_{\ha\hb},W_s)\in  T \Sigma \oplus T^\ast \Sigma \oplus T^{0}_2 \Sigma \oplus T^\ast \Sigma.
	\end{equation*}
	Then by \eqref{e:inprod} and the definition of   $\acute{\mathbf{A}}^0$ from Theorem \ref{thm-FOSHS-Maxwell}, we have
	\begin{align*}
		\la \vartheta,\acute{\mathbf{A}}^0 \vartheta \ra_{\ulh} 
		= & -\lambda   \tensor{\ulh}{^a_{d}} g_{a b}\tensor{\ulh}{^b_e}Y^eY^d -   \lambda \nu^r g_{r s} \tensor{\ulh}{^{s}_{e}}  \xi^{\hd} Z_{\hd} Y^e- \lambda   \nu^r g_{r s} \tensor{\ulh}{^{s}_{e}}  \xi^d Y^e Z_{d} \notag  \\
		& + \bigl[ -\lambda  h^{\hd d}  - \lambda \nu^r \nu^s g_{r s}  \xi^{d} \xi^{\hd} + 2 \xi^{d} \xi^{\hd} \bigr] Z_{\hd} Z_d +    h^{\ha a} h^{\hb b} X_{\ha\hb} X_{a b} +  h^{b s} W_s W_b
	\end{align*}
	from which we find, with the help of Lemma \ref{t:young2}, that
	\begin{align}\label{e:tat1}
		\la \vartheta,\acute{\mathbf{A}}^0 \vartheta \ra_{\ulh} 
		\leq & -\lambda   \tensor{\ulh}{^a_{d}} g_{a b}\tensor{\ulh}{^b_e}Y^eY^d +  \epsilon  \ulh^{\hd d}Z_{\hd}Z_{d} +\frac{1}{ \epsilon} |\xi|^2_{\ulh}  \lambda^2 \nu^r g_{r s} \tensor{\ulh}{^{s}_{e}}    \nu^c g_{c \hb} \tensor{\ulh}{^{\hb}_{b}}     Y^eY^b   \notag  \\
		& + \bigl[ -\lambda  h^{\hd d}  - \lambda \nu^r \nu^s g_{r s}  \xi^{d} \xi^{\hd} + 2 \xi^{d} \xi^{\hd} \bigr] Z_{\hd} Z_d +    h^{\ha a} h^{\hb b} X_{\ha\hb} X_{a b} +  h^{b s} W_s W_b
	\end{align}
	and
	\begin{align}\label{e:tat2}
		\la \vartheta,\acute{\mathbf{A}}^0 \vartheta \ra_{\ulh} 
		\geq & -\lambda   \tensor{\ulh}{^a_{d}} g_{a b}\tensor{\ulh}{^b_e}Y^eY^d - \epsilon  \ulh^{\hd d}Z_{\hd}Z_{d} -\frac{1}{ \epsilon} |\xi|^2_{\ulh} \lambda^2 \nu^r g_{r s} \tensor{\ulh}{^{s}_{e}}   \nu^c g_{c \hb} \tensor{\ulh}{^{\hb}_{b}}    Y^eY^b   \notag  \\
		& + \bigl[ -\lambda  h^{\hd d}  - \lambda \nu^r \nu^s g_{r s}  \xi^{d} \xi^{\hd} + 2 \xi^{d} \xi^{\hd} \bigr] Z_{\hd} Z_d +    h^{\ha a} h^{\hb b} X_{\ha\hb} X_{a b} +  h^{b s} W_s W_b
	\end{align}	
	for any $\epsilon > 0$ where we have set $|\xi|^2_{\ulh}=\ulh_{ad}\xi^a\xi^d$.
	
	On the other hand,
	\begin{align}\label{e:tAt0}
		& \la \vartheta,\acute{\mathfrak{A}} \vartheta \ra_{\ulh} \notag\\
		=& - (n-3) \lambda  \ts{\ulh}{^{\ha}_{a}} g_{s \ha} \tensor{\ulh}{^{s}_e} Y^e Y^a  -(n-3)\lambda  h^{\hd d}  Z_{\hd} Z_{d}+  h^{\ha a}  h^{\hb b} X_{\ha\hb}X_{a b} +\frac{1}{2} h^{b s}W_s W_b. 
	\end{align}	 
	By setting $\epsilon=|\xi|_{\ulh}$ and
	taking $R>0$ small enough, it then follows from \eqref{e:tat1}--\eqref{e:tAt0} that there exist constants $\kappa, \gamma_1, \gamma_2>0$ such that the inequality \eqref{e:coefcp} holds.

	Finally, we note that it is
	straightforward to check that the operators $\acute{\mathbf{A}}^0$,  $\acute{\mathbf{A}}^b \ts{\ulh}{^{c}_{b}} \eta_c$ and $\acute{\mathfrak{A}}$ verify the remaining stated properties from the coefficient assumptions $(2)$ and $(4)$. This completes the verification of the coefficient assumptions $(2)$ and $(4)$.
	
	\bigskip
	
	\noindent \underline{Assumptions $(1)$ \& $(5)$:} 	
	With our choice of $\acute\Pbb=\mathds{1}$ and hence $\acute{\Pbb}^\perp =0$, we immediately have that $[\acute{\Pbb}, \, \acute{\mathfrak{A}}]=0$, and the only non-vanishing relation involving $\Theta^\prime(0)$ is the first one, which reads
	\begin{equation*}
		\la v, \Theta^\prime(0)  v\ra_{\ulh}=\mathcal{O}(\theta v\otimes v+|t|^{-\frac{1}{2}}\beta_0v\otimes  v+|t|^{-1}\beta_1 v\otimes  v).
	\end{equation*}
	It is straightforward to verify that this condition on $\Theta^\prime(0)$ is satisfied for an appropriate choice of constants $\theta$, $\beta_0$ and $\beta_1$. Since it is also straightforward verify the other stated properties from the coefficient assumptions $(1)$ and $(5)$, we will omit the details of the verification of these properties. This completes the verification of the coefficient assumptions $(1)$ and $(5)$.  
	
	\bigskip
	
	\noindent \underline{Assumption $(3)$:}  Noticing that the maps $\acute{G}_0$ and $ \acute{G}_1$ from the expansion of 
	$\acute{G}$ given in Theorem \ref{thm-FOSHS-Maxwell} are both analytic for $(t,\mathbf{U})\in \bigl(-\iota,\frac{\pi}{H}\bigr)$ for $\iota,R>0$ small enough and vanish for $\mathbf{U}=0$, it follows immediately that there exists constants $\lambda_\ell\geq 0$, $\ell=1,2$, such the all the conditions on the map $\acute{G}$ from the coefficient assumption $(3)$ are satisfied. This completes the verification of the coefficient assumption $(3)$.
	
	\bigskip
	
	\noindent \underline{$3.$ The combined system.} The above arguments yield positive constants  $R, \kappa,\gamma_1,\gamma_2>0$, and non-negative  $\lambda_\mathcal{l},\theta,\beta_\mathcal{k}\geq 0$, $\mathcal{l}=1,2$ and $\mathcal{k}=0,\cdots,7$, such that the coefficients of the Fuchsian system \eqref{e:FchEYM} verify the assumptions $(1)$--$(5)$ from \S\ref{s:mdlasp1}. 
	To complete the poof, we make a number of observations with regard to the selection of some of the parameters beginning with the observation that the term from the Fuchsian system \eqref{e:FchEYM} that corresponds to $G_2$ vanishes, which allows us to take $\lambda_3=0$. Next, we make the following observations with regard to the parameters $\beta_{2k+1}$, $k=0,1,2,3$:
	\begin{enumerate}
		\item The $1/t$ singular term from $ \la v, \Pbb(\pi(v))\Theta^\prime(0)\Pbb(\pi(v)) v\ra_{\ulh}$ is of the form $\mathcal{O}(C_1 |t|^{-1} \Pbb(\pi(v))v\otimes \Pbb(\pi(v))v\otimes \Pbb(\pi(v))v)$ for some constant $C_1>0$. Thus for any $\beta_1>0$, we can, by choosing $R>0$ suffuciently small, ensure that $C_1R\leq \beta_1 $. 
		\item The $1/t$ singular term of from $ \la v, \Pbb(\pi(v))\Theta^\prime(0)\Pbb^\perp (\pi(v)) v\ra_{\ulh}$ is of the form $\mathcal{O}(C_3 |t|^{-1} \Pbb(\pi(v))v\otimes \Pbb(\pi(v))v\otimes \Pbb^\perp (\pi(v))v)$ for some constant $C_3>0$. As a consequence, for any $\beta_3>0$, we can arrange that $C_3R^2\leq \beta_3 $ by choosing $R>0$ small. By similar considerations, for any choose of $\beta_5>0$, we can guarantee that $C_5R^2\leq \beta_5 $ holds by choosing $R>0$ sufficiently small. 
		\item Noting that the term $ \la v, \Pbb^\perp(\pi(v))\Theta^\prime(0)\Pbb^\perp(\pi(v)) v\ra_{\ulh}$ only involves the Einstein components, similar arguments as employed in \cite[\S7.1]{Liu2018} (see also in  \cite{Liu2018b}) imply that this term contains no $1/t$ singular term, and as a consequence, we can take $\beta_7=0$. 
	\end{enumerate}
	From these considerations and the fact that the parameters $\beta_k$ can be chosen independently of $\kappa$ and $\gamma_1$, which are both positive, it is then clear that we can ensure that the inequality \eqref{e:kineq} holds by choosing $R>0$ sufficiently small. This complete the proof.
\end{proof}

\begin{remark}\label{constants-rem}
	Although we have established that Fuchsian system \eqref{e:FchEYM} satisfies all of the coefficient assumptions from \S\ref{s:mdlasp1} for some choice of constants, we have not calculated explicit values for these constants. This is because explicit values are only required for determining the decay rates, which we will not make use of in any subsequent arguments.
\end{remark}

Now that we have verified that the Fuchsian system \eqref{e:FchEYM} satisfies all of the coefficient assumptions from \S\ref{s:mdlasp1}, we can apply Theorem \ref{t:glex}, after making the time transformation $t\longmapsto -t$,  to it to obtain the global existence result contained in the following theorem.  

\begin{theorem}\label{t:Uglobal}
	Suppose $s\in\Zbb_{> \frac{n+1}{2}}$, then there exist constants $\delta_0>0$ and $C>0$ such that if the initial data $\widehat{\mathbf{U}}_0$
	satisfies $\lVert \widehat{\mathbf{U}}_0\rVert_{H^s}\leq \delta$ for any $\delta \in (0,\delta_0]$, then there exists a unique solution\footnote{Recall that $\Sigma=\mathbb{S}^{n-1}$ and the initial conformal time $t=\frac{\pi}{2H}$ corresponds to the physical time $\tau=0$.}
	\begin{align*} 
		\widehat{\mathbf{U}} \in C^0\Bigl(\Bigl(0, \frac{\pi}{2H}\Bigr],H^s(\Sigma;V)\Bigr) \cap C^1\Bigl(\Bigl(0, \frac{\pi}{2H}\Bigr],H^{s-1}(\Sigma;V)\Bigr)\cap \Li \Bigl(\Bigl(0, \frac{\pi}{2H}\Bigr],H^s(\Sigma;V)\Bigr)
	\end{align*}
	to the Fuchsian equation \eqref{e:FchEYM} on $\bigl(0, \frac{\pi}{2H}\bigr]\times \Sigma$ that satisfies $\widehat{\mathbf{U}}|_{t=\frac{\pi}{2H}}=\widehat{\mathbf{U}}_0$. 
	Moreover, the solution $\widehat{\mathbf{U}}$ satisfies the energy estimate 
	\begin{equation}\label{e:ineq1}
		\lVert \widehat{\mathbf{U}}(t)\rVert_{H^s}^2 + \int_t^{\frac{\pi}{2H}}\frac{1}{\bar t }\lVert \widehat{\Pbb} \widehat{\mathbf{U}}(\bar t )\rVert^2_{H^s}d \bar t  \leq C  \delta^2
	\end{equation}
	for all $t\in \bigl(0, \frac{\pi}{2H}\bigr]$. 
\end{theorem}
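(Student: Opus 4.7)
The plan is to deduce the theorem directly from the abstract global existence result, Theorem \ref{t:glex}, applied to the Fuchsian system \eqref{e:FchEYM}. The first step is a cosmetic time reversal: Theorem \ref{t:glex} is formulated on an interval $[T_0,0)$ with the Fuchsian singularity at $t=0$, whereas our equation \eqref{e:FchEYM} lives on $\bigl(0,\tfrac{\pi}{2H}\bigr]$ with the singularity at $t=0$ and initial data prescribed at $t=\tfrac{\pi}{2H}$. The transformation $t\longmapsto -t$ sends $\bigl(0,\tfrac{\pi}{2H}\bigr]$ to $\bigl[-\tfrac{\pi}{2H},0\bigr)$, flips the sign of $\nu^c\nb_c$, and converts $\frac{1}{Ht}$ to $-\frac{1}{H|t|}$; after absorbing the resulting overall sign into $\widehat{\mathcal{B}}$ (which does not affect any of the structural properties verified in Lemma \ref{sym-A-B}), the system is put in the standard form \eqref{e:modeq} with $T_0=-\tfrac{\pi}{2H}$.

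Next, I would invoke Lemma \ref{sym-A-B}, which is the real technical content of the verification: it produces positive constants $R,\kappa,\gamma_1,\gamma_2$ and non-negative constants $\lambda_\ell,\theta,\beta_k$ such that the coefficients $\widehat{\mathbf{A}}^0$, $\widehat{\mathbf{A}}^b\ts{\ulh}{^c_b}$, $\widehat{\mathcal{B}}=\widehat{\mathfrak{A}}\widehat{\Pbb}$ and $\widehat{G}(t,\widehat{\mathbf{U}})$ satisfy the coefficient assumptions $(1)$--$(5)$ of \S\ref{s:mdlasp1}, together with the essential spectral condition
\begin{equation*}
\kappa>\tfrac{1}{2}\gamma_1\Bigl(\sum_{k=0}^{3}\beta_{2k+1}+2\lambda_3\Bigr).
\end{equation*}
I would also remark that, since the Fuchsian equation \eqref{e:FchEYM} is built from expressions which all vanish at $\widehat{\mathbf{U}}=0$ (the conformal de Sitter background corresponds to $\widehat{\mathbf{U}}=0$), the inhomogeneous term satisfies $\mathring{G}\equiv 0$ in the expansion required by assumption $(3)$, so the auxiliary smallness requirement $\sup_\tau\|\mathring{G}(\tau)\|_{H^s}\leq\delta$ is automatic.

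With these structural conditions in place, Theorem \ref{t:glex} supplies, for every $s>\tfrac{n+1}{2}$, a constant $\delta_0>0$ such that whenever $\|\widehat{\mathbf{U}}_0\|_{H^s}\leq\delta\leq\delta_0$, there is a unique solution
\begin{equation*}
\widehat{\mathbf{U}}\in C^0\Bigl(\bigl(0,\tfrac{\pi}{2H}\bigr],H^s\Bigr)\cap C^1\Bigl(\bigl(0,\tfrac{\pi}{2H}\bigr],H^{s-1}\Bigr)\cap L^\infty\Bigl(\bigl(0,\tfrac{\pi}{2H}\bigr],H^s\Bigr)
\end{equation*}
of \eqref{e:FchEYM} with $\widehat{\mathbf{U}}|_{t=\pi/(2H)}=\widehat{\mathbf{U}}_0$, together with the energy estimate
\begin{equation*}
\|\widehat{\mathbf{U}}(t)\|_{H^s}^2+\int_t^{\pi/(2H)}\frac{1}{\bar t}\|\widehat{\Pbb}\widehat{\mathbf{U}}(\bar t)\|_{H^s}^2\,d\bar t\;\lesssim\;\|\widehat{\mathbf{U}}_0\|_{H^s}^2\leq C\delta^2,
\end{equation*}
which is exactly \eqref{e:ineq1} after absorbing the implicit constant into $C$.

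The genuine obstacle was not this final deduction but the verification encapsulated in Lemma \ref{sym-A-B}; in particular, the dimension-dependent choice of the parameters $\tta,\ttb,\ttj,\ttk,\tte,\ttf$ carried out in \S\ref{sec-decom-Einstein} and the quantitative spectral inequality $\kappa>\tfrac{1}{2}\gamma_1(\sum\beta_{2k+1}+2\lambda_3)$ are what make the abstract machinery applicable here. Once those inputs are available, the present theorem is a direct citation of \cite[Thm.~3.8]{Beyer2020} in the adapted form stated as Theorem \ref{t:glex}, with the reduced regularity assumption being permissible because, as observed in \S\ref{s:glbex}, the spatial derivative operator $\widehat{\mathbf{A}}^b\ts{\ulh}{^c_b}\nb_c$ in \eqref{e:FchEYM} contains no $1/t$-singular factors so that the auxiliary smoothing lemma of \cite{Beyer2020} need not be invoked.
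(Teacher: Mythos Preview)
Your proposal is correct and follows exactly the paper's approach: the paper states just before Theorem \ref{t:Uglobal} that, having verified via Lemma \ref{sym-A-B} that the Fuchsian system \eqref{e:FchEYM} satisfies all the coefficient assumptions of \S\ref{s:mdlasp1}, one applies Theorem \ref{t:glex} after the time transformation $t\mapsto -t$. Your write-up in fact supplies more detail than the paper does (in particular the observation that $\mathring{G}\equiv 0$ so the auxiliary smallness condition on $\mathring{G}$ is trivially satisfied), but the logical route is identical.
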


\subsection{Improved estimates for the conformal Yang--Mills component of $\widehat{\mathbf{U}}$}\label{S:YMipv}	
The conformal Yang--Mills fields correspond to the last four entries $(\tE^e,E_{d},H_{ab}, \bar A_s)$ of the vector $\widehat{\mathbf{U}}$ defined by \eqref{e:hatu}. Given a solution $\widehat{\mathbf{U}}$ from Theorem \ref{t:Uglobal} of the Fuchsian system \eqref{e:FchEYM}, we establish improved bounds on the fields $(\tE^e,E_{d},H_{ab}, \bar A_s)$ by showing that the renormalized conformal Yang--Mills fields
\begin{align}\label{e:newYM}
	\p{\mathring{\tE}^e, \mathring{E}_{d},\mathring{H}_{a b},\mathring{A}_s}^{\tr}=\diag\{t^{-1},t^{-1},t^{-1},t^{-\frac{1}{2}}\}	\p{\tE^e,E_{d},H_{ab}, \bar A_s}^{\tr},
\end{align}
remain bounded. The precise statement of this result is given in the following corollary. 

\begin{corollary}\label{t:impv} Suppose $\widehat{\mathbf{U}}$ is a solution of the Fuchsian system \eqref{e:FchEYM} from Theorem \ref{t:Uglobal}, the initial data and the constants $s$, $\delta$ and $\delta_0$ are as given in that theorem.
	Then there exists a constant $C>0$, independent of $\delta \in (0,\delta_0]$, such that the renormalized conformal Yang--Mills fields \eqref{e:newYM} are uniformly bounded by \[\lVert \mathring A_a (t)\rVert_{H^s} + \lVert \mathring E_a (t)\rVert_{H^s} + \lVert \mathring H_{a b} (t)\lVert _{H^s}  \leq C\delta \]  
	for all $t\in \bigl(0, \frac{\pi}{2H}\bigr]$. 
\end{corollary}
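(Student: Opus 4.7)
The plan is to derive a new Fuchsian system for the enlarged vector $(\widehat{\mathbf{U}}_{\text{grav}}, \mathring U_{\text{YM}})$, where $\mathring U_{\text{YM}}=(\mathring\tE^e,\mathring E_d,\mathring H_{ab},\mathring A_s)$ comes from the Yang--Mills part of $\widehat{\mathbf{U}}$ via \eqref{e:newYM}, and to apply Theorem \ref{t:glex} a second time to it. Substituting $\tE^e=t\mathring\tE^e$, $E_d=t\mathring E_d$, $H_{ab}=t\mathring H_{ab}$, $\bar A_s=\sqrt t\,\mathring A_s$ in each Yang--Mills equation of Theorem \ref{thm-FOSHS-Maxwell} and using $\nu^c\nb_c=-\tfrac{1}{H}\partial_t$, the chain rule contributes principal-part terms $-\tfrac{1}{H}\mathring F$ (and $-\tfrac{1}{2H}\mathring A_s$) that exactly cancel the diagonal entries of $\tfrac{1}{Ht}\acute{\mathcal{B}}$ on the $H_{ab}$ and $\bar A_s$ blocks, and reduce the indicial eigenvalue from $n-3$ to $n-4\geq 0$ on the $\tE^e$ and $E_d$ blocks. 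After dividing each rescaled equation by its renormalization factor, the outcome is a symmetric hyperbolic Fuchsian system whose $\tfrac{1}{t}$-singular matrix has non-negative spectrum. In the new decomposition $\widehat{\mathcal{B}}_{\text{new}} = \mathfrak{A}_{\text{new}}\Pbb_{\text{new}}$ for the renormalized system, $\Pbb_{\text{new}}$ acts as the identity on the $\tE^e,E_d$ blocks (with positive-definite $\mathfrak{A}_{\text{new}}$ for $n\geq 5$, vanishing for $n=4$) and as zero on the $H_{ab},\bar A_s$ blocks, placing the latter in $\Pbb_{\text{new}}^\perp$.

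Next, the source terms of the renormalized equations need to fit assumption $(3)$ of \S\ref{s:mdlasp1}. The $\tfrac{1}{\sqrt t}\acute G_1$ contributions from Theorem \ref{thm-FOSHS-Maxwell} consist of the bilinear commutators $[\bar A_c, H_{d\ha}]$ and $[\bar A_c, E_{\ha}]$, which scale as $t^{3/2}$ under the renormalization, together with the single linear source $E_b\sim t\mathring E_b$ in the $\bar A_s$ equation; dividing by the rescaling factor turns all of these into bounded analytic maps of $\mathring U_{\text{YM}}$. The Taylor residuals $\tfrac{1}{\tan(Ht)}-\tfrac{1}{Ht}=O(t)$ and $\tfrac{\sqrt H}{\sqrt{\sin(Ht)}}-\tfrac{1}{\sqrt t}=O(t^{3/2})$ inside the $\widehat\Delta_i$'s supply the extra factor of $t$ (respectively $t^{3/2}$) needed to absorb the leftover $t^{-1/2}$ into a regular $G_0$ piece that is analytic and vanishes at $\mathring U_{\text{YM}}=0$. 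The off-diagonal entries of $\acute{\mathbf{A}}^0$ and $\acute{\mathbf{A}}^f\ulh^c{}_f$ carry the shift $\xi^a=\ttb t\,p^a=O(t)$, so any apparently-singular cross-coupling between blocks of different rescaling exponents is in fact bounded, preserving the symmetry and positive-definiteness required by $(2)$ and $(4)$.

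The enlarged system for $(\widehat{\mathbf{U}}_{\text{grav}}, \mathring U_{\text{YM}})$ therefore satisfies the coefficient assumptions $(1)$--$(5)$ of \S\ref{s:mdlasp1} via the same arguments as in Lemma \ref{sym-A-B}, adapted to the new block decomposition of the singular matrix and possibly with enlarged constants. Since $\|\mathring U_{\text{YM}}|_{t=\pi/(2H)}\|_{H^s}\leq C(H)\delta$, Theorem \ref{t:glex} then produces the stated uniform bound $\|\mathring A_a(t)\|_{H^s}+\|\mathring E_a(t)\|_{H^s}+\|\mathring H_{ab}(t)\|_{H^s}\leq C\delta$ on $(0,\pi/(2H)]$. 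The main obstacle will be the critical dimension $n=4$, in which the reduced singular matrix also vanishes on the $\tE^e,E_d$ blocks; one must then carefully check that every surviving $|t|^{-1/2}$ residual on the $\Pbb_{\text{new}}^\perp$ components is controlled by $\Pbb_{\text{new}} v$, i.e.\ by the gravitational singular variables $(m, p^a, s^{ab}, s)$. This relies on the fact that the bilinear commutators $[\bar A, E]$ and $[\bar A, H]$ in the Yang--Mills sources, after rescaling, vanish quadratically in $\mathring U_{\text{YM}}$ and are therefore absorbable into $G_0$, leaving only linear $t^{-1/2}$-coefficients tied to the gravitational block.
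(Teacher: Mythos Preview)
Your approach is essentially the paper's: substitute the rescaled variables into the Yang--Mills block of \eqref{e:FchEYM}, observe that the indicial weight drops from $n-3$ to $n-4$ on the $\tE,E$ blocks and to $0$ on the $H,\bar A$ blocks, check that the source terms become analytic in the renormalized variables, and reapply Theorem~\ref{t:glex}. Two small differences are worth noting. First, the paper does not form an enlarged system $(\widehat{\mathbf{U}}_{\text{grav}},\mathring U_{\text{YM}})$; it writes a stand-alone Fuchsian system \eqref{e:newymeq1} for $\mathring{\mathbf{V}}=(\mathring\tE,\mathring E,\mathring H,\mathring A)$ in which the gravitational components of $\widehat{\mathbf{U}}$ enter only as already-bounded coefficients supplied by Theorem~\ref{t:Uglobal}. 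Your enlarged-system formulation is a legitimate alternative and arguably cleaner with respect to the hypotheses of \S\ref{s:mdlasp1}. Second, your closing paragraph overcomplicates the $n=4$ case: the paper's key observation (which your own commutator-scaling argument already contains) is that the rescaled source $\mathfrak{S}(t,\mathring{\mathbf{U}})$ is \emph{fully analytic} in $(t,\mathring{\mathbf{U}})$ and vanishes at $\mathring{\mathbf{U}}=0$---there are no surviving $|t|^{-1/2}$ residuals on any block, in any dimension. Hence $n=4$ requires no separate treatment beyond choosing $\Pbb_{\text{new}}=0$ on the Yang--Mills blocks and a compatible positive $\mathfrak{A}_{\text{new}}$ there.
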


\begin{remark}
	It is not difficult to verify that the renormalized conformal Yang--Mills fields $\mathring{E}_{d}$, $\mathring{H}_{ab}$, and $\mathring{A}_s$ are quantitatively equivalent to the physical Yang--Mills variables $\tilde{F}_{ab}$ and $\tilde{A}_s$. Because of this, Corollary \ref{t:impv} yields uniform bounds for the physical Yang--Mills variables.
\end{remark}
\begin{proof}[Proof of Corollary \ref{t:impv}] 
	We begin by noting from \eqref{def-U} and \eqref{e:hatu} that $\mathbf{U}$ and $\widehat{\mathbf{U}}$ are uniquely determine each other and have comparable norms. As a consequence, in the following arguments, we do not distinguish between the two. Next, by substituting \eqref{e:newYM} into \eqref{Maxwell-FOSHS}, we get 
	\begin{align}\label{e:newymeq0}
		&- \acute{\mathbf{A}}^0\diag\{t,t,t,\sqrt{t}\} \nu^c  \nb_{c} \mathring{\mathbf{V}} + \acute{\mathbf{A}}^b\ts{\ulh}{^{c}_{b}} \diag\{t,t,t,\sqrt{t}\}   \nb_{c} \mathring{\mathbf{V}} \notag   \\
		=&\frac{1}{Ht}\acute{\mathcal{B}}  \diag\{t,t,t,\sqrt{t}\}\mathring{\mathbf{V}}+\acute{G}(t,\mathbf{U})-\frac{1}{H} \acute{\mathbf{A}}^0 \diag\Bigl\{1,1,1,\frac{1}{2\sqrt{t}}\Bigr\}\mathring{\mathbf{V}}
	\end{align}
	where we have set
	\begin{equation*}
		\mathring{\mathbf{V}}:=\p{\mathring{\tE}^e, \mathring{E}_{\hd},\mathring{H}_{\ha\hb},\mathring{A}_s}^{\tr}.
	\end{equation*}  
	Then multiplying \eqref{e:newymeq0} on the left by $(\diag\{t,t,t,\sqrt{t}\})^{-1}$ leads to
	\begin{align}\label{e:newymeq1}
		&- \acute{\mathbf{A}}^0 \nu^c  \nb_{c}   \mathring{\mathbf{V}}+  \acute{\mathbf{A}}^b  \ts{\ulh}{^{c}_{b}}  \nb_{c}   \mathring{\mathbf{V}} 
		= \frac{1}{Ht} \mathring{\mathbf{B}} \mathring{\mathbf{V}} + \mathfrak{S}(t,\mathring{\mathbf{U}}) 
	\end{align}
	where 
	\begin{align}
		\mathbf{\mathring{U}}&=(m,\, p^a,\, m_d, \, \tp{a}{d}, \, s^{ab}, \, \tss{ab}{d}, \, s, \, s_d, \,\mathring{\mathbf{V}})^{\tr}, \label{e:rgu}\\
		\mathring{\mathbf{B}}&=\p{- (n-4) \lambda \ulh^{\hb \he} \tensor{\ulh}{^a_{\hb}} g_{a b}\tensor{\ulh}{^b_e}   & 0  & 0 & 0\\
			0 &  -(n-4)\lambda h^{\hd d} \ulh_{d f} & 0 & 0\\
			0 & 0 & 0 & 0 \\
			0 & 0 & 0 & 0}, \\
		\mathfrak{S} (t,\mathbf{\mathring{U}})&=\p{\mathfrak{S}^{\he}_1(t,\mathbf{\mathring{U}}),
			\mathfrak{S}_{2 f}(t,\mathbf{\mathring{U}}),
			\mathfrak{S}_{3 \bar a\bar b}(t,\mathbf{\mathring{U}}),
			\mathfrak{S}_{4 o}(t,\mathbf{\mathring{U}})  }^{\tr}
	\end{align}
	and the components of $\mathfrak{S}$ are given by 
	\begin{align*}
		\mathfrak{S}^{\he}_1(t,\mathbf{\mathring{U}})= &  H^{-1} \ttb \lambda \nu^r g_{r s} \tensor{\ulh}{^{s \he}} p^{\hd} \mathring{E}_{\hd} + t^{-1} \mathfrak{D}^{  \he}_1(t,\mathbf{U}) -t^{-\frac{3}{2}} \ulh^{\he e} \Xi_{1 e}(E_d,H_{ab},\bar{A}_s),   \\
		\mathfrak{S}_{2f}(t,\mathbf{\mathring{U}})=&	H^{-1} \lambda \nu^r g_{r s} \tensor{\ulh}{^{s}_{e}} \ulh_{df} \ttb p^d \mathring{\tE}^e - H^{-1} \ttb^2 t \bigl[ -   \lambda \nu^r \nu^s g_{r s}  p^{d} p^{\hd} + 2   p^{d} p^{\hd} \bigr] \ulh_{d f} \hat{E}_{\hd} \notag \\
		& +t^{-1}\mathfrak{D}_{2 f}(t,\mathbf{U}) +t^{-\frac{3}{2}} \ulh_{df} h^{d \ha} \Xi_{1\ha}(E_d,H_{ab},\bar{A}_s),   \\
		\mathfrak{S}_{3 \bar a\bar b}(t,\mathbf{\mathring{U}})=& t^{-1} \mathfrak{D}_{3 \bar a\bar b}(t,\mathbf{U})- t^{-\frac{3}{2}} \ulh_{a \bar a} \ulh_{b \bar a} h^{a \ha} \Xi_{2\ha}^b(E_d,H_{ab},\bar{A}_s),   \\
		\mathfrak{S}_{4 o}(t,\mathbf{\mathring{U}})=& t^{-\frac{1}{2}} \mathfrak{D}_{4 o} (t,\mathbf{U})+t^{-1} \ulh_{o r} h^{r a} \Xi_{3a}(E_d,H_{ab},\bar{A}_s).   
	\end{align*}
	In the above definition, we note that the maps $\mathfrak{S}^{\he}_1(t,\mathbf{\mathring{U}})$, $\mathfrak{S}_{2f}(t,\mathbf{\mathring{U}})$, $\mathfrak{S}_{3 \bar a\bar b}(t,\mathbf{\mathring{U}})$ and $\mathfrak{S}_{4 o}(t,\mathbf{\mathring{U}})$, the maps $\mathfrak{D}_i$, $i=1,\cdots, 4$, and $\Xi_j$, $j=1,2,3$, are expressed in terms of the variables $(t,\mathbf{U})$ and $(E_d,H_{ab},\bar{A}_s)$. This makes sense since $\mathring{\mathbf{U}}$ is determined by $(t,\mathbf{U})$ as can be seen from \eqref{def-U}, \eqref{e:newYM} and the definition of $\mathbf{\mathring{U}}$ given above. 
	
	We now claim that the maps $\mathfrak{S}^{\he}_1(t,\mathbf{\mathring{U}})$, $\mathfrak{S}_{2f}(t,\mathbf{\mathring{U}})$, $\mathfrak{S}_{3 \bar a\bar b}(t,\mathbf{\mathring{U}})$ and $\mathfrak{S}_{4 o}(t,\mathbf{\mathring{U}})$ are analytic for $(t,\mathbf{\mathring{U}}) \in (-\iota,\frac{\pi}{H})\times B_R(0)$ for $\iota,R>0$  sufficiently small and vanish for $\mathring{\mathbf{U}}=0$. 
	To see why this is the case, 
	we observe from Lemmas \ref{lem-maxwell-hyperbolic-0}--\ref{lem-FOSHS-Maxwell} and Theorem \ref{thm-FOSHS-Maxwell} that each term in $\mathfrak{D}_{i} (t, \mathbf{U})$, $i=1,2,3$, involves one of the factors $E_a, \, H_{bc}$, $[\bar A_a, E_b]$,  and $[\bar A_a, H_{bc}]$, while each term in $\mathfrak{D}_{4} (t, \mathbf{U})$ involves either $\bar A_a$ or $E_b$ as a factor. From this observation, it can then be verified via a direct calculation that $t^{-1} \mathfrak{D}_{i} (t, \mathbf{U})$, $i=1,2,3$, and  $t^{-\frac{1}{2}} \mathfrak{D}_{4} (t, \mathbf{U})$ are analytic in $(t,\, \mathbf{\mathring{U}})$. Analogously, each term in $\Xi_i$, $i =1,2$, contains one of the factors $[\bar A_a, E_b]$, $[\bar A_a, H_{b c}]$ and $\Xi_{3a} = E_a$, and consequently,  $t^{-\frac{3}{2}}\Xi_i$, $i =1,2$ and $t^{-1} \Xi_3$ are  analytic in $(t,\, \mathbf{\mathring{U}})$. For example, a calculation shows that
	\als{
		t^{-\frac{3}{2}} \ulh_{df} h^{d \ha} \Xi_{1\ha} = t^{-\frac{3}{2}} \ulh_{df} h^{d \ha} h^{\hd c} [\bar A_c, H_{\hd \ha}]  
		= & \ulh_{df} h^{d \ha} h^{\hd c} [\mathring{A}_c, \mathring{H}_{\hd \ha}].
	}
	Using these types of arguments, the analytic dependence of the maps $\mathfrak{S}^{\he}_1(t,\mathbf{\mathring{U}})$, $\mathfrak{S}_{2f}(t,\mathbf{\mathring{U}})$, $\mathfrak{S}_{3 \bar a\bar b}(t,\mathbf{\mathring{U}})$ and $\mathfrak{S}_{4 o}(t,\mathbf{\mathring{U}})$ of the variables $(t,\mathbf{\mathring{U}})$ can be verified by direct calculations. 
	
	It then follows from the above observations, the arguments from the proof of Lemma \ref{sym-A-B}, and the fact that $\widehat{\mathbf{U}}$ is a solution of the Fuchsian system \eqref{e:FchEYM} from Theorem \ref{t:Uglobal}  that, for $R>0$ chosen small enough, there exist positive constants $\kappa,\gamma_1,\gamma_2>0$ and non-negative constants $\lambda_\mathcal{l},\theta,\beta_\mathcal{k}\geq 0$, $\mathcal{l}=1,2,3$ and $\mathcal{k}=0,\ldots,7$, such that the Fuchsian system \eqref{e:newymeq1} satisfies the assumptions $(1)$--$(5)$ from \S\ref{s:mdlasp1} and the inequality \eqref{e:kineq}. Moreover, we observe from \eqref{e:newYM} that
	\begin{align*} 
		\mathring{\mathbf{V}}|_{t=\frac{\pi}{2H}}=\diag\Bigl\{\frac{2H}{\pi},\frac{2H}{\pi},\frac{2H}{\pi},\sqrt{\frac{2H}{\pi}}\Bigr\}	\p{\tE^e,E_{d},H_{ab}, \bar A_s}^{\tr}|_{t=\frac{\pi}{2H}}. 
	\end{align*}
	Since $H^s$ norm of the right hand side of the above expression is bounded by $\norm{\widehat{\mathbf U}_0}$, which by assumption satisfies
	$\norm{\widehat{\mathbf U}_0}\leq \delta$, we have that $\lVert \mathring{\mathbf{V}}(\frac{\pi}{2H})\rVert_{H^s}\leq C\delta$. We therefore conclude via an application of Theorem \ref{t:glex} that
	$\mathring{\mathbf{V}}$ is bounded by 
	$\lVert \mathring{\mathbf{V}}(t)\rVert_{H^s} \leq C  \delta$ for all $t\in \bigl(0,\frac{\pi}{2H}\bigr]$,
	which completes the proof.
\end{proof}

\section{Gauge transformations\label{s:maprf}}
In the proof of the main result, Theorem \ref{t:mainthm}, we need to perform a gauge transformation in order to change from a formulation of the Einstein--Yang--Mills equations that is useful for establishing the local-in-time existence of solutions to a formulation that is suited to establishing global bounds. In particular, the local-in-time existence
of solutions to the Einstein--Yang--Mills equations is obtained from applying the local existence theory developed in \cite{LW2021b}, which requires the use of the adapted temporal gauge \eqref{phys-temp-gauge} and the wave gauge \eqref{e:prewg}.  On the other hand, we obtain global estimates from our Fuchsian formulation of the conformal Einstein--Yang--Mills equations, which was derived using the temporal and wave  gauges \eqref{e:temgg} and \eqref{E:CONSTR1}, respectively. These two gauge choices are not equivalent and the transformation between the two has to be taken into account.  In this section, we collect together the technical results that are needed to establish that this gauge transformation is well defined.

\begin{lemma}\label{t:slrl1}
	Suppose $(\tilde{g}^{ab}, \, \tilde{A}^\star_a)$ is a solution of the Einstein--Yang--Mills equations \eqref{eq-einstein}--\eqref{eq-maxwell-bianchi-o} in the temporal gauge \eqref{phys-temp-gauge}, 
	$\tilde{\mfu} : \widetilde{\mathcal{M}} \rightarrow G$ solves the differential equation 
	\begin{align} \label{gauge-ode}
		\del{\tau}\tilde{\mathfrak{u}} = -\tilde{A}^\star_0\tilde{\mathfrak{u}},
	\end{align}
	and let
	\begin{align}\label{e:ggtsf1}
		\tilde{A}_a=\tilde{\mathfrak{u}}^{-1}\tilde{A}^\star_a\tilde{\mathfrak{u}}+\tilde{\mathfrak{u}}^{-1}(d\tilde{\mathfrak{u}})_a \quad (i.e., \tilde{F}_{ab}=\tilde{\mathfrak{u}}^{-1} \tilde{F}^\star_{ab} \tilde{\mathfrak{u}} ).
	\end{align}
	Then $(\tilde{g}^{ab},  \, \tilde{A}_a)$
	determines a solution of Einstein--Yang--Mills equations \eqref{eq-einstein}--\eqref{eq-maxwell-bianchi-o} in the temporal gauge $\tilde{A}_a\tilde{\nu}^a=0$ where $\tilde{\nu}^a = \tilde{\ulg}^{ab}(d\tau)_b$.
\end{lemma}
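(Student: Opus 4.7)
The plan is to use standard gauge covariance of the Einstein--Yang--Mills system together with the ODE \eqref{gauge-ode} to achieve the desired gauge condition. I would break the argument into three short steps.

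First, I would verify that the transformation law \eqref{e:ggtsf1} implies the curvature transforms by conjugation, i.e.\ $\tilde{F}_{ab} = \tilde{\mathfrak{u}}^{-1}\tilde{F}^\star_{ab}\tilde{\mathfrak{u}}$. This is a direct computation from the definition \eqref{def-F}: inserting \eqref{e:ggtsf1} into $\tilde{\nabla}_a \tilde{A}_b - \tilde{\nabla}_b \tilde{A}_a + [\tilde{A}_a,\tilde{A}_b]$, the Maurer--Cartan-type terms involving $\tilde{\mathfrak{u}}^{-1}(d\tilde{\mathfrak{u}})_a$ cancel after using $d(\tilde{\mathfrak{u}}^{-1}) = -\tilde{\mathfrak{u}}^{-1}(d\tilde{\mathfrak{u}})\tilde{\mathfrak{u}}^{-1}$.

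Second, I would argue that $(\tilde{g}^{ab},\tilde{A}_a)$ still satisfies the Einstein--Yang--Mills system \eqref{eq-einstein}--\eqref{eq-maxwell-bianchi-o}. Since the Ad-invariant inner product $\cdot$ is conjugation-invariant and $\tilde{F}_{ab} = \tilde{\mathfrak{u}}^{-1}\tilde{F}^\star_{ab}\tilde{\mathfrak{u}}$, the stress energy tensor $\tilde{T}_{ab}$ is gauge invariant, so the Einstein equation \eqref{eq-einstein} continues to hold for the unchanged metric. For \eqref{eq-maxwell-div-o} and \eqref{eq-maxwell-bianchi-o}, the gauge covariant derivative transforms by conjugation, $\tilde{D}_a(\,\cdot\,) = \tilde{\mathfrak{u}}^{-1}\tilde{D}^\star_a(\tilde{\mathfrak{u}}(\,\cdot\,)\tilde{\mathfrak{u}}^{-1})\tilde{\mathfrak{u}}$, which gives $\tilde{D}^a\tilde{F}_{ab} = \tilde{\mathfrak{u}}^{-1}(\tilde{D}^{\star a}\tilde{F}^\star_{ab})\tilde{\mathfrak{u}}$ and $\tilde{D}_{[a}\tilde{F}_{bc]} = \tilde{\mathfrak{u}}^{-1}(\tilde{D}^\star_{[a}\tilde{F}^\star_{bc]})\tilde{\mathfrak{u}}$; thus the Yang--Mills divergence and Bianchi equations are inherited from those satisfied by $\tilde{A}^\star$.

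Third, I would verify the new temporal gauge condition. A short calculation using \eqref{E:DESMETR} shows $\tilde{\nu}^a = \tilde{\ulg}^{ab}(d\tau)_b = -(\partial_\tau)^a$, so $\tilde{A}_a\tilde{\nu}^a = 0$ is equivalent to $\tilde{A}_a(\partial_\tau)^a = 0$, i.e.\ $\tilde{A}_0 = 0$. Contracting \eqref{e:ggtsf1} with $(\partial_\tau)^a$ yields
\begin{equation*}
\tilde{A}_0 = \tilde{\mathfrak{u}}^{-1}\tilde{A}^\star_0\tilde{\mathfrak{u}} + \tilde{\mathfrak{u}}^{-1}\del{\tau}\tilde{\mathfrak{u}},
\end{equation*}
and substituting \eqref{gauge-ode} gives $\tilde{A}_0 = \tilde{\mathfrak{u}}^{-1}\tilde{A}^\star_0\tilde{\mathfrak{u}} - \tilde{\mathfrak{u}}^{-1}\tilde{A}^\star_0\tilde{\mathfrak{u}} = 0$, which is the required gauge condition.

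No step is a serious obstacle here. The only conceptual point worth flagging is that the ODE \eqref{gauge-ode} uses the coordinate component $\tilde{A}^\star_0 = \tilde{A}^\star_a(\partial_\tau)^a$, which in general is not zero even though the star solution satisfies \eqref{phys-temp-gauge} (that condition fixes $\tilde{A}^\star_a\tilde{T}^a = 0$ with $\tilde{T}^a$ defined via the \emph{physical} metric, whereas the new gauge $\tilde{A}_a\tilde{\nu}^a = 0$ is defined via the \emph{background} de Sitter metric). This distinction is exactly why a nontrivial gauge transformation is required, and the ODE is tailored precisely to kill the correct component.
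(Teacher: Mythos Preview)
Your proposal is correct and follows essentially the same approach as the paper's proof: both use the standard gauge covariance of the Einstein--Yang--Mills system together with the ODE \eqref{gauge-ode} to kill the $\tilde{\nu}^a$-component of the gauge potential. Your version is simply more explicit where the paper defers to references, and your closing remark about the distinction between the $\tilde{T}^a$-temporal gauge and the $\tilde{\nu}^a$-temporal gauge is a helpful clarification not spelled out in the paper.
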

\begin{proof}
	The proof is similar to \cite[Theorem $2$]{Segal1979} (see also \cite[\S $6$]{ChoquetBruhat1991}), and follows from the observation that the condition
	\begin{align}\label{e:ggtr1}
		\tilde{\nu}^a(d\tilde{\mathfrak{u}})_a=-\tilde{\nu}^a\tilde{A}^\star_a\tilde{\mathfrak{u}}
	\end{align}
	is equivalent to the differential equation \eqref{gauge-ode}.
	Since, by assumption, we have a solution $\tilde{\mathfrak{u}}$ to the above differential equations, we can use it to define a gauge transformation under which the gauge potential transforms according to
	\begin{equation*}
		\tilde{A}_a=\tilde{\mathfrak{u}}^{-1}\tilde{A}^\star_a\tilde{\mathfrak{u}}+\tilde{\mathfrak{u}}^{-1}(d\tilde{\mathfrak{u}})_a
		.  
	\end{equation*}
	Contracting this with $\tilde{\nu}^a$, we find by \eqref{e:ggtr1} that $\tilde{\nu}^a\tilde{A}_a=0$.
	Moreover, since solutions of the Yang--Mills equations get mapped back into solutions of the Yang--Mills equations  under gauge transformations and the Yang--Mills curvature transforms as
	$\tilde{F}_{ab}=\tilde{\mathfrak{u}}^{-1} \tilde{F}^\star_{ab} \tilde{\mathfrak{u}}$,
	the proof follows. 
\end{proof}

\begin{lemma}\label{t:slrl2}
	Suppose $(\tilde{g}^{ab}, \, \tilde{A}^\star_a)$ is a solution of the Einstein--Yang--Mills equations \eqref{eq-einstein}--\eqref{eq-maxwell-bianchi-o} that satisfies the temporal and wave gauge conditions given by \eqref{phys-temp-gauge} and \eqref{e:prewg}, respectively, 
	$\tilde{\mfu}$ solves the differential equation \eqref{gauge-ode}, and let
	\begin{align}\label{e:AAconf}
		g^{ab}=e^{2\Psi}\tilde{g}^{ab},\quad F_{ab}= e^{-\Psi} \tilde{F}_{ab} \AND A_a = e^{- \frac{\Psi}{2} } \Ab_a,
	\end{align}
	where $\tilde{F}_{ab}$ and $\tilde{A}_a$ are defined by \eqref{e:ggtsf1}. Then 	
	$(g^{ab},A_a)$ determines a solution of the reduced conformal Einstein--Yang--Mills equations, consisting of \eqref{E:CONFEIN5} and \eqref{eq-conformal-Maxwell-div}--\eqref{eq-conformal-Maxwell-Bianchi}, that satisfies the temporal gauge $A_a\nu^a=0$ and wave gauge $Z^a=0$  conditions. 		 
\end{lemma}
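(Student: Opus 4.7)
The plan is to separate the argument into three essentially independent parts corresponding to the three claims in the conclusion: that $(g^{ab},A_a)$ solves the reduced conformal Einstein--Yang--Mills system, that it satisfies the temporal gauge $A_a\nu^a=0$, and that it satisfies the wave gauge $Z^a=0$. The first two parts should follow cleanly from Lemma \ref{t:slrl1} together with the conformal covariance calculations already carried out in \S\ref{sec-Einstein} and \S\ref{sec-Max}; the third part is where the real content lies.

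First I would apply Lemma \ref{t:slrl1} to $(\tilde g^{ab},\tilde A^\star_a)$. This yields a solution $(\tilde g^{ab},\tilde A_a)$ of the physical Einstein--Yang--Mills system \eqref{eq-einstein}--\eqref{eq-maxwell-bianchi-o} satisfying the temporal condition $\tilde\nu^a\tilde A_a=0$ with $\tilde\nu^a=\tilde{\ulg}^{ab}(d\tau)_b$. Since the Yang--Mills gauge transformation \eqref{e:ggtsf1} leaves the metric untouched, every purely metric quantity, including $\tilde X^a$ and $\tilde Z^a$, is unchanged; in particular $\tilde Z^a=0$ still holds. I would then invoke the conformal transformation rules already exploited in Lemma \ref{t:rdein} and \S\ref{sec-Max}: under \eqref{e:AAconf}, the physical Einstein and Yang--Mills equations go over into \eqref{E:CONFEIN2} and \eqref{eq-conformal-Maxwell-div}--\eqref{eq-conformal-Maxwell-Bianchi} respectively, so $(g^{ab},A_a)$ solves the conformal Einstein--Yang--Mills equations. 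Adjoining the wave gauge $Z^a=0$ (to be established) then yields the reduced conformal Einstein equation \eqref{E:CONFEIN5}.

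Next, the temporal gauge $A_a\nu^a=0$ should follow from $\tilde A_a\tilde\nu^a=0$ by unwinding the definitions. Using \eqref{e:dtaut} and \eqref{E:NORMT} one sees that $\tilde\nu_a$ is a nonzero scalar multiple of $\nu_a$, and the identity $\tilde{\ulg}^{ab}=e^{-2\Psi}\ulg^{ab}$ then shows that $\tilde\nu^a$ is a nonzero scalar multiple of $\nu^a$. Combined with $A_a=e^{-\Psi/2}\tilde A_a$ from \eqref{CONFG-A}, this immediately turns $\tilde\nu^a\tilde A_a=0$ into $A_a\nu^a=0$.

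The principal step, and the expected main obstacle, is showing that $\tilde Z^a=0$ for the physical fields implies $Z^a=0$ for the conformally rescaled fields. This is exactly the reason for the somewhat baroque correction terms in the definition of $\tilde Z^a$ displayed after \eqref{e:prewg}: they are designed so that under $g^{ab}=e^{2\Psi}\tilde g^{ab}$ the physical wave gauge covector $\tilde Z^a$ is proportional (up to a factor of $e^{2\Psi}$) to the conformal quantity $Z^a=X^a+Y^a$ of \eqref{E:WAVEGA}. Concretely, I would compute $X^a$ in terms of $\tilde X^a$ using the standard transformation formula for $-\udn{e}g^{ae}+\tfrac12 g^{ae}g_{df}\udn{e}g^{df}$ under the simultaneous conformal rescaling of $\tilde g^{ab}$ and $\tilde\ulg^{ab}$, collect all the terms involving $(d\Psi)_c$ and $(\tilde g^{ac}-\tilde\ulg^{ac})$, and verify that the combination $X^a+Y^a$ (with $Y^a$ given by \eqref{E:XYZ}) matches $e^{2\Psi}\tilde Z^a$ term by term. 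Once this algebraic identity is in hand, $\tilde Z^a=0$ gives $Z^a=0$ and the proof is complete.

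The bookkeeping in this final step is delicate but purely algebraic; nothing beyond the conformal transformation of Christoffel symbols and the explicit form of $Y^a=-(n-2)\nabla^a\Psi+\eta^a$ is required. The essential content is simply that $\tilde Z^a$ was constructed to make this correspondence exact, which is the whole reason for the particular choice of correction terms in \eqref{e:prewg}.
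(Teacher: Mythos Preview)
Your proposal is correct and follows essentially the same route as the paper: it likewise splits into the temporal-gauge check (via the scalar-multiple relation between $\tilde\nu^a$ and $\nu^a$), the wave-gauge check (via the conformal transformation of $X^a$/$\tilde X^a$ showing $\tilde Z^a=e^{-2\Psi}Z^a$), and an appeal to Lemma~\ref{t:rdein} and the conformal Yang--Mills derivation for the equations themselves. The only cosmetic difference is that the paper carries out the wave-gauge computation in the reverse direction, expressing $\tilde X^a$ in terms of $X^a$ through the connection-difference tensor $\widehat{X}^a_{ef}$ between $\tilde{\nb}$ and $\nb$, rather than the other way around.
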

\begin{proof} 
	By Lemma \ref{t:slrl1}, we know that $(\tilde{g}^{ab}, \tilde{A}_a)$ determines a solution of the Einstein--Yang--Mills equations \eqref{eq-einstein}--\eqref{eq-maxwell-bianchi-o} that satisfies the temporal gauge $\tilde{A}_a\tilde{\nu}^a=0$. Recalling the definitions of $\tilde{\nu}_a$ and $\nu_a$ given by \eqref{e:dtaut} and \eqref{E:CONFFAC}, respectively, we observe that
	\begin{align*}
		\tilde{\nu}_a=-\frac{H}{\sin(Ht)}\nu_a=-e^\Psi\nu_a \AND \tilde{\nu}^b=-e^{-\Psi}\nu^b.
	\end{align*}
	With the help of these relations, it then follows from \eqref{e:AAconf} that
	$0=\tilde{A}_a\tilde{\nu}^a= - e^{\frac{\Psi}{2}}A_a e^{-\Psi}\nu^a$, and hence, that $A_a\nu^a=0$. This establishes that the temporal gauge condition is satisfied.
	
	Next, setting
	\begin{align}\label{e:hX}
		\widehat{X}^a_{ef}= & -\frac{1}{2}\bigl(\tilde{\ulg}_{bf}\nb_e\tilde{\ulg}^{ab}+\tilde{\ulg}_{eb}\nb_f \tilde{\ulg}^{ab}-\tilde{\ulg}^{ac}\tilde{\ulg}_{ed}\tilde{\ulg}_{fb}\nb_c \tilde{\ulg}^{bd}\bigr) \notag \\
		= & \ts{\delta}{^a_f} \del{e}\Psi+ \ts{\delta}{^a_e} \del{f}\Psi-\ulg^{ac}\ulg_{fe}\del{c}\Psi,
	\end{align}
	where in obtaining the second equality we have used \eqref{E:DESIT2}, \eqref{E:CONFFAC} and the fact that $\nb$ is the Levi-Civita connection of $\ulg_{ab}$,
	we observe that
	\begin{equation}\label{e:gx}
		g^{fe}\widehat{X}^a_{ef}= 2 g^{ac}\del{c}\Psi-g^{fe}\ulg^{ac}\ulg_{fe}\del{c}\Psi.
	\end{equation}
	We also observe, by expressing  $\tilde{X}^a$ in terms of the conformal metric using \eqref{e:AAconf} and employing the relations \eqref{E:WAVEGA}--\eqref{E:XYZ} and \eqref{e:gx}, that
	\begin{align*}
		\tilde{X}^a
		=&-\tilde{\nb}_e \tilde{g}^{ae}+\frac{1}{2}\tilde{g}^{ae}\tilde{g}_{df}\tilde{\nb}_e\tilde{g}^{df} \notag  \\
		=&-e^{-2\Psi}\tilde{\nb}_e g^{ae}+\frac{1}{2}e^{-2\Psi} g^{ae} g_{df}\tilde{\nb}_e g^{df}  +(2 -n )e^{-2\Psi} g^{ae}  \del{e} \Psi   \notag  \\
		=&-e^{-2\Psi}(\nb_e g^{ae}+\widehat{X}^a_{ef}g^{fe}+\widehat{X}^e_{ef}g^{af})+\frac{1}{2}e^{-2\Psi} g^{ae} g_{df} (\nb_e g^{df} +\widehat{X}^d_{ec} g^{cf}+\widehat{X}^f_{ec} g^{dc})
		\notag  \\
		&  +(2 -n )e^{-2\Psi} g^{ae}  \del{e} \Psi   \notag  \\
		=&-e^{-2\Psi}(\nb_e g^{ae}+\widehat{X}^a_{ef}g^{fe} )+\frac{1}{2}e^{-2\Psi} g^{ae} g_{df} \nb_e g^{df}   +(2 -n )e^{-2\Psi} g^{ae}  \del{e} \Psi \notag  \\
		=&e^{-2\Psi}X^a-e^{-2\Psi} \widehat{X}^a_{ef}g^{fe}    +(2 -n )e^{-2\Psi} g^{ae}  \del{e} \Psi  \notag  \\ 
		=&e^{-2\Psi}Z^a + 2 e^{-2\Psi} (\ulg^{ac}-g^{ac})\del{c} \Psi  - n e^{-2\Psi}   (g^{ac}-\ulg^{ac})\del{c}\Psi\notag \\
		& +e^{-2\Psi}(g^{fe}-\ulg^{fe})\ulg^{ac}\ulg_{fe}\del{c}\Psi.
	\end{align*}
	Since the wave gauge condition \eqref{e:prewg} is satisfied by assumption, we conclude that the same is true for wave gauge condition $Z^a=0$.
	
	Now, to complete the proof, we observe that, with the help of  Lemma \ref{t:rdein}, it is straightforward to verify via a direct calculation that $(g^{ab}, A_a)$, which is determined by \eqref{e:AAconf}, solves the reduced conformal Einstein--Yang--Mills equations consisting of \eqref{E:CONFEIN5} and \eqref{eq-conformal-Maxwell-div}--\eqref{eq-conformal-Maxwell-Bianchi}.  
\end{proof}

In the next proposition, we establish quantitative bounds on the gauge transformation $\tilde{\mathfrak{u}}$ from Lemma \ref{t:slrl2}.

\begin{proposition}\label{t:uaest}
	Suppose $(\tilde{g}^{ab}, \tilde{A}^\star_a)$ is a solution of the Einstein--Yang--Mills equations that satisfies the temporal and wave gauge conditions defined by \eqref{phys-temp-gauge} and \eqref{e:prewg}, respectively, and $\tilde{\mathfrak{u}}$ satisfies the differential equation \eqref{gauge-ode} and the initial condition $\tilde{\mathfrak{u}}|_{\tau=0}=\mathds{1}$ such that the corresponding solution $(g^{ab}, A_a)$ of the reduced conformal Einstein--Yang--Mills equations from Lemma \ref{t:slrl2}, which satisfies both the $A_a\nu^a=0$ and wave gauge $Z^a=0$, yields a solution
	\begin{align} \label{Uhat-reg}
		\widehat{\mathbf{U}} \in C^0\Bigl(\Bigl(t_*, \frac{\pi}{2H}\Bigr],H^s(\Sigma;V)\Bigr) \cap C^1\Bigl(\Bigl(t_*, \frac{\pi}{2H}\Bigr],H^{s-1}(\Sigma;V)\Bigr)
	\end{align}
	of the Fuchsian system \eqref{e:FchEYM} with $s>\frac{n+1}{2}$ on
	$\bigl(t_*,\frac{\pi}{2H}\bigr]\times \Sigma$ for some $t_* \in \bigl[0,\frac{\pi}{2H}\bigr)$ that also satisfies
	$\lVert 	\widehat{\mathbf{U}}|_{t=\frac{\pi}{2H}}\rVert\leq \delta$ where $\delta \in (0,\delta_0]$ and $\delta_0>0$ is as given in Theorem \ref{t:Uglobal}. Then 
	\begin{equation}
		\lVert \tilde{A}_a|_{\tau=0}\rVert_{H^s } = \lVert  \tilde{A}^\star_a|_{\tau=0} \rVert_{H^s },   \label{e:L2}
	\end{equation}
	and there exists a constant $C>0$, independent of $\delta \in (0,\delta_0]$ and $t_* \in \bigl[0,\frac{\pi}{2H}\bigr)$, such that $\tilde{\mfu}(\tau,x)$, $\mfu(t,x)=\tilde{\mfu}(\tau(t),x)$, $\tilde{A}_a^\star(\tau,x)$ and $\tilde{A}_a(\tau,x)$,
	where $\tau$ and $t$ are related via \eqref{E:GDINV}, are bounded by 	
	\begin{gather}
		\lVert \tilde{\mfu}(\tau(t))\rVert_{H^s} = \lVert \mfu(t)\rVert_{H^s} \leq C, \label{e:L1}\\
		\lVert \tilde{\mfu}^{-1}(\tau(t))\rVert_{H^s} = \lVert \mfu^{-1}(t)\rVert_{H^s} \leq C, \label{e:L1a}\\
		\lVert \tensor{\tilde{\ulh}}{^c_b} (d\tilde{\mfu})_c(\tau(t))\rVert_{H^s}=\lVert \ts{\ulh}{^{c}_{b}} (d \mfu)_c(t)\rVert_{H^s}  \leq C \delta ,  
		\label{e:L1c} \\
		\lVert  \nu^a (d \tilde{\mathfrak{u}} )_a (\tau(t)) \rVert_{H^s} = \lVert  \nu^a (d  \mathfrak{u} )_a (t) \rVert_{H^s} \leq   C \delta^2,  \label{e:L1d} 
		\intertext{and}
		\lVert \tilde{A}^\star_a(\tau(t))\rVert_{H^s} \leq C\lVert \tilde{A}_a(\tau(t)) \rVert_{H^s }+ C \delta  \label{e:L3}
	\end{gather}
	for all	$t\in \bigl(t_*,\frac{\pi}{2H}\bigr]$.
	Moreover,  if
	\begin{equation}\label{e:tgAreg}
		\tilde g^{a b} \in \bigcap_{\ell=0}^{s}C^\ell([0,\tau_*),H^{s-\ell+1}(\Sigma ))\AND \tilde{A}^\star_a \in \bigcap_{\ell=0}^{s }C^\ell([0,\tau_*),H^{s-\ell}(\Sigma )),
	\end{equation}
	then $(g^{ab},A_a)$ satisfies
	\begin{equation}\label{e:gAreg}
		g^{a b} \in \bigcap_{\ell=0}^{s}C^\ell\Bigl(\Bigl(t_*, \frac{\pi}{2H}\Bigr],H^{s-\ell+1}(\Sigma )\Bigr)\AND  A_a \in \bigcap_{\ell=0}^{s }C^\ell\Bigl(\Bigl(t_*, \frac{\pi}{2H}\Bigr],H^{s-\ell}(\Sigma )\Bigr),
	\end{equation}
	where $t_*= \frac{1}{H}\left(\frac{\pi}{2}-\gd(H^{-1} \tau_*)\right)$, c.f.~\eqref{E:COOR1}. 
\end{proposition}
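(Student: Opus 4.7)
The plan is to transform the gauge ODE \eqref{gauge-ode} from physical time $\tau$ to conformal time $t$ via \eqref{E:GDINV} and to exploit the physical temporal gauge $\tilde A^\star_a \tilde T^a = 0$ to cancel the $1/\sin(Ht)$ singularity that otherwise arises from the change of variables. The key algebraic computation is the following: since $\tilde T^a = (-\tilde\lambda)^{-1/2}\tilde g^{ab}(d\tau)_b$ while the background normal satisfies $\tilde\nu^a=\tilde{\ulg}^{ab}(d\tau)_b=-(\partial_\tau)^a$, rewriting the gauge condition as $\tilde A^\star_a\tilde g^{ab}(d\tau)_b=0$ and subtracting the background yields
\begin{equation*}
\tilde A^\star_0 := \tilde A^\star_a (\partial_\tau)^a = \tilde A^\star_a(\tilde g^{ab}-\tilde{\ulg}^{ab})(d\tau)_b.
\end{equation*}
Using $\tilde g^{ab}-\tilde{\ulg}^{ab}=e^{-2\Psi}(g^{ab}-\ulg^{ab})$, $(d\tau)_b=-e^\Psi\nu_b$ from \eqref{e:dtaut}, and the decomposition $(g^{ab}-\ulg^{ab})\nu_b = -(\lambda+1)\nu^a + \xi^a = -t(\tta m\nu^a-\ttb p^a)$ obtained from \eqref{decom-g}, \eqref{E:W}, \eqref{E:V}, one finds $\tilde A^\star_0 = (t\sin(Ht)/H)\,\tilde A^\star_a(\tta m\nu^a-\ttb p^a)$. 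Multiplying by the Jacobian $d\tau/dt=-H^2/\sin(Ht)$ then converts \eqref{gauge-ode} into the manifestly regular transport equation
\begin{equation*}
\partial_t\mathfrak{u} = Ht\,\tilde A^\star_a(\tta m\nu^a-\ttb p^a)\,\mathfrak{u}, \qquad \mathfrak{u}|_{t=\frac{\pi}{2H}}=\mathds{1}.
\end{equation*}

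Next, I would close the estimates by a bootstrap argument in $H^s$, which is a Banach algebra since $s>\frac{n+1}{2}>\frac{n}{2}$. Theorem \ref{t:Uglobal} bounds $\|m\|_{H^s}+\|p^a\|_{H^s}\lesssim\delta$ uniformly on $(t_*,\frac{\pi}{2H}]$, while the conformal rescaling \eqref{CONFG-A} combined with the $O(t^{1/2})$ decay of the renormalized potential from Corollary \ref{t:impv} gives $\|\tilde A_a\|_{H^s}\lesssim\delta$ uniformly. Inverting \eqref{e:ggtsf1} to obtain $\tilde A^\star_a = \mathfrak{u}\tilde A_a\mathfrak{u}^{-1}-(d\mathfrak{u})_a\mathfrak{u}^{-1}$ and assuming the bootstrap a priori bounds $\|\mathfrak{u}\|_{H^s}+\|\mathfrak{u}^{-1}\|_{H^s}\leq 2$ and $\|(d\mathfrak{u})_a\|_{H^s}\leq 1$, the algebra property implies $\|\tilde A^\star_a\|_{H^s}\lesssim\delta$, whence the source in the $\mathfrak{u}$-ODE is of order $t\delta^2$. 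Integrating from $t=\frac{\pi}{2H}$ downward and applying Gr\"onwall's inequality closes the bootstrap and yields \eqref{e:L1} with constants independent of $t_*$; the analogous equation $\partial_t\mathfrak{u}^{-1}=-\mathfrak{u}^{-1}(\partial_t\mathfrak{u})\mathfrak{u}^{-1}$ gives \eqref{e:L1a}. Since $\nu^a=-H^{-1}(\partial_t)^a$, the bound \eqref{e:L1d} is read directly off the ODE: $\nu^a(d\mathfrak{u})_a=-H^{-1}\partial_t\mathfrak{u}$ is of size $t\delta^2$, which is $O(\delta^2)$ uniformly in $t$. Spatial differentiation of the ODE produces a linear transport equation for $\ulh^c{}_b(d\mathfrak{u})_c$ with vanishing initial data and source of size $t\delta$, so Gr\"onwall gives \eqref{e:L1c}.

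The remaining assertion \eqref{e:L3} is immediate from $\tilde A^\star_a = \mathfrak{u}\tilde A_a\mathfrak{u}^{-1}-(d\mathfrak{u})_a\mathfrak{u}^{-1}$, the $H^s$ algebra property, and the bounds above. Identity \eqref{e:L2} follows because $\tilde{\mathfrak{u}}|_{\tau=0}=\mathds{1}$ forces the tangential part $\tilde{\ulh}^c{}_a(d\tilde{\mathfrak{u}})_c$ to vanish on $\Sigma_0$, so the spatial components of $\tilde A_a$ and $\tilde A^\star_a$ agree there, while the initial gauge constraint \eqref{E:constraintD} combined with the definition of $\tilde{\mathfrak{u}}$ via \eqref{gauge-ode} makes the two temporal components simultaneously vanish at $\tau=0$. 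Finally, \eqref{e:gAreg} is inherited from \eqref{e:tgAreg} via the smooth conformal rescaling \eqref{e:AAconf}, the formula \eqref{e:ggtsf1}, and the standard propagation of regularity for the linear transport equation satisfied by $\mathfrak{u}$. The principal obstacle is verifying the explicit cancellation of the $1/\sin(Ht)$ singularity through the identity for $\tilde A^\star_0$ above; without the quadratic smallness structure $\tilde A^\star_0 \sim t\delta^2$ arising from combining the temporal gauge condition with the metric perturbation decomposition, the Gr\"onwall estimate would fail to close uniformly as $t_*\searrow 0$. Once this structure is in hand, all the remaining bounds are routine.
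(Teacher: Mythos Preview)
Your identification of the key cancellation---that the temporal gauge $\tilde A^\star_a\tilde T^a=0$ kills the $1/\sin(Ht)$ singularity in the change of variables---is correct and is exactly the mechanism the paper exploits. However, your closure argument has a genuine gap, and the paper takes a different route that sidesteps it.

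The problem is a hidden derivative loss in your bootstrap. You write the evolution as an ODE in $t$ with source $Ht\,\tilde A^\star_a(\tta m\nu^a-\ttb p^a)\mathfrak u$ and propose to estimate $\mathfrak u$ in $H^s$ by Gr\"onwall. But to bound $\|\tilde A^\star_a\|_{H^s}$ via $\tilde A^\star_a=\mathfrak u\tilde A_a\mathfrak u^{-1}-(d\mathfrak u)_a\mathfrak u^{-1}$ you need $\|(d\mathfrak u)_a\|_{H^s}$, and the spatial part $\ulh^c{}_a(d\mathfrak u)_c$ in $H^s$ is essentially $\mathfrak u$ in $H^{s+1}$. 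Under your stated bootstrap $\|(d\mathfrak u)_a\|_{H^s}\le 1$ you only get $\|\tilde A^\star_a\|_{H^s}\lesssim 1$, not $\lesssim\delta$; your claim of a $t\delta^2$ source is therefore unjustified, and the bootstrap on $(d\mathfrak u)_a$ cannot be improved because spatially differentiating your ``ODE'' produces $D^{s+1}\mathfrak u$ in the top-order commutator. In short, by hiding the spatial derivative $\xi^a\nabla_a\mathfrak u$ inside $\tilde A^\star_a$ you have turned a principal term into a source term, which is fatal for naive Gr\"onwall.

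The paper avoids this by contracting the gauge transformation law with $\tilde T^a$ to obtain $\tilde T^a(d\tilde{\mathfrak u})_a=\tilde{\mathfrak u}\,\tilde A_a\tilde T^a$, so that the source involves the \emph{post}-transformation potential $\tilde A_a$, which is directly bounded by Corollary~\ref{t:impv} without any reference to $\mathfrak u$. In conformal variables this reads
\[
(-\lambda)\nu^a\nb_a\mathfrak u+\xi^a\nb_a\mathfrak u=\frac{\ttb t\sqrt{Ht}}{\sqrt{\sin(Ht)}}\,\mathfrak u\,\mathring A_ap^a,
\]
a genuinely \emph{linear} symmetric hyperbolic PDE for $\mathfrak u$ (the $\xi^a\nb_a$ term sits on the principal side), so standard energy estimates give $\|\mathfrak u\|_{H^s}\le C$ directly---no bootstrap, no derivative loss. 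For the spatial derivative bound \eqref{e:L1c} the paper introduces the subtracted variable $\chi_b=\ulh^c{}_b(d\mathfrak u)_c-e^{\Psi/2}\mathfrak u A_b$ and derives another linear symmetric hyperbolic equation for it with $O(\delta)$ source. Your approach can be repaired by substituting $\tilde A^\star_a$ back in terms of $\tilde A_a$ and $(d\mathfrak u)_a$ and moving the resulting $\xi^a\nb_a\mathfrak u$ to the left-hand side, but that is precisely the paper's argument.
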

\begin{proof}	
	Before commencing with the proof, we remark that, in the following, the constant $C>0$, which can change from line to line, is independent of $\delta\in (0,\delta_0]$ and $t_* \in \bigl[0,\frac{\pi}{2H}\bigr)$.
	
	\bigskip
	
	\noindent \underline{$(1)$ Bounds on $\widehat{\mathbf{U}}$ and the renormalized conformal Yang--Mills fields:} Since, by assumption, the solution \eqref{Uhat-reg} of the Fuchsian system \eqref{e:FchEYM}  on $\bigl(t_*,\frac{\pi}{2H}\bigr]\times \Sigma$ satisfies
	$\lVert 	\widehat{\mathbf{U}}|_{t=\frac{\pi}{2H}}\rVert\leq \delta$ where $\delta \in (0,\delta_0]$, we know from the uniqueness of solutions to \eqref{e:FchEYM}  that it must agree, where defined,  the solution to \eqref{e:FchEYM}  from Theorem \ref{t:Uglobal} that is generated from the same initial data. In particular, this implies that $\widehat{\mathbf{U}}$ is bounded by
	\begin{equation} \label{Ubfhat-bnd}    
		\norm{\widehat{\mathbf{U}}(t)}_{H^s} \leq C \delta, \quad t_* < t\leq \frac{\pi}{2H}.
	\end{equation}
	This, in turn, implies via Corollary \ref{t:impv} and the uniqueness of solutions $(\widehat{\mathbf{U}},\mathring{\mathbf{V}})$ to the Fuchsian system comprised of \eqref{e:FchEYM}  and \eqref{e:newymeq1} that the renormalized conformal Yang--Mills--fields defined by \eqref{e:newYM} are bounded by 
	\begin{equation} \label{AEH-bnd}
		\lVert \mathring A_a (t)\rVert_{H^s} + \lVert\mathring E_a (t)\rVert_{H^s} + \lVert \mathring H_{a b} (t)\rVert_{H^s}  \leq C\delta, \quad t_* < t\leq \frac{\pi}{2H}.
	\end{equation}
	
	\bigskip
	
	\noindent \underline{$(2)$ Proof of \eqref{e:L1}--\eqref{e:L1a} and \eqref{e:L1c}--\eqref{e:L1d}:} Recalling that $\tilde{T}^a$ is defined by \eqref{def-T}, we contract both sides of \eqref{e:ggtsf1} with $\tilde{T}^a$ to get
	\begin{align*}
		0= \tilde{A}^\star_a \tilde{T}^a=\tilde{\mathfrak{u}}\tilde{A}_a\tilde{T}^a\tilde{\mathfrak{u}}^{-1}-(d\tilde{\mathfrak{u}})_a \tilde{T}^a \tilde{\mathfrak{u}}^{-1},
	\end{align*}
	which, after rearranging, yields
	\begin{equation}\label{e:AT}
		\tilde{T}^a(d\tilde{\mathfrak{u}})_a=\tilde{\mathfrak{u}}\tilde{A}_a\tilde{T}^a.
	\end{equation}
	From \eqref{e:dtaut}, \eqref{CONFG-g}--\eqref{E:CONFG2}, \eqref{E:NORMT}--\eqref{decom-g} and \eqref{def-T}, we observe that $\tilde{T}^a$ can be expressed as
	\begin{equation}\label{e:tT}
		\tilde{T}^a=(-\tilde{\lambda})^{-\frac{1}{2}}\tilde{\nu}_b\tilde{g}^{ab}=-e^{-\Psi}(-\lambda)^{-\frac{1}{2}}(\xi^a-\lambda\nu^a).
	\end{equation}
	Using this along with the temporal gauge condition \eqref{e:temgg} then allows us to write \eqref{e:AT} as 
	\begin{align}\label{e:tu1}
		\tilde{T}(\tilde{\mfu})  
		=-(-\lambda)^{-\frac{1}{2}}e^{-\frac{\Psi}{2}}\mfu A_a \xi^a,
	\end{align}
	or equivalently, as
	\begin{equation}\label{e:ueq}
		(-\lambda)\nu^a\nb_a \mfu + \xi^a\nb_a \mfu = \frac{  \ttb t \sqrt{Ht}}{\sqrt{\sin(Ht)}} \mfu \mathring{A}_a p^a. 
	\end{equation}
	Noting that $\lambda$, $\mathring{A}_a$, $\xi^a$ and $p^a$ are uniformly bounded for $t\in \bigl(t_*,\frac{\pi}{2H}\bigr]$ in $H^s(\Sigma)$ and that $\lambda$ is bounded away from $0$ on account of \eqref{Ubfhat-bnd} and \eqref{AEH-bnd},  and that $\frac{  \ttb t \sqrt{Ht}}{\sqrt{\sin(Ht)}}$ is also uniformly bounded for $t\in \bigl(t_*,\frac{\pi}{2H}\bigr]$, it follows that  \eqref{e:ueq} defines a linear symmetric hyperbolic equation for $\mfu$, which satisfies the condition $\mfu|_{t=\frac{\pi}{2H}}=\mathds{1}$ by assumption. 
	We can therefore conclude from the standard theory for linear symmetric hyperbolic equations that $\mfu$ is bounded by
	\begin{equation*}
		\lVert \mfu(t)\rVert\lVert _{H^s} \leq C, \quad 0<t_* < t \leq \frac{\pi}{2H}, 
	\end{equation*}
	for some constant $C>0$. This establishes the estimate \eqref{e:L1}. Noting that $\mfu^{-1}(\nb_a\mfu) \mfu^{-1}=-\nb_a\mfu^{-1}$, the estimate \eqref{e:L1a} can also be shown to hold by similar arguments.

	\bigskip

	Using \eqref{e:tT} to express \eqref{e:tu1} as
	\begin{align}\label{e:AT1}
		(\xi^a-\lambda \nu^a) \nb_a\mfu  
		= e^{\frac{\Psi}{2}} \mfu \xi^a   A_a,
	\end{align}		
	we get from applying $\ts{\ulh}{^{c}_{b}} \nb_c$ to this equation that
	\begin{equation}\label{e:AT1-a}
		\ts{\ulh}{^{c}_{b}} \nb_c \left((\xi^a-\lambda \nu^a) \nb_a\mfu\right)  
		=\ttb t e^{\frac{\Psi}{2}} \ts{\ulh}{^{c}_{b}} (d\mfu)_c p^a   A_a+e^{\frac{\Psi}{2}}  \mfu \ts{\ulh}{^{c}_{b}} \tp{a}{c}   A_a+ \ttb t e^{\frac{\Psi}{2}} \mfu p^a   \ts{\ulh}{^{c}_{b}} \nb_c A_a,
	\end{equation}
	where in deriving this we have used the definitions of \eqref{E:V} and \eqref{E:VD}.
	In the last term in the above expression, we use the identity \eqref{e:Hpq} and \eqref{decom-F} to re-express it as
	\[\ttb t e^{\frac{\Psi}{2}} \mfu p^a  \ts{\ulh}{^{c}_{b}} \nb_c A_a = \ttb t e^{\frac{\Psi}{2}} \mfu p^a   ( e^{\frac{\Psi}{2}} H_{b a} +   \ts{\ulh}{^{c}_{b}} \nb_a A_c - e^{\frac{\Psi}{2}} \ts{\ulh}{^{c}_{b}} [A_c, A_a]). \]
	Substituting this expression into \eqref{e:AT1-a},
	we find, with the help of \eqref{E:V}--\eqref{E:VD}, that
	\begin{align}\label{e:ddu1a}
		&(\xi^a-\lambda \nu^a)
		\nb_a \left(\ts{\ulh}{^{c}_{b}} (d\mfu)_c\right) -\xi^a   \nb_a (e^{\frac{\Psi}{2}}\mfu A_b) \notag \\
		=& \ttb t e^{\frac{\Psi}{2}} \ts{\ulh}{^{c}_{b}} (d\mfu)_c p^a   A_a+e^{\frac{\Psi}{2}}  \mfu \ts{\ulh}{^{c}_{b}} \tp{a}{c}   A_a - \ts{\ulh}{^{c}_{b}} \tp{a}{c} \ts{\ulh}{^{d}_{a}} (d\mfu)_d  \notag  \\
		& + \ts{\ulh}{^{c}_{b}} m_c \nu^a\nb_a\mfu   + \ttb t \mfu p^a  e^{ \Psi }H_{ba}  -\ttb t e^{\frac{\Psi}{2}}  p^c \ts{\ulh}{^{a}_{c}}  (d\mfu)_a A_b -\ttb t e^{\frac{\Psi}{2}} \mfu p^a  [A_b,A_a] .
	\end{align}
	
	On the other hand, the evolution equations \eqref{e:AT1} and \eqref{eq-YM-potential} for $\mfu$ and $\bar A_a$, respectively, imply that
	\begin{align*}
		\lambda\nu^d\nb_d(e^{\frac{\Psi}{2}}\mfu A_b)  = -\ttb t e^{ \Psi }  \mfu A_a p^a A_b + \ttb t e^{\frac{\Psi}{2}}  p^a \ts{\ulh}{^{d}_{a}} (d\mfu)_d A_b  - e^{ \Psi }\lambda\mfu E_b.  
	\end{align*}
	Adding this to  \eqref{e:ddu1a}, while employing the renormalized conformal Yang--Mills fields \eqref{e:newYM}, yields
	\begin{align}\label{e:hychi}
		-\lambda \nu^a\nb_a\chi_b+ \xi^a  \nb_a\chi_b    
		= \Delta^\star_b
	\end{align}
	where 
	\begin{equation*}
		\chi_b= \ts{\ulh}{^{c}_{b}} (d\mfu)_c-e^{\frac{\Psi}{2}} \mfu A_b=\ts{\ulh}{^{c}_{b}} (d\mfu)_c-e^{\frac{\Psi}{2}}\sqrt{t}\mfu \mathring{A}_b
	\end{equation*} 
	and 
	\begin{align*}
		\Delta^\star_b = & \ttb t^{\frac{3}{2}} e^{\frac{\Psi}{2}} \chi_b p^a   \mathring{A}_a -	 \ts{\ulh}{^{c}_{b}}\tp{a}{c}  \chi_a   - \ttb t (-\lambda)^{-1} \ts{\ulh}{^{c}_{b}} m_c p^a \chi_a  \notag \\
		&+ \ttb t^2 \mfu p^a  \bigl(e^{ \Psi } \mathring{H}_{ba}  -  e^{\frac{\Psi}{2}}  [\mathring{A}_b,\mathring{A}_a]  \bigr)  - t e^{ \Psi }\lambda\mfu \mathring{E}_b .
	\end{align*}
	In the following, we will interpret \eqref{e:hychi} as a linear symmetric hyperbolic equation for $\chi_b$.

	Next, we note that $t e^\Psi$ is uniformly bounded for  $t\in \bigl(t_*,\frac{\pi}{2H}\bigr]$ by \eqref{E:CONFFAC}. Using this, it is then not difficult to verify with the help of the bounds \eqref{Ubfhat-bnd} and \eqref{AEH-bnd} that $\lambda$,  $\xi$ and $\Delta_b^\star$ are uniformly bounded in $H^s(\Sigma)$ for $t\in \bigl(t_*,\frac{\pi}{2H}\bigr]$ and $\lambda$ is bounded away from $0$. 
	We also note $\tilde{\mfu}|_{\tau=0}=\mfu|_{t=\frac{\pi}{2H}}=\mathds{1}$ implies that 
	\begin{equation*}
		\chi_b|_{t=\frac{\pi}{2H}}=\bigl(\ts{\ulh}{^{c}_{b}} (d \mfu)_c\bigr) |_{t=\frac{\pi}{2H}} - \bigl( e^{\frac{\Psi}{2}}\mathfrak{u} A_b\bigr)|_{t=\frac{\pi}{2H}}=- \sqrt{H}  A_b |_{t=\frac{\pi}{2H}}
	\end{equation*} 
	from which we deduce that $\lVert \chi_b|_{t=\frac{\pi}{2H}}\rVert_{H^s} \leq C \delta$.
	We can therefore conclude from the standard theory for linear symmetric hyperbolic equations that $\chi_a$ is bounded by
	\begin{equation*}
		\lVert \chi_b(t)\rVert_{H^s} \leq C\delta, \quad 0<t_* < t \leq \frac{\pi}{2H}, 
	\end{equation*}
	for some constant $C>0$.
	With the help of the identities 
	\begin{equation} \label{projector-relations}
		\tensor{\tilde{\ulh}}{^c_b}=\tensor{\delta}{^c_b}+\tilde{\nu}^c\tilde{\nu}_b=\tensor{\delta}{^c_b}+\nu^c\nu_b=\ts{\ulh}{^{c}_{b}},
	\end{equation} which are a consequence of \eqref{E:NORMT}, \eqref{E:PRO} and \eqref{E:NORMT-a}--\eqref{e:def-h2}, it then follows from the above estimate and the bounds from  Corollary \ref{t:impv} that
	\begin{equation*}
		\lVert \tensor{\tilde{\ulh}}{^c_b}(d\tilde{\mfu})_c\rVert_{H^s}= \lVert \ts{\ulh}{^{c}_{b}} (d \mfu)_c\rVert_{H^s} \leq \lVert \chi_b\rVert_{H^s}+e^{\frac{\Psi}{2}}\sqrt{t} \lVert  \mfu \mathring{A}_b\rVert_{H^s} \leq C \delta, \quad  0<t_* < t \leq \frac{\pi}{2H}.
	\end{equation*}
	Making use of the above estimate for $\lVert \ts{\ulh}{^{c}_{b}} (d \mfu)_c\rVert_{H^s}$ and \eqref{e:ueq}, we obtain
	\[ \lVert  \nu^a (d \mathfrak{u} )_a \rVert_{H^s}  \leq  C \biggl(\lVert \xi^a\nb_a \mfu\rVert_{H^s} + \biggl\lVert \frac{  \ttb t \sqrt{Ht}}{\sqrt{\sin(Ht)}} \mfu \mathring{A}_a p^a\biggr\rVert_{H^s} \biggr)\leq C \delta^2, \quad  0<t_* < t \leq \frac{\pi}{2H}.  \]
	
	\bigskip
	
	\noindent \underline{$(3)$ Proof of \eqref{e:L2}:}	 
	Since  the initial condition $\tilde{\mfu}|_{\tau=0}= \mathds{1}$ (i.e., identity map on $\Sigma_{\tau=0}$),  it is clear that $(\tensor{\tilde{\ulh}}{^c_b} (d\tilde{\mfu})_c)|_{\tau=0}=0$. From this, we then have, with the help of \eqref{e:ggtsf1} and \eqref{e:ggtr1}, that
	\begin{align*}
		& \tilde{A}_b |_{\tau=0}= \bigl(\tilde{\mathfrak{u}}^{-1}\tilde{A}^\star_b\tilde{\mathfrak{u}}\bigr)|_{\tau=0}-  \tilde{\nu}_b\bigl( \tilde{\mathfrak{u}}^{-1}(d\tilde{\mathfrak{u}})_c\tilde{\nu}^c\bigr)|_{\tau=0}=\bigl( \tilde{\mathfrak{u}}^{-1}\tilde{A}^\star_b\tilde{\mathfrak{u}}\bigr)|_{\tau=0}+ \tilde{\nu}_b\bigl( \tilde{\mathfrak{u}}^{-1}\tilde{\nu}^a\tilde{A}^\star_a\tilde{\mathfrak{u}}\bigr)|_{\tau=0} \notag \\
		&\hspace{2cm}	=\bigl(\tilde{\mathfrak{u}}^{-1}\tensor{\tilde{\ulh}}{^c_b}\tilde{A}^\star_c\tilde{\mathfrak{u}}\bigr)|_{\tau=0}. 
	\end{align*}
	Thus by \eqref{projector-relations} (i.e.~ $\tensor{\tilde{\ulh}}{^c_b} =\ts{\ulh}{^{c}_{b}}$) and $\tilde{\mfu}|_{\tau=0}= \mathds{1}$, we conclude that
	\begin{equation*}
		\lVert \tilde{A}_a|_{\tau=0}\rVert_{H^s } =   \lVert \ts{\ulh}{^{c}_{a}} (\tilde{\mathfrak{u}}^{-1}\tilde{A}^\star_c \tilde{\mathfrak{u}})|_{\tau=0}\rVert_{H^s }  
		= \lVert  \tilde{A}^\star_a |_{\tau=0}\rVert_{H^s },
	\end{equation*}
	where in deriving the second equality we have used the fact that $\tilde{A}^\star_a |_{\tau=0}$ is a spatial tensor, which implies that $\lVert  \tilde{A}^\star_a |_{\tau=0}\rVert_{H^s } =\lVert  \ts{\ulh}{^c_a} \tilde{A}^\star_c |_{\tau=0}\rVert_{H^s }$.

	\bigskip
	
	\noindent \underline{$(4)$ Proof of \eqref{e:L3}:}
	Due to the temporal gauge condition $\tilde{A}^\star_a\tilde{T}^a=0$ and \eqref{projector-relations}, we deduce from the gauge transformation law \eqref{e:ggtsf1}--\eqref{e:ggtr1}  and the  estimates \eqref{e:L1}--\eqref{e:L1d} that 
	\begin{align*}
		\lVert \tilde{A}^\star_a \nu^a \rVert_{H^s } &
		\leq  \lVert  \nu^a (d\tilde{\mathfrak{u}})_a \tilde{\mathfrak{u}}^{-1}\rVert_{H^s } \leq C \delta^2
		\intertext{and}
		\lVert  \ts{\ulh}{^a_b} \tilde{A}^\star_a\rVert_{H^s } &
		\leq   \lVert \ts{\ulh}{^a_b} \tilde{\mathfrak{u}} \tilde{A}_a \tilde{\mathfrak{u}}^{-1}\rVert_{H^s } + \lVert  \ts{\ulh}{^a_b} (d\tilde{\mathfrak{u}})_a \tilde{\mathfrak{u}}^{-1}\rVert_{H^s } \notag  \\
		& \leq  C\lVert  \tilde{A}_a \rVert_{H^s } 
		+ C \lVert \tensor{\tilde{\ulh}}{^a_b} (d\tilde{\mathfrak{u}})_a\rVert_{H^{s } }  \leq C
		\bigl(\lVert  \tilde{A}_a \rVert_{H^s }+  \delta\bigr) 
	\end{align*}
	for all $t\in \bigl(t_*,\frac{\pi}{2H}\bigr]$, which leads to \eqref{e:L3} according to the Sobolev norms given in  \S\ref{s:norm}.
	
	\bigskip
	
	\noindent\underline{$(5)$ Solution regularity:} To complete the proof, assume that the solution $(\tilde{g}^{ab}, \tilde{A}^\star_a)$ of the Einstein--Yang--Mills equations that satisfies \eqref{e:tgAreg}. Noting by \eqref{e:ggtsf1} that
	\begin{equation*}
		A_a(t,x)=e^{-\frac{\Psi}{2}}\tilde{\mfu}^{-1}  (\tau(t),x)\tilde{A}^\star_a(\tau(t),x) \tilde{\mfu}(\tau(t),x)+e^{-\frac{\Psi}{2}} \tilde{\mfu}^{-1} (\tau(t),x)(d\tilde{\mfu})_a (\tau(t),x),
	\end{equation*}
	it then not difficult to verify from the definitions \eqref{e:tgAreg} and the estimate
	\eqref{e:L1}--\eqref{e:L1c} that the solution
	$(g^{ab},A_a)$ of the reduced conformal Einstein--Yang--Mills equations satisfies \eqref{e:gAreg}. 
\end{proof}

\section{Proof of Theorem \label{mainthm-proof}}

With the help of the technical results established in \S\ref{s:verif} and \S\ref{s:maprf}, we are now in a position to prove Theorem \ref{t:mainthm}, the main result of this article.  
We break the proof of this theorem into five distinct steps.

\bigskip

\noindent \underline{Step $1$: A local solution of the Einstein--Yang--Mills equations.}
We first recall the local well-posedness result from the companion paper \cite{LW2021b} and assume that the Einstein--Yang--Mills initial data, see \eqref{idata}, is chosen as in the statement of Theorem \ref{t:mainthm}, which in particular, implies that it satisfies Einstein--Yang--Mills constraints \eqref{E:constraintA}--\eqref{E:constraintB} and the gauge constraints \eqref{E:constraintC}--\eqref{E:constraintD}
on $\Sigma_0=\{0\}\times \Sigma$. Then by Theorem $1.1$ from \cite{LW2021b} there exists a $\tau_* > 0$ and a unique solution $( \tilde g^{a b}, \,\tilde{A}^\star_a)$ to the Einstein--Yang--Mills equations satisfying the temporal and wave gauge conditions \eqref{phys-temp-gauge} and\eqref{e:prewg}, respectively, and with the regularity
\begin{equation*}
	\tilde g^{a b} \in \bigcap_{\ell=0}^{s}C^\ell([0,\tau_*),H^{s-\ell+1}(\Sigma ))\AND \tilde{A}^\star_a,\,  \tilde{F}^\star_{a b} \in \bigcap_{\ell=0}^{s }C^\ell([0,\tau_*),H^{s-\ell}(\Sigma )).
\end{equation*}
Moreover, if
\begin{equation*}
	\Bigl\lVert  \Bigl( \tilde \nu^d \tensor{\tilde g}{^{ab}_d}, \, \tensor{\tilde \ulh}{^d_{c}} \tensor{\tilde g}{^{a b}_d}, \, \tilde g^{ab}, \, \tilde E^\star_a, \, \tilde A^\star_d, \, \tilde H^\star_{ab} \Bigr)\Bigr\rVert_{\Li ([0,\tau_*), W^{1,\infty})}<\infty,
\end{equation*}
then the solution $\bigl( \tilde \nu^d \tensor{\tilde g}{^{ab}_d}, \, \tensor{\tilde \ulh}{^d_{c}} \tensor{\tilde g}{^{a b}_d}, \, \tilde g^{ab}, \, \tilde E^\star_a, \, \tilde A^\star_d , \, \tilde H^\star_{ab}\bigr)$ can be uniquely continued, in the above temporal and wave gauge, as a classical solution with the same regularity to a larger time interval $\tau\in[0,\tau^*)$ where $\tau^*\in(\tau_*, + \infty)$.

\bigskip

\noindent \underline{Step $2$: A local solution of the reduced conformal EYM system.}
By Lemma \ref{t:slrl2} and Proposition \ref{t:uaest}, the solution $(\tilde{g}_{ab},\, \tilde{A}^\star_a)$ from Step 1 implies the existence of a unique solution 
\begin{equation*}\label{e:gAreg2}
	(g^{a b},A_a) \in \bigcap_{\ell=0}^{s}C^\ell\Bigl(\Bigl(t_*, \frac{\pi}{2H}\Bigr],H^{s-\ell+1}(\Sigma )\Bigr)\times \bigcap_{\ell=0}^{s }C^\ell\Bigl(\Bigl(t_*, \frac{\pi}{2H}\Bigr],H^{s-\ell}(\Sigma )\Bigr)
\end{equation*}
on $\bigl(t_*, \frac{\pi}{2H}\bigr]\times \Sigma$ of the reduced conformal Einstein--Yang--Mills equations, see \eqref{E:CONFEIN5} and \eqref{eq-conformal-Maxwell-div}--\eqref{eq-conformal-Maxwell-Bianchi}, that satisfies both the temporal and wave gauge conditions, that is, \eqref{e:temgg} and \eqref{E:CONSTR1}. 

\bigskip

\noindent \underline{Step $3$: A local solution of the Fuchsian system \eqref{e:FchEYM}.}
By Theorems \ref{thm-FOSHS-m} and \ref{thm-FOSHS-Maxwell}, the solution $(g^{ab}, A_a)$ of the reduced conformal Einstein--Yang--Mills equations from Step 2 determines via \eqref{decom-g}--\eqref{def-MYM} and \eqref{e:hatu} a solution 
\begin{align*}
	\widehat{\mathbf{U}} \in C^0\Bigl(\Bigl(t_*, \frac{\pi}{2H}\Bigr],H^s(\Sigma;V)\Bigr) \cap C^1\Bigl(\Bigl(t_*, \frac{\pi}{2H}\Bigr],H^{s-1}(\Sigma;V)\Bigr)\cap \Li \Bigl(\Bigl(t_*, \frac{\pi}{2H}\Bigr],H^s(\Sigma;V)\Bigr)
\end{align*}
on $\bigl(t_*, \frac{\pi}{2H}\bigr]\times \Sigma$ 
of the Fuchsian equation \eqref{e:FchEYM}. 

\bigskip

\noindent\underline{Step $4$: Fuchsian initial data bounds.} By assumption, the initial data set satisfies
\begin{align*}
	\lVert (\tilde{g}^{ab}-\tilde{\ulg}^{ab},\,\tilde{\nb}_{d}\tilde{g}^{ab},\, \tilde{A}^\star_a,\, d\tilde{A}^\star_{a b}, \, \tilde{E}^\star_{b})|_{\tau=0} \rVert_{H^s} < \delta.
\end{align*} 
Using this bound, it then not difficult to verify from the conformal transformations \eqref{CONFG-g}--\eqref{CONFG-A}, the gauge transformation \eqref{e:ggtsf1}, where we recall that $\tilde{\mathfrak{u}}|_{\tau=0}=\mathds{1}$,  and the bound \eqref{e:L2} from Proposition \ref{t:uaest}
that there exists a $\delta_0>0$ and a constant $C>0$, such that  
\begin{equation*}\mathbf{U}_0 = \mathbf{U}|_{t=\frac{\pi}{2H}},
\end{equation*} 
where $\mathbf{U}$ is defined by \eqref{def-U}
is bounded by
\begin{align*}
	\lVert \mathbf{U}_0\rVert_{H^s} \leq &C  \lVert (g^{ab}- \ulg^{ab},\nb_dg^{ab}, \, A_a,E_{a},H_{ab})|_{t=\frac{\pi}{2H}}\rVert_{H^{s }}    \notag \\ 
	\leq &  C  \lVert (\tilde{g}^{ab}- \tilde{\ulg}^{ab},\tilde{\nb}_d \tilde{g}^{ab},\tilde{A}_a,\tilde{E}_{a}, \tilde{H}_{ab} )|_{\tau=0}\rVert_{H^{s }}   \notag  \\ 
	\leq & C\lVert (\tilde{g}^{ab}-\tilde{\ulg}^{ab},\,\tilde{\nb}_{d}\tilde{g}^{ab},\, \tilde{A}^\star_a,\, d\tilde{A}^\star_{a b}, \, \tilde{E}^\star_{b})|_{\tau=0} \rVert_{H^s} 
	<   C \delta 
\end{align*}
for all $\delta \in [0,\delta_0)$, and $\tilde{E}_{c}$ and $\tilde{H}_{ab}$ are defined by $\tilde{E}_{c} = \tensor{\tilde{\ulh}}{^a_c} \tilde \nu^b \tilde{F}_{ab}$ and
$\tilde{H}_{ab}=  \tensor{\tilde{\ulh}}{^c_a}  \tensor{\tilde{\ulh}}{^d_b}\tilde{F}_{c d}$, 
respectively.
We further observe from the definition \eqref{e:hatu} of $\widehat{\mathbf{U}}$ that
\begin{equation}\label{wbfU0-bnd-a}
	\lVert \widehat{\mathbf{U}}\rVert_{H^s} \leq C \lVert \mathbf{U}\rVert_{H^s}
\end{equation}
for some constant $C>0$ that is independent of $\widehat{\mathbf{U}}$ and $\mathbf{U}$. From the above two inequalities, we conclude that
$\widehat{\mathbf{U}}_0 = \widehat{\mathbf{U}}|_{t=\frac{\pi}{2H}}$
is bounded by
\begin{equation} \label{wbfU0-bnd}
	\lVert \widehat{\mathbf{U}}_0\rVert_{H^s} \leq C\delta, \quad 0\leq  \delta < \delta_0.
\end{equation}

\bigskip

\noindent \underline{Step $5$: Global existence and stability.}
Assuming that $\delta$ is sufficiently small, the bound \eqref{wbfU0-bnd} implies via Theorem \ref{t:Uglobal} that $\widehat{\mathbf{U}}$ satisfies the energy estimate
\begin{equation}\label{e:ineq1-hatU}
	\lVert \widehat{\mathbf{U}}(t)\rVert_{H^s}^2 + \int_t^{\frac{\pi}{2H}}\frac{1}{\tau}\lVert \widehat{\Pbb} \widehat{\mathbf{U}}(\tau)\rVert^2_{H^s}d\tau \leq 
	C \delta^2, \quad t_*<t\leq \frac{\pi}{2H},
\end{equation}
where the constant $C$ is independent of $\delta \in [0,\delta_0)$ and $t_*\in \bigl[0,\frac{\pi}{2H}\bigr)$.

Next, we will show that the uniform bound \eqref{e:ineq1-hatU} on $\widehat{\mathbf{U}}$ implies a corresponding uniform bound on the physical variables $\tilde{g}^{ab} - \tilde \ulg_{a b}, \tilde{\nb}_{d}\tilde{g}^{ab},\tilde{F}^\star_{ab},\tilde{A}^\star_a$. To this end, we observe from \eqref{CONFG-g}  and \eqref{E:CONFFAC} that  
\begin{align*}
	\tilde{\nb}_c\tilde{g}^{ab}=&\tilde{\nb}_c(\tilde{g}^{ab}-\tilde{\ulg}^{ab})= \tilde{\nb}_c e^{-2\Psi}(g^{ab}- \ulg^{ab})+e^{-2\Psi}\tilde{\nb}_c (g^{ab}- \ulg^{ab}) \notag  \\
	=&  \frac{2e^{-2\Psi}}{\tan(Ht)} \nu_c (g^{ab}-\ulg^{ab})+e^{-2\Psi} [\nb_c g^{ab}+\widehat{X}^a_{cd}(g^{db}- \ulg^{db})+\widehat{X}^b_{cd}(g^{ad}- \ulg^{ad})]
\end{align*}
where we recall that $\widehat{X}^b_{cd}$ is defined by \eqref{e:hX}. Applying the $H^s$ norm to this expression yields
\begin{align}\label{e:nbg0}
	\lVert \tilde{\nu}^c\tilde{\nb}_c\tilde{g}^{ab}  \rVert_{H^s} + \lVert  \tensor{\tilde \ulh}{^c_d}\tilde{\nb}_c\tilde{g}^{ab}\rVert_{H^s} \leq  Ct^2\lVert g^{ab}- \ulg^{ab}\rVert_{H^s}+e^{-2\Psi}\lVert \nb_c g^{ab}\rVert_{H^s}.
\end{align}
From \eqref{E:W}--\eqref{E:q}, we also observe via a straightforward calculation that 
\begin{align}\label{e:g1}
	g^{ab}-\ulg^{ab}=& \bigl(e^{\frac{\tta tm-s}{n-3}}-1\bigr)s^{ab}+\bigl(e^{\frac{\tta tm-s}{n-3}}-1\bigr)\ulh^{ab}-\tta t m \nu^a\nu^b-2\ttb t p^{(b}\nu^{a)}
\end{align}
and
\begin{align}\label{e:nbg1}
	\nb_c g^{ab}= & \frac{m_c+\frac{\tta}{\ttj H} m \nu_c-s_c}{n-3}(s^{ab}-\ulh^{ab})e^{\frac{\tta tm-s}{n-3}}+e^{\frac{\tta tm-s}{n-3}} \ts{s}{^{ab}_c} -\Bigl(m_c+\frac{\tta}{\ttj H} m \nu_c\Bigr)\nu^a\nu^b\notag  \\
	& -\Bigl(\tp{b}{c}+\frac{B}{\ttk H}p^b\nu_c\Bigr)\nu^a-\Bigl(\tp{a}{c}+\frac{B}{\ttk H}p^a\nu_c\Bigr)\nu^b.
\end{align}
Then, with the help of \eqref{E:CONFFAC}, \eqref{def-U}, \eqref{wbfU0-bnd-a} and \eqref{e:nbg0}--\eqref{e:nbg1}, we see that the uniform bound \eqref{e:ineq1-hatU} implies that
\begin{align}
	& \lVert  (\tilde{g}^{ab}-\tilde{\ulg}^{ab})(\tau) \rVert_{H^{s+1}}+\lVert \tilde{\nu}^c\tilde{\nb}_c\tilde{g}^{ab}(\tau) \rVert_{H^s} \notag \\
	\leq &  Ct^2\lVert (g^{ab}- \ulg^{ab})(t)\rVert_{H^s}+e^{-2\Psi}\lVert \nb_c g^{ab}(t)\rVert_{H^s} \notag  \\
	\leq & C t^2 \lVert  \mathbf{U}(t) \rVert_{H^{s}} \leq  C \delta t^2 \label{decay-tilde g}
\end{align}
hold for all $\tau \in [0,\tau_*)$. We also observe as a consequence of the bound \eqref{e:L3} from Proposition \ref{t:uaest} and the bound from Corollary \ref{t:impv} on the renormalized conformal Yang--Mills fields \eqref{e:newYM} that
\begin{align}
	& \lVert \tilde{A}^\star_a (\tau)\rVert_{H^s}+ \lVert \tilde{E}^\star_{a}(\tau) \rVert_{H^s} +  \lVert \tilde{H}^\star_{ab}(\tau) \rVert_{H^s} \notag \\
	\leq & C\bigl(  \lVert  \tilde{A}_a(\tau) \rVert_{H^s}  + \delta +  \lVert \tilde{E}_{a}(\tau) \rVert_{H^s} +\lVert \tilde{H}_{ab}(\tau) \rVert_{H^s}\bigr)  \nnb \\
	\leq {}& C\bigl( \lVert \mathring{A}_a(t) \rVert_{H^s}   +  \delta + \lVert \mathring{E}_{a}(t) \rVert_{H^s} + \lVert \mathring{H}_{ab}(t) \rVert_{H^s}\bigr) \leq C \delta, \label{uniform-tilde YM}
\end{align}
for all $\tau \in [0,\tau_*)$,
where we have set
$\tilde{E}^\star_{c} = \tensor{\tilde h}{^a_c} \tilde T^b \tilde{F}^\star_{ab}$ and $\tilde{H}^\star_{ab}=  \tensor{\tilde h}{^c_a}  \tensor{\tilde h}{^d_b} \tilde{F}^\star_{c d}$.
Together the estimates \eqref{decay-tilde g} and \eqref{uniform-tilde YM} imply that 
\begin{equation}\label{e:ctprn}
	\lVert  ( \tilde{\nu}^d \tensor{\tilde{g}}{^{ab}_d}, \, \tensor{\tilde{\ulh}}{^d_{c}} \tensor{\tilde{g}}{^{a b}_d}, \, \tilde{g}^{ab}, \, \tilde{E}^\star_a, \, \tilde{A}^\star_d, \, \tilde{H}^\star_{ab} )\rVert_{\Li ([0,\tau_*), W^{1,\infty})}<\infty.
\end{equation}

Now, by way of contradiction, assume that  $\tau_*$ is the maximal time of existence for the solution $( \tilde g^{a b}, \,\tilde{A}^\star_a,)$ and that it is finite, i.e. $\tau_*<\infty$. Then since the bound \eqref{e:ctprn} implies by the continuation principle from Step 1 that the solution $( \tilde g^{a b}, \,\tilde{A}^\star_a,)$ can be continued to a larger time interval, we arrive at a contradiction. Thus,  we must have that $\tau_*=\infty$, which establishes that the solution $( \tilde g^{a b}, \,\tilde{A}^\star_a)$ exists globally on $[0,\infty)\times \Sigma$ and satisfies the uniform bounds \eqref{e:ctprn}.

\appendix

\section{Technical calculations}\label{s:App1}
\subsection{Geometric identities}\label{s:App2} In this appendix, we state a number of geometric identities that will be used throughout the article. Since the derivation of the identities is straightforward (although, it should be noted that some are quite lengthy), we, for the most part, omit the details.  
Below, we use the same notation for geometric objects as in \S\ref{s:ds}--\S\ref{sec:gauge-conditions}, \S\ref{sec-Fuchsian-field} and \S\ref{sec:prelim}.

\begin{lemma}\label{t:nbnuh}
	The vector field $\nu^a$ and projection operator $\tensor{\ulh}{^a_b}$ defined by \eqref{E:NORMT} and \eqref{E:PRO}, respectively, satisfy
	\al{NBNU}{
		\udn{a}\nu_b=0, \quad \udn{a}\nu^b=0, \quad \udn{c}\tensor{\ulh}{^a_b}=0,
	}
	and
	\al{RELBABLA}{
		-2\nabla^{(a}\nu^{b)}=\mathcal{L}_{\nu} g^{ab}=\nu^c\udn{c}g^{ab}.
	}		
\end{lemma}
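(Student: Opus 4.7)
The strategy is to exploit the product structure of the conformal de Sitter metric $\udl{g}_{ab}=-H^2(dt)_a(dt)_b+\ulh_{ab}$ and then let the first three identities feed into the fourth via a standard Lie-derivative/torsion-free-connection calculation. First I would introduce adapted coordinates $(t,x^i)$ with $x^i$ on $\Sigma=\mathbb{S}^{n-1}$, in which $\udl{g}_{00}=-H^2$ (constant), $\udl{g}_{0i}=0$, and $\udl{g}_{ij}=\ulh_{ij}(x)$ is independent of $t$. A direct computation of the Christoffel symbols shows that $\Gamma^{0}{}_{ab}=0$ for all $a,b$ (because $\ulg_{00}$ is constant and $\ulg_{0i}=0$) and $\Gamma^{i}{}_{00}=\Gamma^{i}{}_{0j}=0$ (because $\ulg^{ij}\partial_j\ulg_{00}=0$ and $\partial_0\ulg_{ij}=0$); only the spatial block $\Gamma^{i}{}_{jk}$ survives and it coincides with the Levi-Civita symbols of $\ulh$ on $\Sigma$.

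Since $\nu_a=H(dt)_a$ has components $\nu_0=H$, $\nu_i=0$, and the only potentially non-zero contribution to $\udn{a}\nu_b=\partial_a\nu_b-\Gamma^c{}_{ab}\nu_c$ is $-H\Gamma^{0}{}_{ab}$, which vanishes by the above, we obtain $\udn{a}\nu_b=0$. Metric compatibility of $\udn{}$ then gives $\udn{a}\nu^b=\udn{a}(\udl{g}^{bc}\nu_c)=\udl{g}^{bc}\udn{a}\nu_c=0$. Combining these two identities with the definition \eqref{E:PRO}, we get
\als{
\udn{c}\tensor{\ulh}{^a_b}=\udn{c}(\tensor{\delta}{^a_b}+\nu^a\nu_b)=(\udn{c}\nu^a)\nu_b+\nu^a(\udn{c}\nu_b)=0,
}
which establishes the three identities in \eqref{E:NBNU}.

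For the identity \eqref{E:RELBABLA}, I would compute $\mathcal{L}_\nu g^{ab}$ in two different ways using torsion-free connections. Using $\udn{}$, the standard formula for the Lie derivative yields
\als{
\mathcal{L}_\nu g^{ab}=\nu^c\udn{c}g^{ab}-g^{cb}\udn{c}\nu^a-g^{ac}\udn{c}\nu^b,
}
and the last two terms vanish by $\udn{c}\nu^a=0$, which gives $\mathcal{L}_\nu g^{ab}=\nu^c\udn{c}g^{ab}$. Using instead the Levi-Civita connection $\nabla$ of $g_{ab}$, the same Lie-derivative formula together with $\nabla g^{ab}=0$ yields
\als{
\mathcal{L}_\nu g^{ab}=-g^{cb}\nabla_c\nu^a-g^{ac}\nabla_c\nu^b=-\nabla^b\nu^a-\nabla^a\nu^b=-2\nabla^{(a}\nu^{b)},
}
which completes the chain of equalities. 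There is no genuine obstacle here: once the coordinate computation of the Christoffel symbols is in hand the remainder is a bookkeeping exercise, and the only care needed is to keep track of which connection ($\udn{}$ versus $\nabla$) is being used in the Lie-derivative formula so that the two computations of $\mathcal{L}_\nu g^{ab}$ can be equated.
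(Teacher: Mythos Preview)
Your proof is correct. The paper does not actually give a proof of this lemma: it is placed in an appendix of ``technical calculations'' with the blanket remark that the derivations are straightforward and the details are omitted. Your coordinate computation of the Christoffel symbols for the product metric and the two-connection computation of $\mathcal{L}_\nu g^{ab}$ is exactly the kind of direct verification the authors have in mind, so there is nothing to compare.
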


\begin{lemma}\label{t:Rdop}
	The Ricci tensor $R^{ab}$ of the conformal metric $g_{ab}$ can be expressed in terms of the connection $\udl{\nabla}_a$ and curvature $\tensor{\udl{R}}{_{cde}^a}$ of the conformal de Sitter metric $\ulg_{ab}$, see \eqref{E:DESIT2}, by 
	\al{RICCI}{
		R^{ab}=\frac{1}{2} g^{cd}\udn{c}\udn{d}g^{ab}+\nabla^{(a}X^{b)}+\udl{R}^{ab}+P^{ab}(g^{-1})+Q^{ab}(g^{-1}, \udn{} g^{-1}),
	}
	where $X$ is given by \eqref{def-X},
	\als{
		P^{ab}(g^{-1})
		=& -\frac{1}{2} (g^{ac}-\ulg^{ac})\ulg^{de}\tensor{\udl{R}}{_{cde}^b} -\frac{1}{2} \ulg^{ac}(g^{de}-\ulg^{de})\tensor{\udl{R}}{_{cde}^b}  \\
		&-\frac{1}{2} (g^{ac}-\ulg^{ac})(g^{de}-\ulg^{de})\tensor{\udl{R}}{_{cde}^b}  -\frac{1}{2}(g^{bc}-\ulg^{bc})\ulg^{de}\tensor{\udl{R}}{_{cde}^a} \\
		&-\frac{1}{2}\ulg^{bc}(g^{de}-\ulg^{de})\tensor{\udl{R}}{_{cde}^a}-\frac{1}{2}(g^{bc}-\ulg^{bc})(g^{de}-\ulg^{de})\tensor{\udl{R}}{_{cde}^a}
	}
	and
	\als{
		Q^{ab}(g^{-1}, \udn{} g^{-1})   
		={}& -\frac{1}{4}\bigl(g^{ad}g^{bf}\udn{d}g_{ce}\udn{f}g^{ce}+g^{ad}g^{bf}\udn{f}g_{ce}\udn{d}g^{ce} \\
		& +g_{ef}g^{ac}\udn{c}g^{bd}\udn{d}g^{ef}  +g_{ef}g^{bd}\udn{d}g^{ac}\udn{c}g^{ef}\bigr)  \\
		&+\frac{1}{2}\big(g^{ae}g_{fc}\udn{e} g^{bd} \udn{d}g^{fc}+g^{ae} g^{bd} \udn{e} g_{fc} \udn{d} g^{fc} \\
		& -\udn{c}g^{ae}\udn{e}g^{cb}-\udn{e}g^{bc}\udn{c} g^{ea}\big) -g^{ac}\tensor{X}{^b_{cd}}X^d   \\
		&+g^{af}g^{bd}\tensor{X}{^{c}_{fd}}\tensor{X}{^e _{ce}}-g^{af}g^{bd}\tensor{X}{^{c}_{ed}}\tensor{X}{^e _{cf}} \\
		&+\tensor{X}{^e_{ed}} g^{af} \udn{f} g^{bd} -\tensor{X}{^e_{fd}}g^{af}\udn{e} g^{bd}-\tensor{X}{^e_{fd}}g^{bd}\udn{e} g^{af}.
	}
\end{lemma}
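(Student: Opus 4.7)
The plan is to exploit the fact that any two torsion-free linear connections on the tangent bundle differ by a $(1,2)$-tensor field. In the notation of \S\ref{s:conx}, the connections $\nabla_a$ and $\underline{\nabla}_a$ differ precisely by the explicit tensor $\tensor{X}{^a_{bc}}$, and the standard identity relating the Riemann tensors of the two connections reads
\begin{equation*}
\tensor{R}{_{cde}^a} = \tensor{\udl{R}}{_{cde}^a} + \underline{\nabla}_c \tensor{X}{^a_{de}} - \underline{\nabla}_d \tensor{X}{^a_{ce}} + \tensor{X}{^a_{cf}} \tensor{X}{^f_{de}} - \tensor{X}{^a_{df}} \tensor{X}{^f_{ce}}.
\end{equation*}
I would begin from this identity, trace on the last pair of indices to obtain the Ricci tensor $R_{de} = \tensor{R}{_{cde}^c}$, and then raise both indices with $g^{ad} g^{be}$ to reach $R^{ab}$.

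The second step is to isolate the principal ``wave'' term. Using the explicit formula for $\tensor{X}{^a_{bc}}$ from \S\ref{s:conx} together with the definition $X^a = g^{bc}\tensor{X}{^a_{bc}}$ from \eqref{def-X}, a direct computation shows that the contracted derivative $g^{ac}g^{bd}\underline{\nabla}_{[c}\tensor{X}{^f_{d]f}}$-type combinations reorganize, modulo terms quadratic in $\tensor{X}{^a_{bc}}$ and in $\underline{\nabla}g^{-1}$, into the principal part $\tfrac{1}{2} g^{cd}\underline{\nabla}_c\underline{\nabla}_d g^{ab} + \nabla^{(a}X^{b)}$. Here one uses crucially that contracting $\underline{\nabla}_c\tensor{X}{^a_{de}}$ via $g^{cd}$ produces a term of the form $-\tfrac{1}{2} g^{cd}\underline{\nabla}_c\underline{\nabla}_d g^{ae}$ (after symmetrization and raising the $e$ index), while the remaining derivative of $\tensor{X}{^a_{bc}}$ produces precisely the divergence that, on symmetrization in $(a,b)$, yields $\nabla^{(a}X^{b)}$.

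The third step handles the pure curvature piece. The contraction $\ulg^{ac}\ulg^{be}\tensor{\udl{R}}{_{cde}^d}$ reproduces $\udl{R}^{ab}$ by definition, and the corrections arising from replacing $\ulg$ by $g$ in the raising of the two outer indices collect exactly into the explicit sum $P^{ab}(g^{-1})$ stated in the lemma, where the three natural groupings (one factor of $g^{\cdot\cdot}-\ulg^{\cdot\cdot}$ on the first slot, one on the second, and both simultaneously) correspond to the six displayed terms. All algebraically quadratic remainders, namely the products $\tensor{X}{^a_{cf}}\tensor{X}{^f_{de}}$ after contraction, together with the lower order remainders produced when expanding $\underline{\nabla}_c\tensor{X}{^a_{de}}$ through the explicit formula for $\tensor{X}{^a_{bc}}$, reassemble into the explicit expression for $Q^{ab}(g^{-1},\underline{\nabla}g^{-1})$ recorded in the lemma.

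The main obstacle is not conceptual but organizational: one must verify that every term produced by expanding $\underline{\nabla}_c\tensor{X}{^a_{de}}$ and $\tensor{X}{^a_{cf}}\tensor{X}{^f_{de}}$ via the explicit formula for $\tensor{X}{^a_{bc}}$ lands in the correct slot with the correct sign and coefficient. I would proceed by first grouping the two derivative contractions (to form the principal part plus $\nabla^{(a}X^{b)}$), then the two pure curvature contractions (to form $\udl{R}^{ab}+P^{ab}$), and finally collecting all remaining connection-difference products into the eight displayed terms of $Q^{ab}$. The verification is essentially forced once the correct splitting is chosen, since the symmetry in $(a,b)$ and the explicit symmetric form of $\tensor{X}{^a_{bc}}$ in its lower indices leave no ambiguity.
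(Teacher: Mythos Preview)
Your approach is correct and is exactly the standard derivation one would carry out; the paper itself omits the proof of this lemma, stating only that the derivation is ``straightforward (although \ldots\ quite lengthy)''. Your outline---starting from the identity $\tensor{R}{_{cde}^a} = \tensor{\udl{R}}{_{cde}^a} + \underline{\nabla}_{[c} \tensor{X}{^a_{d]e}} + \tensor{X}{^a_{[c|f|}} \tensor{X}{^f_{d]e}}$, tracing and raising indices, isolating the principal part $\tfrac{1}{2}g^{cd}\underline{\nabla}_c\underline{\nabla}_d g^{ab}+\nabla^{(a}X^{b)}$, and then sorting the remainder into the curvature corrections $P^{ab}$ and the quadratic terms $Q^{ab}$---is precisely the intended computation.
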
 

\begin{lemma}\label{t:conf2}
	The conformal scalar $\Psi$ defined by \eqref{E:CONFFAC}  satisfies
	\gat{
		\nabla_a\Psi=-\frac{1}{\tan(Ht)} \nu_a,  \label{E:CAL1}\\
		\nabla_a\nabla_b\Psi=\frac{1}{\sin^2(Ht)}\nu_a\nu_b - \frac{1}{\tan(Ht)} \nabla_a\nu_b, \label{E:CAL2}
		\\
		\Box \Psi = \lambda \frac{1}{\sin^2(Ht)}+  \frac{1}{\tan(Ht)} X^c\nu_c , 
		\label{E:CAL3}
		\intertext{and}
		g^{ab}\nabla_a\Psi\nabla_b \Psi=\lambda \frac{1}{\tan^2(Ht)}. \label{E:CAL4}
	}
\end{lemma}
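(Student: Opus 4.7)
The proof proceeds by direct computation from the explicit formula $\Psi=-\ln(\sin(Ht)/H)$ given in \eqref{E:CONFFAC}, together with the fact that $\nu_a=H(dt)_a$ from \eqref{E:NORMT}. Since $\Psi$ is a scalar field, $\nabla_a\Psi=\udn{a}\Psi=(d\Psi)_a$, so I would first compute the ordinary differential: $\partial_t\Psi=-H\cot(Ht)$, which gives $(d\Psi)_a=-H\cot(Ht)(dt)_a=-\frac{1}{\tan(Ht)}\nu_a$. This establishes \eqref{E:CAL1}.

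For \eqref{E:CAL2}, I would apply $\nabla_a$ to the formula just obtained, using the Leibniz rule:
\[
\nabla_a\nabla_b\Psi=-\nabla_a\!\left(\frac{1}{\tan(Ht)}\right)\nu_b-\frac{1}{\tan(Ht)}\nabla_a\nu_b.
\]
The scalar factor differentiates via $\frac{d}{dt}\cot(Ht)=-H/\sin^2(Ht)$, yielding $\nabla_a\cot(Ht)=-\frac{1}{\sin^2(Ht)}\nu_a$, and substituting gives precisely \eqref{E:CAL2}. Formula \eqref{E:CAL4} then follows by squaring \eqref{E:CAL1} and contracting with $g^{ab}$, using $\lambda=g^{ab}\nu_a\nu_b$ from \eqref{decom-g}.

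The only nontrivial step is \eqref{E:CAL3}, where I would contract \eqref{E:CAL2} with $g^{ab}$ to obtain
\[
\Box\Psi=\frac{\lambda}{\sin^2(Ht)}-\frac{1}{\tan(Ht)}\,g^{ab}\nabla_a\nu_b,
\]
and then identify $g^{ab}\nabla_a\nu_b$ with $-X^c\nu_c$. For the latter, I would use the connection difference formula of \S\ref{s:conx} to write $\nabla_a\nu_b=\udn{a}\nu_b-\tensor{X}{^c_{ab}}\nu_c$; the first term vanishes by \eqref{E:NBNU} in Lemma \ref{t:nbnuh}, and contracting the second with $g^{ab}$ and invoking the definition \eqref{def-X} of $X^c$ gives $g^{ab}\nabla_a\nu_b=-X^c\nu_c$, from which \eqref{E:CAL3} follows immediately.

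The main (modest) obstacle is the bookkeeping in step three, where one has to keep track of sign conventions in the connection difference tensor $\tensor{X}{^a_{bc}}$ and ensure the covariant index manipulations are consistent with our convention that the background connection annihilates $\nu_a$. Beyond this, the lemma is essentially a one-variable calculus exercise applied to $-\ln\sin(Ht)$, dressed up in abstract index notation.
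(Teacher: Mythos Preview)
Your proposal is correct and complete. The paper itself omits the proof of this lemma, noting in the appendix introduction that the derivations are straightforward; your direct computation from $\Psi=-\ln(\sin(Ht)/H)$ together with the connection-difference identity $\nabla_a\nu_b=\udn{a}\nu_b-\tensor{X}{^c_{ab}}\nu_c$ and \eqref{E:NBNU} is exactly the natural way to supply the missing details.
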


\begin{lemma}\label{lem-identity}
	The conformal metric $g^{a b}$ can be represented as
	\begin{equation}\label{E:g-h}
		g^{a b} = h^{a b} + \lambda \nu^a \nu^b - 2\xi^{(a} \nu^{b)}
	\end{equation}
	where $\nu^a$, $h^{ab}$ and $\xi^a$ are defined by \eqref{E:NORMT} and \eqref{decom-g}.
	Moreover the tensor $Q^{edc}$ defined by \eqref{E:BTENSOR} satisfies the following relations:
	\gat{
		Q^{edc} \nu_c =   h^{e d} - \lambda \nu^e \nu^d,  \quad
		Q^{edc} \ts{\ulh}{^{a}_{d}} =  \nu^e h^{c a} - \nu^c h^{e a}, \quad				Q^{edc} \nu_d=   2 \nu^e \xi^c - h^{e c} - \lambda \nu^c \nu^{e}, \notag 
		\\
		Q^{edc}\nu_d\nu_c =   \lambda \nu^e,  \quad
		\ts{\ulh}{^{b}_{e}} Q^{edc} \ts{\ulh}{^{a}_{d}} =   - \nu^c h^{a b}.   \quad
		\nu_e Q^{edc} \ts{\ulh}{^{a}_{d}}  =  - h^{a c}, \notag  
		\\
		\nu_e Q^{edc} \nu_d \nu_c= -\lambda,  \quad  \nu_e Q^{edc} \nu_d \ts{\ulh}{^{a}_{c}}  = -2 \xi^a, \quad
		\nu_e Q^{edc} \ts{\ulh}{^{a}_{d}} \nu_c=0, \nnb \\  \nu_e Q^{edc} \ts{\ulh}{^{a}_{d}}  \ts{\ulh}{^{b}_{c}} = - h^{a b}, \quad
		\ts{\ulh}{^{a}_{e}} Q^{edc} \ts{\ulh}{^{b}_{d}} \nu_c = h^{a b}, \quad \ts{\ulh}{^{a}_{e}} Q^{edc} \ts{\ulh}{^{b}_{d}} \ts{\ulh}{^{f}_{c}} = 0, \nnb\\
		\ts{\ulh}{^{a}_{e}} Q^{edc} \nu_d = - h^{a c}, \quad \ts{\ulh}{^{b}_{e}} Q^{edc} \nu_d \ts{\ulh}{^{a}_{c}}  =  - h^{a b} \quad \text{and}\quad  \ts{\ulh}{^{a}_{e}} Q^{edc} \nu_d \nu_c = 0. \nnb
	}
	
\end{lemma}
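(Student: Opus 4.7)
The plan is to treat the lemma as a sequence of multilinear-algebraic identities that follow mechanically from two basic inputs: (i) the orthogonal decomposition of the identity endomorphism induced by the spatial projection, and (ii) the fact that $\nu_a$ is a unit covector with respect to the conformal de Sitter metric. Both are essentially immediate from \eqref{E:NORMT} and \eqref{E:PRO}, so the real work is organizing the contractions so that no step is repeated.

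First I would prove \eqref{E:g-h}. From \eqref{E:PRO} we read off $\tensor{\delta}{^a_b}=\tensor{\ulh}{^a_b}-\nu^a\nu_b$, and substituting this into $g^{ab}=\tensor{\delta}{^a_c}\tensor{\delta}{^b_d}g^{cd}$ splits $g^{ab}$ into four pieces. Using the definitions in \eqref{decom-g} of $\lambda$, $\xi^c$ and $h^{ab}$, three of these pieces are exactly $h^{ab}$, $\lambda\nu^a\nu^b$ and $-2\xi^{(a}\nu^{b)}$, giving \eqref{E:g-h}. Along the way I would extract the elementary consequences that will be used repeatedly: $\nu^a\nu_a=-1$ (from the explicit form of $\ulg_{ab}$ in \S\ref{s:ds}), $\tensor{\ulh}{^a_b}\nu_a=0$ and $\tensor{\ulh}{^a_b}\nu^b=0$, and, as a consequence, $h^{ab}\nu_a=0$ and $\xi^a\nu_a=0$. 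Combining these with \eqref{E:g-h} yields the single key identity
\begin{equation*}
g^{dc}\nu_c=\xi^d-\lambda\nu^d,
\end{equation*}
which is really the engine behind everything that follows.

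Next I would dispatch the $Q^{edc}$ relations systematically. From the definition \eqref{E:BTENSOR}, $Q^{edc}=\nu^e g^{dc}+\nu^d g^{ec}-\nu^c g^{ed}$, so contracting a given slot of $Q^{edc}$ with either $\nu$ or $\tensor{\ulh}{}{}$ only requires substituting the key identity above (or its projected analogues $g^{ed}\nu_d\tensor{\ulh}{^a_c}=\xi^e\tensor{\ulh}{^a_c}-\lambda\nu^e\tensor{\ulh}{^a_c}$, $\tensor{\ulh}{^a_d}g^{dc}=\xi^a\nu^c+h^{ac}$, etc., all read directly off \eqref{E:g-h}), together with the vanishing relations $\tensor{\ulh}{^a_b}\nu_a=0=\xi^b\nu_b=h^{ab}\nu_a$. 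For instance $Q^{edc}\nu_c=\nu^e(\xi^d-\lambda\nu^d)+\nu^d(\xi^e-\lambda\nu^e)-(-1)g^{ed}$, and upon expanding $g^{ed}$ via \eqref{E:g-h} the mixed $\xi\nu$ terms cancel and one obtains $h^{ed}-\lambda\nu^e\nu^d$. The remaining identities are all analogous one-line computations once \eqref{E:g-h} and the three vanishing relations have been established. I would tabulate them according to which slots are contracted with $\nu$ versus $\tensor{\ulh}{}{}$ to avoid redundant work, and use the antisymmetry in the second and third slots of $Q$ in the form $Q^{e[dc]}$-style observations to halve the number of independent cases.

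There is essentially no analytical obstacle here; the only risk is a bookkeeping error on signs, because $\nu^a\nu_a=-1$ (rather than $+1$) interacts with the $-2\xi^{(a}\nu^{b)}$ piece of \eqref{E:g-h} in several of the contractions, and the antisymmetry of $Q$ in the last pair of indices must be respected when reading off, say, $Q^{edc}\nu_d$ versus $Q^{edc}\nu_c$. I would therefore present the proof by first listing the three auxiliary formulas ($\nu^a\nu_a=-1$, $\tensor{\ulh}{^a_b}\nu_a=0$, $g^{dc}\nu_c=\xi^d-\lambda\nu^d$) and then organizing the verifications into a short table, which keeps the argument transparent and mechanical.
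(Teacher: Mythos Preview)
Your approach is correct and is exactly what the paper intends; the authors place this lemma in the appendix among identities whose derivations are ``straightforward'' and explicitly omit the details, so there is no alternative proof to compare against. One caution: $Q^{edc}$ is \emph{not} antisymmetric in its last two indices (indeed $Q^{e(dc)}=\nu^e g^{dc}\neq 0$), so your remark about respecting ``the antisymmetry of $Q$ in the last pair of indices'' and using it to halve the cases should be dropped or rephrased---if you actually relied on that symmetry you would get wrong signs in, e.g., distinguishing $Q^{edc}\nu_d$ from $Q^{edc}\nu_c$. Everything else (the decomposition $\tensor{\delta}{^a_b}=\tensor{\ulh}{^a_b}-\nu^a\nu_b$, the key identity $g^{dc}\nu_c=\xi^d-\lambda\nu^d$, and the mechanical tabulation of contractions) is exactly right.
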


\begin{lemma}\label{lem-model-symm}
	Suppose $\zeta$  and $\zeta_d = \nb_d \zeta$ satisfy the equation
	\al{model-system-1}{
		\mathbf{A}^c\udn{c}\p{\zeta_d \\ \zeta }=\frac{1}{Ht} \mathcal{B} \p{\zeta_d\\ \zeta}+G 
	}
	where
	\begin{equation}
		\mathbf{A}^c=\p{ Q^{edc} & 0   \\
			0 & -\nu^c }. \nnb
	\end{equation} 
	Then $\nu^e \zeta_e$, $\ts{\ulh}{^e_{\he}}\zeta_e$ and $\zeta$ satisfy
	\gat{
		\bar{\mathbf{A}}^c\udn{c}\p{-\nu^e \zeta_e \\ \ts{\ulh}{^e_{\he}} \zeta_e \\ \zeta }=\frac{1}{Ht} \bar{\mathcal{B}} \p{-\nu^e \zeta_e \\ \ts{\ulh}{^e_{\he}} \zeta_e \\ \zeta } +\bar{G} \nnb
	}
	where
	\gat{
		\bar{\mathbf{A}}^c=  \p{\nu_e & 0 \\ \ts{\ulh}{^f_e} & 0 \\ 0 & 1 } A^c \p{\nu_d & \ts{\ulh}{^{\he}_d} & 0 \\0 & 0 & 1} = \p{ \nu_e Q^{edc}\nu_d &  \nu_e Q^{edc}\ts{\ulh}{^{\he}_d} & 0 \\
			\ts{\ulh}{^f_e}Q^{edc}\nu_d  &  \ts{\ulh}{^f_e}Q^{edc}\ts{\ulh}{^{\he}_d} & 0 \\ 0 & 0 & -\nu^{c}}, \nnb
	} 
	\gat{
		\bar{\mathcal{B}} = \p{\nu_e & 0 \\ \ts{\ulh}{^f_e} & 0 \\ 0 & 1 } \mathcal{B} \p{\nu_d & \ts{\ulh}{^{\he}_d} & 0 \\0 & 0 & 1} \quad \text{and}\quad
		\bar{G}=\p{\nu_e & 0 \\ \ts{\ulh}{^f_e} & 0 \\ 0 & 1 } G.  \nnb
	}
\end{lemma}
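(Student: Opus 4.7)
The plan is to reformulate \eqref{E:model-system-1} by applying two block matrices: a right multiplier $M_R$ that expresses the old variables $(\zeta_d,\zeta)$ in terms of the new variables $(-\nu^e\zeta_e,\ts{\ulh}{^e_{\he}}\zeta_e,\zeta)$, and a left multiplier $M_L$ that projects each component of \eqref{E:model-system-1} onto the corresponding normal and tangential pieces. The main computation amounts to verifying that these multipliers pass through the covariant derivative and that the resulting composition produces the stated $\bar{\mathbf{A}}^c$, $\bar{\mathcal{B}}$ and $\bar{G}$.

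The first step is to record the algebraic identity
\begin{equation*}
\zeta_d = \ts{\delta}{^e_d}\zeta_e = (-\nu^e\nu_d + \ts{\ulh}{^e_d})\zeta_e = \nu_d(-\nu^e\zeta_e) + \ts{\ulh}{^{\he}_d}(\ts{\ulh}{^e_{\he}}\zeta_e),
\end{equation*}
which follows from \eqref{E:PRO} and the identity $\ts{\ulh}{^{\he}_d}\ts{\ulh}{^e_{\he}} = \ts{\ulh}{^e_d}$. This shows that
\begin{equation*}
\p{\zeta_d \\ \zeta} = M_R \p{-\nu^e\zeta_e \\ \ts{\ulh}{^e_{\he}}\zeta_e \\ \zeta} \quad\text{with}\quad M_R = \p{\nu_d & \ts{\ulh}{^{\he}_d} & 0 \\ 0 & 0 & 1}.
\end{equation*}
Because $\udn{c}\nu_d = 0$ and $\udn{c}\ts{\ulh}{^{\he}_d} = 0$ by Lemma \ref{t:nbnuh}, the matrix $M_R$ is covariantly constant, and hence the derivative commutes with it, yielding
\begin{equation*}
\udn{c}\p{\zeta_d \\ \zeta} = M_R \udn{c}\p{-\nu^e\zeta_e \\ \ts{\ulh}{^e_{\he}}\zeta_e \\ \zeta}.
\end{equation*}

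Next, I substitute this into \eqref{E:model-system-1} and multiply both sides on the left by
\begin{equation*}
M_L = \p{\nu_e & 0 \\ \ts{\ulh}{^f_e} & 0 \\ 0 & 1}.
\end{equation*}
On the left-hand side this produces $M_L \mathbf{A}^c M_R\, \udn{c}(\cdots)$, and on the right-hand side, after again replacing $(\zeta_d,\zeta)^{\tr}$ by $M_R(-\nu^e\zeta_e,\ts{\ulh}{^e_{\he}}\zeta_e,\zeta)^{\tr}$ in the algebraic term $\mathcal{B}(\zeta_d,\zeta)^{\tr}$, it produces $\frac{1}{Ht}M_L\mathcal{B} M_R(\cdots) + M_L G$. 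Reading off the definitions given in the statement identifies $\bar{\mathbf{A}}^c = M_L \mathbf{A}^c M_R$, $\bar{\mathcal{B}} = M_L \mathcal{B} M_R$, and $\bar{G} = M_L G$.

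The final step is to confirm that the explicit block entries of $\bar{\mathbf{A}}^c$ displayed in the statement coincide with those obtained from carrying out the matrix product $M_L \mathbf{A}^c M_R$ block by block; this is a direct calculation using the block-diagonal form of $\mathbf{A}^c$, with the scalar block producing the $-\nu^c$ entry in the $(3,3)$ position and the $Q^{edc}$ block producing the four upper-left entries $\nu_e Q^{edc}\nu_d$, $\nu_e Q^{edc}\ts{\ulh}{^{\he}_d}$, $\ts{\ulh}{^f_e}Q^{edc}\nu_d$, and $\ts{\ulh}{^f_e}Q^{edc}\ts{\ulh}{^{\he}_d}$. There is no genuine obstacle here: the entire argument is linear algebra, and the only nontrivial input is the covariant constancy statement from Lemma \ref{t:nbnuh}, which is precisely what permits $M_R$ to pass through $\udn{c}$.
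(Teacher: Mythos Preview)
Your proof is correct and follows essentially the same approach as the paper: you introduce the decomposition matrix $M_R$ (the paper calls it $Z$), use the covariant constancy of $\nu_d$ and $\ts{\ulh}{^{\he}_d}$ from Lemma \ref{t:nbnuh} to pass $M_R$ through $\udn{c}$, and then left-multiply by $M_L$ (the paper's $Z^{\tr}$) to obtain the transformed system. Your write-up is in fact slightly more explicit than the paper's own version.
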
 
\begin{proof}
	Noting the decomposition
	\als{
		\p{\zeta_d \\ \zeta}=Z \p{-\nu^e \zeta_e \\ \ts{\ulh}{^e_{\he}} \zeta_e \\ \zeta},
	}
	where	
	\als{
		Z=\p{\nu_d & \ts{\ulh}{^{\he}_d} & 0 \\0 & 0 & 1},
	}
	the proof follows from applying 
	\als{
		Z^{\tr}=\p{\nu_e & 0 \\ \ts{\ulh}{^f_e} & 0 \\ 0 & 1 }	}
	to \eqref{E:model-system-1} and using Lemmas \ref{t:nbnuh} and \ref{lem-model-symm} to obtain the stated equation.
\end{proof}

\subsection{Calculations for the proof of Theorem \ref{thm-FOSHS-m}}
The following three lemmas contain the detailed calculation needed to complete the proof of Theorem \ref{thm-FOSHS-m}.
\begin{lemma}\label{lem-FOSHS-p}
	The conformal Einstein equation \eqref{E:TSEQ} for $\xi$ can be expressed in the first order form as 
	\al{SYS1a}{
		-\bar{\mathbf{A}}_2^0 \nu^c \nb_{c} \p{-\nu^e \tp{a}{e} \\ \ts{\ulh}{^e_{\he}} \tp{a}{e} \\ p^a } + \bar{\mathbf{A}}_2^c \ts{\ulh}{^{b}_{c}}  \udn{b} \p{-\nu^e \tp{a}{e} \\ \ts{\ulh}{^e_{\he}} \tp{a}{e} \\ p^a } =\frac{1}{Ht}\bar{\mathcal{B}}_2  \p{-\nu^e \tp{a}{e} \\ \ts{\ulh}{^e_{\he}} \tp{a}{e} \\ p^a } +\bar{G}_2
	}
	where
	\gas{
		\bar{\mathbf{A}}_2^0 =  \p{-\lambda & 0 & 0\\ 0 & h^{f \he} & 0 \\ 0 & 0 & \ttf (-\lambda)}, \quad
		\bar{\mathbf{A}}_2^c \ts{\ulh}{^{b}_{c}}  =\p{ -2\xi^b & -\ts{h}{^{\he b}} & 0 \\
			-\ts{h}{^{f b}} & 0 & 0 \\ 0 & 0 & 0}, \quad \bar{G}_2=\p{\nu_e \triangle^{ea}_2(t, \mathbf{U}) \\ \ts{\ulh}{^f_e} \triangle^{ea}_2(t, \mathbf{U}) \\ 0 }\\
		\bar{\mathcal{B} }_2 =  \p{\left(n-2-\frac{1}{\ttk} \right)(-\lambda) & 0 & \left(1-\frac{1}{\ttk}\right)\left(\frac{1}{\ttk}-n+2\right)\frac{\ttb}{H}(-\lambda) \\
			0 &  \frac{1}{\ttk} \ts{h}{^{f \he}} & 0 \\
			\ttf \frac{H}{\ttb}(-\lambda) & 0 & \ttf \left(\frac{1}{\ttk}-1\right) (-\lambda)}, \nnb
	}
	$\ttf$ is a constant to be determined and the map  $\triangle^{ea}_2(t, \mathbf{U})$, which is analytic for $(t,\mathbf{U})\in \bigl(-\iota,\frac{\pi}{H}\bigr)\times B_R(0)$ for $\iota,R>0$ small enough, is given by
	\als{
		\triangle^{ea}_2(t, \mathbf{U}) = & \left(1-\frac{1}{\ttk}\right)\left(n-2-\frac{1}{\ttk} \right)
		\frac{\ttb}{H^2}  p^a \tta m\nu^e+ \left(n-2 - \frac{1}{\ttk} \right)\frac{\tta}{ H } m\nu^e\nu^d  \tp{a}{d}  \nnb  \\
		&- \frac{\ttb}{\ttk H}  p^d \nu^e\tp{a}{d} - \left(\frac{1}{\ttk}-1\right)\frac{\ttb}{\ttk}\frac{\tta}{H^2} m \nu^e p^a + \frac{1}{\ttk H} \tta m \nu^b \nu^e \tp{a}{b} - \frac{\ttb}{\ttk H} \nu^e p^b \tp{a}{b}   \nnb \\
		& - (n-2)\left( \frac{1}{Ht} -\frac{1}{ \tan(Ht)}\right) \nu^c \nu^e \tp{a}{c} - (n-2)\left(\frac{1}{ \tan(Ht)} - \frac{1}{Ht}\right) \frac{\ttb}{\ttk H} \nu^e p^a  \nnb  \\
		& - (n-2) \left( \frac{1}{ (Ht)^2} - \frac{1}{ \tan^2(Ht)}\right)\ttb t \nu^e p^a + 2 \tta \ttb \frac{t^2}{\sin^2(Ht)} m \nu^e p^a  \nnb \\
		& + (n-2) \ttb t \nu^e p^a  - 2 P^{cd}\nu^e\nu_c\ts{\ulh}{^a_d} - 2 Q^{cd}\nu^e\nu_c\ts{\ulh}{^a_d} - \frac{2}{n-2}X^c X^d\nu^e \nu_c\ts{\ulh}{^a_d}  \notag  \\
		& + 2\Bigl(\nu_a\ts{\ulh}{^e_b}g^{bd}g^{a\ha}-\frac{1}{2(n-2)}\nu_a\ts{\ulh}{^e_b}g^{ab} g^{d\ha}\Bigr) g^{c\hc}(H_{\ha\hc}-E_{\ha}\nu_{\hc}+\nu_{\ha}E_{\hc}) \notag \\
		&\times (H_{dc}-E_{d}\nu_{c}+\nu_{d}E_{c}).
	}
	
\end{lemma}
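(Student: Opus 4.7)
The plan is to mirror step by step the derivation of Lemma \ref{lem-FOSHS-m}, but applied to the wave equation \eqref{E:TSEQ} for $\xi^e$ in place of \eqref{E:TTEQ} for $\lambda+1$, and with the parameters $(\ttk,\ttb,\ttf)$ playing the roles that $(\ttj,\tta,\tte)$ played before. First, I will substitute the renormalized variables from \eqref{E:V} and \eqref{E:VD} into \eqref{E:TSEQ}, using the identity \eqref{E:ID-Lam-Xi} to rewrite $\udn{d}\xi^a$ in terms of $\tp{a}{d}$ and $p^a\nu_d$. This converts the principal part $\frac{1}{2}g^{cd}\udn{c}\udn{d}\xi^e$ into $\frac{1}{2}g^{cd}\udn{c}\tp{e}{d}$, up to explicit lower-order terms produced by the $\frac{\ttb}{\ttk H t}p^e\nu_d$ correction. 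The singular $1/t^2$ term $\frac{n-2}{2\tan^2(Ht)}\xi^e$ and the singular damping $\frac{n-2}{2\tan(Ht)}\nu^c\udn{c}\xi^e$ will, after reshuffling, combine to produce exactly the Fuchsian $1/(Ht)$ singular structure encoded in $\bar{\mathcal{B}}_2$, with the remainders absorbed into the regular (as $t\to 0$) map $\triangle^{ea}_2(t,\mathbf{U})$.

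Next, I will multiply the resulting equation by $\nu^e$ and use the symmetrizing tensor $Q^{edc}$ from \eqref{E:BTENSOR} together with the algebraic identities of Lemma \ref{lem-identity} to recast $\nu^e g^{cd}\udn{c}\tp{a}{d}$ in the symmetric form $Q^{edc}\udn{c}\tp{a}{d}$ at the cost of a commutator term $\nu^d g^{ec}(\udn{c}\tp{a}{d}-\udn{d}\tp{a}{c})$. As in the proof of Lemma \ref{lem-FOSHS-m}, this commutator can be evaluated via \eqref{E:ID-Lam-Xi} and \eqref{E:ID3-0}: writing $\tp{a}{d}=\udn{d}\xi^a-\frac{\ttb}{\ttk H t}\xi^a\nu_d$ and using $\udn{[c}\udn{d]}\xi^a=\frac{1}{2}\tensor{\udl{R}}{_{cde}^a}\xi^e$, which is a smooth curvature contribution, one obtains a term of the form $\frac{1}{\ttk H t}Q^{abc}\nu_c\ts{\ulh}{^e_a}\ts{\ulh}{^d_b}\tp{a}{d}$ plus a harmless $-\frac{\ttb}{\ttk H}p^d\tp{a}{d}\nu^e$ remainder, exactly as in \eqref{E:TTEQ3}.

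To close the system I then supplement the $\tp{a}{d}$-equation with the first-order constraint \eqref{E:ID3-0} multiplied by $\ttf\lambda\nu^e$, which gives an evolution equation for $p^a$ of the form $\ttf\lambda\nu^e\udn{e}p^a=-\ttf\lambda(\ttk^{-1}-1)(Ht)^{-1}p^a+\ttf\lambda\ttb^{-1}t^{-1}\tp{a}{e}\nu^e$; the free parameter $\ttf$ will be fixed later in \S\ref{sec-decom-Einstein} to make the $\widehat{\mathcal{B}}$-block factorize through a suitable projection $\bar{\Pbb}_2$. Concatenating this with the $\tp{a}{d}$-equation yields a singular symmetric hyperbolic system $\mathbf{A}_2^c\udn{c}(\tp{a}{d},p^a)^{\tr}=(Ht)^{-1}\mathcal{B}_2(\tp{a}{d},p^a)^{\tr}+(\triangle^{ea}_2,0)^{\tr}$. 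Finally, applying Lemma \ref{lem-model-symm} with the change of variables $\tp{a}{d}\mapsto(-\nu^e\tp{a}{e},\ts{\ulh}{^e_{\he}}\tp{a}{e})$ converts this to the stated block form \eqref{E:SYS1a}, the coefficients $\bar{\mathbf{A}}_2^0=\bar{\mathbf{A}}_2^c\nu_c$, $\bar{\mathbf{A}}_2^c\ts{\ulh}{^b_c}$, and $\bar{\mathcal{B}}_2$ being read off directly as in Lemma \ref{lem-FOSHS-m} after using Lemma \ref{lem-identity}.

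The main technical obstacle, exactly as in Lemma \ref{lem-FOSHS-m}, is bookkeeping: every $1/t$ from the definitions \eqref{E:V}, \eqref{E:VD}, \eqref{E:ID-Lam-Xi} must be tracked, every $1/\sin(Ht)$ and $1/\tan(Ht)$ must be split as $1/(Ht)+\mathcal{O}(t)$, and the cross-terms must group cleanly so that the only $(Ht)^{-1}$-singular contributions are the linear-in-$\mathbf{U}$ ones collected in $\bar{\mathcal{B}}_2$, while the rest (quadratic in $\mathbf{U}$ or carrying an extra $t$) is packaged into $\triangle^{ea}_2(t,\mathbf{U})$. The analyticity of $\triangle^{ea}_2$ on $(-\iota,\pi/H)\times B_R(0)$ then follows because each remaining singular factor appears only in the combinations $\frac{1}{Ht}-\frac{1}{\tan(Ht)}$, $\frac{1}{(Ht)^2}-\frac{1}{\tan^2(Ht)}$, and $\frac{t}{\sin^2(Ht)}$, all of which extend analytically across $t=0$.
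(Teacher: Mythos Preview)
Your proposal is correct and follows essentially the same route as the paper's proof: rewrite \eqref{E:TSEQ} via \eqref{E:V}--\eqref{E:VD}, multiply by $\nu^e$, symmetrize with $Q^{edc}$, compute the commutator (where the curvature term $\nu^d g^{ec}\ts{\udl{R}}{_{cdb}^a}\xi^b$ actually vanishes since $\nu^d\ts{\udl{R}}{_{cdb}^a}=0$ for the Einstein static background), adjoin the $p^a$-equation from \eqref{E:ID3-0}, and apply Lemma \ref{lem-model-symm}. One small slip in your last paragraph: the combination appearing in $\triangle^{ea}_2$ is $\dfrac{t^2}{\sin^2(Ht)}$, not $\dfrac{t}{\sin^2(Ht)}$ --- the latter is singular at $t=0$, so be sure to track the extra factor of $t$ coming from $\xi^a=\ttb t\,p^a$ when you write out the details.
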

\begin{proof}
	Making use of the definitions \eqref{E:V} and \eqref{E:VD}, and noting the identity
	\als{
		&g^{cd}\udn{c}\udn{d} \xi^a =  g^{cd}\udn{c}\tp{a}{d}  + \frac{\ttb}{\ttk H}\left(\frac{1}{\ttk}-1\right)\frac{1}{Ht}\lambda p^a + \frac{\ttb}{\ttk H} p^b \tp{a}{b} - \frac{1}{\ttk Ht}\lambda \nu^b \tp{a}{b} \nnb \\
		=   & g^{cd}\udn{c}\tp{a}{d} - \left(\frac{1}{\ttk}-1\right)\frac{\ttb}{\ttk}\frac{1}{H^2 t}  p^a + \frac{1}{\ttk Ht} \nu^b \tp{a}{b}  + \left(\frac{1}{\ttk}-1\right)\frac{\ttb}{\ttk}\frac{1}{H^2} \tta m  p^a - \frac{1}{\ttk H} \tta m \nu^b \tp{a}{b} \notag \\
		&+ \frac{\ttb}{\ttk H} p^b \tp{a}{b},
	}
	we can express the conformal Einstein equation \eqref{E:TSEQ}  for $\xi^e$ as 
	\als{
		g^{cd}\udn{c}\tp{a}{d}  ={}& \frac{n-2}{ \tan(Ht)}\nu^c \tp{a}{c} - \frac{n-2}{ \tan(Ht)} \frac{\ttb}{\ttk H}p^a + \frac{n-2}{ \tan^2(Ht)}\ttb t p^a + 2 \tta \ttb \frac{t^2}{\sin^2(Ht)} m  p^a  \nnb \\
		& - \frac{1}{\ttk Ht} \nu^b \tp{a}{b}+ \frac{\ttb}{\ttk H}\left(\frac{1}{\ttk}-1\right)\frac{1}{Ht} p^a - \left(\frac{1}{\ttk}-1\right)\frac{\ttb}{\ttk}\frac{\tta}{H^2}  m  p^a + \frac{\tta}{\ttk H} m \nu^b \tp{a}{b}  \nnb \\
		& - \frac{\ttb}{\ttk H} p^b \tp{a}{b} + (n-2) \ttb t p^a - 2 P^{cd}\nu_c\ts{\ulh}{^a_d} - 2 Q^{cd}\nu_c\ts{\ulh}{^a_d} - \frac{2}{n-2}X^c X^d\nu_c\ts{\ulh}{^a_d}  \notag  \\
		& + 2\Bigl(\nu_a\ts{\ulh}{^e_b}g^{bd}g^{a\ha}-\frac{1}{2(n-2)}\nu_a\ts{\ulh}{^e_b}g^{ab} g^{d\ha}\Bigr)  g^{c\hc}(H_{\ha\hc}-E_{\ha}\nu_{\hc}+\nu_{\ha}E_{\hc}) \notag \\
		&	\times (H_{dc}-E_{d}\nu_{c}+\nu_{d}E_{c}).
	}
	Multiplying this equation by $\nu^e$, we find, after rearranging, that
	\als{
		\nu^e g^{cd}\udn{c}\tp{a}{d}  ={} & \left(n-2-\frac{1}{\ttk} \right)\frac{1}{ Ht} \nu^e \nu^d \tp{a}{d}  - \left(\left(n-1-\frac{1}{\ttk}\right)\frac{1}{\ttk}-n+2\right)\frac{\ttb}{H^2 t} \nu^e p^a  + \triangle^{ea}_{21}
	}
	where
	\als{
		\triangle^{ea}_{21} = {}&\left(1-\frac{1}{\ttk}\right)\frac{\ttb}{\ttk}\frac{\tta}{H^2} m \nu^e p^a + \frac{\tta}{\ttk H} m \nu^b \nu^e \tp{a}{b} - \frac{\ttb}{\ttk H} \nu^e p^b \tp{a}{b}  \nnb \\
		&- (n-2)\left( \frac{1}{Ht} -\frac{1}{ \tan(Ht)}\right) \nu^c \nu^e \tp{a}{c} - (n-2)\left(\frac{1}{ \tan(Ht)} - \frac{1}{Ht}\right) \frac{\ttb}{\ttk H} \nu^e p^a  \nnb  \\
		&- (n-2) \left( \frac{1}{ (Ht)^2} - \frac{1}{ \tan^2(Ht)}\right)\ttb t \nu^e p^a + 2 \tta \ttb \frac{t^2}{\sin^2(Ht)} m \nu^e p^a  \nnb \\
		&+ (n-2) \ttb t \nu^e p^a - 2 P^{cd}\nu^e\nu_c\ts{\ulh}{^a_d} - 2 Q^{cd}\nu^e\nu_c\ts{\ulh}{^a_d} - \frac{2}{n-2}X^c X^d\nu^e \nu_c\ts{\ulh}{^a_d}  \notag  \\
		&+ 2\Bigl(\nu_a\ts{\ulh}{^e_b}g^{bd}g^{a\ha}-\frac{1}{2(n-2)}\nu_a\ts{\ulh}{^e_b}g^{ab} g^{d\ha}\Bigr)  g^{c\hc}(H_{\ha\hc}-E_{\ha}\nu_{\hc}+\nu_{\ha}E_{\hc}) \notag \\
		&\times (H_{dc}-E_{d}\nu_{c}+\nu_{d}E_{c}).
	}
	From the above expressions and the  identities \eqref{QCOMMUN2}, we then get
	\al{TSEQ3-3}{
		&Q^{edc} \udn{c}\tp{a}{d} - \left( \nu^d g^{ec} \udn{c}\tp{a}{d} -\nu^c g^{ed} \udn{c}\tp{a}{d} \right) \notag \\
		= &  \left( n-2 -\frac{1}{\ttk} \right)\frac{1}{Ht}  Q^{fgc}\nu_c(\ts{\delta}{^e_f}\ts{\delta}{^d_g}-\ts{\ulh}{^e_f}\ts{\ulh}{^d_g})\tp{a}{d}\nnb  \\
		& + \left( \left(n-1-\frac{1}{\ttk}\right)\frac{1}{\ttk}-n+2 \right)\frac{\ttb}{H^2 t}   Q^{edc}\nu_c\nu_d p^a   + \triangle^{ea}_{22}
	}
	where
	\als{
		\triangle^{ea}_{22} = {}& \left(n-2-\left(n-1-\frac{1}{\ttk}\right)\frac{1}{\ttk}\right)\frac{\tta\ttb}{H^2} m p^a \nu^e + \left(n-2- \frac{1}{\ttk} \right)\frac{\tta}{ H } m\nu^e\nu^d  \tp{a}{d} + \triangle^{ea}_{21}.
	}

	Next, we observe, with the help of \eqref{E:ID3-0}, that
	\als{
		& \nu^d g^{ec} \udn{c}\tp{a}{d} -\nu^c g^{ed} \udn{c}\tp{a}{d}=\nu^d g^{ec} (\udn{c}\tp{a}{d} - \udn{d}\tp{a}{c})  \nnb \\
		= {} & \nu^d g^{ec} \left( \udn{c} \udn{d} \xi^a - \udn{c}\frac{\ttb p^a}{\ttk H } \nu_d -\udn{d} \udn{c} \xi^a + \udn{d}\frac{\ttb p^a}{\ttk H} \nu_c \right)  \nnb \\
		= {} & \nu^d g^{ec} \left( \udn{c} \udn{d} \xi^a - \udn{d} \udn{c} \xi^a \right)  + \frac{\ttb}{\ttk H } g^{ec}  \udn{c}p^a  +  \frac{\ttb}{\ttk H} \nu_c \nu^d g^{ec} \udn{d}p^a   \nnb \\
		= {} & - \nu^d g^{ec} \ts{\udl{R}}{_{cdb}^a} \ttb t p^b  + \frac{1}{\ttk H t} g^{ec}  \ts{\ulh}{^d_c}\tp{a}{d}
		=  \frac{1}{\ttk H t} \tp{a}{d}h^{de} -\frac{1}{\ttk H} \ttb p^d \nu^e\tp{a}{d},
	}
	and that $\frac{1}{\ttk H t} \tp{a}{d}h^{de} = \frac{1}{\ttk H t} \tp{a}{d} Q^{abc}\nu_c \ts{\ulh}{^e_a} \ts{\ulh}{^d_b}$.
	Using these relations along with $n-2-\left(n-1-\frac{1}{\ttk}\right)\frac{1}{\ttk} = \left(1-\frac{1}{\ttk}\right)\left(n-2-\frac{1}{\ttk} \right)$ allows us  to express \eqref{E:TSEQ3-3} as
	\al{TSEQ3}{
		&Q^{edc} \udn{c}\tp{a}{d}   = \left( n-2 -\frac{1}{\ttk} \right)\frac{1}{Ht}  Q^{fgc}\nu_c(\ts{\delta}{^e_f}\ts{\delta}{^d_g}-\ts{\ulh}{^e_f}\ts{\ulh}{^d_g})\tp{a}{d}  \nnb  \\
		&\qquad + \frac{1}{\ttk H t} Q^{fgc}\nu_c \ts{\ulh}{^e_f} \ts{\ulh}{^d_g} \tp{a}{d} + \left(1-\frac{1}{\ttk}\right)\left(\frac{1}{\ttk}-n+2 \right)\frac{\ttb}{H^2 t}   Q^{edc}\nu_c\nu_d p^a   +\triangle^{ea}_2
	} 
	where
	\gat{
		\triangle^{ea}_2 =  -\frac{\ttb}{\ttk H}  p^d \nu^e\tp{a}{d} + \triangle^{ea}_{22}.\nnb
	} 
	
	On the other hand, by Lemma \ref{lem-identity}, we have that $Q^{ebc} \nu_b\nu_c \udn{e} = \lambda \nu^e \udn{e}$. Using this, we can write \eqref{E:ID3-0} as   
	\ali{
		\ttf Q^{ebc}\nu_b\nu_c\udn{e} p^a =\ttf Q^{ebc}\nu_b\nu_c \left(\frac{1}{\ttk}-1\right)\frac{1}{Ht} p^a \nu_e + \ttf Q^{ebc}\nu_b\nu_c \frac{1}{\ttb t} \tp{a}{e}   \label{E:ID3}
	}
	where $\ttf$ is a constant to be determined.  
	We then collect \eqref{E:TSEQ3} and \eqref{E:ID3} together to get the system
	\al{SYS2A}{
		\mathbf{A}_2^c\udn{c}\p{\tp{a}{d} \\ p^a }=\frac{1}{Ht} \mathcal{B}_2 \p{\tp{a}{d} \\ p^a}+\p{ \triangle^{ea}_2 \\ 0 }
	}
	where
	\gat{
		\mathbf{A}_2^c=\p{ Q^{edc} & 0   \\
			0 & \ttf Q^{cbe}\nu_b\nu_e }, \nnb
	}  
	and
	\als{
		&\mathcal{B}_2  = Q^{abc}\nu_c\p{
			\left(n-2-\frac{1}{\ttk}\right)(\ts{\delta}{^e_a}\ts{\delta}{^d_b}-\ts{\ulh}{^e_a}\ts{\ulh}{^d_b})+\frac{1}{\ttk} \ts{\ulh}{^e_a}\ts{\ulh}{^d_b} & \left(1-\frac{1}{\ttk}\right)\left(\frac{1}{\ttk}-n+2\right) \frac{\ttb}{H} \ts{\delta}{^e_a}\nu_b  \\
			\ttf \frac{H}{\ttb}\ts{\delta}{^d_a}\nu_b  &   \ttf \left(\frac{1}{\ttk}-1\right) \nu_b\nu_a
		}  \nnb \\
		&\scriptsize=\p{ Q^{abc} \nu_c & 0 \\ 0 &    \ttf Q^{abc}\nu_b\nu_c} \p{
			\left(n-2-\frac{1}{\ttk}\right)(\ts{\delta}{^e_a}\ts{\delta}{^d_b} -\ts{\ulh}{^e_a}\ts{\ulh}{^d_b})+\frac{1}{\ttk} \ts{\ulh}{^e_a}\ts{\ulh}{^d_b}  & \left(1-\frac{1}{\ttk}\right)\left(\frac{1}{\ttk}-n+2\right) \frac{\ttb}{H} \ts{\delta}{^e_a}\nu_b  \\
			\frac{H}{\ttb}\ts{\delta}{^d_a}   &   \left(\frac{1}{\ttk}-1\right) \nu_a
		}.
	}
	To complete the proof, we apply Lemma \ref{lem-model-symm} to \eqref{E:SYS2A} to get the system \eqref{E:SYS1a}.
\end{proof}

\begin{lemma}\label{lem-FOSHS-3}
	The conformal Einstein equation \eqref{E:SSEQTRFR3} for $\mathfrak{h}^{ab}- \ulh^{ab}$ can be expressed in first order form as
	\al{SYS1-3}{
		-\bar{ \mathbf{A}}_3^0\nu^c \nb_{c} \p{-\nu^e \tss{\ha\hb}{e} \\ \ts{\ulh}{^e_{\he}} \tss{\ha\hb}{e} \\ s^{\ha \hb} } + \bar{ \mathbf{A}}_3^c \ts{\ulh}{^{b}_{c}}  \udn{b} \p{-\nu^e \tss{\ha\hb}{e} \\ \ts{\ulh}{^e_{\he}} \tss{\ha\hb}{f} \\ s^{\ha \hb} } =\frac{1}{Ht} \bar{\mathcal{B}}_3\p{-\nu^e \tss{\ha\hb}{e} \\ \ts{\ulh}{^e_{\he}} \tss{\ha\hb}{e} \\ s^{\ha \hb} }+\bar{G}_3
	}
	where
	\als{
		\bar{ \mathbf{A}}_3^0 ={}& \p{-\lambda & 0 & 0\\ 0 & h^{f \he} & 0 \\ 0 & 0 & -\lambda}, \quad
		\bar{ \mathbf{A}}_3^c \ts{\ulh}{^{b}_{c}}  = \p{ -2\xi^b & -\ts{h}{^{\he b}} & 0 \\
			-\ts{h}{^{f b}} & 0 & 0 \\ 0 & 0 & 0}, \nnb \\
		\bar{\mathcal{B} }_3 ={}& \p{-\lambda\left(n-2\right) & 0 & 0 \\
			0 & 0 & 0 \\
			0 & 0 & 0}, \quad
		\bar{G}_3 = \p{ \nu_e \triangle^{e \ha \hb}_3(t, \mathbf{U}) \\ \ts{\ulh}{^f_e} \triangle^{e \ha \hb}_3(t, \mathbf{U})  \\  \lambda \nu^d \tss{\ha\hb}{d} }, \nnb
	}
	and the map $\triangle^{eab}_3(t, \mathbf{U})$, which is analytic for $(t,\mathbf{U})\in \bigl(-\iota,\frac{\pi}{H}\bigr)\times B_R(0)$ for $\iota,R>0$ small enough, is given by
	\als{
		&\triangle^{eab}_3(t, \mathbf{U}) \notag \\ = {}& \frac{n-2}{H}\tta m\nu^e\nu^d\tss{ab}{d} - \left(\frac{n-2}{Ht}-\frac{n-2}{\tan(Ht)}\right) \nu^c\nu^e \tss{ab}{c} + \nu^e g^{cd}\udn{c}(S^{-1}\ts{\mathcal{L}}{^{ab}_{gf}})\udn{d} h^{gf}  \nnb \\
		& - 2 \nu^e S^{-1}\ts{\mathcal{L}}{^{ab}_{gf}}\ts{\ulh}{^g_a}\ts{\ulh}{^f_b}P^{ab}  - 2 \nu^e S^{-1}\ts{\mathcal{L}}{^{ab}_{gf}}\ts{\ulh}{^g_a}\ts{\ulh}{^f_b}Q^{ab} \nnb\\
		& - \frac{2}{n-2} \nu^e S^{-1}\ts{\mathcal{L}}{^{ab}_{gf}}\ts{\ulh}{^g_a}\ts{\ulh}{^f_b}X^a X^b - 2(n-2) S^{-1} \nu^e \ts{\mathcal{L}}{^{ab}_{cd}} s^{c d}  \nnb \\
		& + 2 \nu^e S^{-1}\ts{\mathcal{L}}{^{ab}_{\hat{e}f}} \ts{\ulh}{^{\hat{e}}_a}\ts{\ulh}{^f_b}g^{bd}g^{a\ha}  g^{c\hc}(H_{\ha\hc}-E_{\ha}\nu_{\hc}+\nu_{\ha}E_{\hc})(H_{dc}-E_{d}\nu_{c}+\nu_{d}E_{c}).
	} 		
\end{lemma}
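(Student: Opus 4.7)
The strategy parallels the derivations carried out in Lemmas \ref{lem-FOSHS-m} and \ref{lem-FOSHS-p}. The plan is to rewrite the second order wave equation \eqref{E:SSEQTRFR3} as a first order system in the variables $\tss{ab}{d}$ and $s^{ab}=\mathfrak{h}^{ab}-\ulh^{ab}$, symmetrize the principal part using the tensor $Q^{edc}$ from \eqref{E:BTENSOR}, and finally apply Lemma \ref{lem-model-symm} to obtain the block structure displayed in \eqref{E:SYS1-3}. A key simplification compared to the preceding two lemmas is that the renormalization \eqref{E:U}--\eqref{E:UD} of $s^{ab}$ and $\tss{ab}{d}$ introduces \emph{no} singular factor $1/t$, which is why the bottom row of $\bar{\mathcal{B}}_3$ vanishes and no free parameter analogous to $\ttj,\ttk$ appears.

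First, I would use \eqref{E:U}--\eqref{E:UD} to rewrite the principal part of \eqref{E:SSEQTRFR3} as $\frac{1}{2}g^{cd}\udn{c}\tss{ab}{d}$. Multiplying the wave equation by $2\nu^e$ and applying the decomposition
\begin{equation*}
\nu^e g^{cd}\udn{c}\tss{ab}{d}=Q^{edc}\udn{c}\tss{ab}{d}-\bigl(\nu^d g^{ec}-\nu^c g^{ed}\bigr)\udn{c}\tss{ab}{d},
\end{equation*}
the second piece can be evaluated using $\udn{c}\tss{ab}{d}-\udn{d}\tss{ab}{c}=\udn{c}\udn{d}s^{ab}-\udn{d}\udn{c}s^{ab}$, which produces only background curvature terms $\nu^d g^{ec}(\ts{\udl{R}}{_{cdf}{}^a}s^{fb}+\ts{\udl{R}}{_{cdf}{}^b}s^{fa})$. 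These are harmless since $s^{ab}\in H^s$ is small; they will be absorbed into the analytic remainder $\triangle^{eab}_{3}(t,\mathbf U)$.

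Next, the singular coefficient $\frac{n-2}{2\tan(Ht)}$ in \eqref{E:SSEQTRFR3} needs to be converted to the Fuchsian $1/(Ht)$ form. Writing $\frac{n-2}{\tan(Ht)}=\frac{n-2}{Ht}+\left(\frac{n-2}{\tan(Ht)}-\frac{n-2}{Ht}\right)$, the first piece, together with the rewriting
\begin{equation*}
\frac{1}{t}\nu^e\nu^c=\frac{1}{t}Q^{fgc}\nu_c\bigl(\tensor{\delta}{^e_f}\tensor{\delta}{^c_g}-\ts{\ulh}{^e_f}\ts{\ulh}{^c_g}\bigr)+\tta m\,\nu^e\nu^c
\end{equation*}
from \eqref{QCOMMUN2}, yields the desired $\frac{1}{Ht}Q^{fgc}\nu_c(\cdots)\tss{ab}{g}$ structure on the right hand side, while the difference is regular at $t=0$ and gets absorbed into $\triangle^{eab}_{3}$. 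The non-singular $-(n-2)S^{-1}\ts{\mathcal{L}}{^{ab}_{cd}}s^{cd}$ contribution of \eqref{E:SSEQTRFR3}, together with the $P^{ab}$, $Q^{ab}$, $X^aX^b$ and Yang--Mills source terms, then contribute to $\triangle^{eab}_{3}$ directly.

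At this stage we have a symmetric hyperbolic evolution equation for $\tss{ab}{d}$ alone. To close it, we pair it with the trivial first order equation derived from \eqref{E:ID4-0}: contracting $\udn{d}s^{ab}=\tss{ab}{d}$ with $Q^{dbc}\nu_b\nu_c=\lambda\nu^d$ (cf.\ Lemma \ref{lem-identity}) produces $\lambda\nu^c\udn{c}s^{ab}=\lambda\nu^c\tss{ab}{c}$, which accounts for the last block of $\bar{\mathbf A}_3^0$ and the inhomogeneity $\lambda\nu^d\tss{ab}{d}$ in $\bar G_3$. Stacking the resulting $\tss{ab}{d}$ and $s^{ab}$ equations into a system of the form \eqref{E:model-system-1} and invoking Lemma \ref{lem-model-symm} with the projection $(\tss{ab}{d},s^{ab})\mapsto (-\nu^e\tss{ab}{e},\ts{\ulh}{^e_{\he}}\tss{ab}{e},s^{ab})$ produces the matrices $\bar{\mathbf A}_3^0$, $\bar{\mathbf A}_3^c\ts{\ulh}{^b_c}$ and $\bar{\mathcal B}_3$ exactly as stated. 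Analyticity of $\triangle^{eab}_3(t,\mathbf U)$ on $(-\iota,\tfrac{\pi}{H})\times B_R(0)$ follows from the standing analyticity of $P^{ab}$, $Q^{ab}$, $X^a$, $S^{-1}$ and of the Yang--Mills quadratic terms near $\mathbf U=0$, together with the regularity of $\frac{1}{\tan(Ht)}-\frac{1}{Ht}$ at $t=0$. The principal bookkeeping obstacle is simply matching all of the trigonometric remainders and the commutator $\udn{c}(S^{-1}\mathcal{L}^{ab}{}_{gf})\udn{d}h^{gf}$ term against the explicit formula for $\triangle^{eab}_3$; this is tedious but mechanical and proceeds exactly as in the proofs of Lemmas \ref{lem-FOSHS-m}--\ref{lem-FOSHS-p}.
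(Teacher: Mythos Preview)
Your proposal is correct and follows essentially the same route as the paper's proof. One small refinement: the commutator term $\nu^d g^{ec}(\ts{\udl{R}}{_{cdf}{}^a}s^{fb}+\ts{\udl{R}}{_{cdf}{}^b}s^{fa})$ is not merely ``absorbed'' into $\triangle^{eab}_3$ but actually vanishes identically, because the conformal de Sitter metric $\ulg_{ab}=-H^2(dt)_a(dt)_b+\ulh_{ab}$ is a Lorentzian product and hence $\nu^d\ts{\udl{R}}{_{cdf}{}^a}=0$; this is why no such curvature contribution appears in the explicit formula for $\triangle^{eab}_3$ stated in the lemma.
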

\begin{proof}
	Using the definitions \eqref{E:U}--\eqref{E:UD} for $\ts{s}{^{a b}_d}$ and $s^{ab}$, we can express the conformal Einstein equation \eqref{E:SSEQTRFR3} for $\mathfrak{h}^{ab}- \ulh^{ab}$ as
	\begin{align*}	
		& g^{cd}\udn{c}\tss{ab}{d} \notag \\
		={} &\frac{n-2}{\tan(Ht)}\nu^c \tss{ab}{c} +  g^{cd}\udn{c}(S^{-1}\ts{\mathcal{L}}{^{ab}_{ef}})\udn{d} h^{ef} - 2 S^{-1}\ts{\mathcal{L}}{^{ab}_{ef}}\ts{\ulh}{^e_a}\ts{\ulh}{^f_b}P^{ab}   \nnb\\
		&- 2 S^{-1}\ts{\mathcal{L}}{^{ab}_{ef}}\ts{\ulh}{^e_a}\ts{\ulh}{^f_b}Q^{ab} - \frac{2}{n-2} S^{-1}\ts{\mathcal{L}}{^{ab}_{ef}}\ts{\ulh}{^e_a}\ts{\ulh}{^f_b}X^a X^b - 2 (n-2) S^{-1} \ts{\mathcal{L}}{^{ab}_{cd}} s^{c d}  \notag  \\
		&+ 2 S^{-1}\ts{\mathcal{L}}{^{ab}_{ef}} \ts{\ulh}{^e_a}\ts{\ulh}{^f_b}g^{bd}g^{a\ha}  g^{c\hc}(H_{\ha\hc}-E_{\ha}\nu_{\hc}+\nu_{\ha}E_{\hc})(H_{dc}-E_{d}\nu_{c}+\nu_{d}E_{c}),
	\end{align*}
	which, in turn, implies by  \eqref{E:BTENSOR} that 
	\al{SSEQTRFR6a}{
		&Q^{edc}\udn{c}\tss{ab}{d} = \nu^d g^{ec}\udn{c}\tss{ab}{d} - \nu^c g^{ed} \udn{c} \tss{ab}{d}  + \frac{n-2}{Ht}\nu^d\nu^e \tss{ab}{d}  +\triangle^{eab}_{31}
	}
	where
	\als{
		\triangle^{eab}_{31} ={}&-\left(\frac{1}{Ht}-\frac{1}{\tan(Ht)}\right)(n-2)\nu^c\nu^e \tss{ab}{c} +  \nu^e g^{cd}\udn{c}(S^{-1}\ts{\mathcal{L}}{^{ab}_{gf}})\udn{d} h^{gf} \nnb  \\
		& -2 \nu^e S^{-1}\ts{\mathcal{L}}{^{ab}_{gf}}\ts{\ulh}{^g_a}\ts{\ulh}{^f_b}P^{ab} - 2 \nu^e S^{-1}\ts{\mathcal{L}}{^{ab}_{gf}}\ts{\ulh}{^g_a}\ts{\ulh}{^f_b}Q^{ab}  \nnb\\
		& -\frac{2}{n-2} \nu^e S^{-1}\ts{\mathcal{L}}{^{ab}_{gf}}\ts{\ulh}{^g_a}\ts{\ulh}{^f_b}X^a X^b - 2 (n-2) S^{-1} \nu^e \ts{\mathcal{L}}{^{ab}_{cd}} s^{c d}   \notag  \\
		& + 2 \nu^e S^{-1}\ts{\mathcal{L}}{^{ab}_{\hat{e}f}} \ts{\ulh}{^{\hat{e}}_a}\ts{\ulh}{^f_b}g^{bd}g^{a\ha}  g^{c\hc}(H_{\ha\hc}-E_{\ha}\nu_{\hc}+\nu_{\ha}E_{\hc})(H_{dc}-E_{d}\nu_{c}+\nu_{d}E_{c}).
	}
	However, due to \eqref{QCOMMUN2} and the commutator identity
	\als{
		\nu^d g^{ec}\udn{c}\tss{ab}{d}-\nu^c g^{ed} \udn{c} \tss{ab}{d}
		=  & \nu^d g^{ec}(\udn{c}\udn{d}s^{ab}- \udn{d} \udn{c}s^{ab})
		\notag \\
		=   & -\nu^d g^{ec} (\ts{\udl{R}}{_{cdf}^a} s^{fb} +\ts{\udl{R}}{_{cdf}^b} s^{fa} ) =0,
	}
	it follows that \eqref{E:SSEQTRFR6a} can be equivalently written as
	\al{SSEQTRFR6}{
		Q^{edc}\udn{c}\tss{ab}{d}
		= \frac{n-2}{Ht}Q^{fgc}\nu_c(\ts{\delta}{^e_f}\ts{\delta}{^d_g}-\ts{\ulh}{^e_f}\ts{\ulh}{^d_g}) \tss{ab}{d}+ \triangle^{eab}_3				}
	where
	\als{
		\triangle^{eab}_3 = {} & \frac{n-2}{H}\tta m\nu^e\nu^d\tss{ab}{d} +  \triangle^{eab}_{31}.
	} 
	
	On the other hand, by Lemma \ref{lem-identity}, we have that  $Q^{ebc} \nu_b\nu_c \udn{e} = \lambda \nu^e \udn{e}$. Using this, we see from \eqref{E:ID4-0} that
	\be\label{eq-sab}
	Q^{ebc}\nu_b\nu_c\udn{e}s^{fg}=Q^{ebc}\nu_b\nu_c \tss{fg}{e}.
	\ee		
	Collecting \eqref{E:SSEQTRFR6} and \eqref{eq-sab} together gives 
	\al{SYS1-3a}{
		\mathbf{A}_3^c\udn{c}\p{\tss{\ha\hb}{d} \\ s^{\ha \hb} }=\frac{1}{Ht} \mathcal{B}_3 \p{\tss{\ha\hb}{d} \\ s^{\ha \hb}}+ \p{\triangle^{e \ha \hb}_3 \\ Q^{ebc}\nu_b\nu_c \tss{\ha \hb}{e} }
	}
	where
	\gas{
		\mathbf{A}_3^c=  \p{ Q^{edc} & 0  \\
			0 & Q^{cbe}\nu_b\nu_e }
		\intertext{and}
		\mathcal{B}_3  =   Q^{abc}\nu_c \p{
			\left(n-2\right)(\ts{\delta}{^e_a}\ts{\delta}{^d_b}-\ts{\ulh}{^e_a}\ts{\ulh}{^d_b})  & 0 \\
			0  &   0}
		=   \p{ - \lambda \left(n-2\right) \nu^e \nu^d  & 0 \\
			0  &   0}.  		
	}
	Then by applying Lemma \ref{lem-model-symm} to \eqref{E:SYS1-3a}, we get
	\gat{
		\bar{ \mathbf{A}}_3^c\udn{c} \p{-\nu^e \tss{\ha\hb}{e} \\ \ts{\ulh}{^e_{\he}} \tss{\ha\hb}{e} \\ s^{\ha \hb} }=\frac{1}{Ht}\bar{\mathcal{B}}_3  \p{-\nu^e \tss{\ha\hb}{e} \\ \ts{\ulh}{^e_{\he}} \tss{\ha\hb}{e} \\ s^{\ha \hb} } +\bar{G}_3\nnb
	}
	where
	\gat{
		\bar{ \mathbf{A}}_3^c = \p{ \nu_e Q^{edc}\nu_d &  \nu_e Q^{edc}\ts{\ulh}{^{\he}_d} & 0 \\
			\ts{\ulh}{^f_e}Q^{edc}\nu_d  &  \ts{\ulh}{^f_e}Q^{edc}\ts{\ulh}{^{\he}_d} & 0 \\ 0 & 0 & Q^{cbe}\nu_b\nu_e}, \quad
		\bar{G}_3 = \p{\nu_e & 0 \\ \ts{\ulh}{^f_e} & 0 \\ 0 & 1 } \p{ \triangle^{e \ha \hb}_3 \\ Q^{ebc}\nu_b\nu_c \tss{\ha \hb}{e} }, \nnb
	}
	and
	\gat{
		\bar{\mathcal{B} }_3= \p{\nu_e & 0 \\ \ts{\ulh}{^f_e} & 0 \\ 0 & 1 } \mathcal{B}_3  \p{\nu_d & \ts{\ulh}{^{\he}_d} & 0 \\0 & 0 & 1} = \p{ - \lambda \left(n-2\right) & 0 & 0 \\
			0 & 0 & 0 \\
			0 & 0 & 0}. \nnb
	}
	We then complete the proof by setting $\bar{ \mathbf{A}}_3^0=  \bar{ \mathbf{A}}_3^c \nu_c $, which allows us to write in the stated form \eqref{E:SYS1-3}.
\end{proof}

\begin{lemma}\label{lem-FOSHS-4}
	The conformal Einstein equation \eqref{E:SSEQ2}  for $q$ can be expressed in first order form as
	\als{
		-\bar{\mathbf{A}}_4^0\nu^c \nb_{c} \p{-\nu^e s_{e} \\ \ts{\ulh}{^e_{\he}}s_{e} \\ s } + \bar{\mathbf{A}}_4^c \ts{\ulh}{^{b}_{c}}  \udn{b} \p{-\nu^e s_{e} \\ \ts{\ulh}{^e_{\he}}s_{e} \\ s } =\frac{1}{Ht} \bar{\mathcal{B}}_4 \p{-\nu^e s_{e} \\ \ts{\ulh}{^e_{\he}}s_{e} \\ s } +\bar{G}_4
	}
	where
	\als{
		\bar{\mathbf{A}}_4^0 ={}& \p{-\lambda & 0 & 0\\ 0 & h^{f \he} & 0 \\ 0 & 0 & -\lambda}, \quad
		\bar{\mathbf{A}}_4^c \ts{\ulh}{^{b}_{c}}  =\p{ -2\xi^b & -\ts{h}{^{\he b}} & 0 \\
			-\ts{h}{^{f b}} & 0 & 0 \\ 0 & 0 & 0}, \nnb \\
		\bar{\mathcal{B} }_4 ={}& \p{-\lambda\left(n-2\right) & 0 & 0 \\
			0 & 0 & 0 \\
			0 & 0 & 0}, \quad
		\bar{G}_4 =\p{ \nu_e\triangle^{e}_4 (t, \mathbf{U}) \\ \ts{\ulh}{^f_e}\triangle^{e}_4 (t, \mathbf{U}) \\  \lambda \nu^d s_d }, \nnb
	}
	and the map $\triangle^e_4(t, \mathbf{U})$, which is analytic for $(t,\mathbf{U})\in \bigl(-\iota,\frac{\pi}{H}\bigr)\times B_R(0)$ for $\iota,R>0$ small enough, is given by
	\als{
		\triangle^e_4(t, \mathbf{U}) = {}& \frac{(n-2)}{H} \tta m\nu^e\nu^d  s_d - 2 P^{ab}\nu^e \nu_a\nu_b - 2 Q^{ab}\nu^e\nu_a\nu_b - \frac{2}{n-2}X^a X^b\nu^e \nu_a\nu_b   \nnb  \\
		& - \left(\frac{1}{Ht} -\frac{1}{ \tan(Ht)} \right) (n-2) \nu^c \nu^e s_c + 2 \nu^e \tta^2  \frac{t^2}{\sin^2(Ht)} m^2 - 2 (n-2) \nu^e \tta t m \nnb  \\
		& + \frac{3-n}{ (n-1)} \nu^e g^{cd}\udn{c}h_{gf} \udn{d} h^{gf} - \frac{2(3-n)}{n-1} \nu^e h_{ab} P^{ab} - \frac{2(3-n)}{n-1} \nu^e h_{ab} Q^{ab} \nnb\\
		& - \frac{2(3-n)}{(n-1)(n-2)}\nu^e h_{ab} X^a X^b + 2  \nu^e  \Bigl(\frac{3-n}{n-1} h_{ab}  g^{bd}g^{a\ha}-\frac{3-n+\lambda}{2(n-2)}   g^{d\ha}  \notag  \\
		& - \nu_a\nu_bg^{bd}g^{a\ha}\Bigr) g^{c\hc}(H_{\ha\hc}-E_{\ha}\nu_{\hc}+\nu_{\ha}E_{\hc})(H_{dc}-E_{d}\nu_{c}+\nu_{d}E_{c}).
	}
\end{lemma}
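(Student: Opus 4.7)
The plan is to follow the same template that was used to derive the Fuchsian systems in Lemmas \ref{lem-FOSHS-m}, \ref{lem-FOSHS-p} and (most directly) \ref{lem-FOSHS-3}, since the variable $q$ behaves as a scalar, which makes this case structurally the simplest. First, using \eqref{E:Q}--\eqref{E:QD}, I would rewrite the wave equation \eqref{E:SSEQ2} for $q$ as a first-order equation of the form
\[
g^{cd}\udn{c} s_d = \frac{n-2}{\tan(Ht)}\nu^c s_c + \mathcal{R}(t,\mathbf{U}),
\]
where $\mathcal{R}$ collects all zeroth-order and quadratic terms. Crucially, the potentially singular term $(\lambda+1)/\sin^2(Ht)$ in \eqref{E:SSEQ2} rewrites, via $\lambda+1 = \tta t m$ (see \eqref{E:W}), as a regular contribution of the form $\tta^2 t^2 m^2/\sin^2(Ht)$, so the entire right-hand side is non-singular after absorbing $\nu^e$.

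Next I would multiply by $\nu^e$ to obtain a vector equation and symmetrize using the tensor $Q^{edc}$ from \eqref{E:BTENSOR}. Because $s$ is a scalar, we have the commutation identity $\udn{c}s_d - \udn{d}s_c = 0$, which implies
\[
Q^{edc}\udn{c} s_d = \nu^e g^{cd}\udn{c}s_d + \nu^d g^{ec}(\udn{c}s_d - \udn{d}s_c) - \nu^c g^{ed}\udn{c}s_d + \nu^c g^{ed}\udn{c}s_d,
\]
which reduces the principal part to the symmetric form $Q^{edc}\udn{c}s_d$ modulo terms already accounted for on the right. Using the identities from Lemma \ref{lem-identity} (particularly $Q^{fgc}\nu_c(\ts{\delta}{^e_f}\ts{\delta}{^d_g} - \ts{\ulh}{^e_f}\ts{\ulh}{^d_g})\nu_d = -\lambda\nu^e$) and the replacement $1/\tan(Ht) = 1/(Ht) + \mathrm{O}(t)$ combined with $\nu^e/t = -Q^{edc}\nu_c\nu_d/t + \tta m\nu^e$ (see \eqref{QCOMMUN2}), the equation takes the Fuchsian form
\[
Q^{edc}\udn{c}s_d = \frac{n-2}{Ht} Q^{fgc}\nu_c(\ts{\delta}{^e_f}\ts{\delta}{^d_g} - \ts{\ulh}{^e_f}\ts{\ulh}{^d_g}) s_d + \triangle^e_4(t,\mathbf{U}),
\]
with $\triangle^e_4$ exactly as displayed in the statement.

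Finally, I would supplement this with the trivial first-order relation $\udn{d} s = s_d$ from \eqref{E:ID4-0}, contracted against $Q^{ebc}\nu_b\nu_c = \lambda\nu^e$ (Lemma \ref{lem-identity}) to obtain the scalar evolution $\lambda\nu^c\udn{c} s = \lambda\nu^c s_c$. Assembling the pair yields a symmetric system of the form
\[
\mathbf{A}_4^c \udn{c} \begin{pmatrix} s_d \\ s \end{pmatrix} = \frac{1}{Ht}\mathcal{B}_4 \begin{pmatrix} s_d \\ s \end{pmatrix} + \begin{pmatrix} \triangle^e_4 \\ \lambda \nu^c s_c \end{pmatrix},
\]
with $\mathbf{A}_4^c = \mathrm{diag}(Q^{edc},\, Q^{cbe}\nu_b\nu_e)$ and $\mathcal{B}_4$ as in the proof of Lemma \ref{lem-FOSHS-3}. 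An application of Lemma \ref{lem-model-symm} with $\zeta = s$, $\zeta_d = s_d$ then produces the claimed block form of $\bar{\mathbf{A}}_4^c$, $\bar{\mathcal{B}}_4$ and $\bar{G}_4$, and setting $\bar{\mathbf{A}}_4^0 = \bar{\mathbf{A}}_4^c\nu_c$ completes the identification. The main obstacle is purely bookkeeping: ensuring that every non-principal term from the wave equation \eqref{E:SSEQ2} is correctly sorted into $\triangle^e_4$ and that the leading singular pieces collapse against $\lambda+1 = \tta t m$ so that no stray $1/\sin^2(Ht)$ or $1/\tan(Ht)$ terms remain outside the designated $\mathcal{B}_4/(Ht)$ contribution. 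Analyticity of $\triangle^e_4$ on $(-\iota,\pi/H)\times B_R(0)$ then follows from the analyticity of each constituent (analogous to the $\mathfrak{h}^{ab}$ case in Lemma \ref{lem-FOSHS-3}) once $\iota,R>0$ are chosen sufficiently small to ensure $\lambda$ stays bounded away from zero and $S$ stays in the domain of $\ln$.
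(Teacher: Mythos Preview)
Your proposal is correct and follows essentially the same route as the paper: rewrite \eqref{E:SSEQ2} in terms of $s_d$, exploit the vanishing commutator $\udn{c}s_d-\udn{d}s_c=0$ for the scalar $s$ to symmetrize via $Q^{edc}$, adjoin the trivial equation $Q^{ebc}\nu_b\nu_c\udn{e}s=Q^{ebc}\nu_b\nu_c s_e$, and apply Lemma \ref{lem-model-symm}. The only cosmetic difference is that the paper absorbs the $\tan(Ht)\to Ht$ correction into $\triangle_4$ at the very first step rather than after multiplying by $\nu^e$, but the resulting $\triangle^e_4$ is identical.
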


\begin{proof}
	Using the definitions \eqref{E:W} and \eqref{E:Q}--\eqref{E:QD}, we can express the conformal Einstein equation \eqref{E:SSEQ2} for $q$ as
	\als{
		g^{cd}\udn{c}s_d   ={}&\frac{(n-2)}{ H t} \nu^c s_c + \triangle_4
	}
	where
	\als{
		\triangle_4 = {} & - \left(\frac{1}{Ht} -\frac{1}{ \tan(Ht)} \right) (n-2) \nu^c s_c + \frac{2 \tta^2 t^2}{\sin^2(Ht)} m^2 - 2 (n-2) \tta t m \nnb  \\
		& + \frac{3-n}{ (n-1)} g^{cd}\udn{c}h_{ef} \udn{d} h^{ef} - \frac{2(3-n)}{n-1} h_{ab} P^{ab} - \frac{2(3-n)}{n-1} h_{ab} Q^{ab} \nnb\\
		& - \frac{2(3-n)}{(n-1)(n-2)}h_{ab} X^a X^b
		- 2 P^{ab}\nu_a\nu_b - 2 Q^{ab}\nu_a\nu_b - \frac{2}{n-2}X^a X^b\nu_a\nu_b   \notag  \\
		& + 2\Bigl(\frac{3-n}{n-1} h_{ab}  g^{bd}g^{a\ha}-\frac{3-n+\lambda}{2(n-2)}   g^{d\ha} +\nu_a\nu_bg^{bd}g^{a\ha}\Bigr) \nnb \\
		&\qquad \qquad \cdot g^{c\hc}(H_{\ha\hc}-E_{\ha}\nu_{\hc}+\nu_{\ha}E_{\hc})(H_{dc}-E_{d}\nu_{c}+\nu_{d}E_{c}). \nnb
	}
	With the help of \eqref{QCOMMUN2} and the commutator identity
	\[	\nu^d g^{ec} \udn{c}s_d -\nu^c g^{ed} \udn{c}s_d
	=\nu^d g^{ec} ( \udn{c}\udn{d}s  -  \udn{d}\udn{c}s  )=0, \] 
	it then follows that
	\als{
		Q^{edc} \udn{c} s_d - \left( \nu^d g^{ec} \udn{c}s_d -\nu^c g^{ed} \udn{c}s_d \right)
		=  \frac{(n-2)}{H}  \frac{1}{t}Q^{fgc}\nu_c(\ts{\delta}{^e_f}\ts{\delta}{^d_g}-\ts{\ulh}{^e_f}\ts{\ulh}{^d_g})s_d+\triangle^e_4
	}
	where
	\als{
		\triangle^e_4 ={} & \frac{(n-2)}{H} \tta m\nu^e\nu^d  s_d + \nu^e \triangle_4.
	}

	On the other hand, by Lemma \ref{lem-identity}, we have that $Q^{ebc} \nu_b\nu_c \udn{e} = \lambda \nu^e \udn{e}$. Using this, we see from \eqref{E:ID4-0} that
	\gat{
		Q^{ebc}\nu_b\nu_c \udn{e}s= Q^{ebc}\nu_b\nu_c s_e.  \nnb 
	}		
	Collecting together the above two equations gives
	\gat{
		\mathbf{A}_4^c\udn{c}\p{s_d \\ s  }=\frac{1}{Ht}\mathcal{B}_4 \p{s_d\\ s }+\p{ \triangle^{e}_4 \\  Q^{ebc}\nu_b\nu_c s_e}, \nnb
	}
	where
	\gas{
		\mathbf{A}_4^c= \p{ Q^{edc} & 0   \\
			0 &  Q^{cbe}\nu_b\nu_e }  
		\intertext{and}
		\mathcal{B}_4 =   Q^{abc}\nu_c\p{
			\left(n-2\right)(\ts{\delta}{^e_a}\ts{\delta}{^d_b}-\ts{\ulh}{^e_a}\ts{\ulh}{^d_b})  & 0 \\
			0  &   0}
		=  \p{ - \lambda \left(n-2\right) \nu^e \nu^d  & 0 \\
			0  &   0}.
	}
	To complete the proof, we can then proceed in the same way as in final step of the proof of Lemma \ref{lem-FOSHS-4}. 		
\end{proof}

\section{Expansions and inequalities}\label{sec-matrix}
\subsection{Expansions}
$\;$

\bigskip

\noindent We recall the well-known Neumann series expansion.
\begin{lemma} \label{E:EXPANSIONOFINVERSE2}
	If $A$ and $B$ are $n\times n$ matrices with $A$ invertible,  then there exists an $\epsilon_0>0$ such that the map
	\begin{equation*}
		(-\epsilon_0,\epsilon_0) \ni \epsilon \longmapsto (A+\epsilon B)^{-1} \in \mathbb{M}_{n\times n}
	\end{equation*}
	is analytic and admits the series representation
	\begin{align*}
		(A+\epsilon B)^{-1}=A^{-1}+\sum_{n=1}^\infty (-1)^n\epsilon^n (A^{-1}B)^nA^{-1}, \quad  |\epsilon|< \epsilon_0.
	\end{align*}
\end{lemma}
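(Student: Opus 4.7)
The plan is to factor out $A^{-1}$ and reduce to the standard geometric (Neumann) series for $(I+X)^{-1}$. First I would write
\begin{equation*}
A+\epsilon B = A\bigl(I+\epsilon A^{-1}B\bigr),
\end{equation*}
so that, whenever the right factor is invertible, we have
\begin{equation*}
(A+\epsilon B)^{-1}=\bigl(I+\epsilon A^{-1}B\bigr)^{-1} A^{-1}.
\end{equation*}
Fix any submultiplicative matrix norm $\|\cdot\|$ on $\mathbb{M}_{n\times n}$ and set $\epsilon_0 = 1/(1+\|A^{-1}B\|)$, with the convention $\epsilon_0 = 1$ (or any positive number) in the trivial case $B=0$. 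Then for $|\epsilon|<\epsilon_0$ one has $\|\epsilon A^{-1}B\|<1$.

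Next I would invoke the standard Neumann series: for any $X\in\mathbb{M}_{n\times n}$ with $\|X\|<1$, the partial sums $S_N=\sum_{n=0}^N (-1)^n X^n$ are Cauchy (since $\sum \|X\|^n$ converges), hence the series converges absolutely in $\mathbb{M}_{n\times n}$, and a telescoping computation $(I+X)S_N = I - (-X)^{N+1}\to I$ identifies the limit as $(I+X)^{-1}$. Applying this with $X=\epsilon A^{-1}B$ yields
\begin{equation*}
\bigl(I+\epsilon A^{-1}B\bigr)^{-1}=\sum_{n=0}^\infty (-1)^n\epsilon^n(A^{-1}B)^n,
\end{equation*}
and right-multiplication by $A^{-1}$ produces the series representation in the statement, with the $n=0$ term giving $A^{-1}$.

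Finally, analyticity follows because the series $\sum_{n\geq 0}(-1)^n\epsilon^n(A^{-1}B)^nA^{-1}$ is a power series in the scalar variable $\epsilon$ with matrix coefficients, and its radius of convergence, computed via the root test, is at least $1/\|A^{-1}B\|\geq \epsilon_0$. Hence the map $\epsilon\mapsto (A+\epsilon B)^{-1}$ agrees with a convergent power series on $(-\epsilon_0,\epsilon_0)$ and is therefore (real) analytic there. There is no genuine obstacle in this proof; the only choice requiring a little care is the definition of $\epsilon_0$ to ensure both $\|\epsilon A^{-1}B\|<1$ and that the case $B=0$ is handled, and a suitable normalization such as $\epsilon_0=1/(1+\|A^{-1}B\|)$ takes care of both simultaneously.
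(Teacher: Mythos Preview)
Your proof is correct and is exactly the standard argument. The paper itself does not prove this lemma; it merely states it as the ``well-known Neumann series expansion'' and uses it as a tool in the subsequent proposition, so your write-up simply fills in the omitted details.
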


\noindent In the following proposition, we use the Neumann series expansion to derive some useful geometric expansion formulae. All of the geometric objects are as defined in \S\ref{s:ds} and \S\ref{sec-Fuchsian-field} .
\begin{proposition}\label{prop-analy}
	There exists a constant $\epsilon_0>0$ such that if $|g^{ab}-\ulg^{ab}|<\epsilon_0$, then
	\begin{align}
		g_{ab} =&{} \ulg_{ab}+\mathcal{S}_{ab}(t,m,p^d,s,s^{\ha\hb}), \label{eq-g-analy}\\
		\nb_c h^{\ha\hd} =&{} \tensor{\mathcal{S}}{^{\ha\hd}_{c}} (t, m, m_d, s, s_d, s^{a b},\tss{a b}{d}),  \label{eq-g-analy2}
		\intertext{and}
		\nb_c g^{a b}=&{} \tensor{\mathcal{S}}{^{\ha\hd}_{c}} (t, m, m_d, s, s_d, p^a, \tensor{p}{^a_d}, s^{a b}, \tss{a b}{d} ), \label{eq-g-analy3}
	\end{align}
	where the $\mathcal{S}$-maps are analytic in all their variables and satisfy $\mathcal{S} (t,0)=0$.
\end{proposition}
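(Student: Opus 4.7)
The plan is to invert the definitions \eqref{E:W}--\eqref{E:QD} explicitly to express the fields $\lambda$, $\xi^a$, $h^{ab}$ (and their $\nb$-derivatives) as analytic functions of the Fuchsian variables, and then to reconstruct $g^{ab}$, $g_{ab}$ and their derivatives from the decomposition \eqref{E:g-h}. The chain of substitutions is completely algebraic; the only non-trivial issue is that several intermediate formulas contain $1/t$ prefactors, and one must check that these cancel so that the final expressions extend analytically to $t=0$.

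First, inverting \eqref{E:W}, \eqref{E:V}, \eqref{E:Q} and \eqref{E:q} gives
\begin{equation*}
\lambda=-1+\tta t m,\qquad \xi^{a}=\ttb t p^{a},\qquad \ln S=\frac{s-\tta t m}{3-n},\qquad h^{ab}=S\bigl(s^{ab}+\ulh^{ab}\bigr),
\end{equation*}
so that by \eqref{E:g-h}
\begin{equation*}
g^{ab}=S(s^{ab}+\ulh^{ab})+(-1+\tta t m)\nu^{a}\nu^{b}-2\ttb t\,p^{(a}\nu^{b)}
\end{equation*}
is manifestly analytic in $(t,m,p^{a},s,s^{\ha\hb})$ and reduces to $\ulg^{ab}$ when the perturbation vanishes. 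To obtain \eqref{eq-g-analy}, I would write $g^{ab}=\ulg^{ab}+M^{ab}$ with $M^{ab}$ analytic and vanishing at the origin, and invoke Lemma \ref{E:EXPANSIONOFINVERSE2} with $A=(\ulg_{ab})$, $B=(\ulg_{ac}M^{cd}\ulg_{db})$: for $\|M^{ab}\|$ sufficiently small, the Neumann series converges and yields $g_{ab}$ as an analytic function, which is precisely the claimed map $\mathcal{S}_{ab}$.

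For \eqref{eq-g-analy2}, I would differentiate $h^{ab}=S(s^{ab}+\ulh^{ab})$ and use $\udn_{c}\ulh^{ab}=0$ (Lemma \ref{t:nbnuh}), giving
\begin{equation*}
\udn_{c}h^{ab}=(\udn_{c}\ln S)\,h^{ab}+S\,\tss{ab}{c}.
\end{equation*}
The point is that $\udn_{c}\ln S=(s_{c}-\tta\,\udn_{c}(tm))/(3-n)$, and using $(dt)_{c}=\nu_{c}/H$ together with the first-order relation \eqref{E:ID2-0} one finds
\begin{equation*}
\udn_{c}(tm)=\frac{1}{\ttj H}\,m\nu_{c}+\frac{1}{\tta}\,m_{c},
\end{equation*}
so that the potentially singular $1/t$ terms cancel and $\udn_{c}\ln S$ is an analytic function of $(m,m_{c},s,s_{c})$. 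Substituting back and using \eqref{eq-g-analy} to convert $h^{ab}$ into the stated variables produces the map $\mathcal{S}^{\ha\hd}{}_{c}$ of \eqref{eq-g-analy2}.

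Finally, \eqref{eq-g-analy3} follows by differentiating \eqref{E:g-h}, using $\udn\nu=0$ (Lemma \ref{t:nbnuh}) and the relations \eqref{E:ID-Lam-Xi},
\begin{equation*}
\udn_{d}\lambda=m_{d}+\frac{\tta}{\ttj H}\,m\nu_{d},\qquad \udn_{d}\xi^{a}=\tp{a}{d}+\frac{\ttb}{\ttk H}\,p^{a}\nu_{d},
\end{equation*}
to conclude that $\udn_{c}g^{ab}=\udn_{c}h^{ab}+(\udn_{c}\lambda)\nu^{a}\nu^{b}-2(\udn_{c}\xi^{(a})\nu^{b)}$ is analytic in the full set of Fuchsian variables. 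The only place where some care is required is the bookkeeping of the cancellation in step three; beyond that, the proposition is a routine consequence of the explicit substitutions and the Neumann series expansion.
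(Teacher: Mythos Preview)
Your proposal is correct and follows essentially the same approach as the paper: both proofs reconstruct $g^{ab}$ from the decomposition \eqref{E:g-h} in terms of the Fuchsian variables, invoke the Neumann series (Lemma \ref{E:EXPANSIONOFINVERSE2}) for $g_{ab}$, and differentiate $h^{ab}=S\mathfrak{h}^{ab}$ and $g^{ab}$ using \eqref{E:ID-Lam-Xi}. The only cosmetic difference is that the paper computes $\udn_c\ln S$ via $\udn_c(\lambda+1)=m_c+\frac{\tta}{\ttj H}m\nu_c$ directly, whereas you route through $\udn_c(tm)$ using \eqref{E:ID2-0}; these are the same calculation, and your explicit emphasis on the $1/t$ cancellation is a helpful addition.
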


\begin{proof} 
	By \eqref{E:W}, \eqref{E:Q} and \eqref{E:q}, we have
	\begin{equation*}
		S=\exp\Bigl(\frac{q-(\lambda+1)}{3-n}\Bigr)=\exp\Bigl(\frac{s-\tta tm}{3-n}\Bigr),
	\end{equation*}
	and hence, after differentiating, we get
	\begin{align*}
		\nb_c S={}&S\nb_c\ln S=S\nb_c\Bigl(\frac{q-(\lambda+1)}{3-n}\Bigr) \\  
		={}&\frac{1}{3-n}\Bigl(s_c -m_c-\frac{\tta}{\ttj H} m\nu_c\Bigr)\exp\Bigl(\frac{s-\tta tm}{3-n}\Bigr).
	\end{align*}	
	With the help of these relations, we then observe that
	\begin{align*}
		\nb_c h^{\ha\hd} =	{} & \nb_c(S\mathfrak{h}^{\ha\hd} )=S\nb_c \mathfrak{h}^{\ha\hd}+ \mathfrak{h}^{\ha\hd}\nb_c S  \notag  \\
		= {} &\frac{1}{3-n}\Bigl(s_c -m_c-\frac{\tta}{\ttj H} m\nu_c\Bigr)\exp\Bigl(\frac{s-\tta tm}{3-n}\Bigr)s^{\ha\hd} \notag \\
		& + \frac{1}{3-n}\Bigl(s_c -m_c-\frac{\tta}{\ttj H} m\nu_c\Bigr)\exp\Bigl(\frac{s-\tta tm}{3-n}\Bigr)\ulh^{\ha\hd} \notag  \\
		& +  \exp\Bigl(\frac{s-\tta tm}{3-n}\Bigr)\tss{\ha\hd}{c}.  
	\end{align*}
	Letting $\tensor{\mathcal{S}}{^{\ha\hd}_{c}} (t, m, m_d, s, s_d, s^{a b},\tss{a b}{d})$ 
	denote the right hand side of the above expression, it is clear that $\tensor{\mathcal{S}}{^{\ha\hd}_{c}}$ is analytic in all its variables and satisfies $\tensor{\mathcal{S}}{^{\ha\hd}_{c}}(t,0)=0$, which establishes the validity of \eqref{eq-g-analy2}.
	
	Next, we observe that
	\begin{align*}
		g^{cd}-\ulg^{cd}={}&(\lambda+1)\nu^c\nu^d-\xi^d\nu^c-\xi^c\nu^d+(S\mathfrak{h}^{cd}-\ulh^{cd}) \notag  \\
		={}&\tta tm\nu^c\nu^d-\ttb t p^d\nu^c-\ttb t p^c\nu^d \\
		&+\exp\Bigl(\frac{s-\tta tm}{3-n}\Bigr) s^{cd}+\Bigl[\exp\Bigl(\frac{s-\tta tm}{3-n}\Bigr)-1\Bigr]\ulh^{cd}.
	\end{align*}
	With the help of this expression,  \eqref{eq-g-analy} follows from applying Lemma \ref{E:EXPANSIONOFINVERSE2} to
	$g_{ab}= (g^{cd})^{-1}=\bigl[\ulg^{cd}+(g^{cd}-\ulg^{cd})\bigr]^{-1}$.
	
	Finally, to verify the remaining expression \eqref{eq-g-analy3}, we differentiate $g^{ab}=  h^{ab}- 2\nu^{(a}  \xi^{b)} +\lambda \nu^a\nu^b $ to get
	\begin{align*}
		\nb_c g^{ab}={}&\nb_c h^{ab}-2\nu^a\nb_c \xi^b+\nu^a\nu^b\nb_c\lambda \\
		={}&\nb_c h^{ab}- 2 \nu^{(a} \Bigl(\tensor{p}{^{b)}_c}+\frac{\ttb}{\ttk H}p^{b)} \nu_c\Bigr)+\nu^a\nu^b\Bigl(m_c+\frac{\tta}{\ttj H} m \nu_c\Bigr),
	\end{align*}
	where in deriving this we have used \eqref{decom-g}, \eqref{E:W}, \eqref{E:V} and \eqref{E:VD}. Letting \\ $\tensor{\mathcal{S}}{^{\ha\hd}_{c}} (t, m, m_d, s, s_d, p^a, \tensor{p}{^a_d}, s^{a b}, \tss{a b}{d} )$ 
	denote the right hand side of the above expression, it is clear that $\tensor{\mathcal{S}}{^{\ha\hd}_{c}}$ is analytic in all its variables and satisfies $\tensor{\mathcal{S}}{^{\ha\hd}_{c}}(t,0)=0$, which establishes the validity of \eqref{eq-g-analy3} and  completes the proof. 	
\end{proof}

\subsection{Young's inequality}\label{s:young}	
\begin{lemma}[Young's inequality for scalars]\label{t:young}
	Suppose $a,b\in \mathbb{R}$ and $\epsilon>0$. Then
	\begin{equation*}
		|ab|\leq \frac{\epsilon }{2}a^2+\frac{1}{2\epsilon}b^2.
	\end{equation*}
\end{lemma}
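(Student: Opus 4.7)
The proof is a one-line consequence of the non-negativity of a perfect square, so the plan is very short.

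The approach will be to start from the elementary observation that
\begin{equation*}
\bigl(\sqrt{\epsilon}\,|a| - \tfrac{1}{\sqrt{\epsilon}}|b|\bigr)^{2} \geq 0
\end{equation*}
for all real $a, b$ and $\epsilon > 0$, which is valid because $\epsilon > 0$ guarantees that $\sqrt{\epsilon}$ and $1/\sqrt{\epsilon}$ are well-defined real numbers. Expanding the square yields
\begin{equation*}
\epsilon\, a^{2} - 2\,|a|\,|b| + \frac{1}{\epsilon}\, b^{2} \geq 0,
\end{equation*}
and rearranging this together with the identity $|ab| = |a|\,|b|$ gives the desired bound
\begin{equation*}
|ab| \leq \tfrac{\epsilon}{2}\, a^{2} + \tfrac{1}{2\epsilon}\, b^{2}.
\end{equation*}

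There is no real obstacle here; the only thing to be mindful of is ensuring the $\epsilon>0$ hypothesis is used precisely where it is needed (to take $\sqrt{\epsilon}$ and divide by it), and to remark that the same argument works with $a$ and $b$ replaced by $|a|$ and $|b|$ since both sides of the claimed inequality depend only on $a^2$, $b^2$, and $|ab|$. This is the classical scalar Young's inequality in its weighted AM--GM form, and the proof fits on a single line.
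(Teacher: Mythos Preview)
Your proof is correct and is the standard argument for this classical inequality. The paper itself states this lemma without proof, treating it as a well-known result, so there is nothing further to compare.
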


\begin{lemma}[Young type inequalities for tensors]\label{t:young2}
	Suppose $\ulh_{ab}\in T_2^{0} \Sigma $ is a positive definite metric and $\ulh^{ab}\in T^2_{0} \Sigma $ is the inverse of $\ulh_{cd}$, $X_{dc}\in T^{0}_2 \Sigma $ is anti-symmetric, $Y^a \in T \Sigma $, $Z_c\in T^\ast \Sigma$, $A^d\in T \Sigma$, $A^{abc}\in T^{3}_0 \Sigma $, $\tensor{A}{^d_c} \in T^{1}_1 \Sigma $  and $\delta>0$. Then
	\begin{align}
		| X_{dc}A^d Y^c | \leq & \frac{\delta }{2}\ulh^{ad}\ulh^{bc}X_{ab} X_{dc}+\frac{1}{2\delta}\ulh_{bd} \tensor{A}{^d} A^b \ulh_{ac}Y^aY^c, \label{e:chy1} \\
		| X_{dc}A^{adc} Z_a | \leq & \frac{\delta }{2}\ulh^{ad}\ulh^{bc}X_{ab} X_{dc}+\frac{1}{2\delta} A^{abe}A^{fdc} \ulh_{bd} \ulh_{ec} Z_aZ_f, \label{e:chy2} 
		\intertext{and}
		| Z_{d} \tensor{A}{^d_c} Y^c | \leq & \frac{\delta }{2}\ulh^{ad}Z_{d} Z_{a}+\frac{1}{2\delta}\ulh_{ad} \tensor{A}{^a_e} \tensor{A}{^d_b } Y^eY^b. \label{e:chy3}
	\end{align} 	
\end{lemma}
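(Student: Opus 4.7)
\medskip

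\noindent\textbf{Proof proposal.} All three inequalities are instances of the same two-step argument: first recognize the left-hand side as the $\ulh$-inner product of two tensors, then apply Cauchy--Schwarz with respect to that positive definite inner product, and finally apply the scalar Young's inequality (Lemma \ref{t:young}) to the resulting product of norms. The antisymmetry of $X_{dc}$ is irrelevant for these bounds; it appears in the statement only because this is the context in which the inequalities are used.

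For inequality \eqref{e:chy1}, I would regard $B^{dc}:=A^d Y^c$ as a $(2,0)$-tensor and write
\begin{equation*}
X_{dc} A^d Y^c = X_{dc} B^{dc}.
\end{equation*}
Cauchy--Schwarz for the inner product $\la S,T\ra_\ulh = \ulh^{ad}\ulh^{bc}S_{ab}T_{dc}$ (which is positive definite since $\ulh_{ab}$ is Riemannian) then yields
\begin{equation*}
|X_{dc} B^{dc}|^2 \leq \bigl(\ulh^{ad}\ulh^{bc}X_{ab}X_{dc}\bigr)\bigl(\ulh_{ad}\ulh_{bc}B^{ab}B^{dc}\bigr).
\end{equation*}
Factoring $\ulh_{ad}\ulh_{bc}B^{ab}B^{dc} = (\ulh_{ad}A^aA^d)(\ulh_{bc}Y^bY^c)$ and applying Lemma \ref{t:young} with parameter $\delta$ to the two non-negative factors
\begin{equation*}
\alpha:=\sqrt{\ulh^{ad}\ulh^{bc}X_{ab}X_{dc}}, \qquad \beta:=\sqrt{(\ulh_{ad}A^aA^d)(\ulh_{bc}Y^bY^c)}
\end{equation*}
gives exactly the right-hand side of \eqref{e:chy1} after relabeling dummy indices.

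For \eqref{e:chy2} I would apply the same procedure with the $(2,0)$-tensor $B^{dc}:=A^{adc}Z_a$, so that $X_{dc}A^{adc}Z_a = X_{dc}B^{dc}$; the $\ulh$-norm squared of $B$ then reads $A^{abe}A^{fdc}\ulh_{bd}\ulh_{ec}Z_aZ_f$, and Cauchy--Schwarz followed by Lemma \ref{t:young} produces \eqref{e:chy2}. Inequality \eqref{e:chy3} is handled in the same way, but one rank lower: view $C_c := \tensor{A}{^d_c}$ acting on $Y^c$ to produce the vector $W^d := \tensor{A}{^d_c}Y^c$, so $Z_d \tensor{A}{^d_c}Y^c = Z_d W^d$, and apply Cauchy--Schwarz with respect to $\la \zeta,\omega\ra_\ulh = \ulh^{ad}\zeta_d\omega_a$ together with scalar Young.

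There is no genuine obstacle here; the only care required is in tracking indices so that the right-hand sides come out in the precise form stated. In particular, one must verify that the Cauchy--Schwarz step really does use the inverse metric on covariant indices and the metric on contravariant ones so that the quadratic forms on the right of \eqref{e:chy1}--\eqref{e:chy3} appear with the correct placement of $\ulh^{ab}$ versus $\ulh_{ab}$. Since $\ulh_{ab}$ is positive definite, the Cauchy--Schwarz inequality in each inner product is valid without further hypotheses, and no additional assumption on $X_{dc}$, $A^{abc}$, or $\tensor{A}{^d_c}$ is needed.
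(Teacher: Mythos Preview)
Your argument is correct. The paper takes a slightly different, more hands-on route: rather than invoking Cauchy--Schwarz followed by the scalar Young inequality, it verifies each bound by completing the square. For instance, for \eqref{e:chy1} the paper expands
\[
\ulh^{da}\ulh^{bc}\Bigl(\sqrt{\delta}\,X_{ab}\pm\tfrac{1}{\sqrt{\delta}}\ulh_{ae}\ulh_{bf}A^{e}Y^{f}\Bigr)\Bigl(\sqrt{\delta}\,X_{dc}\pm\tfrac{1}{\sqrt{\delta}}\ulh_{d\hat b}\ulh_{c\hat a}A^{\hat b}Y^{\hat a}\Bigr)\ge 0
\]
and reads off the inequality from the cross term. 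Your two-step approach is arguably cleaner conceptually, since it separates the geometric content (Cauchy--Schwarz for the $\ulh$-inner product on tensors) from the trivial scalar step; the paper's approach has the minor advantage of being a one-line algebraic identity with no appeal to an auxiliary inequality. Both are entirely standard, and your remark that the antisymmetry of $X_{dc}$ plays no role is correct and worth noting.
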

\begin{proof}  
	The inequality \eqref{e:chy1} is a direct consequence of the calculation
	\begin{align*}
		& \delta \ulh^{da}\ulh^{bc} X_{ab} X_{dc}\pm 2A^d Y^cX_{dc}+ \frac{1}{\delta} A^d Y^c \ulh_{bd}\ulh_{ac}A^b Y^a  \notag   \\
		= & \Bigl(\sqrt{\delta}\ulh^{da}\ulh^{bc} X_{ab} \pm \frac{1}{\sqrt{\delta}} A^d Y^c\Bigr)\Bigl(\sqrt{\delta}X_{dc} \pm \frac{1}{\sqrt{\delta}} \ulh_{\hb d}\ulh_{\ha c}A^{\hb}Y^{\ha}\Bigr) \notag   \\
		= &  \ulh^{da}\ulh^{bc} \Bigl(\sqrt{\delta}X_{ab} \pm \frac{1}{\sqrt{\delta}}\ulh_{bf}\ulh_{ae}A^eY^f\Bigr)\Bigl(\sqrt{\delta} X_{dc} \pm \frac{1}{\sqrt{\delta}}\ulh_{\hb d}\ulh_{\ha c}A^{\hb} Y^{\ha}\Bigr)\geq 0,
	\end{align*}
	where the inequality in the last line is due to positive definiteness of the metric $\ulh_{ab}$.
	We similarly conclude the validity of the inequalities \eqref{e:chy2}--\eqref{e:chy3} from the following related calculations:
	\begin{align*}
		0\leq& \ulh^{ad}\ulh^{bc}\Bigl(\sqrt{\delta}X_{ab} \pm \frac{1}{\sqrt{\delta}} \ulh_{\ha a}\ulh_{\hb b} A^{e\ha\hb}Z_e\Bigr)\Bigl(\sqrt{\delta}X_{dc} \pm \frac{1}{\sqrt{\delta}} \ulh_{\hd d}\ulh_{\hc c}A^{f\hd\hc}Z_f\Bigr) \notag  \\
		=&\delta \ulh^{ad} \ulh^{bc}X_{ab}X_{dc} \pm 2X_{ab} A^{fab} Z_f+ \frac{1}{\delta} A^{edc}Z_e \ulh_{\hd d} \ulh_{\hc c} A^{f\hd\hc} Z_f
	\end{align*}
	and
	\begin{align*}
		0\leq& \ulh^{dc}\bigl(\sqrt{\delta}Z_d\pm \frac{1}{\sqrt{\delta}}\ulh_{ad} \tensor{A}{^a_e} Y^e\Bigr)\Bigl(\sqrt{\delta}Z_c \pm \frac{1}{\sqrt{\delta}}\ulh_{bc} \tensor{A}{^b_f} Y^f\Bigr) \\
		&=  \delta Z_d Z_c \ulh^{dc} \pm 2 Z_aY^f \tensor{A}{^a_f} + \frac{1}{\delta} \tensor{A}{^a_e} Y^e \ulh_{ba} \tensor{A}{^b_f} Y^f.
	\end{align*}
\end{proof}

\section{Index of notation} \label{s:index}

\bigskip

\begin{longtable}{ll}
	$(\widetilde{\mathcal{M}}^n, \, \tilde g_{a b})$ & $n$-dimensional, connected Lorentzian  manifold; \S\ref{S:INTRO}  \\
	$G$ & compact and connected Lie group;  \S \ref{S:INTRO} \\
	$\mathcal{G}$ & Lie algebra of $G$;  \S \ref{S:INTRO} \\
	$\phi \cdot \psi$ & inner-product for $\phi,\psi \in \mathcal{G}$;  \S \ref{S:INTRO} \\
	$|\cdot|$ & norm on $ \mathcal{G}$;  \S \ref{S:INTRO} \\
	$\tilde{g}_{ab}$ & physical spacetime metric; \S \ref{S:INTRO} \\
	$\tilde F_{ab} $ & physical  Yang–Mills curvature; \S \ref{S:INTRO} \\
	$\Ab_a$ & physical gauge potential; \S \ref{S:INTRO} \\
	$\tilde \nabla_a$ &  covariant derivative associated to $\tilde{g}_{a b}$;  \S\ref{S:INTRO}\\
	$\tilde D_a $ &   gauge covariant derivative of a $\mathcal{G}$-valued tensor;  \S\ref{S:INTRO}\\
	$\tau$ & Cartesian time coordinate function on $\Rbb$; \S \ref{s:ds} \\
	$\ulh_{ab}$ &  standard metric on $\mathbb{S}^{n-1}$; \S \ref{s:ds}\\
	$\gd(x)$ & Gudermannian function;  \S \ref{s:ds} \\
	$t$ & conformal compactified time function; \S \ref{s:ds}, eqn. \eqref{E:COOR1} \\
	$\tilde{\ulg}_{ab}$ &  de Sitter metric; \S \ref{s:ds}, eqn. \eqref{E:DESMETR}\\
	$\Psi$ &  conformal factor; \S \ref{s:ds}, eqn. \eqref{E:CONFFAC}\\
	$\udl{g}_{ab}$  & conformal de Sitter metric; \S \ref{s:ds}, eqn. \eqref{e:cfds}\\
	$\nu_a $, $\nu^a$ & unit normal and conormal to $\Sigma_t $ w.r.t $\ulg_{ab}$; \S \ref{s:ds}, eqn. \eqref{E:NORMT}  \\	
	$\tensor{\ulh}{^a_b}$ &  projection w.r.t $\nu^a$ and $\ulg_{ab}$; \S \ref{s:ds}, eqn. \eqref{E:PRO} \\
	$\tilde{\nu}_a$, $\tilde{\nu}^a$ & unit normal and conormal to  $\Sigma_{\tau}$ w.r.t.   $\tilde{\ulg}_{ab}$; \S \ref{s:ds}, eqn. \eqref{E:NORMT-a}  \\	
	$\tilde{T}_a$, $\tilde{T}^a$ & unit normal and conormal to  $\Sigma_{\tau}$ w.r.t.   $\tilde{g}_{ab}$; \S \ref{s:ds}, eqn. \eqref{def-T}  \\	
	$\tilde{\lambda} $  &  $\tilde{g}^{ab} \tilde{\nu}_a \tilde{\nu}_b$; \S \ref{s:ds}, eqn. \eqref{def-T} \\
	$\tensor{\tilde{\ulh}}{^c_d} $, $\tensor{\tilde{h}}{^c_d}$  &  projections w.r.t $\tilde{\nu}^a$ and $\tilde{T}^a$, respectively; \S \ref{s:ds}, eqn. \eqref{e:def-h2}--\eqref{e:def-h2-phy} \\
	$\tilde{E}_b$, $ \tilde{H}_{db}$ &  $3+1$ decomposition of  $\tilde{F}_{a p}$ w.r.t. $\tensor{\tilde{h}}{^a_b}$ and $\tilde{T}^p $; \S \ref{s:ds}, eqn. \eqref{Et-Ht-def} \\
	$\tilde{Z}^{a} $  & physical wave gauge vector fields; \S \ref{sec:gauge-conditions} \\
	$\tilde{X}^a $  & physical contracted Christoffel symbols; \S \ref{sec:gauge-conditions} \\
	$\tilde{\nb} $  & covariant derivative associated to $\tilde{\ulg}_{ab}$; \S \ref{sec:gauge-conditions} \\
	$\tilde{h}^{ab}$ & $\tensor{\tilde{\ulh}}{^a_c}\tensor{\tilde{\ulh}}{^b_d}\tilde{g}^{cd}$ ; \S \ref{s:dt0} \\
	$g_{ab}$, $g^{ab}$ & conformal spacetime metric; \S \ref{sec-Fuchsian-field}, eqn. \eqref{CONFG-g} \\
	$F_{ab} $ & conformal Yang–Mills curvature; \S \ref{sec-Fuchsian-field}, eqn. \eqref{CONFG-F}  \\
	$A_a$ & conformal gauge potential; \S \ref{sec-Fuchsian-field}, eqn. \eqref{CONFG-A}  \\	
	$\lambda$, $\xi^c$, $h^{ab}$ & $3+1$ decompositions of $g^{ab}$ w.r.t. $\nu_a$ and $\tensor{\ulh}{^c_b}$; \S \ref{sec-Fuchsian-field}, eqn. \eqref{decom-g}  \\	
	$m,  m_d$ & Fuchsian fields; \S \ref{sec-Fuchsian-field}, eqn. \eqref{E:W}--\eqref{E:QD}  \\
	$p^a,  \tp{a}{d}$ & Fuchsian fields; \S \ref{sec-Fuchsian-field}, eqn. \eqref{E:W}--\eqref{E:QD}  \\	
	$s^{ab}, \tss{ab}{d}, s, s_d$ & Fuchsian fields; \S \ref{sec-Fuchsian-field}, eqn. \eqref{E:W}--\eqref{E:QD}  \\	
	$\tta, \ttb, \ttj, \ttk$ &  dimension dependent constants; \S \ref{sec-Fuchsian-field} \\
	$q, \mathfrak{h}^{ab}$ & modified metric variables; \S \ref{sec-Fuchsian-field}, eqn. \eqref{E:q} \\
	$\ulh^{ab}$ & $ \ulh^{a}{}_{c}\ulh^{b}{}_{d}\ulg^{cd}$ ; \S \ref{sec-Fuchsian-field}, eqn. \eqref{e:S} \\ 	
	$\alpha,\ula $ &  determinant variables;  \S \ref{sec-Fuchsian-field}, eqn. \eqref{e:S} \\
	$\bar A_b, E_b$ & $A_a \tensor{\ulh}{^a_b} $ and $- \nu^p F_{p a} \tensor{\ulh}{^a_b}$, respectively;  \S \ref{sec-Fuchsian-field}, eqn. \eqref{decom-F}  \\
	$\tE^a, H_{db}$ & $-h^{a b} E_{b}$ and $\tensor{\ulh}{^c_d} F_{c a} \tensor{\ulh}{^a_b}$, respectively;  \S \ref{sec-Fuchsian-field}, eqn. \eqref{def-tE-1}, \eqref{def-MYM}  \\
	$\mathbf{U}$ & Fuchsian field vector;  \S \ref{sec-Fuchsian-field}, eqn. \eqref{def-U} \\
	$\norm{\cdot}_{W^{k,p}} $ & Sobolev norms; \S\ref{s:norm} \\
	$ \norm{\cdot}_{L^\infty(I,W^{s,p})}$ & Sobolev norms; \S\ref{s:norm} \\
	$\nb$ & covariant derivatives associated to   $\underline{g}_{ab}$; \S\ref{sec-Einstein}  \\
	$\nabla$ &  covariant derivative associated to  $g_{ab}$; \S\ref{sec-Einstein} \\
	$\Box $ &  wave operator associated to $g_{a b}$; \S\ref{sec-Einstein} \\
	$\tensor{\tilde{\udl{R}}}{_{cde}^a},\tilde{\udl{R}}_{ab},\tilde{\udl{R}}$ & Riemannian/Ricci/scalar curvature tensors of $\tilde{\underline{g}}_{ab}$; \S\ref{sec-Einstein} \\ 
	$ \tensor{\udl{R}}{_{cde}^a},\udl{R}_{ab},\udl{R}$ & Riemannian/Ricci/scalar curvature tensors of  $\underline{g}_{ab}$; \S\ref{sec-Einstein} \\ 
	$\tensor{\tilde{R}}{_{cde}^a},\tilde{R}_{ab},\tilde{R}$  &  Riemannian/Ricci/scalar curvature tensors of  $\tilde{g}_{ab}$; \S\ref{sec-Einstein} \\
	$\tensor{R}{_{cde}^a},R_{ab},R$ &  Riemannian/Ricci/scalar curvature tensors of  $g_{ab}$; \S\ref{sec-Einstein} \\
	$Z^a$ & conformal wave gauge vector field;  \S \ref{s:REE}, eqn. \eqref{E:WAVEGA}\\
	$X^a$ & conformal contracted Christoffel symbols;  \S \ref{s:REE}, eqn. \eqref{def-X}\\
	$Y^a,\eta^a$ & gauge source vector field;  \S \ref{s:REE}, eqn. \eqref{E:XYZ}\\ 
	$\ts{A}{^{ab}_c}$ &   wave gauge tensor;  \S \ref{s:REE}\\
	$P^{ab}, Q^{ab}$  &   lower order terms in $R^{ab}$;  \S \ref{s:REE}, \S\ref{s:App2}, eqs. \eqref{E:CONFEIN5}, \eqref{E:RICCI}  \\
	$\ts{\mathcal{L}}{^{ab}_{cd}} $  &   projection operator;  \S \ref{s:REE}, eqs. \eqref{E:LG}   \\
	$Q^{edc}$ & symmetrizing tensor; \S\ref{s:REE}, eqn \eqref{E:BTENSOR} \\
	$\tte, \ttf$  &  dimension dependent constants; \S \ref{s:fucein}\\
	$ \widehat{\mathbf{U}}$  &  $3+1$ decomposition of $\mathbf{U}$; \S\ref{s:verif}, eqn. \eqref{e:hatu}\\
	$\mathring{\mathbf{V}}, \p{\mathring{\tE}^e, \mathring{E}_{d},\mathring{H}_{a b},\mathring{A}_s}$ & renormalized conformal Yang--Mills fields; \S\ref{S:YMipv}, eqn. \eqref{e:newYM}  \\
	$\mathbf{\mathring{U}}$  & renormalized $\mathbf{U}$; \S\ref{S:YMipv}, eqn. \eqref{e:rgu} \\
	$\tilde{\mfu} $ & gauge element of $G$; \S\ref{s:maprf}, eqn. \eqref{gauge-ode}\\
	$\tilde{A}^\star_a,  \tilde{F}^\star_{a b}, E^\star_a,  \tilde H^\star_{ab} $ & YM fields under temporal gauge \eqref{phys-temp-gauge}; \S\ref{mainthm-proof}  \\
	$\mathbf{U}_0$ & initial data for $\mathbf{U}$; \S\ref{mainthm-proof}, Step $4$ \\
	$\widehat{\mathbf{U}}_0 $ & initial data for $\widehat{\mathbf{U}}$; \S\ref{mainthm-proof}, Step $4$ 
\end{longtable}



	\section*{Acknowledgement}
	We thank the anonymous referees for their comments. 
	C.L. is supported by the Fundamental Research Funds for the Central Universities, HUST: $5003011036$.
	J.W. is supported by NSFC (Grant No. $12271450$ and $11701482$). 



 \bibliographystyle{amsplain}
 \bibliography{Reference_Chao}

\providecommand{\bysame}{\leavevmode\hbox to3em{\hrulefill}\thinspace}
\providecommand{\MR}{\relax\ifhmode\unskip\space\fi MR }
\providecommand{\MRhref}[2]{%
  \href{http://www.ams.org/mathscinet-getitem?mr=#1}{#2}
}
\providecommand{\href}[2]{#2}
\begin{thebibliography}{10}

\bibitem{Anderson2005}
Michael~T Anderson, \emph{Existence and stability of even-dimensional
  asymptotically {de Sitter} spaces}, Annales Henri Poincar{\'{e}} \textbf{6}
  (2005), 801--820.

\bibitem{Aubin1998}
Thierry {Aubin}, \emph{{Some nonlinear problems in Riemannian geometry}},
  {Springer-Verlag}, {Berlin, Heidelberg}, 1998.

\bibitem{BeyerOliynyk:2020}
Florian Beyer and Todd~A. Oliynyk, \emph{Relativistic perfect fluids near
  {K}asner singularities}, preprint [arXiv:2012.03435], 2020.

\bibitem{Beyer2020}
Florian Beyer, Todd~A. Oliynyk, and J.~Arturo Olvera-Santamar{\'{\i}}a,
  \emph{The {Fuchsian} approach to global existence for hyperbolic equations},
  Communications in Partial Differential Equations \textbf{46} (2020), 1--82.

\bibitem{ChoquetBruhat1991}
Yvonne Choquet-Bruhat, \emph{{Yang-Mills-Higgs fields in three space time
  dimensions}}, Analyse globale et physique math{\'e}matique (Colloque {\`a} la
  m{\'e}moire d'Edmond Combet), M{\'e}moires de la Soci{\'e}t{\'e}
  Math{\'e}matique de France \textbf{1} (1991), 73--97.

\bibitem{ChoquetBruhat1992}
\bysame, \emph{Cosmological {Yang}{\textendash}{Mills} hydrodynamics}, Journal
  of Mathematical Physics \textbf{33} (1992), no.~5, 1782--1785.

\bibitem{Choquet-Bruhat2009}
\bysame, \emph{{General relativity and the Einstein equations}}, Oxford
  University Press, 2009.

\bibitem{Eardley1982}
Douglas~M. Eardley and Vincent Moncrief, \emph{{The global existence of
  Yang-Mills-Higgs fields in 4-dimensional Minkowski space}}, Communications in
  Mathematical Physics \textbf{83} (1982), no.~2, 171--191.

\bibitem{Eardley1982-global}
\bysame, \emph{{The global existence of Yang-Mills-Higgs fields in
  4-dimensional Minkowski space. II. Completion of the proof}}, Communications
  in Mathematical Physics \textbf{83} (1982), 193--212.

\bibitem{FOW:2021}
David Fajman, Todd~A. Oliynyk, and Zoe Wyatt, \emph{Stabilizing relativistic
  fluids on spacetimes with non-accelerated expansion}, Communications in
  Mathematical Physics \textbf{383} (2021), 401--426.

\bibitem{FW-CPDE20}
David Fajman and Zoe Wyatt, \emph{{Attractors of the Einstein-Klein-Gordon
  system}}, Communications in Partial Differential Equations \textbf{46}
  (2020), 1--30.

\bibitem{Friedrich1986}
Helmut Friedrich, \emph{{On the existence of $n$-geodesically complete or
  future complete solutions of Einstein's field equations with smooth
  asymptotic structure}}, Communications in Mathematical Physics \textbf{107}
  (1986), 587--609.

\bibitem{Friedrich1991}
\bysame, \emph{On the global existence and the asymptotic behavior of solutions
  to the {Einstein-Maxwell-Yang-Mills} equations}, Journal of Differential
  Geometry \textbf{34} (1991), 275--345.

\bibitem{Ginibre1981}
J.~Ginibre and G.~Velo, \emph{{The Cauchy problem for coupled Yang-Mills and
  scalar fields in the temporal gauge}}, Communications in Mathematical Physics
  \textbf{82} (1981), no.~1, 1--28.

\bibitem{Hadzic2015}
Mahir Had{\v z}i{\'c} and Jared Speck, \emph{The global future stability of the
  {FLRW} solutions to the dust-{Einstein} system with a positive cosmological
  constant}, Journal of Hyperbolic Differential Equations \textbf{12} (2015),
  87--188.

\bibitem{Hawking2010}
Stephen Hawking and George F.~R. Ellis, \emph{The large scale structure of
  space-time}, Cambridge University Press, October 2010.

\bibitem{kaminski2021}
Wojciech Kami{\'n}ski, \emph{Well-posedness of the ambient metric equations and
  stability of even dimensional asymptotically de {S}itter spacetimes}, 2021.

\bibitem{LeFloch2021}
Philippe~G. LeFloch and Changhua Wei, \emph{Nonlinear stability of
  self-gravitating irrotational chaplygin fluids in a {FLRW} geometry}, Annales
  de l'Institut Henri Poincar{\'{e}} C, Analyse non lin{\'{e}}aire \textbf{38}
  (2021), no.~3, 787--814.

\bibitem{Liu2018b}
Chao Liu and Todd~A. Oliynyk, \emph{Cosmological newtonian limits on large
  spacetime scales}, Communications in Mathematical Physics \textbf{364}
  (2018), no.~3, 1195--1304.

\bibitem{Liu2018}
\bysame, \emph{Newtonian limits of isolated cosmological systems on long time
  scales}, Annales Henri Poincar{\'{e}} \textbf{19} (2018), no.~7, 2157--2243.

\bibitem{LW2021b}
Chao Liu and Jinhua Wang, \emph{A new symmetric hyperbolic formulation and the
  local {Cauchy problem for the Einstein--Yang--Mills }system in the temporal
  gauge}, arXiv:2111.04540 (2021).

\bibitem{Liu2018a}
Chao Liu and Changhua Wei, \emph{Future stability of the {FLRW} spacetime for a
  large class of perfect fluids}, Annales Henri Poincar{\'{e}} \textbf{22}
  (2021), 715--770.

\bibitem{Luebbe2013}
Christian L\"ubbe and Juan Antonio~Valiente Kroon, \emph{A conformal approach
  for the analysis of the non-linear stability of radiation cosmologies},
  Annals of Physics \textbf{328} (2013), 1 -- 25.

\bibitem{Oliynyk2016a}
Todd~A. Oliynyk, \emph{{Future stability of the FLRW fluid solutions in the
  presence of a positive cosmological constant}}, Communications in
  Mathematical Physics \textbf{346} (2016), 293--312.

\bibitem{Oliynyk2021b}
\bysame, \emph{Future global stability for relativistic perfect fluids with
  linear equations of state $p=k\rho$ where $1/3<k<1/2$}, SIAM Journal on
  Mathematical Analysis \textbf{53} (2021), no.~4, 4118--4141.

\bibitem{Riess_1998}
Adam~G. Riess, Alexei~V. Filippenko, Peter Challis, Alejandro Clocchiatti, Alan
  Diercks, Peter~M. Garnavich, Ron~L. Gilliland, Craig~J. Hogan, Saurabh Jha,
  Robert~P. Kirshner, B.~Leibundgut, M.~M. Phillips, David Reiss, Brian~P.
  Schmidt, Robert~A. Schommer, R.~Chris Smith, J.~Spyromilio, Christopher
  Stubbs, Nicholas~B. Suntzeff, and John Tonry, \emph{Observational evidence
  from supernovae for an accelerating universe and a cosmological constant},
  The Astronomical Journal \textbf{116} (1998), no.~3, 1009--1038.

\bibitem{Ringstroem2008}
Hans Ringstr{\"o}m, \emph{{Future stability of the Einstein-non-linear scalar
  field system}}, Inventiones mathematicae \textbf{173} (2008), 123.

\bibitem{RodnianskiSpeck:2013}
Igor Rodnianski and Jared Speck, \emph{The nonlinear future stability of the
  {FLRW} family of solutions to the irrotational {Euler-Einstein} system with a
  positive cosmological constant}, Journal of the European Mathematical Society
  \textbf{15} (2013), 2369--2462.

\bibitem{Segal1979}
Irving Segal, \emph{{The Cauchy problem for the Yang-Mills equations}}, Journal
  of Functional Analysis \textbf{33} (1979), no.~2, 175--194.

\bibitem{Speck2012}
Jared Speck, \emph{{The nonlinear future stability of the FLRW family of
  solutions to the Euler--Einstein system with a positive cosmological
  constant}}, Selecta Mathematica \textbf{18} (2012), 633--715.

\bibitem{Spradlin2001}
Marcus Spradlin, Andrew Strominger, and Anastasia Volovich, \emph{Les houches
  lectures on de {Sitter} space}, arXiv:hep-th/0110007 (2001).

\bibitem{Svedberg2011}
Christopher Svedberg, \emph{Future stability of the {Einstein-Maxwell-Scalar}
  field system}, Annales Henri Poincar{\'{e}} \textbf{12} (2011), no.~5.

\bibitem{Wald2010}
Robert M~. Wald, \emph{General relativity}, University of Chicago Press, 2010.

\bibitem{Wang2019}
Jinhua Wang, \emph{Future stability of the $1+3$ {M}ilne model for the
  {E}instein-{K}lein-{G}ordon system}, Classical and Quantum Gravity
  \textbf{36} (2019), no.~22, 225010.

\end{thebibliography}

\end{document}